\documentclass[sn-mathphys]{sn-jnl}%
\usepackage{hyperref}
\hypersetup{
    colorlinks=true,
    linkcolor=blue,
    citecolor=violet,
    filecolor=magenta,      
    urlcolor=cyan,
}
\usepackage{adjustbox}
\usepackage{bm}
\usepackage{amssymb,amsmath,amsthm,bbm}

\usepackage{wasysym}
\usepackage{enumitem}
\usepackage[capitalize]{cleveref}

\usepackage{graphicx}
\usepackage{tikz-cd}
\tikzcdset{scale cd/.style={every label/.append style={scale=#1},
    cells={nodes={scale=#1}}}}
\usepackage{program}
\catcode`\|=12\relax

\numberwithin{equation}{section}

\newtheorem{theorem}[equation]{Theorem}
\newtheorem{corollary}[equation]{Corollary}
\newtheorem{lemma}[equation]{Lemma}
\newtheorem{remark}[equation]{Remark}

\newtheorem{proposition}[equation]{Proposition}

\newtheorem{definition}[equation]{Definition}
\newtheorem{property}[equation]{Property}

\newtheorem{conjecture}[equation]{Conjecture}

\numberwithin{equation}{section}
\theoremstyle{definition}
\newtheorem{example}[equation]{Example}

\newcommand\josh[1]{\textbf{\textcolor[rgb]{0,.5,1}{[Josh: #1]}}}
\newcommand\truncatedOptionalImportant[1]{}
\newcommand\truncatedOptional[1]{}
\newcommand\optional[1]{}

\renewcommand{\R}{\mathbb{R}}

\newcommand{\X}{{\mathbb X}}
\newcommand{\Y}{{\mathbb Y}}

\newcommand{\CC}{\mathcal{C}}

\newcommand{\MM}{\mathcal{M}}

\newcommand{\RR}{\mathcal{R}}
\renewcommand{\SS}{\mathcal{S}}
\newcommand{\TT}{\mathcal{T}}
\newcommand{\UU}{\mathcal{U}}

\newcommand{\F}{\mathsf{F}}
\newcommand{\G}{\mathsf{G}}

\newcommand{\Rtop}{\ensuremath{\mathbb{R}\text{-}\mathbf{Top}}}
\newcommand{\Reeb}{\ensuremath{\mathbf{Reeb}}}
\newcommand{\Pre}{\ensuremath{\mathbf{Pre}}}
\newcommand{\Csh}{\ensuremath{\mathbf{Csh}}}
\newcommand{\Int}{\ensuremath{\mathbf{Int}}}
\newcommand{\IntCl}{\ensuremath{\overline{\Int}}}
\renewcommand{\Vec}{\ensuremath{\mathbf{Vec}}}
\newcommand{\Set}{\ensuremath{\mathbf{Set}}}

\newcommand{\Ext}{\mathrm{Ext}}
\newcommand{\Rel}{\mathrm{Rel}}
\newcommand{\Ord}{\mathrm{Ord}}
\newcommand{\ExDgm}{\mathrm{ExDgm}}
\newcommand{\Dgm}{\mathrm{Dgm}}
\newcommand{\type}{\mathrm{type}}

\renewcommand{\e}{\varepsilon}
\renewcommand{\phi}{\varphi}
\newcommand{\inv}{^{-1}}

\renewcommand{\Im}{\mathsf{Im}}

\newcommand{\dB}[1]{d_{B}^{#1}}
\newcommand{\db}[1]{d_{b}^{#1}}
\newcommand{\dbc}[2]{d_{b}^{\textnormal{#2}_{#1}}}
\newcommand{\du}{\ensuremath{\delta_E}}
\newcommand{\dfd}{\ensuremath{d_{FD}}}
\newcommand{\di}{\ensuremath{d_{I}}}

\newcommand{\plS}{\ensuremath{\mathsf{PL}}}

\newcommand{\tS}{\ensuremath{\mathsf{T}}\hspace{0.2em}}

\newcommand{\cS}{\ensuremath{\mathsf{C}}\hspace{0.2em}}

\newcommand{\mS}{\ensuremath{\mathsf{M}}\hspace{0.2em}}
\newcommand{\mSB}{\ensuremath{\mathsf{M}^0}\hspace{0.2em}}
\newcommand{\mR}{\ensuremath{{\RR}(\mathsf{M})}\hspace{0.2em}}
\newcommand{\mRB}{\ensuremath{{\RR}(\mathsf{M}^0)}\hspace{0.2em}}
\newcommand{\mC}{\ensuremath{{\CC}(\mathsf{M})}\hspace{0.2em}}

\newcommand{\sX}{\ensuremath{\X}}

\newcommand{\rX}{\ensuremath{\X_f}}

\renewcommand{\sf}{\ensuremath{f}}
\newcommand{\sg}{\ensuremath{g}}
\newcommand{\rf}{\ensuremath{\tilde{f}}}
\newcommand{\rg}{\ensuremath{\tilde{g}}}
\newcommand{\uf}{\ensuremath{\hat{f}}}

\newcommand{\RRf}{\ensuremath{\RR_{\sf}}}
\newcommand{\RRg}{\ensuremath{\RR_{\sg}}}

\newcommand{\limit}{\ensuremath{\mathcal{L}}}

\raggedbottom

\begin{document}

\title[Reeb Graph Metrics from the Ground Up]{Reeb Graph Metrics from the Ground Up}

\author*[1]{\fnm{Brian} \sur{Bollen}}\email{bbollen23@math.arizona.edu}

\author[2]{\fnm{Erin} \sur{Chambers}}\email{echambe5@slu.edu}

\author[3]{\fnm{Joshua} A. \sur{Levine}}\email{josh@arizona.edu}

\author[4]{\fnm{Elizabeth} \sur{Munch}}\email{muncheli@msu.edu}

\affil[1]{Dept.~of Mathematics, University of Arizona}
\affil[2]{Dept.~of Computer Science, University of Arizona}
\affil[3]{Dept.~of Computer Science, St.~Louis University}
\affil[4]{Dept.~of Computational Mathematics, Science and Engineering, Michigan State University}
\affil[5]{Dept.~of Mathematics, Michigan State University}

\abstract{The Reeb graph has been utilized in various applications including the analysis of scalar fields. Recently, research has been focused on using topological signatures such as the Reeb graph to compare multiple scalar fields by defining distance metrics on the topological signatures themselves. Here we survey five existing metrics that have been defined on Reeb graphs: the bottleneck distance, the interleaving distance, functional distortion distance, the Reeb graph edit distance, and the universal edit distance. Our goal is to (1) provide definitions and concrete examples of these distances in order to develop the intuition of the reader, (2) visit previously proven results of stability, universality, and discriminativity, (3) identify and complete any remaining properties which have only been proven (or disproven) for a subset of these metrics, (4) expand the taxonomy of the bottleneck distance to better distinguish between variations which have been commonly miscited, and (5) reconcile the various definitions and requirements on the underlying spaces for these metrics to be defined and properties to be proven.}

\keywords{keyword1, Keyword2, Keyword3, Keyword4}

\maketitle

\section{Introduction}

In numerous application fields, there is an increasing need to analyze topological and geometric information about shapes.
Given a real-valued function on a topological space, a commonly used object for such analysis is the Reeb graph, which encodes the changing component structure of the level sets of the object. Given specific restrictions on the function and topological space, the Reeb graph is a 1-dimensional regular CW-complex (i.e. a graph) with a function inherited from the input data. 
Reeb graphs are utilized in a variety of computational topology and topological data analysis (TDA) applications in order to obtain a lower dimensional representation of a structure which maintains topological properties of the original data, such as shape analysis \cite{Escolano2013,Hilaga2001}, data skeletonization \cite{Chazal2014, Ge2012}, and surface denoising \cite{Wood2004}; see the recent survey on scalar field analysis through distances on topological descriptors~\cite{Yan2021} for a more exhaustive list of applications.

The Reeb graph is constructed on input data known as an $\R$-space, which is an assignment of scalar data to each point of a topological space. More formally, we say an $\R$-space is a pair $(\X,f)$, where $\X$ is a topological space and $f:\X \to \R$ is a continuous, scalar-valued function. We then construct the Reeb graph by first defining an equivalence relation on $\X$ by letting (two) elements be equivalent if they have the same function value and lie in the same path connected component of their levelset. We denote the Reeb graph defined on $(\X,f)$ as $\RR_f$.

In data analysis and other applied settings, $\R$-spaces are more commonly referred to as \emph{scalar fields}. While the definitions of these two objects are identical, it is common to think of scalar fields having some additional restrictions on the space, such as requiring a simply connected domain. Many common physical phenomena, such as temperature of a surface or distribution of pressure in a liquid, can be described using scalar fields. 

Increases in computational power and the availability of large data sets has lead to increased interest in comparing such scalar fields. Since Reeb graphs have already been utilized for analyzing single scalar fields, there has been recent interest in defining distances on Reeb graphs so that we can compare multiple scalar fields to one another via this topological summary.

This survey will focus on five distances which have been defined on Reeb graphs: 
the \textbf{bottleneck distance}, which can be graded \cite{Steiner2009} or ungraded \cite{Bjerkevik2016a} \footnote{We note that these two notions of the bottleneck distance are closely related but seem to be often conflated or confused in the literature, which leads to incorrect bounds in some prior work on this topic.  Graded bottleneck distance is the older of the two concepts and arises from early work on comparing persistence diagrams, where dimensions are handled separately when comparing the diagrams.  In contrast, ungraded bottleneck is a more recent contribution from the algebraic side and to the best of our knowledge first arose from considering interlevel set persistence. We discuss these in more depth in Section~\ref{sec:bottleneckDist}, and adopt the terminology of graded versus ungraded in order to clarify the distinction.}, 
the \textbf{interleaving distance} \cite{deSilva2016}, 
the \textbf{functional distortion distance} \cite{Bauer2014}, 
the \textbf{Reeb graph edit distance} \cite{DiFabio2016}, 
and the \textbf{universal distance} \cite{Bauer2020}. 
These distances vary in their goals and properties: bottleneck distance finds matchings between points of persistence diagrams associated to the Reeb graphs; interleaving distance finds approximately height-preserving isomorphisms between the graphs; functional distortion distance measures the amount of distortion to continuously map one Reeb graph into another; the Reeb graph edit distance deforms Reeb graphs into one another through a sequence of steps and assigns a cost to this sequence; the universal distance constructs a diagram of topological spaces which transform one Reeb graph into another and assigns a cost to this diagram.
Each distance has been inspired by different mathematical disciplines such as Banach and metric spaces, category theory, sequence and string matching, and graph theory \cite{DiFabio2016,Bauer2016,Bauer2020}. While the bottleneck distance is usually defined as a distance on persistence diagrams, we will frame it as a distance on Reeb graphs to provide a baseline of comparison to the other four distances. We reserve the term \emph{Reeb graph metrics} to refer to the interleaving, functional distortion, the Reeb graph edit distance, and the universal distance.

When we say that we are defining a ``distance'' or ``metric'' between Reeb graphs, we often mean an \textbf{extended pseudometric}.

\begin{definition}
An \textbf{extended pseudometric} is a function $d: X \times X \to [0,+\infty]$ such that for every $x,y,z \in X$ we have
\begin{enumerate}
    \item $d(x,x) = 0$
    \item $d(x,y) = d(y,x)$
    \item $d(x,z) \leq d(x,y) + d(y,z)$
\end{enumerate}
\end{definition}

The term ``extended'' refers to the metric possibly taking values at $+\infty$, and the term ``pseudo" refers to the fact that $d(x,y) = 0$ does not necessarily imply that $x = y$. We will see in Sec.~\ref{sec:distProp} that each of these Reeb graph metrics attain a value of 0 between Reeb graphs $\RR_f$ and $\RR_g$ if and only if the Reeb graphs are isomorphic -- making these distances an \textbf{extended metric} on the space of isomorphism classes of Reeb graphs.

In Sec.~\ref{sec:basicDef}, we provide preliminary definitions of scalar fields, Reeb graphs, and (extended) persistence diagrams. The following four sections are devoted to the individual distances: Sec.~\ref{sec:bottleneckDist} for the bottleneck distance, Sec.~\ref{sec:interleaving} for the interleaving distance, Sec.~\ref{sec:FDD} for the functional distortion distance, Sec.~\ref{sec:rged} for the Reeb graph edit distance, \cref{sec:universalDistance} for the universal distance. We will use small examples in each of these sections to build the understanding and intuition of the reader and then explore full working examples in Sec.~\ref{sec:examples}.

A commonality between the aforementioned distances is that they are each proven to be \textbf{stable}, meaning that small perturbations to the input scalar field do not result in  large changes in the distance. More specifically, we have the following definition of stability.
\begin{definition}
Let $d$ be a distance on Reeb graphs. We say that $d$ is \textbf{stable} if
\[d(\RRf,\RRg) \leq ||f-g||_{\infty} = \max_{x\in \X}|f(x)-g(x)|,\]
for any two Reeb graphs $\RRf,\RRg$ defined on the same domain $\X$.
\end{definition}
Stability has also been utilized for distances defined other topological descriptors such as the persistence diagram~\cite{Steiner2005,Steiner2009}.

The Reeb graph metrics have also all been shown to be more \textbf{discriminative} than the bottleneck distance. Discriminativity is a property which states that the distance is bounded below by some baseline distance up to some constant $c$.

\begin{definition}
Let $d_0$ and $d$ be two distances defined on Reeb graphs. We say that $d$ is more \textbf{discriminative} than $d_0$ (the baseline distance) if there exists some constant $c$ such that \[d_0(\RR_f,\RR_g) \leq c\cdot d(\RR_f,\RR_g),\]
and there exists no constant $c'$ such that $d_0(\RRf,\RRg) = c'\cdot d(\RRf,\RRg)$, for all Reeb graphs $\RRf,\RRg$.
\end{definition}
Discriminativity intuitively states that the distance $d$ is able to find differences between two Reeb graphs which the baseline distance may not. In conjunction with stability, this leads us to believe that these Reeb graph metrics can be used in various application settings. In Sec.~\ref{sec:distProp} we will conglomerate previously proven theorems of stability, discriminativity, universality and other properties. While many of the previously defined properties apply to each of the Reeb graph metrics, some claims have not yet been explicitly proven. We take this opportunity to provide a more comprehensive list of properties for the distances along with supporting proofs. Fig.~\ref{fig:distanceLandscape} depicts the inequality relationships between these distances.

Lastly, we would like to take note of the computational difficulties and overall complexity of these metrics which introduces challenges for both the applied researcher intending to compute these distances, as well as the newcomer who is attempting to develop an intuition for how these distances operate. In Sec.~\ref{sec:examples}, we use examples to illustrate their complexity and develop the reader's intuition. We then discuss the individual computational hurdles for each distance in Sec.~\ref{sec:discussion}.

While we attempt to make this document as self-contained as possible, we do assume general familiarity with topology and the basics of category theory; some familiarity with persistent homology is also beneficial.  We refer the reader to various references on algebraic topology~\cite{Munkres84}, category theory~\cite{Riehl2017}, and topological data analysis~\cite{DeyWang2021,Oudot2015} for further background and more detailed discussion.

\begin{figure}
    \centering
    \includegraphics[width=\textwidth]{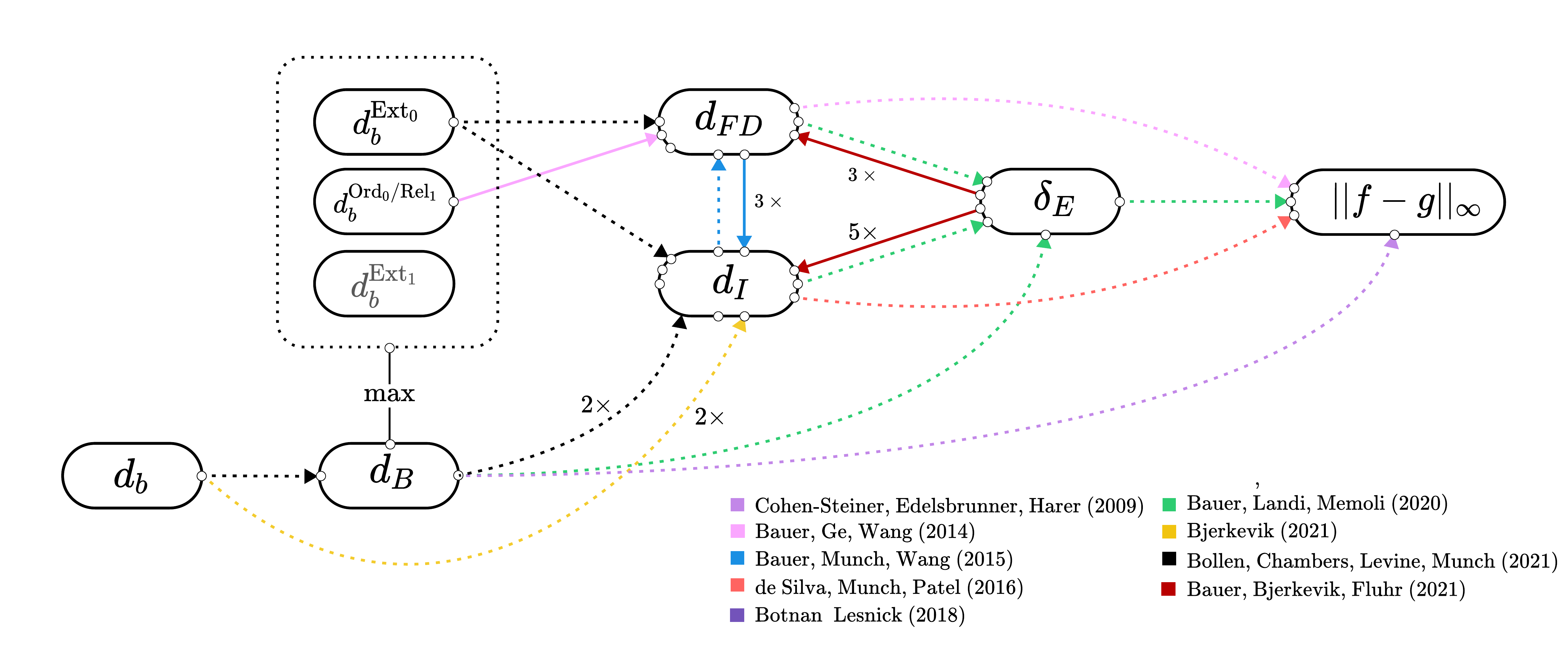}
    \caption{Diagram showing the established relationship between distances defined on Reeb graphs. Here, $d_b$ and $d_B$ indicate the ungraded and graded bottleneck distance, $d_I$ is the interleaving distance on Reeb graphs, $d_{FD}$ is the functional distortion distance, and $\delta_E$ is the universal Reeb graph edit distance.  Color indicates the paper in which these bounds were first proven. Dashed lines indicate that the bounds are tight, while solid lines are not tight. The label $a \xrightarrow{C\times} b$ indicates $a \leq C \times b$.}
    \label{fig:distanceLandscape}
\end{figure}

\subsection{Our contribution}
This work is a constructive survey focusing on these five distances -- the bottleneck distance, the interleaving distance, functional distortion distance, and two edit distances -- as well as an analysis of the properties of each metric so that we can better understand the relationships to one another and use cases for each. Specifically, this paper: 
\begin{itemize}
    \item provides concrete examples for these distances to help develop the intuition of new researchers;
    \item provides returning researchers a reference for fundamental properties of each metric;
    \item compares and contrasts the various metrics and introduces a common nomenclature for their properties in general.
\end{itemize}

\subsection{Notation}
Throughout the literature, Reeb graphs not only have different notation, but the interpretation of them varies. Here, we list the notation we will use throughout this document:
\begin{itemize}
    \item $(\X,f)$ is a constructible $\R$-space/scalar field.
    \item $\RR_f$ is the \emph{topological} Reeb graph of $(\X,f)$ -- the Reeb graph viewed as an $\R$-space $(\X_f,\tilde{f})$, where the space $\X_{f}$ is a topological graph and $\tilde{f}$ is $f$ induced by the quotient to $\X_{f}$. That is, if $p$ is the quotient map carrying $\X$ to $\X_{f}$, then $\tilde{f} = f \circ p$.
    \item $\F$ is the \emph{abstract} Reeb graph of $(\X,f)$ -- the Reeb graph viewed as a cosheaf $\F: \mathbf{Int} \to \mathbf{Set}$.
    \item $\Gamma_f$ is the \emph{combinatorial} Reeb graph of $(\X,f)$ -- the Reeb graph viewed as a labeled multigraph.
    \item $\ExDgm_d(f)$ refers to the $d^{th}$-dimensional extended persistence diagram, and $\ExDgm(f)$ refers to the full extended persistence diagram. 
    We will use the term ``persistence diagram" for the ``full extended persistence diagram" in later sections to avoid cumbersome language, and call the sublevelset persistence diagram the ``ordinary'' diagram when necessary.
\end{itemize}

Each distance treats the Reeb graph differently. However, we will always write our distance measures as simply being a distance between the topological Reeb graphs since we can always convert a topological Reeb graph to its abstract or combinatorial counterpart. Furthermore, unless otherwise noted, we allow for the possibility that two Reeb graphs to be compared 
arise from scalar fields defined on potentially non-homeomorphic spaces, i.e.~$(\X,f)$, $(\Y,g)$ with $\X \not \cong \Y$.

\section{Basic Definitions}
\label{sec:basicDef}

\subsection{Scalar Fields and Reeb Graphs}

\begin{definition}
A \textbf{scalar field} (equivalently an \textbf{$\R$-space}) is a pair $(\X,f)$ where $\X$ is topological space and $f:\X \to \R$ is a continuous real-valued function. The \textbf{dimension} of the scalar field is the dimension of the domain $\X$.
\end{definition}

Without proper restrictions on a scalar field, the Reeb graphs which arise from the scalar field may not be well-behaved. A common criterion that is imposed on a scalar field is restricting $f$ to be a \textbf{Morse} function.

\begin{definition}
A \textbf{non-degenerate} critical point of a smooth function $f$ is any critical point such that the Hessian matrix at that point is nonsingular. A \textbf{Morse function} is a smooth function $f:\X \to \R$ defined on a manifold $\X$ such that all critical points are non-degenerate. We say that a Morse function is \textbf{simple} if all critical points have distinct function values. 
\end{definition}

Morse functions are dense in the space of smooth functions defined on manifolds, i.e. almost all smooth functions on manifolds are Morse. Simulation of simplicity \cite{Edelsbrunner1996} provides algorithmic techniques to turn functions with degenerate critical points into Morse functions. This make the criteria for a function being Morse both useful and practical. 

In what follows, however, many of the definitions and theorems apply to general $n$-dimensional scalar fields whose functions need not necessarily be Morse. Nonetheless, we will focus the examples of this document to \textbf{compact, 2-dimensional manifolds equipped with simple Morse functions}, whose encapsulating set we will denote as $\mS$. Our reasoning is three-fold:

\begin{itemize}
    \item Scalar fields in \mS elicit very specific properties about the types of critical points which will aid in their discussion
    \item Scalar fields in \mS are varied enough to illustrate the Reeb graph metrics properly while being simple enough to visualize
    \item We have tools which can aid in approximating arbitrary functions with simple Morse functions for computational purposes. See Simulation of Simplicity for more details on these techniques\cite{Edelsbrunner1996}.
\end{itemize}

If $(\X,f) \in \mS$, then the scalar field has three types of critical points -- minima, maxima, and saddles. Non-degeneracy of the critical points guarantees that each saddle has exactly two ascending directions and two descending directions.

In this work, there are two different variations of topology that we consider: the level set topology and the sublevel or superlevel set topology.
\begin{definition}
Let $(\sX,\sf)$ be a scalar field and $a\in\R$. The \textbf{level set} of $(\sX,\sf)$ at $a$ is the pre-image of the point $a$ under a function $f$, i.e. $f^{-1}(a) \subseteq \X$. The \textbf{sublevel set} of $(\sX,\sf)$ at $a$ is the pre-image of the interval $(-\infty,a]$ while the \textbf{superlevel set} of $(\sX,\sf)$ at $a$ is the pre-image of the interaval $[a,\infty)$. We denote the sublevel set and superlevel set of $(\sX,\sf)$ at $a$ as $\sX_a$ and $\sX^a$, respectively.
\end{definition}

\begin{figure}
    \centering
    \includegraphics[width=1\textwidth]{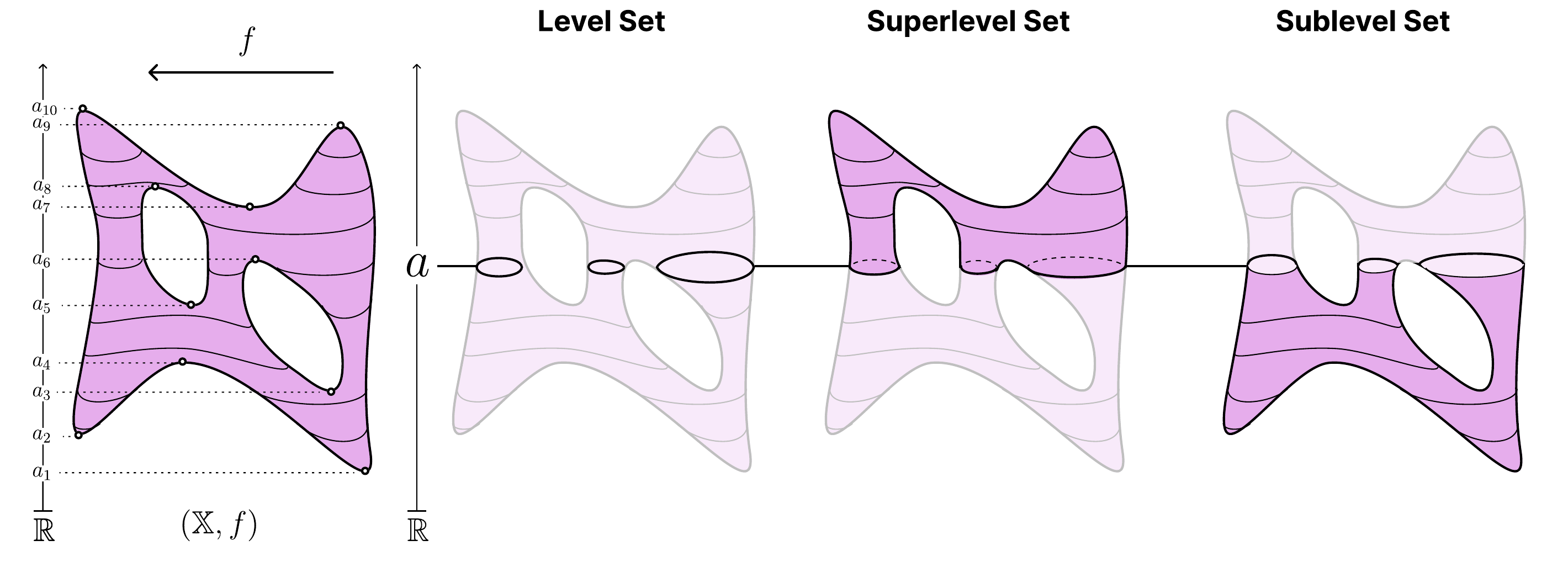}
    \caption{A scalar field $(\X,f)$, where $\X$ is a compact 2-manifold without boundary and $f$ is a Morse function along with a depiction of the level set, superlevel set, and sublevel set at the function value $a$. 
    }
    \label{fig:levelset}
\end{figure}

The Reeb graph is the graph-based topological descriptor which tracks the evolution of the \textit{level set} topology of a scalar field.

\begin{definition}\label{def:reebGraph}
Let $(\sX,\sf)$ be a scalar field. We define an equivalence relation $\sim_f$ on $\sX$ by stating that $x \sim_f y$ if $\sf(x) = \sf(y) = a$ and $x$ and $y$ both lie in the same connected component of the level set $\sf^{-1}(a)$. We define $\rX$ to be the quotient space $\sX / \sim_{\sf}$ and define $\rf:\rX \to \R$ to be the restriction of $\sf$ to the domain $\rX$. The pair $\RRf := (\rX,\rf)$ is called the \textbf{Reeb Graph} of $(\sX,\sf)$. 
\end{definition}

\begin{figure}
    \centering
    \includegraphics[width=1\textwidth]{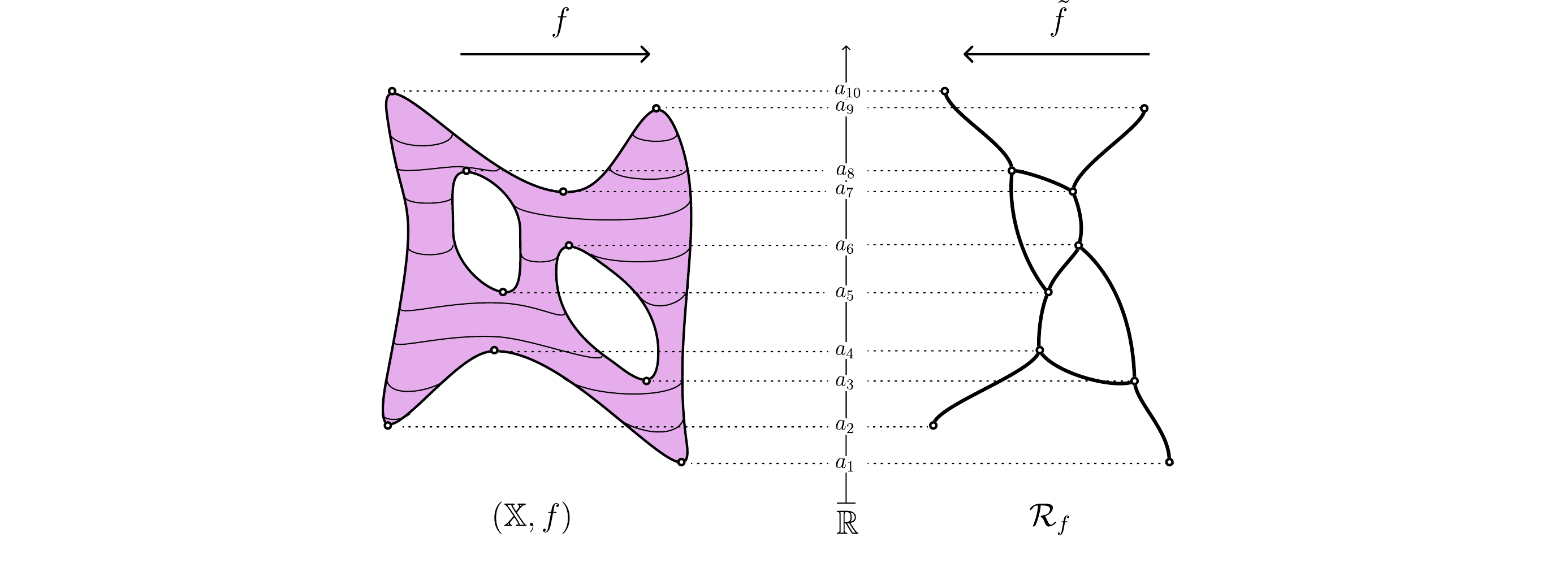}
    \caption{\textbf{(left)} A scalar field $(\X,f)$, where $\X$ is a compact 2-manifold without boundary and $f$ is a Morse function. \textbf{(right)} The Reeb graph $\RR_f$ of the scalar field $(\X,f)$.
    }
    \label{fig:reebGraphEx}
\end{figure}

Note that by this definition, the Reeb graph is itself a scalar field. Throughout this document, we will have instances of using the scalar field notation of a Reeb graph in order to avoid cumbersome notation.

When $\RRf$ arises from a scalar field which is in $\mS$ (and other spaces which we discuss later), the Reeb graph is a 1-dimensional regular CW complex, i.e. a graph. Thus, we will often use the term \textit{vertices} and \textit{edges} of $\RRf$ as each vertex will correspond to a critical point of $(\sX,\sf)$ and edges of $\RRf$ connect critical points to one another via paths where the level set topology does not change. We denote the vertices and edges of the Reeb graph as $V(\RRf)$ and $E(\RRf)$, respectively. We denote the space of Reeb graphs $\RR_f$ whose underlying scalar field $(\sX,\sf)$ is in $\mS$ as $\mR$.

Let $(\sX,\sf)$ be a scalar field and $\RRf$ be its Reeb graph. $\RRf$ has the inherent property that it has at most the same number of 1-cycles as the original space $\sX$ and exactly the same number of connected components. In terms of \textbf{Betti numbers}, we have that 
\[\beta_0(\RRf) =  \beta_0(\sX), \hspace{5em} \beta_1(\RRf) \leq \beta_1(\sX).\]

If $\sX$ is a simply connected domain, then  $\beta_1(\sX) = 0$. This implies that the resulting Reeb graph has no 1-cycles. We call this special case the \textbf{contour tree}.

\begin{definition}
Let $(\sX,\sf)$ be a scalar field where $\sX$ is simply connected, i.e. $\beta_1(\sX) = 0$. Then the resulting Reeb graph $\RRf$ is also simply connected. We call this particular Reeb graph the \textbf{contour tree}. 
\end{definition}

The contour tree having a first-dimensional Betti number of 0 results in graphs which are well-defined trees. Since the Reeb graph metrics and the bottleneck distance will be defined directly on the Reeb graph itself, the definitions of these distances carry over to contour trees. The only difference is that some of the properties exhibited by these distances or relationships between distances are proven to be stronger in the contour tree case. Just as in the Reeb graph case, contour trees of scalar fields in $\mS$ are denoted $\mC$.

A crucial notion of comparison between Reeb graphs is undesrtanding when two Reeb graphs are deemed to be isomorphic. 

\begin{definition}\label{def:functionPreservingReebGraphIso}
We say that a continuous map $\alpha: \RRf \to \RRg$ is a \textbf{function preserving map} if $f = g \circ \alpha$. If there exists a function preserving map which is also a homeomorphism, we say that $\RRg$ and $\RRf$ are \textbf{Reeb graph isomorphic}.
\end{definition}

\begin{figure}
    \centering
    \includegraphics[width=0.9\textwidth]{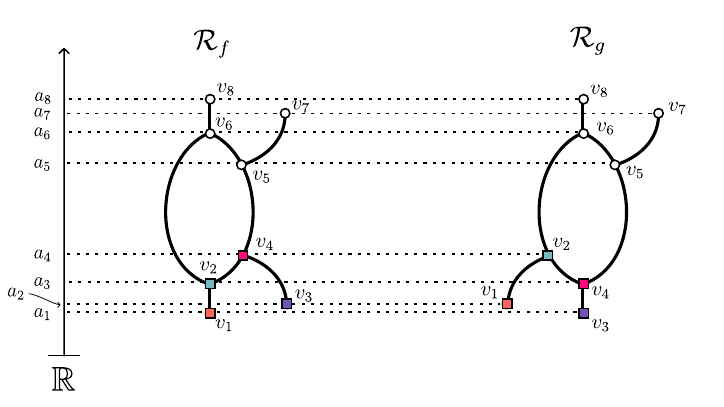}
    \caption{Two Reeb graphs where the function value of the nodes is indicated by height. The labels on the nodes themselves correspond to the labeling which shows that these two Reeb graphs are isomorphic \emph{as graphs}. However, since there is no map between these two Reeb graphs which is an isomorphism and also function preserving, they are \emph{not} Reeb graph isomorphic. }
    \label{fig:graphIsoNonRGIso}
\end{figure}

Note that Reeb graph isomorphism is \textit{not} equivalent to graph isomorphism. Fig.~\ref{fig:graphIsoNonRGIso} depicts an example of two Reeb graphs that are graph isomorphic but not Reeb graph isomorphic. 
In Sec.~\ref{sec:interleaving}, we will see another equivalent definition of Reeb graph isomorphism using the category theory definition of a Reeb graph known as a \textbf{cosheaf}, and in \cref{sec:rged} we will see a version of Reeb graph isomorphism by considering the Reeb graphs as labeled multigraphs.

\subsubsection{Tame and Constructible Scalar Fields}

The literature on Reeb graph metrics is saturated with various criteria for the scalar field which is needed for the respective definitions and theorems. A scalar field being in $\mS$ is sufficient for these criteria to be met, but not necessary. Here, we will discuss tameness and constructibility -- two closely related scalar field restrictions which are used in the definition of bottleneck distance, functional distortion distance, and interleaving distance. 

The notion of \textbf{tameness} is a common condition seen in the literature which essentially guarantees that there is a way to partition the function into sections where the homology does and does not change. This loosens the smoothness condition given by Morse functions.

\begin{definition}\label{def:tame}
We say that a continuous function $f:\X \to \R$, where $\X$ is a topological space, is \textbf{tame} if there is a finite partition $-\infty = a_{-1} < \min(f) = a_0 < \ldots < a_n = \max(f) < a_{n+1} = \infty$ such that the homology groups of the sublevel sets and superlevel sets do not change on any interval between two members of the finite partition. That is, for $s,t \in [a_i,a_{i+1})$ with $s < t$, the homomorphism $H_p(\X_{\leq s}) \rightarrow H_p(\X_{\leq t})$ induced by the inclusion $f^{-1}(-\infty,s] \subseteq f^{-1}(-\infty,t]$ is an isomorphism, and similarly for $s,t \in (a_i,a_{i+1}]$, the homomorphism $H_p(\X_{\geq t}) \rightarrow H_p(\X_{\geq s})$ induced by the inclusion $f^{-1}[t,\infty) \subseteq f^{-1}[s,\infty)$ is an isomorphism. A scalar $(\X,f)$ where $f$ is tame is called a \textbf{tame scalar field}.
\end{definition}

We denote the set of tame scalar fields as \tS. Furthermore, we call the set  $\{a_0,\ldots,a_n\}$ the \textbf{critical values} of the scalar field.

When interleaving distance on Reeb graphs was introduced by de Silva et al. \cite{deSilva2016}, researchers introduced the notion of \textbf{constructibility} which is a stronger restriction on scalar fields than tameness, but still not as restrictive as having smooth functions.

\begin{definition}
\label{def:ConstructibleSpace}
We say that a scalar field $(\X,f)$ is \textbf{constructible} if it is homeomorphic to a scalar field $(\hat{\X},\hat{f})$ which has a finite set $S = \{a_0,\ldots,a_n\}$ (sorted in increasing order) and is constructed in the following way:
\begin{itemize}
    \item For each $0 \leq i \leq n$ there is a space $\mathbb{V}_i$ which is locally path-connected and compact;
    \item For each $0 \leq i \leq n-1$ there is a space $\mathbb{E}_i$ which is also locally path-connected and compact.;
    \item For each $0 \leq i \leq n-1$ there exist continuous attaching maps $\mathbbm{l}_i:\mathbb{E}_i \to \mathbb{V}_i$ (left) and $\mathbbm{r}_i:\mathbb{E}_i \to \mathbb{V}_{i+1}$ (right).
\end{itemize}
We let $\hat{\X}$ be the disjoint union of $\mathbb{V}_i \times \{a_i\}$ and $\mathbb{E}\times [a_i,a_{i+1}]$ by making the identifications $(\mathbbm{l}_i(x),a_i) \sim (x,a_i)$ and $(\mathbbm{r}_i(x),a_{i+1}) \sim (x,a_{i+1})$ for all $i$ and all $x \in \mathbb{E}_i$. The spaces $\mathbb{V}_i \times \{a_i\}$ are known as \textbf{critical fibers} and $\mathbb{E} \times [a_i,a_{i+1}]$ are known as \textbf{non-critical fibers}. The set $S = \{a_0,\ldots,a_n\}$ is known as the set of \textbf{critical values} of the constructible scalar field.
\end{definition}

\begin{figure}
    \centering
    \includegraphics[width=0.9\textwidth]{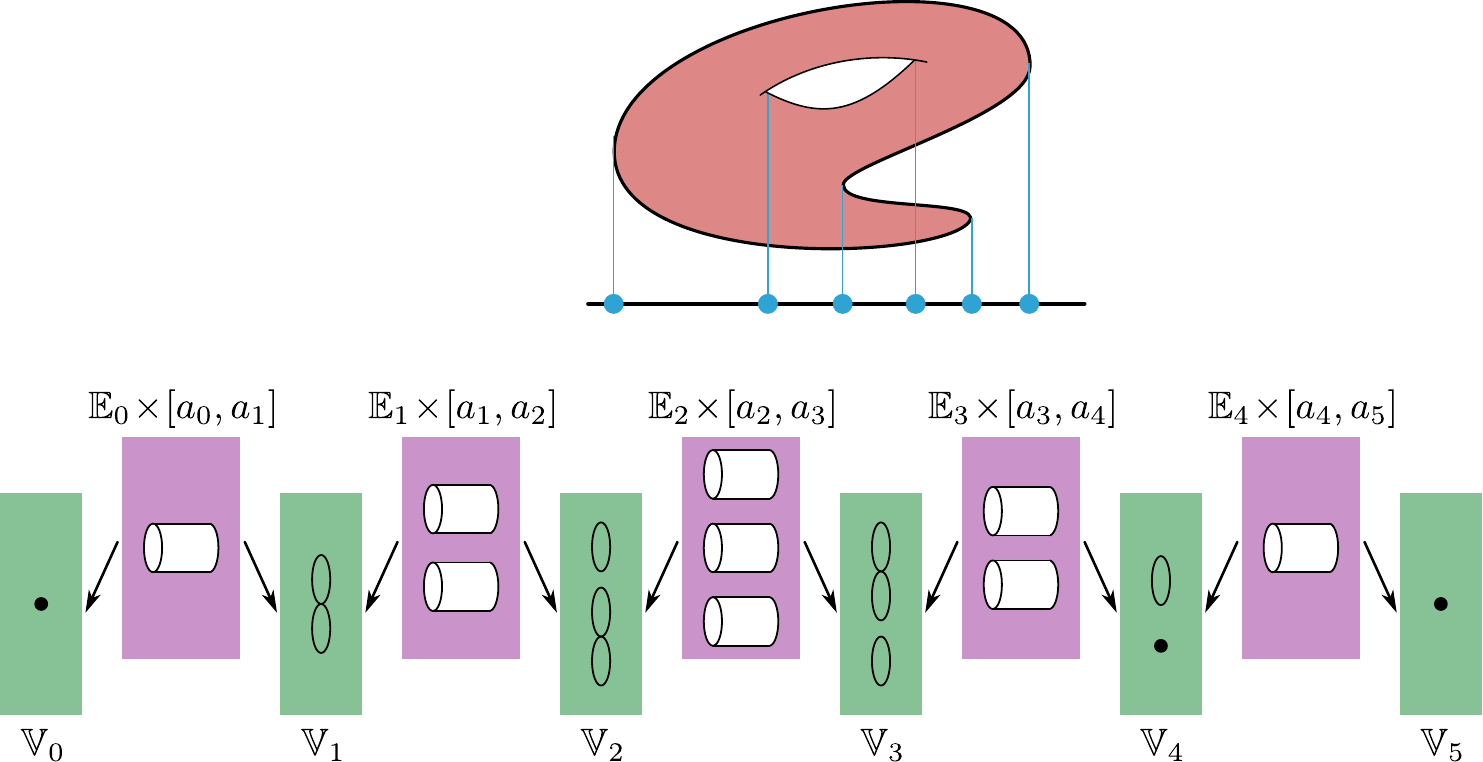}
    \caption{An example of a constructible $\R$-space with notation following Def.~\ref{def:ConstructibleSpace}. Figure from \cite{deSilva2016}.}
    \label{fig:ConstructibleSpace}
\end{figure}

We denote the set of constructible scalar fields as $\cS$.

\begin{proposition}
A constructible scalar field is tame. That is, $\cS \subseteq \tS$.
\end{proposition}
\begin{proof}
The critical values of a constructible scalar field directly correspond to the critical values in the tame sense. Furthermore, by definition, the homology groups of the level sets do not change along non-critical fibers. This implies that the homology groups of the sublevel sets remain constant along these fibers as well. Thus, a constructible scalar field has a finite set of critical values which segments the scalar field into spaces where the sublevel set homology does not change. Thus, the constructible scalar field is also tame. 
\end{proof}

\subsubsection{Piecewise Linear Scalar Fields}

PL scalar fields are a common criteria for scalar fields since they are readily packaged for computation. The intuition is that we can approximate smooth functions using PL functions in order to translate from the smooth domain to the discrete domain. Here, we introduce PL scalar fields since they are used for the construction of the universal distance in \cref{sec:universalDistance}. For the more computational aspects of discretizing scalar fields for topological analysis, we refer the reader to \cite{DeyWang2021} and \cite{diFabio2017}.

\begin{definition}
A \textbf{triangulation} of a space $\X$ is a simplicial complex $K$ paired with a homeomorphism from the underlying space of $K$ to the space $\X$. We say that a space $\X$ is \textbf{triangulable} if there exists a triangulation for $\X$. A \textbf{piecewise linear (PL) function} $f:|K| \to \R$ is a function defined on the underlying space $|K|$ of a triangulated space $\X$ such that $f(x) = \sum_i b_i(x)f(u_i)$, where $u_i$ are the vertices of $\X$ and $b_i(x)$ are the barycentric coordinates of $x$. If $\X$ is a compact, triangulable space and $f$ is a PL function, then we call the pair $(\X,f)$ a \textbf{piecewise linear (PL) scalar field}.
\end{definition}

We denote the space of piecewise linear scalar fields as $\plS$. \textbf{Critical points} of a PL scalar field follow the usual definition of critical point.

\begin{proposition}[de Silva et al.~\cite{deSilva2016}]
A PL function $f$ defined on a compact polyhedron $\X$ is a constructible scalar field.
\end{proposition}

\begin{corollary}
A PL function $f$ defined on a compact polyhedron $\X$ is a tame scalar field.
\end{corollary}

\subsection{Extended Persistence}
\label{sec:extendedPersistence}

One of the most commonly used topological descriptor for analysis is the \textbf{persistence diagram} (or equivalently the \textbf{barcode}) which is a set-based topological descriptor. Given a scalar field (or its Reeb graph), we can imagine sweeping upwards in a linear fashion, keeping track of when particular features appear and disappear. \textbf{Persistence} intuitively captures the length of time that features of a scalar field (or other data sets \cite{Edelsbrunner2003b}) take to disappear once they have been introduced. A persistence diagram retains this information as a multiset of points $(a,b)$ which each represent a feature which is created (\textbf{born}) at $a$ and is destroyed (\textbf{dies}) at $b$.

Here we provide the construction of the persistence diagram and extended persistence diagram. In Sec.~\ref{sec:bottleneckDist} we introduce the bottleneck distance which provides us with a good baseline for the comparison of the distances defined on Reeb graphs because 1) it can be efficiently computed, 2) it shares similar properties to the Reeb graph metrics, and 3) it has already been heavily utilized for comparison of persistence diagrams; see \cite{Yan2021} for a survey which uses bottleneck distance as a discriminative baseline.

Persistence diagrams are often defined directly on the scalar field. However, in order for us to compare the persistence diagrams to the Reeb graph, \textbf{we will define our persistence diagrams directly on the graph-based structures}. To do this, we simply leverage the fact that the Reeb graph is itself a 1-dimensional scalar field. We refer the reader to books on topological data analysis \cite{Oudot2015,DeyWang2021} for a more detailed overview of persistence diagrams defined on higher dimensional spaces and their corresponding distance metrics.

\subsection*{Definition of extended persistence}

Let $(\X,f)$ be a tame scalar field with critical values $\{a_1,\ldots,a_k\}$ and let $\{b_0,\ldots,b_k\}$ be a set of real numbers such that 
\[
b_0 < a_1 < b_1 < a_2 < \ldots < b_{k-1} < a_k < b_k. 
\]
This induces a sequence of nested sublevel sets
\[ 
\emptyset = \X_{b_0} \subset \X_{b_1} \subset \ldots \subset \X_{b_{k-1}} \subset \X_{b_k} = \X,
\] 
called a \textbf{filtration} of the scalar field $(\X,f)$. We can then associate each $\X_{b_i}$ with a corresponding homology group $H_d(\X_{b_i})$ for a fixed dimension $d$ to obtain the sequence 
\[
\emptyset = H_d(\X_{b_0}) \rightarrow H_d(\X_{b_1}) \rightarrow \ldots \rightarrow H_d(\X_{b_{k-1}}) \rightarrow H_d(\X_{b_{k}}),
\]
where each arrow between homology group represents the homomorphism 
$h^{i,j}_d: H_d(\X_{b_i}) \rightarrow H_d(\X_{b_j})$ 
induced by the inclusion $\X_{b_i} \subset \X_{b_j}$. 
Note that we have chosen these $\{b_0,\ldots,b_k\}$ to be specifically interleaved between the critical values of $f$ so that the homology (in dimension $d$) of the sequence changes at every iteration\footnote{The homology groups do not necessarily change, but the only possible places that these sublevel sets have different topologies is when we pass over critical values. Choosing values that are not surrounding the critical values would cause our sequence to have multiple homology groups that are guaranteed to be repeated.}.

The $d^{th}$ Betti number is the rank of the $d$-homology group. Colloquially it quantifies the number of $d$-dimensional holes in the particular topological space $\X$. Thus, the $0$-dimensional Betti numbers tell us the number of connected components, while the $1$-dimensional Betti numbers tell us the number of loops, and so on.

\begin{definition}
The \textbf{$\bm{d^{th}}$-persistent homology groups} are the images of the homology group homomorphisms, $H^{i,j}_d : = \Im(h^{i,j}_d)$ and the \textbf{$\bm{d^{th}}$-persistent Betti numbers} are their corresponding ranks, $\beta^{i,j}_d = \mathsf{Rank}(H^{i,j}_d)$. 
\end{definition}

The persistent Betti numbers now correspond to the number of $d$-dimensional holes present in specific portions of the filtration. As we step through the filtration, the number of features (and thus the persistent Betti numbers) may change. The notion of \textbf{persistence} quantifies this change.

\begin{definition}[{\cite[Definition~3.5]{DeyWang2021}}]
A non-trivial $d$-{th} homology class $\alpha \in H_d(\X_a)$ is \textbf{born} at $\X_i$,$i \leq a$, if $\alpha \in H^{i,a}_d$ but $\alpha \notin H^{i-1,a}_d$. Similarly, a non-trivial $d$-th homology class $\alpha \in H_d(\X_a)$ \textbf{dies} at $\X_j$, $a<j$, if $h^{a,j-1}_d(\alpha)\neq 0$ but $h^{a,j}_d(\alpha) = 0$ . The \textbf{persistence} of the class $\alpha$ is $|a_j - a_i|$. If a class $\alpha$ is born but never dies, then we say that the persistence of the class $\alpha$ is $+\infty$.
\end{definition}

Low persistence features are often (but not always \cite{Bubenik2020}) attributed to \textit{noise} or \textit{insignificant} features of the data being studied, while high persistence features are often associated with \textit{significant} features.
Of course, what determines significant or insignificant features is dependent on the application domain and question being asked of the data. In any case, persistence allows us to scale our view of the data as we see fit. Persistence can be captured through the use of a \textbf{persistence diagram}.

\begin{definition}
Let $\alpha$ be a class which is born at $\X_i$ and dies at $\X_j$. The \textbf{persistence pair} representing this class $\alpha$ is the ordered pair $(a_i,a_j)$ in the extended plane $\bar{\R}^2 := \R\cup\{+\infty\} \times \R\cup\{+\infty\}$. The \textbf{$\bm{d^{th}}$-persistence diagram} of $f$, denoted as $\Dgm_d(f)$, is the multiset of all persistence pairs. 
\end{definition}

Often, a persistence diagram is visualized using a scatter plot with the addition of the line $x=y$ to serve as a visual cue for the persistence values; pairs farther away from the diagonal have a larger persistence value. The ordered pairs of a persistence diagram correspond directly to pairs of critical points in a Morse scalar field, which in turn correspond to vertices in the Reeb graph. For example, index 0 critical points create connected components (0-dimensional homology classes) which are then destroyed by down-forks (saddles with two decreasing edges). However, it is possible that a class $\alpha$ is born at $\X_i$ and never dies, such as a cycle in a Reeb graph corresponding to a $1$-dimensional hole in the scalar field. Such classes are known as \textbf{essential} homology classes and are represented as $(a_i,+\infty)$ in the persistence diagram, where the class is born at $\X_i$.

We can immediately see that this can cause issues by considering two 1-cycles that are born at the same top function value of the loop. 
These two cycles will be represented by two identical persistence pairs, even if the bottoms of the loops are at different function values. 

To alleviate this, we can leverage Poincar\'{e} and Lefschetz duality to create a new sequence of homology groups where we begin and end with the trivial group \cite{Steiner2009}. This guarantees that each homology class that is born will also die at a finite value, replacing all the problematic points paired with $\infty$, and implying that each critical point in the Reeb graph will be matched with another critical point at least once.

Let $H_d(\X,\X^a)$ denote the relative homology group of $\X$ with the superlevel set $\X^a$. From this, we can create a new sequence of homology groups

\begin{align*}
    0 = & H_d(\X_{b_0}) \rightarrow H_d(\X_{b_1}) \rightarrow \ldots \rightarrow H_d(\X_{b_{k-1}}) \rightarrow H_d(\X_{b_k}) \\
    = & H_d(\X,\X^{b_k}) \rightarrow H_d(\X,\X^{b_{k-1}}) \rightarrow \ldots \rightarrow H_d(\X,\X^{b_1}) \rightarrow H_d(\X,\X^{b_0}) = 0,
\end{align*}

which we call the \textbf{$d^{th}$ extended filtration sequence}. We define the sequence $H_d(\X_{b_0}) \rightarrow \ldots \rightarrow H_d(\X_{b_k})$ as the \textbf{upwards} sequence and the sequence $H_d(\X,\X^{b_k}) \rightarrow \ldots \rightarrow H_d(\X,\X^{b_0})$ as the \textbf{downwards} sequence.

Now, the classes which are born and die along this extended sequence can be partitioned into three different groups: 1) the classes which are born and die in the upwards sequence, 2) the classes which are born and die in the downwards sequence, and 3) the classes which are born in the upwards sequence and die in the downwards sequence.

\begin{definition}
Let $(\X,f)$ be a tame scalar field whose critical values are $\{a_1,\ldots,a_k\}$. Let $\{b_0,\ldots,b_k\}$ be a set of real numbers such that $b_0 < a_1 < b_1 < a_2 \ldots < b_{k-1} < a_k < b_k$. Finally, let $S$ denote the $d^{th}$ extended filtration sequence of the scalar field $(\X,f)$. We construct three different multisets of points, denoted as $\Ord_d(f)$, $\Ext_d(f)$, and $\Rel_d(f)$, as follows:
\begin{itemize}
    \item If $\alpha \in H_d(\X_a)$ is born at $\X_{b_i}$ and dies at $\X_{b_j}$, $i < a < j$, then $(a_i,a_j) \in \Ord_d(f)$
    \item If $\alpha \in H_d(\X^a)$ is born at $\X^{b_j}$ and dies at $\X^{b_i}$, $i < a < j$, then $(a_j,a_i) \in \Rel_d(f)$
    \item If $\alpha \in H_d(\X_a)$, is born at $\X_{b_i}$ and dies at $\X^{b_j}$, $i < a$, then $(a_i,a_j) \in \Ext_d(f)$ 
\end{itemize}
Each of these multisets are considered a persistence diagram themselves. We denote this collection of diagrams as the \textbf{$d^{th}$ subdiagrams} of $f$. 
\end{definition}

\begin{definition}
The \textbf{$\bm{d^{th}}$-extended persistence diagram} of $f$, denoted as $\ExDgm_d(f)$, is the union of the persistence pairs contained in the $d^{th}$ subdiagrams $\Ord_d(f)$, $\Rel_d(f)$, and $\Ext_d(f)$. 
Finally, we define the $\textbf{full extended persistence diagram}$ of $f$ as the union of the $d^{th}$-extended persistence diagrams for all dimensions $d$, and denote it as $\ExDgm(f)$.
\end{definition}

Note that in this definition, points of the extended persistence diagram can be both above and below the diagonal. 
Preliminary results in \cite{agarwal2006} showed a pairing between all critical points of a 2-manifold while \cite{Steiner2009} extended this to general manifolds. 
While the Reeb graph is \emph{not} a manifold, we are still guaranteed that each critical point will be matched with another.

Simple Morse functions defined on compact manifolds create Reeb graphs which are \textbf{generic}: each critical value is unique and no single critical point can be the birth and death of two separate persistent homology classes. This prevents critical points of the Reeb graph from belonging to two pairs in the persistence diagram. In most cases, genericness is not needed to prove the theorems in subsequent sections. However, just as we will provide examples which are Morse functions on compact manifolds for conceptual ease, our discussion will only deal with generic Reeb graphs as well.

As we stated before, in order for us to compare the bottleneck distance to the other Reeb graph metrics, we construct the full extended persistence diagram directly on the Reeb graph. As a result, our persistence diagrams will have exactly four subdiagrams~\cite{agarwal2006}: $\text{Ord}_0$, $\text{Ext}_0$, $\text{Rel}_1$, and $\text{Ext}_1$. To  make this distinction clear, we will use the notation $\ExDgm(\tilde{f})$  when discussing the persistence diagram constructed on the Reeb graph (rather than $\ExDgm(f)$), where $\tilde{f}$ is the function defined on the Reeb graph $\RR_f = (\rX,\rf)$.
Below we provide a standard example of constructing this diagram and the process behind it.

\begin{figure}
    \centering
    \includegraphics[width=1\textwidth]{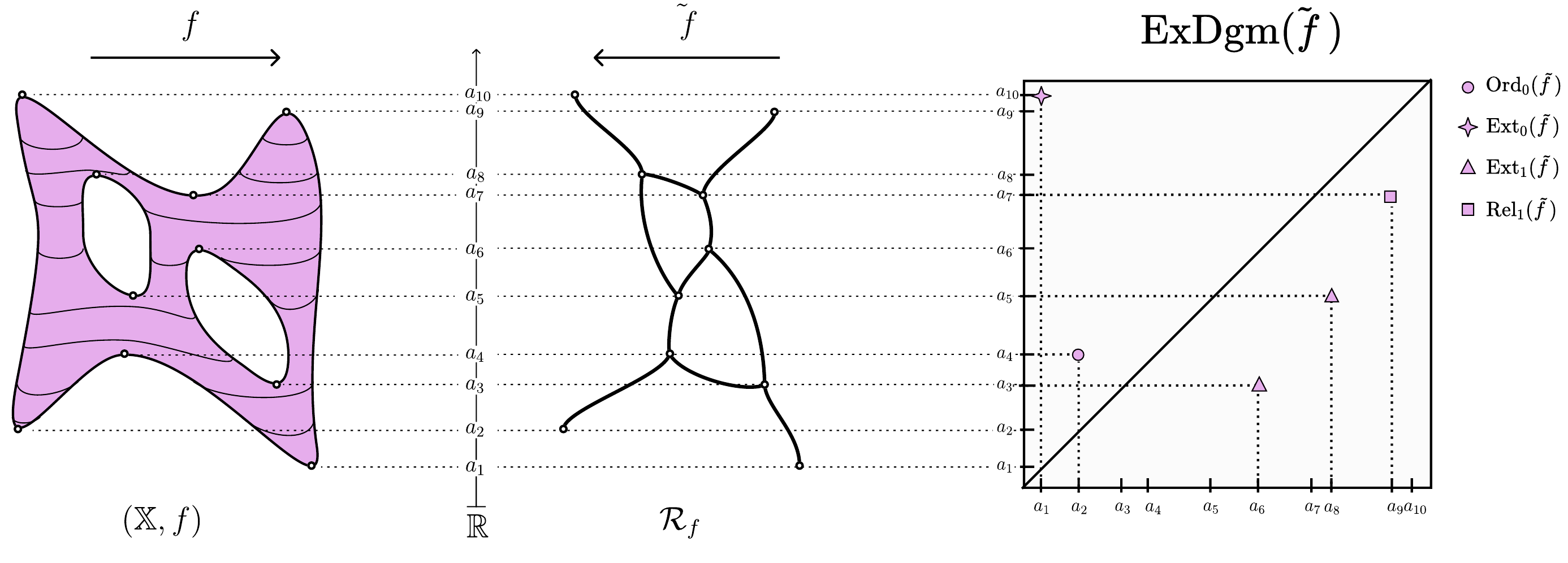}
    \caption{\textbf{(left)} A scalar field $(\X,f)$, where $\X$ is a compact 2-manifold without boundary and $f$ is a Morse function. \textbf{(center)} The Reeb graph $\RR_f$ of the scalar field $(\X,f)$. \textbf{(right)} The extended persistence diagrams for the scalar field. Circles denote the points in $\Ord_0$, squares denote points in $\Rel_1$, triangles denote the points in $\Ext_1$, and stars denote the single point in $\Ext_0$.
    }
    \label{fig:surfaceForPersistence}
\end{figure}

\begin{example}
Fig.~\ref{fig:surfaceForPersistence}(left) shows a genus-2 surface embedded in $\mathbb{R}^3$ along with its Reeb graph (center). Fig.~\ref{fig:surfaceForPersistence}(right) shows the full extended persistence diagram for this Reeb graph. To find the persistence pairs, we begin by sweeping upwards and tracking the features which are born and destroyed.
Critical points $a_1$ and $a_2$ both create classes in $H_0$ by introducing new connected components in the sublevel sets; $a_6$ and $a_8$ both create classes in $H_1$ by closing the cycles which started at $a_3$ and $a_5$, respectively. These two cycles are examples of \textbf{essential homology classes} and so will only be destroyed on the downwards sweep. We pair $a_2$ with $a_4$ since $a_4$ merges two connected components together. 

Going downwards, $a_{10}$ destroys the class in $H_0$ created by $a_1$; $a_9$ creates a relative $H_1$ class which is destroyed at $a_7$; $a_3$ and $a_5$ destroy the classes in $H_1$ created by $a_6$ and $a_8$, respectively. 
Thus, we can summarize the off-diagonal points of the diagram $\ExDgm(f)$ as follows:
\begin{align*}
    \text{Ord}_0 & = \{(a_2,a_4)\} \\
    \text{Ext}_0 & = \{(a_1,a_{10})\} \\
    \text{Rel}_1 & = \{(a_9,a_7)\} \\
    \text{Ext}_1 & = \{(a_6,a_3),(a_8,a_5)\}
\end{align*}
Fig.~\ref{fig:surfaceForPersistence}(b) shows the full extended persistence diagram with accompanying legend.
\end{example}

\paragraph{Fact.} Each subdiagram present in the full extended persistence diagram of $\tilde{f}$ coincides with the corresponding subdiagrams of the full extended persistence diagrams of $f$. That is,
\begin{align*}
    \Ord_0(\rf) & = \Ord_0(\sf) \\
    \Ext_0(\rf) & = \Ext_0(\sf) \\
    \Rel_0(\rf) & = \Rel_1(\sf) \\
    \Ext_0(\rf) & = \Ext_1(\sf) 
\end{align*}

\section{Bottleneck Distance}
\label{sec:bottleneckDist}

Since the topological summary provided by the extended persistence diagram is a useful tool for describing the data encoded in a scalar field, it is natural for us to consider how we would compare two persistence diagrams to one another. The \textbf{bottleneck distance} provides us with such a device. While the bottleneck distance was originally defined in terms of ordinary persistence diagrams, Steiner et al. \cite{Steiner2009} extended the results and definitions to extended persistence diagrams.

The idea behind bottleneck distance is to find a minimal cost matching between the points of the diagrams. 
There are two notions of bottleneck distance that appear in the literature, and since both are called ``bottleneck distance'', it is often difficult to tease out which is being discussed in a given paper. 
The core issue is that, given some collection of persistence diagrams that come either from different dimensions of persistence or from different types of subdiagrams in the extended persistence diagrams, we can either allow only matchings between points of the same subdiagrams, or we can ignore that labeling information and look for matchings that can potentially cross these barriers.  
In order to carefully discuss results in the literature, we therefore separate these ideas into what we call \textit{ungraded} and \textit{graded} bottleneck distances, which we will denote as $d_b$ and $d_B$, respectively.  We note that while the majority of the literature uses the graded bottleneck distances~\cite{DeyWang2021,Carriere2017}, the ungraded version is used to derive stability bounds in more algebraic settings, notably when using interlevel set persistence~\cite{Botnan2018,Bjerkevik2021}; hence, we include both versions in our analysis and discussion.

\subsection{Ungraded and graded bottleneck distance definitions}
Assume we are given two persistence diagrams with no additional labeling information on different types of points. 
The ungraded bottleneck distance finds a best possible matching between these sets of points, allowing a point from one diagram to be matched to the diagonal of the other diagram.   More formally:

\begin{definition}\label{def:UngradedBottleneckDistance}
Let $D_1,D_2$ be two persistence diagrams. Let $\Delta := \{(x,x)\mid x \in \R\}$. We define the  \textbf{(ungraded) bottleneck distance} $\db{}$ between two persistence diagrams $D_1,D_2$ as 
\[
\db{}(D_1,D_2) = \inf_{\zeta} \sup_{x \in D_1\cup\Delta}||x - \zeta(x)||_{\infty},
\]
where $\zeta$ is a bijection between the points of the multisets $D_1\cup\Delta$ and $D_2\cup\Delta$. 
\end{definition}
In this setting, each pair is allowed to be matched to the diagonal which essentially gives the matching a choice of either incurring a cost based on its comparison to another feature of the opposite diagram or incur a cost based solely on its own size.
An alternate viewpoint which achieves the same result but might be more accessible to the combinatorially minded is the following.

\begin{definition}[Combinatorial version]
\label{def:UngradedBottleneckDistance-v2}

Let $D_1,D_2$ be two persistence diagrams and let $\lambda$ denote an \textbf{empty node}. We define $\bar{D_i}:=D_i\cup\{\lambda\}$. A \textbf{matching} $M$ between $D_1$ and $D_2$ is a binary relation $M \subseteq \bar{D_1} \times \bar{D_2}$ such that each element from $D_1$ and $D_2$ appear in exactly one pair $(x,y) \in M$.

The \textbf{cost} of a pair $(x,y)\in M$ is defined as
\[c(x,y) = \begin{cases} 
      \max\{\mid x_1-y_1\mid,\mid x_2-y_2\mid \} & x\in D_1,y\in D_2 \\
      \frac12\mid x_1-x_2 \mid & x\in D_1, y = \lambda \\
      \frac12|y_1-y_2| & x = \lambda, y \in D_2 
  \end{cases}
\]

The cost of a matching $M$, denoted as $c(M)$, is then the largest cost of all pairs in the matching.

The \textbf{(ungraded) bottleneck distance} between $D_1$ and $D_2$ is 
\begin{equation*}
    \db{}(D_1,D_2) = \inf_{M\in\mathscr{M}} c(M) = \inf_{M\in\mathscr{M}}\max_{(x,y)\in M} c(x,y),
\end{equation*}
where $\mathscr{M}$ denotes all possible matchings between $\bar{D}_1$ and $\bar{D}_2$.
\end{definition}

This definition of \textit{ungraded} bottleneck distance is often seen when the input diagrams to be compared each have points of only one (and the same) type. 
For instance, $D_1$ and $D_2$ could each be a single subdiagram, such as $\Ord_0$. 
When this is the case, the matchings obviously only match points of the same type.
However, in the case where multiple types are available but are forgotten, the distance can still be defined. 
Thus, we also give the definition of the \textit{graded} bottleneck distance which can accept multiple types of points in the diagram and ensures that they are differentiated. Recall that a \textbf{subdiagram} of a $d^{th}$-extended persistence diagram is one of three types: $\Ord_d,\Rel_d$ or $\Ext_d$.

\begin{definition}[Graded bottleneck distance]
\label{def:GradedBottleneck}

Let $\ExDgm(f)$ and $\ExDgm(g)$ be two full extended persistence diagrams. Denote the individual ungraded distances between subdiagram types of a fixed dimension $d$ as
$$
\db{\type_d} = \db{}(\type_d(f), \type_d(g)).
$$
Then the \textbf{(graded) bottleneck distance} is 
\begin{equation*}
  \dB{}(\ExDgm(f),\ExDgm(g)) = \max_{d\geq 0}\bigg( \max\Big\{\db{\Ord_d},\db{\Rel_d},\db{\Ext_d}\Big\}\bigg)  
\end{equation*}

\end{definition}

As previously noted, the majority of work available uses (often implicitly) this graded bottleneck distance when comparing diagrams with multiple types, and so it is the one more commonly introduced in the literature~\cite{Carriere2017,DeyWang2021}.
Sometimes the (graded) bottleneck distance is equivalently defined by writing a more combinatorial definition that looks like Def.~\ref{def:UngradedBottleneckDistance-v2} but requires that matchings go between points of the same type. 
However, when carefully checking definitions, some of the algebraic literature, particularly that exploring the idea of \textbf{interlevel set persistence}~\cite{Bjerkevik2016a,Botnan2018}, is in fact using an ungraded version on full extended persistence diagram information. 
We defer a full discussion of the details of interlevel set persistence to Sec.~\ref{sec:appx:interlevelset}, but note that this distinction will be important when discussing the relationship between available metrics in Sec.~\ref{sec:distProp}.

We note one immediate relationship between the ungraded and graded bottleneck distances.
\begin{proposition}
\label{prop:bottleneckBoundedByBottleneck}
Let $D_1$ and $D_2$ be two full extended persistence diagrams. Then,
\[d_b(D_1,D_2) \leq d_B(D_1,D_2).\]
\end{proposition}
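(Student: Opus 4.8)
The plan is to exploit the fact that the graded distance is computed over a strictly more restricted class of matchings than the ungraded distance, so that any matching feasible for the graded problem is automatically feasible for the ungraded one. Since $d_b$ is an infimum of costs taken over a larger collection of matchings, it can only be smaller.

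First I would recast both distances in a common framework. Writing $D_i = \bigsqcup_{\tau} D_i^\tau$ as the disjoint union of its typed sub-diagrams, where $\tau$ ranges over the four labels $\Ext_0, \Ext_1, \Rel_1, \Ord_0$, Def.~\ref{def:GradedBottleneck} reads $d_B(D_1,D_2) = \max_\tau d_b(D_1^\tau, D_2^\tau)$, each term being the ordinary ungraded bottleneck distance between the sub-diagrams of a single type. The ungraded distance $d_b(D_1,D_2)$, by contrast, ranges (via Def.~\ref{def:UngradedBottleneckDistance} or its combinatorial reformulation Def.~\ref{def:UngradedBottleneckDistance-v2}) over all matchings on the combined point sets, with no constraint that matched points share a type.

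The key step is to assemble a family of per-type matchings into a single global matching whose cost is exactly the maximum of the per-type costs. Fix $\varepsilon > 0$. For each type $\tau$, choose via Def.~\ref{def:UngradedBottleneckDistance-v2} a matching $M_\tau$ between $D_1^\tau$ and $D_2^\tau$ with $c(M_\tau) \le d_b(D_1^\tau, D_2^\tau) + \varepsilon$; this is possible since $d_b$ is an infimum. Setting $M = \bigcup_\tau M_\tau$ yields a matching between subsets of $D_1$ and $D_2$ in the sense of Def.~\ref{def:UngradedBottleneckDistance-v2}, because the typed point sets are disjoint, so each off-diagonal point participates in at most one $M_\tau$ and $M$ is a bijection on its matched subset. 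Since the cost in Def.~\ref{def:UngradedBottleneckDistance-v2} is a supremum over matched pairs together with unmatched off-diagonal points, and every such pair or point carries a definite type, the defining set for $c(M)$ is the disjoint union of the defining sets for the $c(M_\tau)$; as $\sup$ of a finite union is the maximum of the suprema, we obtain
\[
c(M) = \max_\tau c(M_\tau) \le \max_\tau d_b(D_1^\tau, D_2^\tau) + \varepsilon = d_B(D_1,D_2) + \varepsilon.
\]
Because $M$ is one admissible matching for the ungraded problem on $D_1,D_2$, the defining infimum gives $d_b(D_1,D_2) \le c(M) \le d_B(D_1,D_2) + \varepsilon$, and letting $\varepsilon \to 0$ yields $d_b(D_1,D_2) \le d_B(D_1,D_2)$.

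I expect the only real subtlety — rather than a genuine obstacle, since this is the easy direction — to be bookkeeping: confirming that the union of the per-type matchings is a single valid matching (disjointness of the typed point sets guarantees this) and that the cost functional genuinely splits as a maximum over types. The $\varepsilon$-argument is included because the bottleneck infimum need not be attained in full generality; restricting to finite diagrams, where optimal per-type matchings exist, the proof collapses to combining those optima directly and reading off $c(M) = d_B(D_1,D_2)$. I would also remark that an unrestricted matching is free to pair points across types and so may strictly beat $M$, which is exactly why the inequality can be strict and is consistent with $d_b \le d_B$.
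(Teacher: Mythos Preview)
Your proof is correct and follows essentially the same approach as the paper: the paper observes in two sentences that any type-preserving matching is also a valid unrestricted matching, so the infimum over the larger class can only be smaller. You carry out this idea explicitly by assembling near-optimal per-type matchings into a single global matching and verifying that its cost is the maximum of the per-type costs, which is exactly the combinatorial content behind the paper's remark that the graded distance can equivalently be written as a bottleneck over type-preserving matchings.
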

\begin{proof}
A matching which preserves labeling as used for $d_B$ is  also a valid matching which does not preserve labeling.
\end{proof}

However, an optimal matching with labeling might not be optimal when types are ignored, making $d_b$ potentially smaller, proving the inequality.

Our definition of graded bottleneck distance is defined solely on the full extended persistence diagram. One can see that this definition is possibly more restrictive than in needs to be. Instead, one could define the graded bottleneck distance on any persistence diagram where we take into account  which subdiagram -- both in type and dimension -- the points in the persistence diagram come from. We will refrain from this explicit definition, since the definition we have already provided is sufficient for our use case.

\subsection{Viewing the bottleneck distances as a distance on Reeb graphs}

Since we are constructing our persistence diagrams from the Reeb graph, we can  view the bottleneck distance as a distance between Reeb graphs. 
This will allow us to directly compare the properties of bottleneck distance to the properties of the Reeb graph metrics that we will define in Sec.~\ref{sec:interleaving},Sec.~\ref{sec:FDD}, Sec.~\ref{sec:rged} and Sec.~\ref{sec:universalDistance}. 
\begin{definition}
The \textbf{(ungraded/graded) bottleneck distance} between two Reeb graphs is the (ungraded/graded) bottleneck distance between the extended persistence diagrams. 
Specifically, 
\begin{align*}
    \db{}(\RRf,\RRg) &:=  \db{}(\ExDgm(\rf),\ExDgm(\rg)),\\
    \dB{}(\RRf,\RRg) &:=  \dB{}(\ExDgm(\rf),\ExDgm(\rg))
\end{align*}
where the first does not take the types of points into account, while the second does. Similarly, we denote the ungraded bottleneck distance between individual subdiagrams of the Reeb graphs as \[d_b^{\type_d}(\RRf,\RRg) := d_b(\type_d(\rf),\type_d(\rg)).\]
\end{definition}

\begin{example}
Recall that, given a scalar field $(\X,f)$, the resulting Reeb graph is defined as $\RRf = (\rX,\rf)$. To compute the full extended persistence diagram of the Reeb graph, we then use the induced function $\rf$. This diagram has only four different subdiagrams: $\Ord_0,\Rel_1,\Ext_0$ and $\Ext_1$. \cref{fig:BottleneckDistancesAreDifferent} illustrates a scenario in which the inequality of \ref{prop:bottleneckBoundedByBottleneck} is strict.
When comparing the two resulting Reeb graphs, we see that the up-leaf of $\RRg$ is matched with the loop of $\RRf$, making $d_b(\RRf,\RRg) = |a_4-a_3|$. However, when labeling is taken into account, we have that $d_B(\RRf,\RRg) = \frac12|a_3-a_2| > |a_4-a_3|$ since both must be matched to the diagonal.
\end{example}

\begin{figure}
    \centering
    \includegraphics[width=0.9\textwidth]{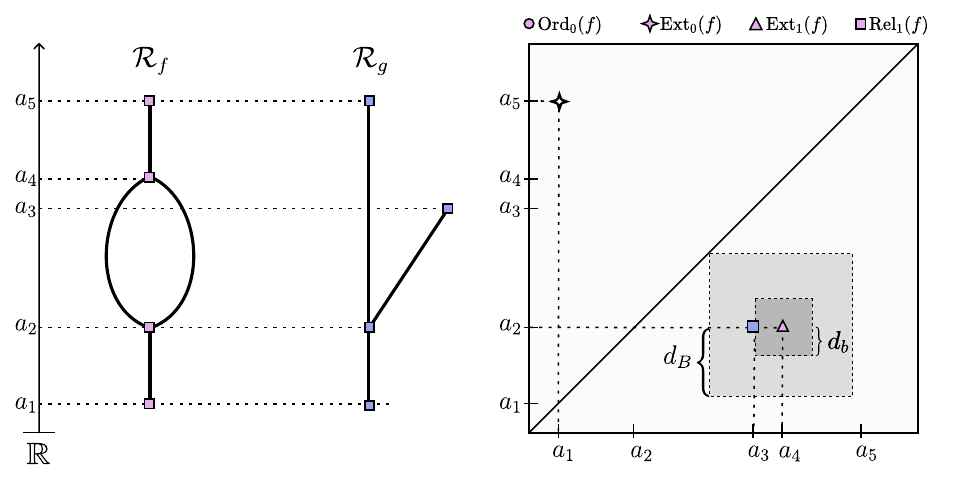}
    \caption{Two Reeb graphs with non-equal graded and ungraded bottleneck distances. Specifically, $d_b = \mid a_4 - a_3\mid < \frac12 \mid a_3 - a_2\mid = d_B$.
    }
    \label{fig:BottleneckDistancesAreDifferent}
\end{figure}

\begin{remark}
\label{rem:globalpairing}
The graded bottleneck distance does not require that the points in $\Ext_0$ (the global max - global min pairs) be matched to each other. Indeed, both may be matched to the diagonal (i.e. matched to the empty node). Suppose we have two Reeb graphs $\RRf,\RRg$. Let $(x_1,x_2)\in \Ext_0(\rf)$ and $(y_1,y_2) \in \Ext_0(\rg)$. Given their persistence diagrams, there is some threshold value $\e$ such that $d_B(\ExDgm(\rf),\ExDgm(\rg + \delta)) = A$, for all $\delta >= \e$, where $A = \max\{\frac12|x_1-x_2|,\frac12|y_1-y_2|\}$. More generally, this implies that if the global maxima and global minima differ by a large enough value, then it will always be best to delete or insert the entire diagrams. As we will see, this is unlike the Reeb graph metrics which will take the global min and maxima difference into account regardless of the this difference.
\end{remark}

\section{Functional Distortion Distance}
\label{sec:FDD}

\subsection*{History}

The functional distortion distance was first defined as a metric by Bauer et al.~\cite{Bauer2014}, inspired from the well-known Gromov-Hausdorff distance which was first introduced by Mikhael Gromov \cite{Gromov1981}. The Gromov-Hausdorff distance is a way to measure the distance between two Banach or metric spaces. More formally, suppose $A$ and $B$ are two metric spaces and let $i_A:A \to Z$ and $i_B:B \to Z$ be isometric embeddings into a common metric space $Z$. We can then find the Hausdorff distance between the embedded spaces: $d_H(i_A(A),i_B(B))$.  

The goal of the Gromov-Hausdorff distance is then to find the minimum Hausdorff distance achieved when ranging over all possible embeddings and the common space to which they are embedded. Intuitively, we are trying to determine a common space where we can embed both $A$ and $B$, while preserving the integrity of the spaces (hence the embeddings being isometries), such that the $A$ and $B$ fit nicely together. We can picture $A$ and $B$ as being two crumpled up pieces of paper (of varying sizes) and our common metric space to be a flat surface. One way to measure the difference in sizes between $A$ and $B$ is to stretch both out flat onto the surface and then compare them in their flattened form. Trying to determine the sizes while the paper is still crumpled would be a much more difficult task. 
The Gromov-Hausdorff (GH) distance has multiple different equivalent definitions; see \cite{Memoli2008}. Functional distortion distance (FDD) borrows from a very specific variation of the GH distance.

\subsection*{Definition}

Here, we break down the definition of FDD into several parts which we will stitch together to form the final definition. We begin by noting that the functional distortion distance is defined on tame scalar fields and we will assume that each Reeb graph discussed here is a tame Reeb graph; see Def.~\ref{def:tame}. Furthermore, we would like to remind the reader that given $\RRf,\RRg$, the induced functions are denoted as $\rf,\rg$, respectively.

\begin{definition}
Let $u,v \in \RRf$ (not necessarily vertices) and let $\pi$ be a continuous path between $u$ and $v$, denoted $u\rightsquigarrow v$. The \textbf{range} of this path is the interval $\emph{range}(\pi) = [\min_{x\in\pi}\rf(x),\max_{x\in\pi}\rf(x)]$. The \textbf{height} is the length of the range, denoted $\emph{height}(\pi) = \max_{x\in\pi}\rf(x) - \min_{x\in\pi}\rf(x)$. We define the distance between $u$ and $v$ to be \[d_{\rf}(u,v) = \min_{\pi:u\rightsquigarrow v}\emph{height}(\pi),\]
where $\pi$ ranges over all continuous paths from $u$ to $v$.
\end{definition}
\begin{definition}
Let $\Phi:\RRf \to \RRg$, $\Psi:\RRf \to \RRg$ be two continuous maps.\footnote{These continuous maps need not be function preserving like in \cref{def:functionPreservingReebGraphIso}.} We define the $G(\Phi,\Psi)$ as \[G(\Phi,\Psi) = \{(x,\Phi(x)):x\in\RRf\}\cup\{(\Psi(y),y):y\in\RRg\}.\]
\end{definition}
$G(\Phi,\Psi)$ is the union of the two graphs of $\Phi$ and $\Psi$.
\begin{definition}
The \textbf{point distortion} $\lambda$ between $(x,y),(x',y') \in G(\Phi,\Psi)$ is defined as \[\lambda((x,y),(x',y')) = \frac12|d_{\rf}(x,x') - d_{\rg}(y,y')|.\]
The \textbf{map distortion} $D(\Phi,\Psi)$ between $\RRf$ and $\RRg$ is the supremum of point distortions ranging over all possible pairs in $G(\Phi,\Psi)$. That is, \[D(\Phi,\Psi) = \sup_{(x,y),(x',y') \in G(\Phi,\Psi)}\lambda((x,y),(x',y')).\]
\end{definition}
In essence, this distance will measure how much each Reeb graph is being altered to map into the other Reeb graph. For example, if $x$ and $x'$ are relatively close (in terms of $d_{\rf}$) but their outputs under $\Phi$ are far apart (in terms of $d_{\rg}$), then this distance will be larger. This distance finds the worst case scenario where two points are close on one Reeb graph and their correspondences are far on the other. Keep in mind that there are two maps $\Phi$ and $\Psi$. Thus, even if both maps are non-surjective, every point of each Reeb graph still has at least one correspondence in $G(\Phi,\Psi)$.

\begin{example}\label{example:FDDExample}
Fig.~\ref{fig:MapDistortionExamples} displays two different pairs of continuous maps between $\RRf$ and $\RRg$. The functions $f$ and $g$ are height functions which map each point of $\RRf$ and $\RRg$ horizontally to the real line. So, for example, $\rf(a_1) = \rg(b_1)$ and $\rf(a_3) = \rg(b_4)$. We assume here that the spaces $\X$ and $\Y$ are compact, 2-manifolds without boundary.

In $\mathbf{(a)}$, the map $\Phi$ maps like an isometry up to $a_2$, where it then maps the rest of $\RRf$ into the leaf of $\RRg$.  The point distortion between the points $(a_3,b_3)$ and $(a_2,b_2)$ would then be $|d_{\rf}(a_3,a_2)-d_{\rg}(b_3,b_2)| = \rf(a_3)-\rg(b_3)$. For $\Psi$, the map also acts like an isometry, except the leaf is collapsed, mapping horizontally to $\RRf$. In this case, $G(\Phi,\Psi)$ contains the points $(a'_2,b'_3)$ and $(a'_2,b_3)$. Thus, the point distortion between these would be $|d_{\rf}(a'_2,a'_2) - d_{\rg}(b'_3,b_3)| = \rg(b_3)-\rg(b_2)$. We can check that other pairs of points from the supergraph will not lead to a higher map distortion. Therefore, $D(\Phi,\Psi) = \max\{\rf(a_3)-\rg(b_3),\rg(b_3)-\rg(b_2)\}$.

\end{example}

\begin{definition}
The \textbf{functional distortion distance} is defined as \[d_{FD}(\RRf,\RRg) = \inf_{\Phi,\Psi}\max\{D(\Phi,\Psi),||\rf-\rg\circ\Phi||_{\infty},||\rf\circ\Psi -\rg||_{\infty}\},\] where $\Phi$ and $\Psi$ range over all continuous maps between $\RRf$ and $\RRg$.
\end{definition}

If $\Phi$ and $\Psi$ are translations or negations, the distances $d_{\rf}$ and $d_{\rg}$ are not affected since these will preserve the relative closeness of the pairs $(x,x')$ and $(y,y')$. In this case, $D(\Phi,\Psi)$ would be 0. The two terms $||\rf-\rg\circ\Phi||_{\infty}$ and $||\rf\circ\Psi-\rg||$ are introduced to address this fact. They measure the length of the translation or similar isometries. See Fig.~\ref{fig:fddIsometryExamples}(a) for an example of translation, and Fig.~\ref{fig:fddIsometryExamples}(b) for a negation example.

\begin{figure}
    \centering
    \includegraphics[width=0.9\textwidth]{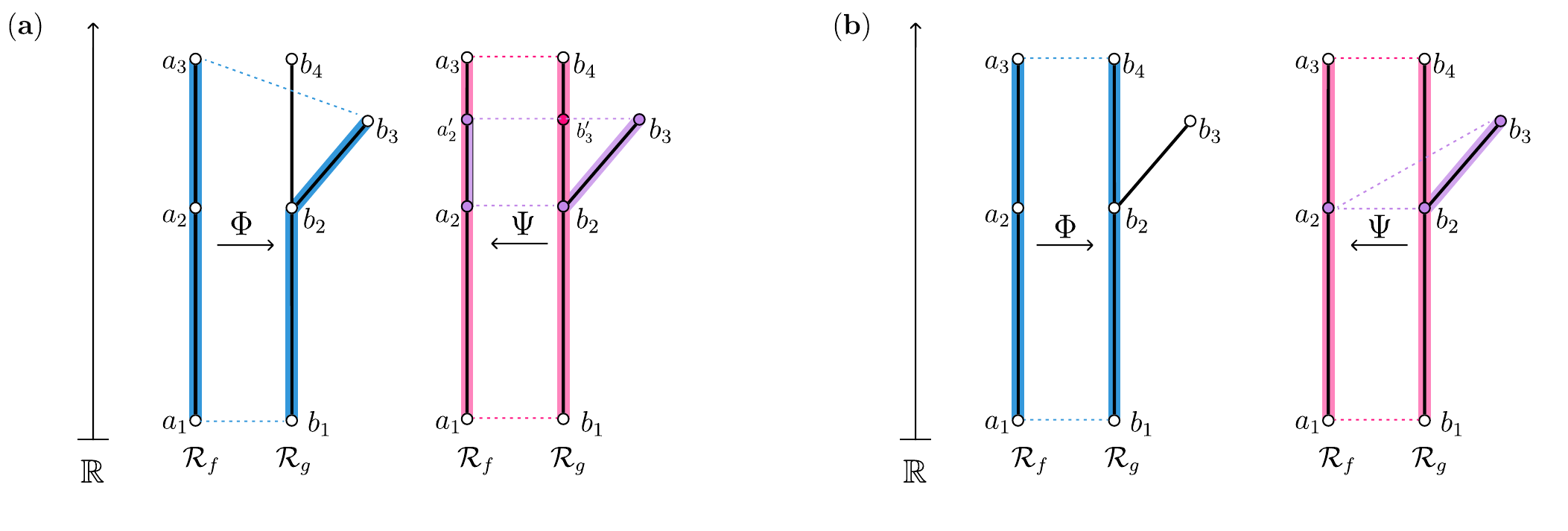}
    \caption{Two examples of continuous maps between Reeb graphs $\RRf$ and $\RRg$. Dotted lines indicate the correspondences that these maps are creating, while the color indicates which section we are referring to.  $\mathbf{(a)}$ The map $\Phi$ distorts $\RRf$ to fit into the leaf on the right side. The map $\Psi$ maps each value straight across. $\mathbf{(b)}$ The map $\Phi$ is an isometry, which will not alter the distortion alone. The distortion will be solely be based on $\Psi$, which is almost an isometry except for collapsing the leaf to a single point. It turns out that the map contracting the leaf to a single base point at $a_2$ will result in an equivalent distortion value to mapping it horizontally as in in $\mathbf{(a)}$.}
    \label{fig:MapDistortionExamples}
\end{figure}

In Fig.~\ref{fig:MapDistortionExamples}(b), the map $\Phi$ is an isometry and therefore no points in the supergraph which come from $\Phi$ will contribute to the distortion value. The map $\Psi$ is close to an isometry besides contracting the leaf to a single point. The point distortion between the pairs $(a_2,b_2)$ and $(a_2,b_3)$ is simply $\rg(b_3)-\rg(b_2)$. Note that this is the same distortion value which was achieved in part $\mathbf{(a)}$ when we mapped the leaf straight across to $\RRf$. This comes from the definition of $d_g$ which only looks at the height of the path that is traversed from one point to the next and does not take into account the total distance traversed. However, note that contracting this leaf to a single point at the base of the leaf causes the value of $||\rf\circ\Psi-\rg||_{\infty}$ to be at least $\rg(b_3)-\rf(a_2)$. In Fig.~\ref{fig:MapDistortionExamples}$(a)$, this value is $||\rf\circ\Psi-\rg||_{\infty} = 0$ since $\Psi$ is \textbf{function preserving}.

\begin{figure}
    \centering
    \includegraphics[width=0.85\textwidth]{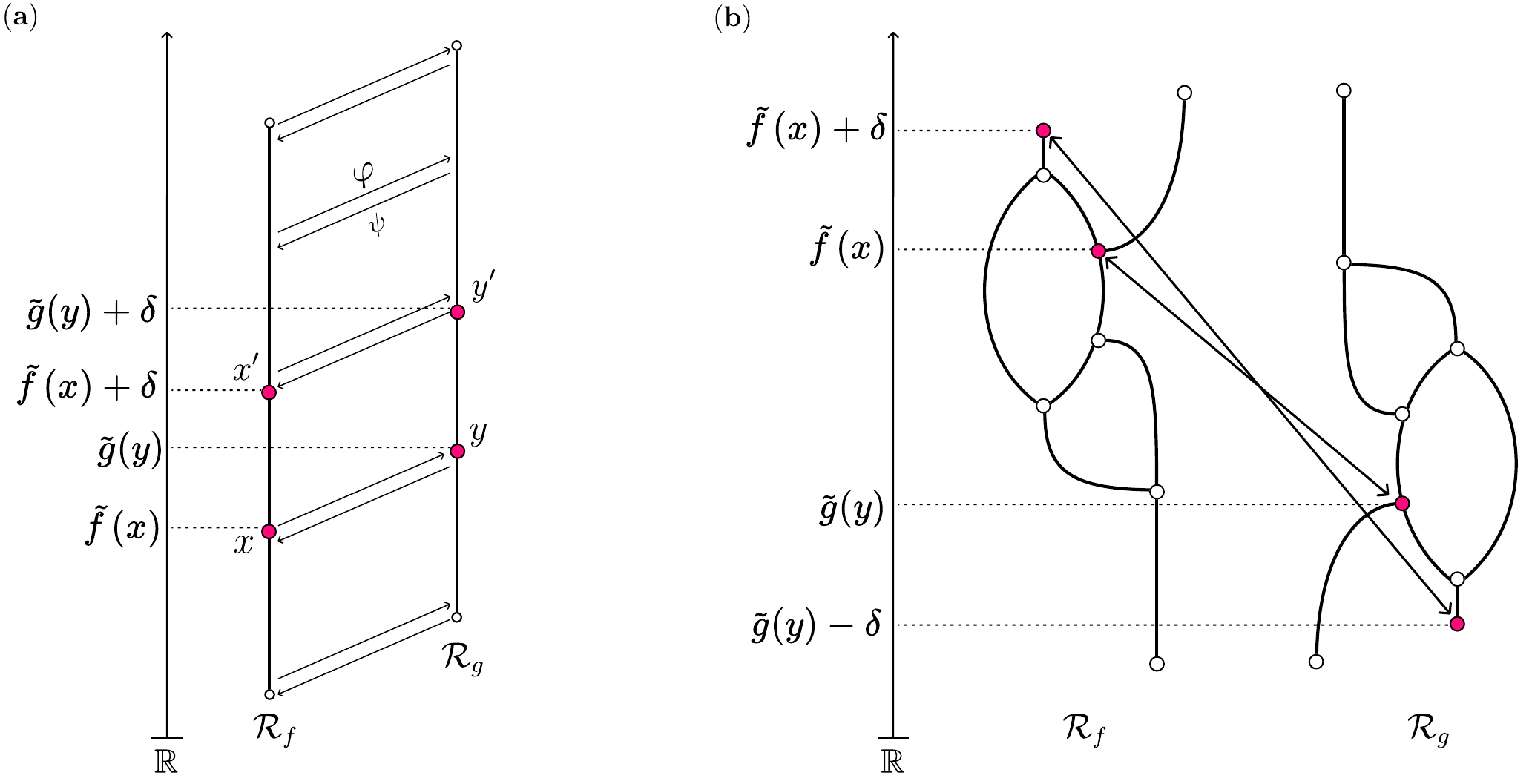}
    \caption{$\mathbf{(a)}$ An example of maps where $\Phi$ is a translation and $\Phi = \Psi^{-1}$. In this, $d_{\rf}(x,x') = \delta = d_{\rg}(y,y')$, which implies that $D(\Phi,\Psi) = 0$. Without the $||\rf-\rg\circ\Phi||_{\infty}$ and $||\rf\circ\Psi -\rg||_{\infty}$ terms, the functional distortion distance between these two Reeb graphs would be 0. With these terms included, we can see that the functional distortion distance will be $\delta$, which is equal to the magnitude of the translation. $\mathbf{(b)}$ An example of maps where $\Phi$ and $\Psi$ are negations of each other. Because of this, the point distortion between any two pairs of points will always be 0.}
    \label{fig:fddIsometryExamples}
\end{figure}

\section{Interleaving Distance}
\label{sec:interleaving}

\subsection{History}

The interleaving distance on Reeb graphs takes root in earlier work that defines the interleaving distance for persistence modules \cite{Chazal2009b},
and is heavily inspired by the subsequent category theoretic treatment \cite{Bubenik2014,Bubenik2014a}. 
It intuitively captures distance between two \textbf{cosheaves} $\F$ and $\G$ representing the two Reeb graphs by defining an approximate isomorphism between them, known as an $\e$-interleaving. The distance is then defined as being the infimum $\e$ such that there exists an $\e$-interleaving between $\F$ and $\G$.

In order to utilize interleavings to define a distance measure on Reeb graphs, we can encode the data of a Reeb graph in a constructible set-valued cosheaf \cite{Curry2014,Curry2015a,Curry2016}.
In fact, it is known that this metric is a special case of a more general theory of interleaving distances given on a \textit{category with a flow} \cite{deSilva2018,Stefanou2018, Cruz2019}; this more general theory also encompasses other metrics including the $\ell_\infty$ distance on points or functions, regular Hausdorff distance, and the Gromov-Hausdorff distance \cite{Stefanou2018,Bubenik2017a}.

Interleaving metrics have been studied in the context of
$\R$-spaces \cite{Blumberg2017},
multiparameter persistence modules \cite{Lesnick2015},
merge trees \cite{Morozov2013},
formigrams \cite{Kim2017,Kim2019a},
and on more general category theoretic constructions \cite{Botnan2020,Scoccola2020a}, as well as developed for Reeb graphs~\cite{deSilva2016,Chambers2021}.
There are also interesting restrictions to labeled merge trees, where one can pass to a matrix representation and show that the interleaving distance is equivalent to the point-wise $\ell_\infty$ distance
\cite{Munch2019,Gasparovic2019,Yan2019a,Stefanou2020}. Furthermore, the interleaving distance has been used in evaluating the quality of the mapper graph \cite{Singh2007}, which can be proven to be an approximation of the Reeb graph using this metric \cite{Munch2016,Brown2019}.

\subsection{Definition}

We can construct a category of Reeb graphs, denoted as $\mathbf{Reeb}$, by defining the objects to be constructible Reeb graphs and the \textbf{morphisms} to be continuous, function preserving maps; see Def.~\ref{def:functionPreservingReebGraphIso}. It turns out that this category is equivalent to the category of \textbf{cosheaves}. We provide an abridged construction of the category of cosheaves by first introducing the notion of \textbf{pre-cosheaves}.

A pre-cosheaf for our purposes\footnote{In general, we can define pre-cosheaves where the domain category is the open sets of any topological space and the range category is unrestricted, but is most commonly defined to be the category of vector spaces.} is a functor $\F$ from the category of connected open intervals on the real line $\mathbf{Int}$ to the category of sets $\mathbf{Set}$. Intuitively, it is a way to assign data to the open intervals of the real line in a way that respects inclusion of the intervals. Given a constructible Reeb graph $\RR_f$, we can construct its pre-cosheaf $\F$ by the formulas 
\[\F(I) = \pi_0(f^{-1}(I)), \hspace{0.5in} \F[I\subseteq J] = \pi_0[f^{-1}(I) \subseteq f^{-1}(J)],\] 
where $\pi_0(U)$ is the set of path connected components of the space $U$. Figure \ref{fig:reebToCosheaf} provides a depiction of converting a Reeb graph $\RR_f$ to its pre-cosheaf $\F$. The \textbf{category of pre-cosheaves} is denoted as $\Pre$.

It turns out that the Reeb graph satisfies additional ``gluing'' constraints which guarantee that its pre-cosheaf is actually a well-defined \textbf{cosheaf}; see \cite{deSilva2016} for this equivalence and \cite{Curry2014} for a rigorous treatment of cosheaves and their applications to topological data analysis. For the remainder of this document, we will refer to $\F$ as a cosheaf rather than a pre-cosheaf. The category of cosheaves, denoted as $\Csh$, is a proper subcategory of $\Pre$, defined as the pre-cosheaves which satisfy the aforementioned constraints.

Stating that two Reeb graphs $\RR_f,\RR_g$ are Reeb graph isomorphic is equivalent to stating that their associated cosheaves $\F,\G$ are isomorphic. Recall that an isomoprhism between two functors is a pair of natural transformations $\phi:\F \Rightarrow \G$, $\psi:\G \Rightarrow \F$ such that $\psi_I \circ \phi_I = \mathbf{Id}_{\F(I)}$ and $\phi_I \circ \psi_I = \mathbf{Id}_{\G(I)}$, for all $I \in \mathbf{Int}$. If we do not have a true isomorphism between these cosheaves, we can approximate the isomorphisms to form an $\e$-\textbf{interleaving}. Fig.~\ref{fig:nonZeroInterleaving} shows the same Reeb graphs as Fig.~\ref{fig:graphIsoNonRGIso} to demonstrate why these graphs fail to be isomorphic as cosheaves. Note that the dashed lines in this figure are the maps $\phi_I,\phi_J$ and $\phi_K$. For there to be an isomorphism between the cosheaves $\F$ and $\G$ in Fig.~\ref{fig:nonZeroInterleaving}, we need to define a natural transformation $\eta: \F \Rightarrow \G$ such that $\eta_U$ is an isomorphism for all open intervals $U \subset \R$. The arrows from $\F(I)$ and $\F(J)$ to $F(K)$ denote the maps induced by the inclusions $I, J \subset K$ (and similarly for $\G$). We can see that it is possible for us to define isomorphism for $\F(J)$ and $\F(K)$ which respect these inclusions, but there exists no isomorphism between $\F(I)$ and $\G(I)$ that does. Thus any map between these Reeb graphs that respect these inclusions will be a non-injective function.

\begin{figure}
    \centering
    \includegraphics[width=\textwidth]{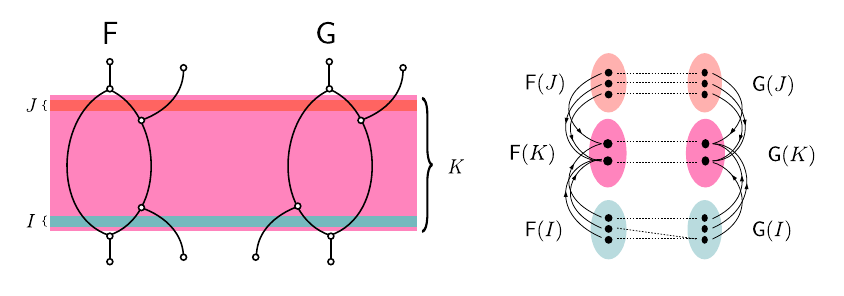}
    \caption{Two cosheaves $\F$ and $\G$ with three specifically chosen intervals $I,J \subset K$, with $I$ and $J$ disjoint to display that these Reeb graphs are graph isomorphic but not \emph{cosheaf isomorphic} -- and equivalently not Reeb graph isomorphic.}
    \label{fig:nonZeroInterleaving}
\end{figure}

\begin{definition} Let $I = (a,b) \subseteq \mathbb{R}$ and $I^{\varepsilon} = (a-\varepsilon,b+\varepsilon)$. The $\e$-\textbf{smoothing functor}, $\SS_\e:\Csh \to \Csh$, where $\varepsilon > 0$, is defined by $\SS_\e(\F)(I) = \F(I^\e)$ for each $I$ and morphisms are induced by inclusion.
\end{definition}

In essence, the $\e$-smoothing functor expands each interval $I$ by $\e$ in both directions before assigning data. This implies that the infimum width of an interval seen by the functor is now $2\e$ rather than a single point. In several cases, the increase in the intervals causes the data associated to these intervals to be fundamentally changed, sometimes removing features entirely. Fig.~\ref{fig:smoothingExamples} shows two examples of smoothing for various simple features. We will discuss the effects of smoothing on larger features composed of multiple smaller features in \cref{example:compound} from Sec.~\ref{sec:examples}.

Note that while the smoothing operation is done on the cosheaves directly, we can represent this using the Reeb graph since the fundamental structure of the cosheaf is captured completely in the Reeb graph, as seen in Fig.~\ref{fig:reebToCosheaf}.

\begin{figure}
    \centering
    \includegraphics[width=\textwidth]{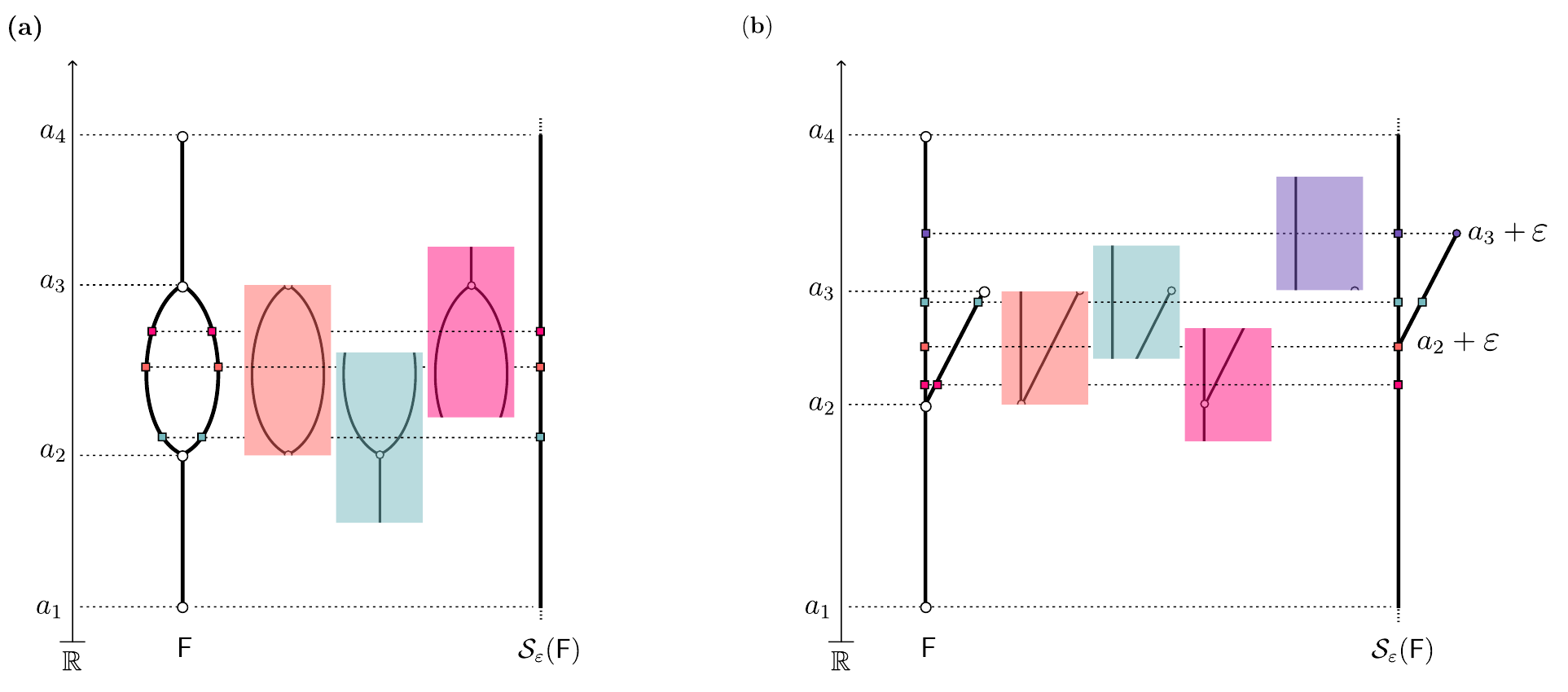}
    \caption{\textbf{(a)} The cosheaf representation $\mathsf{F}$ and its $\e$-smoothed version $\mathcal{S}_{\e}(\mathsf{F})$ of the scalar field $(\sX,\sf)$ where $\sX$ is a torus and $\sf$ is a height function. To cover the hole completely, $\e$ has to be large enough so that every interval $I$, the expanded interval $I^{\e}$ will only have one path connected component. Setting $\e \geq \frac{a_3-a_2}{2}$ will guarantee this. \textbf{(b)} A Reeb graph with a single up-leaf and its $\e$-smoothed version. As the center of the interval passes $\frac{a_3+a_2}{2}$, the number of components changes from one to two, essentially creating a leaf in the smoothed version that is shifted upwards. Note that the last component (purple) maps to only \textit{one} component in the smoothed Reeb graph. Similarly, a down-leaf will be shifted downwards.}
    \label{fig:smoothingExamples}
\end{figure}

\begin{figure}
    \centering
    \includegraphics[width=0.9\textwidth]{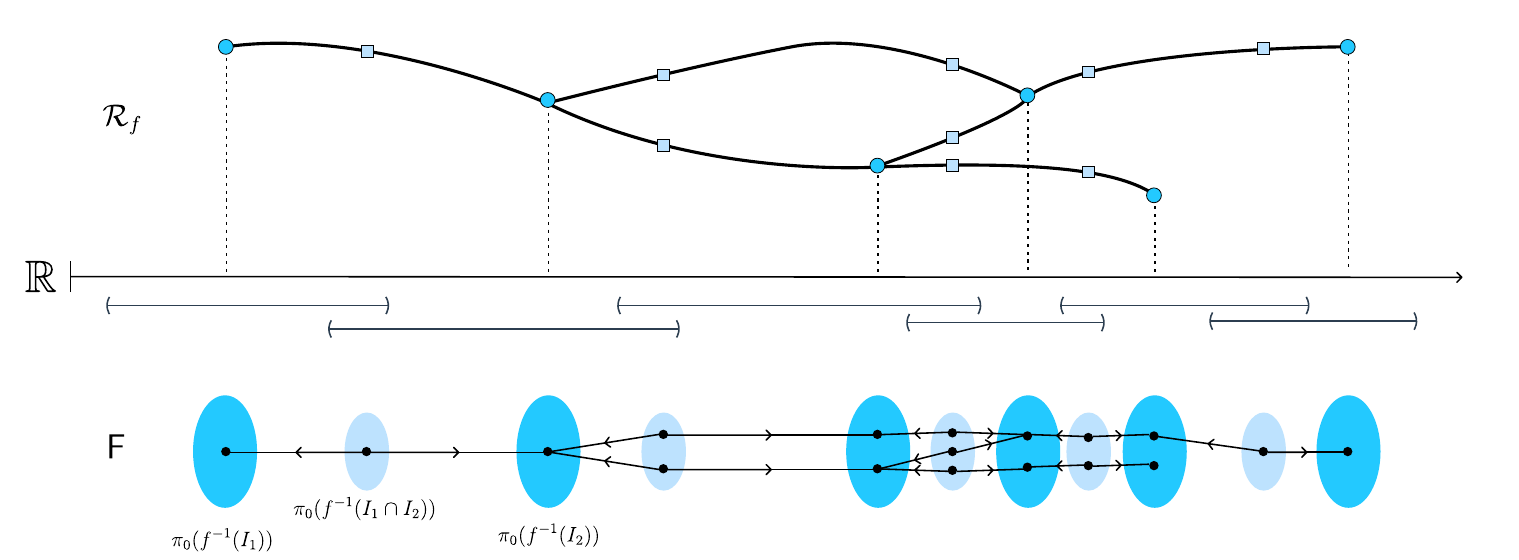}
    \caption{A Reeb graph (top) with its corresponding cosheaf (bottom). 
    The larger, darker blue sets were chosen specifically to surround the critical points of the Reeb graph. 
    The smaller, light blue sets are the pairwise intersection of the sets surrounding it. The arrows represent the morphisms from each small set to the larger sets they are included in. While a cosheaf is defined for \emph{all} intervals on the real line, the intervals above are enough to capture the topology of the Reeb graph fully. This figure was adapted from \cite{deSilva2016}. }

    \label{fig:reebToCosheaf}
\end{figure}

\begin{definition}
Let $\F:\mathbf{Int} \to \mathbf{Set}$ be a cosheaf. The natural transformation $\sigma^{\e}_{\F}:\F\Rightarrow \SS_{\e}(\F)$ is the morphism constructed from the inclusion $I\subset I^{\e}$. That is, $\sigma^{\e}_{\F} : = \F[I\subset I^{\e}]$. We say that two cosheaves $\F,\G$ are $\varepsilon$-\textbf{interleaved} if there exists a pair of natural transformations $\varphi: \F \Rightarrow \SS_\e(\G)$ and $\psi: \G\Rightarrow \SS_\e(\F)$ such that the following diagrams commute: 
\end{definition}

\begin{center}
\begin{tikzcd}
\mathsf{F} \arrow[dd, "\sigma^{2\varepsilon}_{\mathsf{F}}"', Rightarrow] \arrow[rd, "\varphi", Rightarrow] &                                                                                                   &  &                                                                                                       & \mathsf{G} \arrow[ld, "\psi"', Rightarrow] \arrow[dd, "\sigma^{2\varepsilon}_{\mathsf{G}}", Rightarrow] \\
                                                                                                           & \mathcal{S}_{\varepsilon}(\mathsf{G}) \arrow[ld, "{\mathcal{S}_{\varepsilon}[\psi]}", Rightarrow] &  & \mathcal{S}_{\varepsilon}(\mathsf{F}) \arrow[rd, "{\mathcal{S}_{\varepsilon}[\varphi]}"', Rightarrow] &                                                                                                         \\
\mathcal{S}_{2\varepsilon}(\mathsf{F})                                                                     &                                                                                                   &  &                                                                                                       & \mathcal{S}_{2\varepsilon}(\mathsf{G})                                                                 
\end{tikzcd}
\end{center}

If $\varepsilon = 0$, then this is exactly the definition of an isomorphism between $\F$ and $\G$. When two cosheaves are $\varepsilon$-interleaved, we say that there exists an $\varepsilon$-\textbf{interleaving} between them.

\begin{definition}
The interleaving distance between two constructible Reeb graphs $\RR_f,\RR_g$ is the infimum $\e$ such that their respective cosheaves are $\e$-interleaved. Formally,
\[d_I(\RR_f,\RR_g) = \inf_{\e \in \R^+}\{\e\mid\text{ there exists an $\e$-interleaving between $\F$ and $\G$} \},\]
where $\F,\G$ are their respective cosheaves.
\end{definition}

While the definitions above are rooted in category theory, it has been shown that there is a geometric way of understanding the interleaving distance \cite{deSilva2016}. We think of our smoothing operation instead as first ``thickening'' the Reeb graph by $\e$ which creates a new scalar field, and then finding the Reeb graph of the newly constructed space, resulting in a more ``coarse'' Reeb graph. 

The propositions and theorems that lead to the following construction are originally presented by first working with more general scalar fields and then proving that the results carry over to the constructible case. Here, we will assume that the scalar fields and Reeb graphs that we are discussing are constructible and leave out the more generalized work. In turn, we will drop the term `constructible' in most cases. We refer the reader to the original work \cite{deSilva2016} for a more in-depth treatment of this topic.

To start, we introduce the \textbf{category of scalar fields}, which we denote as $\Rtop$. Objects in this category are scalar fields and the morphisms between these objects are continuous, function preserving maps (see \cref{def:functionPreservingReebGraphIso}).

\begin{definition}
For $\e \geq 0$, the \textbf{thickening functor} $\TT_{\e}:\Rtop \to \Rtop$ is defined as follows:
\begin{itemize}
    \item Let $(\X,f)\in\Rtop$. Then $\TT_{\e}(\sX,\sf) = (\X^{\e},f^{\e})$ where $\X^{\e} = \X \times [-\e,\e]$ and $f_{\e}(x,t) = f(x) + t$.
    \item Let $\alpha: (\X,f) \to (\Y,g)$ be a morphism in $\Rtop$. Then $\TT_{\e}[\alpha]: (\X^{\e},f^{\e}) \to (\Y^{\e},f^{\e}):(x,t) \mapsto (\alpha(x),t)$.
\end{itemize}
\end{definition}

Since Reeb graphs are themseleves scalar fields, we have a well-defined subcategory of $\Rtop$ known as the \textbf{category of Reeb graphs}, which is denoted as $\Reeb$. Just as in $\Rtop$, the morphisms in $\Reeb$ are continuous, function preserving maps.

\begin{definition}
We define the \textbf{Reeb functor} $\RR:\Rtop \to \Reeb$ as the functor which maps a scalar field $(\X,f)$ to its Reeb graph $\RR_f = (\X_f,\tilde{f})$. That is, 
\[\RR(\X,f) := (\X_f,\tilde{f}) = \RR_f.\]
\end{definition}

\begin{definition}
Let $\RR_f = (\X_f,\tilde{f}) \in \Reeb$ . We define the \textbf{Reeb smoothing functor} $\UU_{\e}:\Reeb \to \Reeb$ as the functor which thickens a Reeb graph $\RR_f$ into a new scalar field, and then converts it back into a Reeb graph. That is,
\[\UU_{\e}(\X,f) := \RR(\TT_{\e}(\X_f,\tilde{f})) = \RR(\X_f^{\e},\tilde{f}^{\e}).\]
\end{definition}

Since $\UU_{\varepsilon}(\RR_f)$ is indeed another Reeb graph, we can define morphisms $\alpha:\RR_f \to \UU_{\varepsilon}(\RR_g)$ and $\beta: \RR_g \to \UU_{\varepsilon}(\RR_f)$. We then define other morphisms $\iota_{\varepsilon} $ and $\alpha_{\varepsilon}$ as follows:
\begin{align}
    \iota_{\e}: \RR_f \to \UU_{\e}(\RR_f),                  & \hspace{50PX} x \mapsto [x,0], \\
    \alpha^{\e}_{2\e}: \UU_{\e}(\RR_f) \to \UU_{2\e}(\RR_g), & \hspace{50PX} [x,t] \mapsto [\alpha(x),t]
\end{align}
where $[x,t]$ refers to the equivalence class of $(x,t)$ under the quotient map $\rho^{\e}_f: \TT_{\e}(\RR_f) \to \UU_{\e}(\RR_f)$.

Fig.~\ref{fig:thickeningExample} depicts an example of a Reeb graph $\RR_f$ along with its smoothed version $\UU_{\e}(\RR_f)$.

\begin{figure}
    \centering
    \includegraphics[width=0.9\textwidth]{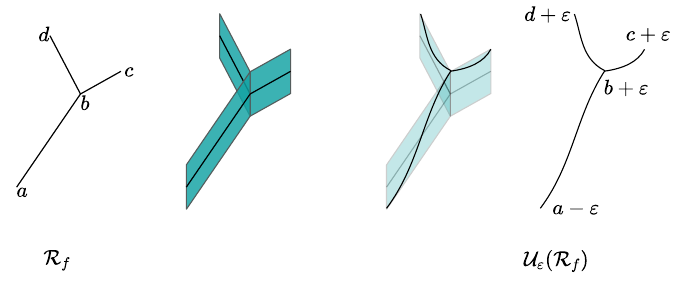}
    \caption{A depiction of the thickening process which expands the original Reeb graph $\RR_f$ into a 2-dimensional scalar field and then constructs the Reeb graph of that resulting scalar field -- creating $\mathcal{U}_{\e}(\RR_f)$. Figure adapted from \cite{deSilva2016}.}
    \label{fig:thickeningExample}
\end{figure}

\begin{definition}
We say that two Reeb graphs $\RR_f$,$\RR_g$ are $\e$-\textbf{interleaved} if there exist morphisms $\alpha_{\e}:\RRf \to \RRg$,$\beta_{\e}:\RRg \to \RRf$ such that the following diagram commutes:
\begin{center}
\begin{tikzcd}
\mathcal{R}_f \arrow[dd, "\iota_{2\varepsilon}"'] \arrow[rd, "\alpha_{\e}"] &                                                                            &                                                                              & \mathcal{R}_g \arrow[ld, "\beta_{\e}"'] \arrow[dd, "\iota_{2\e}"] \\
                                                                      & \mathcal{U}_{\e}(\mathcal{R}_g) \arrow[ld, "\beta^{\e}_{2\e}"] & \mathcal{U}_{\e}(\mathcal{R}_f) \arrow[rd, "\alpha^{\e}_{2\e}"'] &                                                                       \\
\mathcal{U}_{2\e}(\mathcal{R}_f)                              &                                                                            &                                                                              & \mathcal{U}_{2\e}(\mathcal{R}_g)                            
\end{tikzcd}
\end{center}

The \textbf{interleaving distance} between two Reeb graphs is the infimum $\e$ such that $\RR_f$ and $\RR_g$ are $\e$-interleaved. That is,
\[d_I(\RR_f,\RR_g) = \inf_{\e \in \R^+}\{\e|\text{ there exists an $\e$-interleaving between $\RR_f$ and $\RR_g$} \}.\]

\end{definition}

\begin{figure}
    \centering
    \includegraphics[width = .4\textwidth]{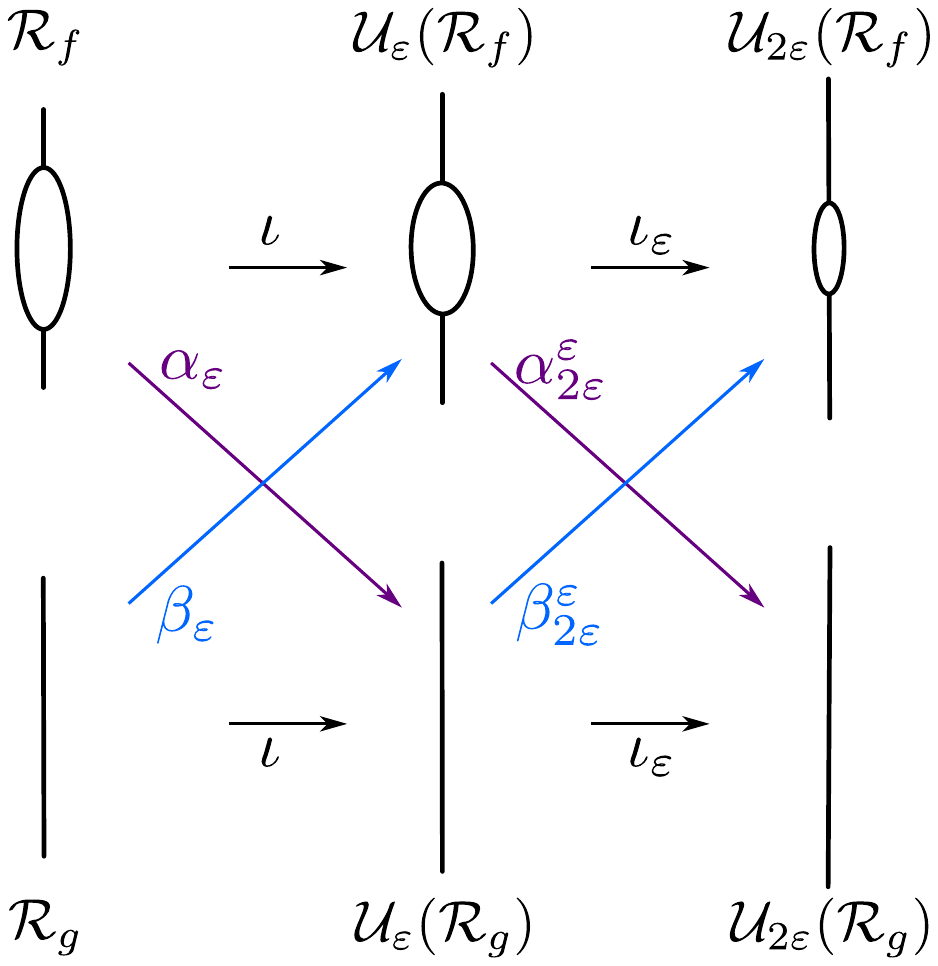}
    \qquad
    \includegraphics[width = .4\textwidth]{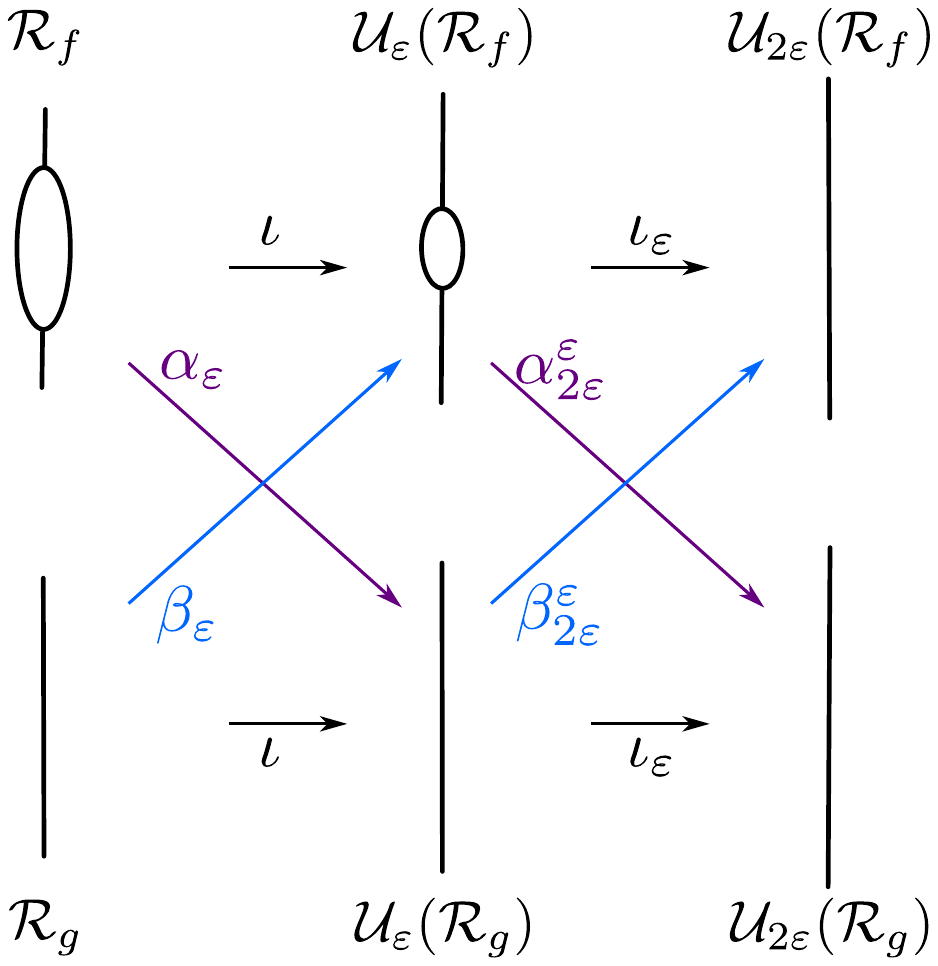}
    \caption{An example of two Reeb graphs and computation of the interleaving distance. In this example, the first $\e$ chosen is not large enough to allow for an interleaving since points on either side of the loop of $\mathcal{R}_f$ will stay on opposite sides of the loop following the top maps, but end up on the same side going down and up. However, in the second case, we have chosen a larger $\e$ value, making only one possible point to be sent to in $\mathcal{U}_{2\e}(\RR_f)$, implying the diagrams commute. }
    \label{fig:InterleavingExample}
\end{figure}

Here we have provided two different, yet equivalent, ways of computing the interleaving distance: (1) consider an equivalent structure of a Reeb graph called a cosheaf and construct an interleaving between two cosheaves $\F$ and $\G$ by increasing the minimum interval size until there exists an interleaving based on the definition of isomorphisms between cosheaves, or (2) thicken the Reeb graphs themselves by adding an extra dimension to the space and alter the original function to ultimately construct a new, 2-dimensional scalar field and then take the Reeb graph of this newly constructed space in an attempt to create two Reeb graphs which are $\e$-interleaved based on the definition of function preserving maps between Reeb graphs. In Sec.~\ref{sec:examples}, we will mostly rely on the cosheaf definition of interleaving distance for computation since, in practice, it seems to lead to more concise proofs.
See Fig.~\ref{fig:InterleavingExample} for two choices of $\e$ values, one of which is not large enough to allow for an interleaving, while the second, increased value does.

\subsection{Truncated Interleaving Distance}

Truncated smoothing is a more recently developed variation of the smoothing functor, which intuitively smooths and then ``chops off" tails in the smoothed graph~\cite{Chambers2021}. 

We define a \textbf{path} from $x$ to $x'$ in a constructible Reeb graph $\RR_f:=(\X_{f},\tilde{f})$ to be a continuous map $\pi:[0,1] \to \X_f$ such that $\pi(0) = x$ and $\pi(1) = x'$. A path is called an $\textbf{up-path}$ if it is monotone-increasing with respect to the function $\tilde{f}$, i.e. $\tilde{f}(\pi(t)) \leq \tilde{f}(\pi(t'))$ whenever $t \leq t'$. Similarly, a \textbf{down-path} is a path which is monotone-decreasing. 

Let $U_\tau $ be the set of points of $\RR_f$ that do not have a height $\tau$ up-path and let $D_\tau$ be the set of points of $\RR_f$ that do not have a height $\tau$ down-path. We defined the truncation of Reeb graph $\RR_f$ by 
$$T^{\tau}(\RR_f)= \RR_f \backslash (U_\tau \cup D_\tau)$$
so that we keep only the subgraph of $\RR_f$ that consists of the points that have both up-path and down-path of height $\tau$. 
We then define the truncated smoothing of Reeb graph  $\RR_f$ as
$S^{\tau}_{\e}(\RR_f) = T^\tau S_\e(\RR_f)$.
See Fig.~\ref{fig:graphandSmoothing} for an illustration of this operation for several values of $\e$ and $\tau$.   

\begin{figure}
    \centering
  \includegraphics[width=\textwidth, page=2]{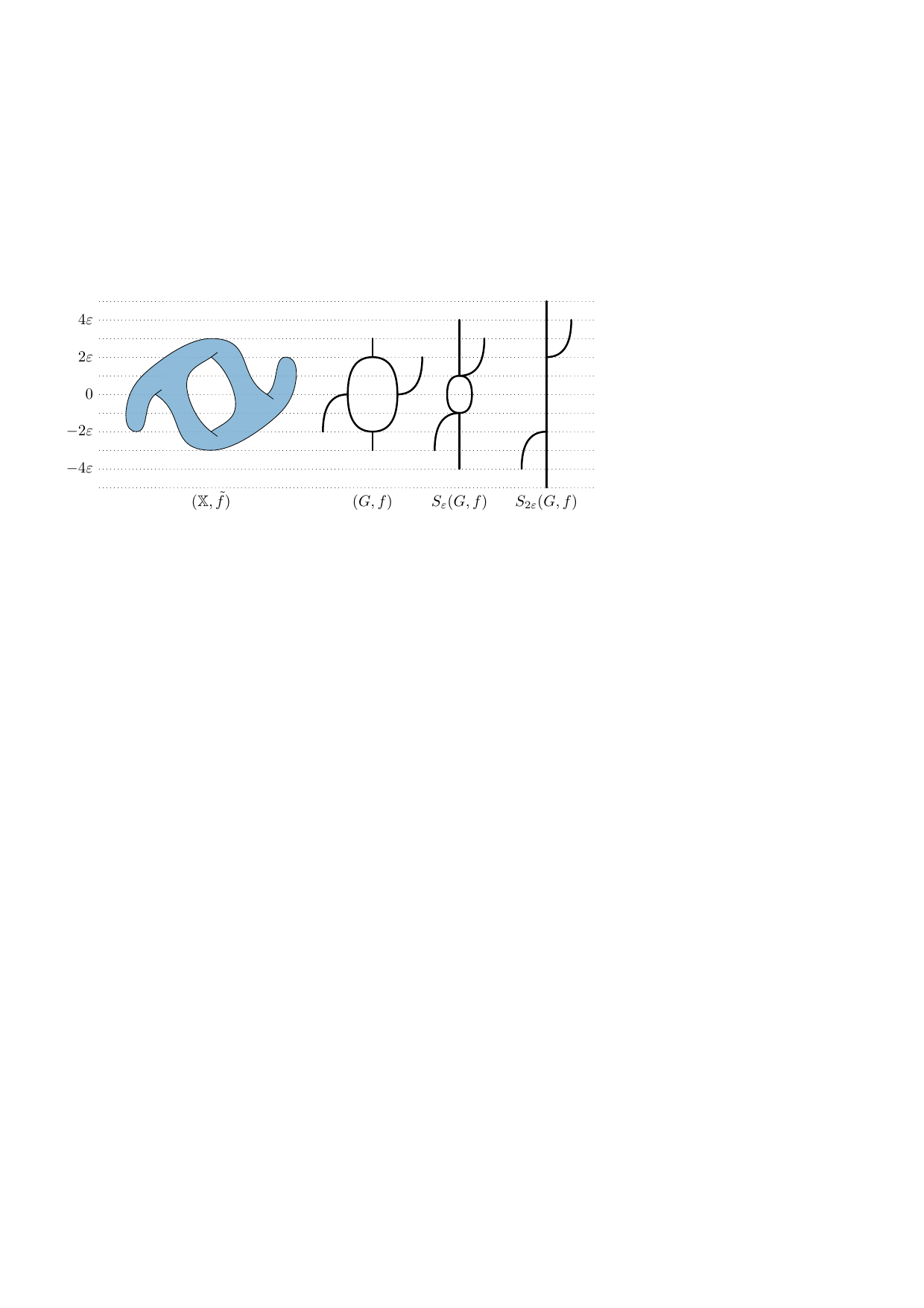}    
  \caption{A Reeb graph shown alongside several smoothed and truncated versions; figure from~\cite{Chambers2021}. }
    \label{fig:graphandSmoothing}
\end{figure}

The truncated smoothing operation inherits many of the useful properties of regular smoothing. 
First, for $0<\tau \le 2\e$, $S_\e^\tau$ is a functor.  In addition, we have a map $\eta: S_\e^\tau(\RR_f) \to S_{\e'}^{\tau'}$ for any $0 \leq \tau'-\tau \leq \e'-\e$. 
Note that we abuse notation and write $\eta$ since this map is a restriction of the map $\eta:S_\e(\RR_f) \to S_{\e'}(\RR_f)$.\footnote{This map is denoted $\rho$ in \cite{Chambers2021}. }

In essence, the $0 \leq \tau'-\tau \leq \e'-\e$ restriction can be viewed as requiring a slope of at most 1 in the $\e$-$\tau$ parameter space. 
So, fixing a parameter $m \in [0,1]$, we define the truncated interleaving distance as follows.

\begin{definition}
For $\e\geq 0$, $m \in [0,1]$ and a pair of Reeb graphs $\RR_f$ and $\RR_g$, an $\e$-interleaving is defined as a pair of function preserving maps $\phi:\RR_f \to S_\e^{m\e}(\RR_g)$ and $\psi:\RR_g \to S_\e^{m\e}(\RR_f)$ such that 
\begin{equation*}
    \begin{tikzcd}[row sep = huge, column sep = huge]
    \RR_f \ar[r, "\eta"] \ar[dr, "\phi"', very near start] 
        & S_\e^{m\e}(\RR_f) \ar[r, "\eta"] \ar[dr, "{S_\e^{m\e}[\phi]}"', very near start] 
        & S_{2\e}^{2m\e}(\RR_f) \\
    \RR_g \ar[r, "\eta"] \ar[ur, "\psi", very near start] 
        & S_\e^{m\e}(\RR_g) \ar[r, "\eta"] \ar[ur, "{S_\e^{m\e}[\psi]}", very near start]  
        & S_{2\e}^{2m\e}(\RR_g) 
    \end{tikzcd}
\end{equation*}
If such a pair exists, we say that $\RR_f$ and $\RR_g$ are $\e$-interleaved (using $S_\e^{m\e}$ when not clear from context) and define the interleaving distance as 
\begin{equation*}
    d_I^m(\RR_f,\RR_g) = \inf_{\e\geq 0} \{ \RR_f \text{ and }\RR_g \text{ are }\e\text{-interleaved} \}.
\end{equation*}
\end{definition}
Note that if $m=0$, we recover the original interleaving distance. 
As proved in \cite{Chambers2021},  $d_I^m$ is an extended pseudometric for every $m \in [0,1]$. 
Further, this family excluding $m=1$ are all strongly equivalent. This was proved in \cite{Chambers2021}; here we strengthen the results to include bounds for the general case in Cor.~\ref{cor:strongEquivTruncated}.
\begin{theorem}\label{thm:truncatedEquivalence}
[{\cite[Theorem~2.27]{Chambers2021}}]
For any pair
    $m,m' \in [0,1)$ with $0 \leq m'-m < 1-m'$
    the metrics $d_I^m$ and $d_I^{m'}$ are strongly equivalent.
    Specifically, given Reeb graphs $\RR_f$ and $\RR_g$,
    \begin{equation*}
    d_I^m(\RR_f, \RR_g)\leq d_I^{m'}(\RR_f, \RR_g)\leq \frac{1-m}{1-m'}d_I^m(\RR_f, \RR_g).
    \end{equation*}
\end{theorem}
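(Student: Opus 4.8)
The plan is to prove the two inequalities separately, in each case converting an interleaving realizing one distance into an interleaving realizing the other. Throughout I will use the three structural facts recorded above: that $S_\e^{m\e}$ is a functor (for $0<m\e\le 2\e$), that there is a canonical comparison map $\eta\colon S_\e^{\tau}\Rightarrow S_{\e'}^{\tau'}$ whenever $0\le \tau'-\tau\le \e'-\e$, and the composition law $S_{\e_1}^{\tau_1}S_{\e_2}^{\tau_2}=S_{\e_1+\e_2}^{\tau_1+\tau_2}$, which is exactly what makes the maps $S_\e^{m\e}[\phi]\colon S_\e^{m\e}(\RR_f)\to S_{2\e}^{2m\e}(\RR_g)$ in the interleaving diagram typecheck. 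Since $m\le m'$ throughout (from $m'-m\ge 0$), truncating by $m'\e$ removes at least as much as truncating by $m\e$, so there is in addition a subgraph inclusion $j\colon S_\e^{m'\e}(\RR)\hookrightarrow S_\e^{m\e}(\RR)$, natural in $\RR$; this is the other tool I will need.

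For the upper bound $d_I^{m'}\le \tfrac{1-m}{1-m'}\,d_I^m$, I would start from an $\e$-interleaving for $S_\e^{m\e}$, given by $\phi\colon\RR_f\to S_\e^{m\e}(\RR_g)$ and $\psi\colon\RR_g\to S_\e^{m\e}(\RR_f)$, and produce an $\e'$-interleaving for $S_{\e'}^{m'\e'}$ with $\e'=\tfrac{1-m}{1-m'}\e$ by post-composing with the comparison maps, setting $\phi'=\eta\circ\phi$ and $\psi'=\eta\circ\psi$. The heart of the argument is that this choice of $\e'$ is exactly the one that makes the relevant $\eta$ available: a direct computation gives $m'\e'-m\e=\tfrac{m'-m}{1-m'}\e=\e'-\e$, so the slope condition $0\le \tau'-\tau\le\e'-\e$ for $\eta\colon S_\e^{m\e}\Rightarrow S_{\e'}^{m'\e'}$ holds with equality on the right. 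The hypothesis $m'-m<1-m'$ is precisely $\tfrac{1-m}{1-m'}<2$, i.e. $\e<\e'<2\e$, which keeps the whole construction inside the single-step regime where one comparison map suffices; larger gaps must be handled by chaining, which is deferred to Cor.~\ref{cor:strongEquivTruncated}. Once $\phi',\psi'$ are defined, verifying that they form an $\e'$-interleaving reduces, via functoriality of $S_{\e'}^{m'\e'}$, naturality of $\eta$, and the composition law, to the interleaving identity already known for $\phi,\psi$; taking the infimum over $\e$ then yields the stated bound.

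For the lower bound $d_I^m\le d_I^{m'}$ I would instead keep $\e$ fixed and move in the direction of less truncation, which only requires the inclusion $j$. Given an $\e$-interleaving for $S_\e^{m'\e}$ with maps $\phi,\psi$, I set $\phi''=j\circ\phi$ and $\psi''=j\circ\psi$, landing in the larger graphs $S_\e^{m\e}(\RR_g)$ and $S_\e^{m\e}(\RR_f)$. Checking that $(\phi'',\psi'')$ is an $\e$-interleaving for $S_\e^{m\e}$ is a diagram chase: pulling the inclusions out using functoriality and the naturality of $j$, the composite $S_\e^{m\e}[\psi'']\circ\phi''$ becomes the image under an inclusion $S_{2\e}^{2m'\e}\hookrightarrow S_{2\e}^{2m\e}$ of the known composite $S_\e^{m'\e}[\psi]\circ\phi=\eta$, and this agrees with the canonical map $\eta\colon\RR_f\to S_{2\e}^{2m\e}(\RR_f)$ because the $\eta$-maps factor compatibly through these truncation inclusions. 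Since every $\e$ realizing an interleaving for $m'$ also realizes one for $m$, the infimum can only decrease, giving $d_I^m\le d_I^{m'}$.

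I expect the main obstacle to be the bookkeeping in these two diagram chases rather than any single hard idea: one must track the truncation superscripts carefully through every application of a smoothing functor, confirm that each intermediate $S_{\e_\ast}^{\tau_\ast}$ satisfies $\tau_\ast\le 2\e_\ast$ so that it is a genuine functor, and check that the various $\eta$'s and inclusions $j$ are mutually compatible (i.e. that the canonical map $\RR_f\to S_{2\e}^{2m\e}(\RR_f)$ really does factor through $S_{2\e}^{2m'\e}(\RR_f)$). The one genuinely load-bearing computation is the slope identity $m'\e'-m\e=\e'-\e$, which simultaneously pins down the constant $\tfrac{1-m}{1-m'}$ and explains the hypothesis $m'-m<1-m'$; everything else is naturality and functoriality applied to the interleaving square.
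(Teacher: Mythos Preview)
The paper does not prove this theorem; it is quoted verbatim with a citation to \cite[Theorem~2.27]{Chambers2021} and no argument is supplied here. The only proof the paper gives in this vicinity is for the subsequent Corollary~\ref{cor:strongEquivTruncated}, which takes the present theorem as a black box and chains it along a sequence $m=m_0<m_1<\cdots<m_n=m'$ to remove the gap hypothesis $m'-m<1-m'$. So there is no in-paper proof to compare against.

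That said, your outline is the natural argument and is almost certainly close to what the cited source does. The upper bound is correct as sketched: the choice $\e'=\tfrac{1-m}{1-m'}\e$ is forced by the slope condition, your identity $m'\e'-m\e=\e'-\e$ is right, and the hypothesis $m'-m<1-m'$ is exactly $\e'<2\e$, which keeps you in the regime where a single comparison map $\eta$ suffices. The reduction of the new interleaving triangle to the old one via naturality of $\eta$ and functoriality is routine once the $\eta$'s are known to compose coherently.

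For the lower bound, one small caution: the subgraph inclusion $j\colon S_\e^{m'\e}\hookrightarrow S_\e^{m\e}$ you use is \emph{not} an instance of the $\eta$ maps recorded in this paper (those require $\tau'-\tau\ge 0$, i.e.\ more truncation, whereas $j$ decreases truncation). It is of course immediate from the definition of $T^\tau$ that larger $\tau$ removes more points, so $j$ exists and is natural; you should simply state and justify this directly rather than folding it into the $\eta$ framework. With that in hand your diagram chase goes through.
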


\begin{corollary}\label{cor:strongEquivTruncated}
For all pairs $0 \leq M \leq M' < 1$, $d_I^M$ and $d_I^{M'}$ are strongly equivalent. Specifically,
\[d_I^M(\RR_f, \RR_g)\leq d_I^{M'}(\RR_f, \RR_g)\leq \frac{1-M}{1-M'}d_I^M(\RR_f, \RR_g).\]
\end{corollary}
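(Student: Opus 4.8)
The plan is to bootstrap the restricted statement of Theorem~\ref{thm:truncatedEquivalence} to arbitrary pairs by chaining it along a finite subdivision of $[M,M']$, and then to observe that the accumulated constants telescope to exactly $\frac{1-M}{1-M'}$. First I would dispose of the trivial case $M=M'$, where both claimed inequalities are equalities. For $M<M'$, the key observation is that since $M'<1$ the quantity $1-M'$ is strictly positive, so it is convenient to reformulate the theorem's hypothesis in terms of the ``gap'' $g(m):=1-m$: a one-line computation (using $m'-m=g(m)-g(m')$ and $1-m'=g(m')$) shows that $m'-m<1-m'$ is equivalent to $g(m')>\tfrac12\,g(m)$. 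Thus the theorem applies to a consecutive pair $m<m'$ precisely when the gap shrinks by a factor strictly larger than $\tfrac12$.

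Next I would construct the subdivision by interpolating the gaps geometrically. Set $r:=\left(\tfrac{1-M'}{1-M}\right)^{1/k}$ and $m_i:=1-(1-M)\,r^{\,i}$ for $i=0,\dots,k$, so that $m_0=M$, $m_k=M'$, the $m_i$ are strictly increasing, and all $m_i\in[M,M']\subset[0,1)$ as required by the theorem. Each consecutive pair satisfies $g(m_{i+1})=r\,g(m_i)$, so the hypothesis $g(m_{i+1})>\tfrac12\,g(m_i)$ holds as soon as $r>\tfrac12$. Since $\tfrac{1-M'}{1-M}\in(0,1)$ is a fixed positive number, $r\to 1$ as $k\to\infty$, so any sufficiently large $k$ produces a valid subdivision.

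Now I would apply Theorem~\ref{thm:truncatedEquivalence} to each consecutive pair $(m_i,m_{i+1})$. Chaining the left-hand inequalities gives $d_I^M=d_I^{m_0}\leq d_I^{m_1}\leq\cdots\leq d_I^{m_k}=d_I^{M'}$, which is the first inequality of the corollary. For the second inequality, the right-hand bounds read $d_I^{m_{i+1}}\leq \frac{1-m_i}{1-m_{i+1}}\,d_I^{m_i}$, and iterating from $i=k-1$ down to $i=0$ yields
\begin{equation*}
 d_I^{M'}=d_I^{m_k}\leq\Bigg(\prod_{i=0}^{k-1}\frac{1-m_i}{1-m_{i+1}}\Bigg)\,d_I^{m_0}=\frac{1-m_0}{1-m_k}\,d_I^{M}=\frac{1-M}{1-M'}\,d_I^{M},
\end{equation*}
where the product telescopes. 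Notably the final constant does not depend on the number of steps $k$, which is exactly why the corollary inherits the same clean form as the theorem.

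The main obstacle, and the only place real care is needed, is establishing that such a finite subdivision exists at all: one must guarantee that every step strictly satisfies the open condition $m_{i+1}-m_i<1-m_{i+1}$ while still reaching $M'$ exactly in finitely many steps. This is precisely where the hypothesis $M'<1$ is indispensable, since it keeps each gap $1-m_i$ bounded away from $0$ and thereby allows the per-step gap-ratio to be pushed above $\tfrac12$; once the subdivision is in hand, the chaining of inequalities and the telescoping of constants are routine.
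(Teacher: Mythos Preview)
Your proposal is correct and follows essentially the same approach as the paper: both proofs chain Theorem~\ref{thm:truncatedEquivalence} along a finite subdivision of $[M,M']$ and observe that the resulting product of constants telescopes to $\frac{1-M}{1-M'}$. The only difference is in how the subdivision is constructed---you use a clean geometric interpolation of the gaps $1-m_i$, whereas the paper builds an explicit sequence $A_n = \frac{m_0}{2^n} - \frac{1}{2^n} + M'$ converging to $M'$ and argues that it eventually lands in the required window; your construction is arguably tidier, but the strategy is the same.
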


\begin{proof}
In the following, we let $d_I^{m_i}$ denote $d_I^{m_i}(\RR_f,\RR_g)$ to avoid cumbersome notation. Suppose we have a sequence of values $\{m_0,\ldots,m_n\}$ such that $0 < m_{i+1} - m_i < 1 - m_{i+1}$. We can show by induction that $d_I^{m_0}\leq d_I^{{m_n}}\leq \frac{1-{m_0}}{1-{m_n}}d_I^{m_0}$. Given that $d_I^{m_0}\leq d_I^{{m_1}}\leq \frac{1-{m_0}}{1-{m_1}}d_I^{m_0}$ and  $d_I^{m_1}\leq d_I^{{m_2}}\leq \frac{1-{m_1}}{1-{m_2}}d_I^{m_1}$, we obtain
\[d_I^{m_0} \leq d_I^{m_2} \leq \frac{1-m_1}{1-m_2}d_I^{m_1} \leq \frac{1-m_1}{1-m_2}\cdot \frac{1-m_0}{1-m_1}d_I^{m_0} = \frac{1-m_0}{1-m_2}d_I^{m_0}.\]

Now, suppose $d_I^{m_0} \leq d_I^{m_j} \leq \frac{1-m_0}{1-m_j}d_I^{m_0}$. Then, similar to our base case, we get 
\[d_I^{m_0} \leq d_I^{m_j} \leq \frac{1-m_j}{1-m_{j+1}}\cdot \frac{1-m_0}{1-m_j}d_I^{m_0} = \frac{1-m_0}{1-m_{j+1}}d_I^{m_0}.\]

What is left to show is that given two numbers $0 \leq M \leq M' < 1$, there exists a sequence $\{M = m_0,m_1,\ldots,m_{n-1},m_n = M'\}$ such that $0 < m_{i+1} - m_i < 1 - m_{i+1}$ for all $i \in \{0,\ldots,n-1\}$. Here we construct a sequence of numbers $\{A_N,\ldots,A_N'\}$ such that each member of this sequence satisfies $0 < A_{i+1} - A_i < 1 - A_{i+1}$ and such that $0 < A_N - M < 1 - A_N$ and $0 < M' - A_{N'} < 1 - M'$. This creates the final sequence $\{M,A_N,\ldots,A_{N'},M'\}$.

First, note that we can rearrange these inequalities into the form $A_{i+1} \in [A_i, \frac{1+A_i}{2})$. If $M' \in [M,\frac{1+M}{2})$, then we are done. Otherwise, let $A_n := \frac{m_0}{2^n} - \frac{1}{2^n} + M'$ Then, $A_n$ defines a sequence of real numbers such that $A_{i+1} \in [A_i, \frac{1+A_i}{2})$. Since $A_n$ converges to $M'$, we have that for all $\e > 0$, there exists some $N$ such that for all $n \geq N$, $|M' - A_n| < \e$. If we choose $\e = 1-M'$, we get that $A_n < M'$ and
\begin{align*}
    \frac{1+A_n}{2} = & \frac{1+A_n}{2} + \frac{M'}{2} - \frac{M'}{2} 
                    =  \frac12 + \frac{M'}{2} - \frac12(M' - A_n) \\
                 \geq & \frac12 + \frac{M'}{2} - \frac12\e 
                    = \frac12 + \frac{M'}{2} - \frac12(1-M') 
                    =  M'.
\end{align*}
Thus, there exists some $N'$ such that $M' \in [A_{N'},\frac{1+A_{N'}}{2})$. To show that there exists some $N < N'$ such that $A_N \in [M,\frac{1+M}{2})$, we first let $\delta = M' - M$. Since $\frac{1-M}{2^n}$ converges to $0$ as $n\to \infty$, there exists some $N$ such that $\frac{1-M}{2^n} < \delta$. We want to show that there exists some $N$ such that $A_{N-1} < M \leq A_N < \frac{1+A_N}{2}$. To do this, first note that $A_0 < M$ and thus, if we know that there exists some $A_n \geq M$, then there exists some $A_{N} \geq$ such that $A_{N-1} < M$. We now show that if $n$ satisfies $\frac{1-M}{2^n} < \delta$, then $A_n \geq M$.
\begin{align*}
    A_n = & \frac{M}{2^n} - \frac{1}{2^{n}} + M' = \frac{M-1}{2^n} + M' = -\frac{1-M}{2^n} + M - M + M' \\
    = &-\frac{1-M}{2^n} + M + \delta \geq -\delta + M - \delta = M
\end{align*}
Now, suppose that $N$ is such that $A_{N-1} < M \leq A_N$. To show that $A_N < \frac{1+M}{2}$, note that $\frac{1+M}{2} - M = \frac{1-M}{2}$ and that $A_{N} - A_{N-1} = \frac{1-M}{2^N}$. Thus, $A_N = A_{N-1}+\frac{1-M}{2^N} \leq A_{N-1}+\frac{1-M}{2} < M + \frac{1-M}{2} = \frac{1+M}{2}$. Thus, $A_{N} \in [M,\frac{1+M}{2})$. This means that $\{M,A_N,\ldots,A_{N'},M'\}$ is a sequence beginning at $M$ and ending at $M'$ which satisfies all the necessary criteria.

\end{proof}

\section{Reeb Graph Edit Distance}
\label{sec:rged}

\subsection*{History}

The graph edit distance (GED) was first formalized by Sanfeliu and Fu~\cite{Sanfeliu1983} as a similarity measure for graphs; their original motivation/application was recognizing lower case handwritten English characters.  While various approximations and heuristics for the general problem are known~\cite{Gao2009}, it is nevertheless NP-Hard to compute~\cite{garey1979computers,Zhiping2009} as well as being APX-hard to approximate~\cite{Chih-Long1994}.  There are many well studied variants of graph edit distance for restricted classes of graphs, including the geometric graph distance~\cite{Cheong2009}, which compares embedded plane graphs.  We refer the interested reader to a comprehensive survey on the graph edit distance for discussion of further  variants~\cite{Bille2005}.

The Reeb graph edit distance was first introduced for 1-dimensional closed curves by Di Fabio and Landi ~\cite{DiFabio2012} and then later introduced for 2-dimensional manifolds by the same authors~\cite{DiFabio2016}. Originally, the core intuition was that we can think of Reeb graphs as having various topological \emph{events} -- such as the prescense of a maximum or minimum -- and to change one Reeb graph to another, we add, remove, or rearrange these events. For closed curves, the elementary deformations simply allowed for the birth, death, and relabeling (changing of function values) of events. When extrapolating to 2-dimensional manifolds, our set of edit operations is expanded to include three different operations responsible for changing the adjacencies of events. While the core idea of birth and death events in these two scenarios are the same, the structure of these events are fundamentally different. More specifically, introducing an event for a 1-dimensional closed curve involved adding two additional vertices to an edge to create a maxima - minima pair. For 2-dimensional manifolds, birth events were defined as adding upwards pointing or downwards pointing leaves.  These sets of elementary deformations are different from the classic graph edit distance where the edits can be categorized as insertion, deletion, and relabeling of vertices and edges, often times without restrictions on degrees of vertices or positions of edges \cite{Blumenthal2019}.

Below we will describe the edit distance discussed in \cite{DiFabio2016}, which we will call the \textbf{Reeb graph edit distance} and denote it as $d_E$. This requires a rather restricted setting; each Reeb graph $\RRf, \RRg$ are defined on homeomorphic, compact 2-manifolds without boundary. We will then separately discuss the one of the distances constructed from \cite{Bauer2020}, which we will call the \textbf{universal distance} and denote it as $\delta_E$. While this latter distance is completely categorical, it is useful to think of this distance as having a canonical set of edit operations -- either those stemming from \cite{DiFabio2016} when applicable, or the more general operations from \cite{Bauer2016}.

It is important to note that while the universal distance $\delta_E$ is inspired by its predecessors, it has not been shown that it is equivalent to the Reeb graph edit distance $d_E$ in a common setting, nor is it equivalent to the more directly related distance of \cite{Bauer2016}.

\subsection*{Definition}

While functional distortion distance considered $\RRf$ as a topological space and the interleaving distance considered $\RRf$ as a cosheaf $\F$, here we only need to consider the Reeb graph a combinatorial Reeb graph $\Gamma_f$.

Here, we will assume that each Reeb graph $\RR_f \in \mR$ with the additional criterion that the scalar fields have no boundary. We denote this space of scalar fields as $\mSB$ and its corresponding space of Reeb graphs as $\mRB$.

\begin{definition}
A \textbf{labeled (multi)graph} is a tuple $(\Gamma,\ell)$ where $\Gamma$ is a (multi)graph with vertex set $V(\Gamma)$, edge (multi)set $E(\Gamma)$, and a labeling function $\ell:V(\Gamma) \to \ell$, for some label set $\ell$.
\end{definition}

\begin{definition}
Let $\RR_f \in \mRB$. The \textbf{combinatorial Reeb graph} is the pair $(\Gamma_f,\ell_f)$ where $\Gamma_f := \rX$ and $\ell_f$ is the restriction of $\rf$ to the vertices of $\Gamma_f$.
\end{definition}

Since $\RR_f \in \mRB$, the vertices of $\Gamma_f$ correspond directly to critical points in the underlying scalar field.

In this section, we will define a set of elementary deformations which are a set of allowed operations for deforming a combinatorial Reeb graph $\Gamma_f$. Elementary deformations can be chained together to form a sequence of operations to which we will then associate a \textbf{cost}. The goal is to find a sequence of deformations $S$ which deforms a labeled multigraph $\Gamma \cong \Gamma_f$ into another labeled multigraph $\Gamma' \cong \Gamma_g$. We then define the Reeb graph edit distance as the infimum cost ranging over all sequences carrying $\Gamma_f$ to $\Gamma_g$.

\begin{definition}\label{def:multigraphIsomoprhism}
Let $(\Gamma_f,\ell_f),(\Gamma_g,\ell_g)$ be two combinatorial Reeb graphs. We say that these two Reeb graphs are \textbf{Reeb graph isomorphic} if there exists a bijection $\alpha: V(\Gamma_f) \to V(\Gamma_g)$
\begin{enumerate}[label=(\arabic*)]
    \item $e = e(v,v')$ is in $E(\Gamma_f)$ if and only if $e(\alpha(v),\alpha(v'))$ is in $E(\Gamma_g)$ and
    \item for every $v \in V(\Gamma_f)$, $f(v) = g(\alpha(v))$.
\end{enumerate}
If $(\X,f),(\Y,g) \in \mRB$, then we say they are $\textbf{merge tree isomorphic}$.
\end{definition}

\subsubsection*{Elementary Deformations}

The operations that are permitted to be performed on the graph $\Gamma$ are known as \textbf{elementary deformations}. The allowed operations are specific to Reeb graphs in $\mRB$ and are designed to add, remove, relabel, and change adjacencies of events. These deformations will be applied one after another in order to carry one Reeb graph $\RRf\in\mRB$ to another Reeb graph $\RRg\in\mRB$. Here we will provide brief definitions of each elementary deformation; see \cite{DiFabio2016} for explicit definitions of each deformation. Fig.~\ref{fig:elementaryDeformations} depicts each elementary deformation as well as their inverses.

For the remainder of these definitions, we denote the initial graph as $(\Gamma_0,l_0)$ and the transformed graph as $T(\Gamma_0,l_0) = (\Gamma_1,\ell_1)$, where $T$ is an elementary deformation.

\begin{definition}[Basic Deformations]\label{def:basicDeformations}
The set of basic deformations consist of three different types:
\begin{enumerate}
    \item an \textbf{elementary birth deformation} (B-type) adds a pair $\{u_0,u_1\}$ such that $u_0$  is a degree-3 vertex bisecting an existing edge and $u_1$ is a local maxima or minima. The edge $e(u_0,u_1)$ is known as a \textbf{up-leaf} if $f(u_1) > f(u_0)$ and a \textbf{down-leaf} otherwise.
    \item an \textbf{elementary death deformation} (D-type) removes an existing leaf from the initial graph.
    \item a \textbf{elementary relabel deformation} (R-type) assigns new function values to any number of vertices of $(\Gamma_0,l_0)$, while only changing the function value ordering of at most two non-adjacent vertices.
\end{enumerate}
\end{definition}

\begin{remark}\label{rem:singleStepRelabel}
The restriction given to the relabel operation prevents us from changing any up-leaf into a down-leaf in a single step.
\end{remark}

\begin{definition}[K-type Deformations]\label{def:kTypeDeformations}
    The $K$-type deformations are responsible for changing the adjacencies and function value labelings of saddles and leaves.
    \begin{enumerate}
    \item An  \textbf{elementary $K_1$ type deformation} moves an up-leaf (down-leaf) whose root bisects an up-leaf (down-leaf) $e_1$ to bisect an up-leaf (down-leaf) $e_2$ that shares a root with $e_1$.
    \item Let $u_1,u_2$ be two adjacent degree-3 vertices such that $u_1$ has two downwards-pointing edges incident to it while $u_2$ has two upwards-point edges incidient to it. An \textbf{elementary $K_2$ type deformation} switches the function value ordering of $u_1,u_2$ while and simultaneously changing the adjacencies of the connecting edges such that both $u_1$ and $u_2$ each have one downwards-point edge and one upwards-pointing edge.
    \item Let $u_1,u_2$ be two adjacent degree-3 vertices such that both $u_1$ and $u_2$ each are incident to one upwards-pointing edge and one downwards-pointing edge. An \textbf{elementary $K_3$ type deformation} switches the function value ordering of $u_1,u_2$ while and simultaneously changing the adjacencies of the connecting edges such that both $u_1$ now has two downwards-pointing edges incident to it and $u_2$ has two upwards-pointing edges incident to it.
    \end{enumerate}
\end{definition}

Each deformation we have described above inherently has an inverse deformation. Birth and death deformations are inverses of each other, $K_2$ and $K_3$-type deformations are inverses of each other, and both relabel and $K_1$ type deformations are inverses of themselves.

\begin{figure}
    \centering
    \includegraphics[width=0.9\textwidth]{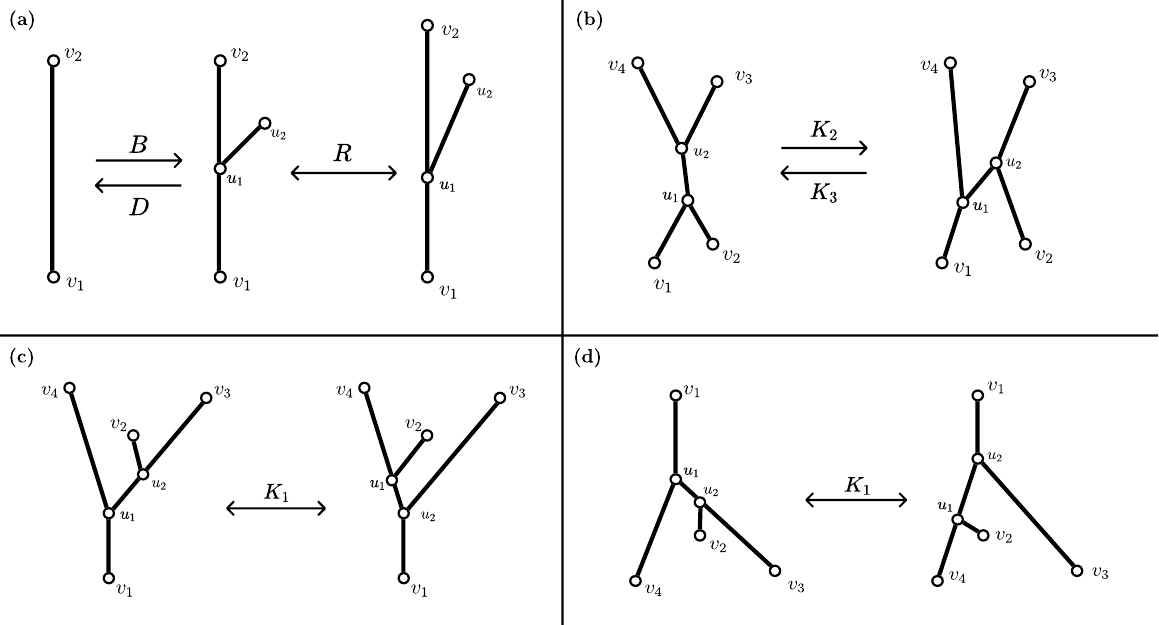}
    \caption{Simple depiction of elementary deformations. \textbf{(a)} A birth (B), death (D) and relabel (R) deformation. \textbf{(b)} A $K_2$-type deformation and its inverse; a $K_3$-type deformations. \textbf{(c)} A $K_1$-type deformation with all up-leaves. \textbf{(d)} A $K_1$-type deformation with all down-leaves.} 
    \label{fig:elementaryDeformations}
\end{figure}

\subsubsection*{Edit Sequences}

Since $T(\Gamma,\ell)$ is another labeled multigraph, we can chain together these elementary deformations to form a sequence of edit operations.

\begin{definition}
An $\textbf{edit sequence}$ of the labeled graph $(\Gamma,\ell)$ is any finite ordered sequence $S = (T_1,T_2,\ldots,T_n)$ of edit operations such that $T_1$ is an edit operation acting on $(\Gamma,\ell)$, $T_2$ is an operation acting on $T_1(\Gamma,\ell)$, and so on. We denote the result of applying this entire sequence to $(\Gamma,\ell)$ as $S(\Gamma,\ell)$. Finally, we denote $\mathcal{S}((\Gamma,\ell),(\Gamma',\ell'))$ to be the set of edit sequences $S$ that carry $(\Gamma,\ell)$ to $(\Gamma',\ell')$.
\end{definition}

\begin{definition}
Let $T \in \SS((\Gamma_f,\ell_f),(\Gamma_g,\ell_g))$ be an elementary deformation. The \textbf{cost} of $T$ is defined as follows:
\begin{itemize}[label={-}]
    \item If $T$ is an elementary birth deformation which inserts $u_1,u_2 \in V(\Gamma_g)$, then the cost of $T$ is \[c(T) = \frac{|\ell_g(u_1)-\ell_g(u_2)|}{2}.\]
    \item If $T$ is an elementary death deformation which removes vertices $u_1,u_2 \in V(\Gamma_f)$, then the cost of $T$ is 
    \[c(T) = \frac{|\ell_f(u_1)-\ell_f(u_2)|}{2}.\]
    \item If $T$ is an elementary relabel deformation, then the cost of $T$ is 
    \[c(T) = \max_{v \in V(\Gamma_f)}|\ell_f(v) - \ell_g(v)|.\]
    \item If $T$ is a $K_i$-type for $i = 1,2,3$, changing the relabeling and vertices of $u_1,u_2 \in V(\Gamma_f)$, then the cost of $T$ is 
    \[c(T) = \max\{|\ell_f(u_1)-\ell_g(u_1)|,|\ell_f(u_2)-\ell_g(u_2)|\}.\]
\end{itemize}
\end{definition}

\begin{definition}
Let $S = (T_1,\ldots,T_n) \in \mathcal{S}((\Gamma,\ell),(\Gamma',\ell'))$. Setting $(\Gamma_1,\ell_1) := (\Gamma,\ell)$ and $(\Gamma_{n+1},\ell_{n+1}) := (\Gamma',\ell')$, let $(\Gamma_{i+1},\ell_{i+1}) = T_i(\Gamma_i,l_i)$, for all $i \in \{1,\ldots,n\}$. The \textbf{cost} of $S$ is then 
\[c(S) = \sum_{i=1}^n c(T_i).\]
\end{definition}

Just as in standard graph edit distance, our goal will be to find the best possible sequence by attempting to minimize the associated cost. The Reeb graph edit distance is then defined to be the infimum cost over all possible sequences carrying one Reeb graph $\RRf$ to another $\RRg$.

\begin{definition}
The \textbf{Reeb graph edit distance}, $d_E$, between two Reeb graphs $\RRf,\RRg\in\mRB$ is defined to be
\[d_E((\Gamma_f,\ell_f),(\Gamma'_g,\ell_g)) = \inf_{\mathcal{S}((\Gamma,\ell),(\Gamma',\ell'))}c(S),\]
where $(\Gamma_f,\ell_f)$, $(\Gamma_g,\ell_g)$ are the combinatorial Reeb graphs of $\RRf,\RRg$, respectively and $(\Gamma_f,\ell_f) \cong (\Gamma,\ell)$ and $(\Gamma_g,\ell_g) \cong (\Gamma',\ell')$.
\end{definition}

It has been shown that there always exists an edit sequence carrying one Reeb graph $\RRf \in \mRB$ to another $\RRg \in \mRB$ \cite{DiFabio2016} and thus the edit distance is well defined.

By \cref{rem:singleStepRelabel}, if we were to decide to ``change'' an up-leaf into a down-leaf via relabel operations, this will always be more expensive than simply deleting the up-leaf and then inserting the down-leaf. This is a crucial notion that is present in all Reeb graph metrics and the graded bottleneck distance -- the distance will never compare features of different types to one another.

\optional{
This style of edit distance is similar to the traditional style of the (labeled) graph edit distance; each deformation in a sequence has a cost, the cost of the sequence is the sum of the individual costs, and our goal is to find the lowest cost sequence from our source graph to the target graph.
}

\begin{remark}\label{rem:spreadRelabel}
There is a rather subtle consequence of the costs defined above: although adding leaves individually each incur a cost, we often have sequences where, if we need to have multiple birth/death events in the sequence, it is best to use relabel operations to expand/shrink the leaves rather than add/remove them at their final/initial size. This is because a single relabel operation only incurs the \emph{max} cost of each relabeling. We show an example of this in Ex.~\ref{example:graphIsoEqPers}.
\end{remark}

\section{Universal distance}
\label{sec:universalDistance}

\subsection*{History}
One of the main drawbacks to the Reeb graph edit distance was its inability to compare spaces which were non-homeomorphic. Furthermore, having different sets of edit operations dependent on the dimension of the space we are working with causes a lack of cohesiveness. A new distance was introduced in \cite{Bauer2016} which bridged this gap. It simultaneously provided a set of edit operations which worked for a wide class of Reeb graphs as well as allowing for insertion and deletion of loops -- the operation needed to be able to compare non-homeomorphic spaces. However, an issue with the cost model allowed for the addition of features while incurring no cost -- essentially rendering the distance defined on these operations invalid; see Rem.~\ref{rem:bug}. This work was later expanded upon in \cite{Bauer2020} which fixed this error by creating a purely categorical framework for defining a distance between Reeb graphs. In addition, this work proved that this distance is the largest possible stable distance on Reeb graphs and is thus labeled as being a \textbf{universal distance}; see Sec.~\ref{sec:universality} for a discussion of this property.

\subsection*{Definition}

Suppose we have two PL scalar fields $(\X,f)$ and $(\X,g)$ defined on the same domain $\X$. Then, we can construct two different Reeb graphs, $\RRf,\RRg$, equipped with scalar functions $\tilde{f}$  and $\tilde{g}$. Naturally, there exist quotient maps $p_f$ and $p_g$ such that $f = \tilde{f} \circ p_f$ and $g = \tilde{g} \circ p_g$. In other words, the following diagram commutes:

\begin{equation}\label{dgm:relabel}
    \begin{tikzcd}
    \mathbb{R}                           &                                                                                    & \mathbb{R}                            \\
    \mathcal{R}_f \arrow[u, "\tilde{f}"] &                                                                                    & \mathcal{R}_g \arrow[u, "\tilde{g}"'] \\
                                         & \mathbb{X} \arrow[lu, "p_f"] \arrow[ru, "p_g"'] \arrow[ruu, "g"] \arrow[luu, "f"'] &                                      
    \end{tikzcd}
\end{equation}

Similarly, we can construct two slightly different topological spaces $\X_1$ and $\X_2$ which, when endowed with different functions $f_1$ and $f_2$, can create the same Reeb graph $\RRf$. This creates the following commutative diagram:

\begin{equation}\label{dgm:edit}
    \begin{tikzcd}
                                                       & \mathbb{R}                            &                                                    \\
                                                       & \mathcal{R}_f \arrow[u, "\tilde{f}"'] &                                                    \\
    \mathbb{X}_1 \arrow[ru, "p_1"'] \arrow[ruu, "f_1"] &                                       & \mathbb{X}_2 \arrow[lu, "p_2"] \arrow[luu, "f_2"']
    \end{tikzcd}
\end{equation}

Given two Reeb graphs $\RRf$ and $\RRg$, we can find a sequence of Reeb graphs 
$$\{\RRf = \RR_1, \RR_2, \ldots, \RR_{n-1},\RR_n = \RRg\}$$
and a sequence of topological spaces $\{X_1,\ldots,X_{n-1}\}$ such that the following diagram commutes:

\begin{equation}\label{dgm:zigzag}
\adjustbox{width=0.9\textwidth,center}{
\begin{tikzcd}
\mathbb{R}                                             &                                                      & \mathbb{R}                             &                                                 &        &                                                        & \mathbb{R}                                     &                                                                & \mathbb{R}                                             \\
\mathcal{R}_f = \mathcal{R}_1 \arrow[u, "\tilde{f}_1"] &                                                      & \mathcal{R}_2 \arrow[u, "\tilde{f}_2"] &                                                 & \ldots &                                                        & \mathcal{R}_{n-1} \arrow[u, "\tilde{f}_{n-1}"] &                                                                & \mathcal{R}_n = \mathcal{R}_g \arrow[u, "\tilde{f}_n"] \\
                                                       & X_1 \arrow[lu, "{p_{1,1}}"'] \arrow[ru, "{p_{1,2}}"] &                                        & X_2 \arrow[lu, "{p_{2,2}}"'] \arrow[ru, dotted] &        & X_{n-2} \arrow[ru, "{p_{n-2,n-1}}"] \arrow[lu, dotted] &                                                & X_{n-1} \arrow{lu}[swap]{ p_{n-1,n-1}} \arrow{ru}[swap]{p_{n-1,n}} &                                                       
\end{tikzcd}
}
\end{equation}

This zigzag diagram above can be thought of as being constructed by combining alternating copies of the Dgm.~\ref{dgm:relabel} and Dgm.~\ref{dgm:edit} -- essentially constructing a sequence of edit and relabel operations carrying $\RRf$ to $\RRg$. 

\begin{remark}
The universal distance is constructed in the context of connected, compact, triangulable spaces defined as \textbf{Reeb domains} and quotient maps from the scalar field to the Reeb graph called \textbf{Reeb quotient maps}, which have the additional restriction of being PL. For the sake of being cohesive with the rest of this document, we will continue this section without using these terms explicitly and try our best to frame this work to be more consistent with our previous definitions of Reeb graphs.
\end{remark}

Now, rather than think of $X_1,\ldots,X_{n-1}$ as being arbitrary topological spaces, here we will restrict $X_i$ such that each is a 1-dimensional CW complex. This allows us to solely focus our topological changes to the spaces $X_1,\ldots,X_{n-1}$ while focusing our function value changes to the spaces $\RR_1,\ldots,\RR_n$.

\begin{definition}
\label{def:zigzagDgm}
Let $\RRf,\RRg$ be two Reeb graphs. A \textbf{zigzag diagram} $Z$ is a sequence of Reeb graphs $\mathsf{R} = \{\RRf = \RR_1,\RR_2,\ldots,\RR_{n-1},\RR_n = \RRg\}$ coupled with a sequence of graphs $\mathsf{X} =\{X_1,\ldots,X_{n-1}\}$ such that for each $X_i$
there are two valid maps $p_{i,i}:V(X_{i}) \to V(\RR_{i}) ,p_{i,i+1}: V(X_{i})\to V(\RR_{i+1})$ which respect edge assignments. That is, if $e(x_j,x_k) \in E(X_i)$, then $e(p_{i,i}(x_j),p_{i,i}(x_k)) \in E(\RR_i)$. The sequence $\mathsf{X}$ will be called the \textbf{connecting spaces} of $Z$ while $\mathsf{R}$ is called the Reeb graphs of $Z$
\end{definition}

Similar to the Reeb graph edit distance case, we are always guaranteed a way to construct this zigzag diagram between any two PL Reeb graphs. The fact that these quotient maps respect edge assignments comes directly from surjectivity, continuity, and piecewise linearity of the quotient maps used. 

In general, the Reeb graphs in $\mathsf{R}$ need not abide by the same constraints that the original Reeb graphs follow. For example, if $\RRf,\RRg \in \mR$, then neither $\RRf$ nor $\RRg$ may have vertices of degree 4. Although $\mathsf{R}$ will commonly involve objects which have vertices of degree 4 when we need to make topological changes to the Reeb graph. See \cref{fig:zigzagExample} for an example.

This zigzag diagram can now be considered a sequence of relabel and edit operations which carry the Reeb graph $\RRf$ to $\RRg$. We define a cost of this zigzag diagram by first constructing the zigzag diagram's \textbf{limit}. For us to do this, we first introduce the notion of the \textbf{pullback} of a pair of morphisms.

\begin{definition}
The \textbf{pullback} of two morphisms $f:X_1 \to Y$, $g:X_2 \to Y$ is the set of pairs $(x_1,x_2)$, where $x_1\in X_1$, $x_2 \in X_2$ and $f(x_1) = g(x_2)$. The pullback is denoted $X_1 \times_Y X_2$. The pullback is also known as the \textbf{fiber product} of the morphisms $f:X_1 \to Y$ and $g:X_2 \to Y$.
\end{definition}

The pullback is intuitively then the cartesian product of spaces $X_1$, $X_2$ with the restriction that the elements in the tuple $(x_1,x_2)$ map to the same point in $Y$. In Dgm.~\ref{dgm:relabel}, the pullback is $X_1 \times_{\RRf} X_2$, which is the limit of the diagram.

The limit of Dgm.~\ref{dgm:zigzag} is then the iterated pullback of the zigzag diagram. That is, \[\limit = X_1 \times_{\RR_2} \times X_2 \times_{\RR_3} X_3 \times_{\RR_4} \ldots \times_{\RR_{n-2}} X_{n-2} \times_{\RR_{n-1}} X_{n-1}.\] By construction, there exists quotient maps $q_i:\limit \to X_i$ for all $i = \{1,\ldots,{n-1}\}$ which is defined as the projection onto the $i^{th}$ coordinate. That is, $q_i(x_1,\ldots,x_i,\ldots,x_{n-1}) = x_i$. Then a point $x = (x_1,x_2,\ldots,x_{n-1}) \in \limit$ if 
\[(p_{i,i+1}\circ q_i)(x_1,\ldots,x_{n-1}) = p_{i,i+1}(x_i) = p_{i+1,i+1}(x_{i+1}) = (p_{i+1,i+1}\circ q_{i+1})(x_1,\ldots,x_{n-1})\] for all $i \in \{1,\ldots,n-2\}$.

\begin{definition}
Let $Z$ be a zigzag diagram which carries $\RR_{\sf_1}$ to $\RR_{\sf_n}$ and let $\limit$ be the limit of $Z$. We define the \textbf{spread} of $\limit$ to be the function 
\[s^\limit: \limit \to \R, \hspace{10PX} x \mapsto \max_{i=1,\ldots,n}\uf_i(x) - \min_{i=1,\ldots,n}\uf_i(x),\]
where $\uf_i = \tilde{f}_i \circ p_{i-1,i} \circ q_{i-1} = \tilde{f}_i \circ p_{i,i} \circ q_i$. Moreover, we define the \textbf{cost} of $Z$ to be the supremum of the spread of $\limit$. That is,
\[c_Z = ||s^\limit||_{\infty} = \sup_{x\in \limit}\bigg(\max_{i=1,\ldots,n}\uf_i(x) - \min_{i=1,\ldots,n}\uf_i(x)\bigg)\]
\end{definition}

Finally, we can define the edit distance between Reeb graphs $\RRf$ and $\RRg$:

\begin{definition}
We define the \textbf{universal distance} $\delta_E$ between $\RRf$ and $\RRg$ to be the infimum cost over all zigzag diagrams carrying $\RRf$ to $\RRg$. That is,
\[\delta_E(\RRf,\RRg) = \inf_Z c_Z\]
\end{definition}

The distance defined above is denoted as $\delta_{eGraph}$ in \cite{Bauer2020}.

\begin{figure}
    \centering
    \includegraphics[width=\textwidth]{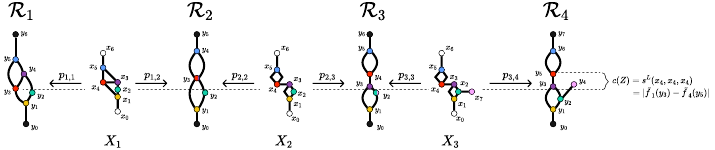}
    \caption{Zigzag diagram carrying $\RR_1$ to $\RR_4$. Color in the connecting spaces indicate the quotient mappings to the Reeb graphs. The vertical positioning of vertices of the connecting spaces do not correspond to function value unlike the Reeb graphs. Two-toned vertices indicate that the quotient map of two distinct vertices map to the same point in that Reeb graph. The vertices of the Reeb graphs are labeled in increasing function order.}
    \label{fig:zigzagExample}
\end{figure}

\begin{example}\label{example:UniversalExample}
\cref{fig:zigzagExample} depicts a zigzag diagram for carrying a Reeb graph $\RR_1$ to $\RR_4$. Note the vertex $x_2$ maps to a non-critical point in each Reeb graph except for in $\RR_4$. This is to show that even though the introduction of this new leaf is not until the final Reeb graph, the vertices associated with with this leaf still must map to some point in each of the Reeb graphs to be a well-defined zigzag diagram. 

In order for us to carry $\RR_1$ to $\RR_4$, we first separate the 1-cycles. This is done by having $X_1$ and $X_2$ constructed in such a way that they $x_3$ is no longer connected to $x_5$ and $x_4$ is no longer connected to $x_1$. The quotient maps $p_{1,2}$ and $p_{2,2}$ then map $x_3$ and $x_4$ to the same point in $\RR_2$. In order for us to insert the leaf in $\RR_4$, we simply introduce a new edge and new vertex to the final connecting space $X_3$.

Aside from the global max and global min of the Reeb graph, each color indicates its own member, i.e. $(x_1,x_1,x_1),(x_2,x_2,x_2),\ldots,(x_5,x_5,x_5)$ are all elements of the limit. In addition, since $x_3\in X_1$ and $x_4 \in X_2$ map to the same point in $\RR_2$, we have $(x_3,x_3,x_4)$ and $(x_4,x_4,x_3)$ in the limit as well. From this image, we can see that the largest spread of these elements come from $(x_4,x_4,x_4) = |\rf_1(y_3)-\rf_4(y_5)|$.
\end{example}

\begin{remark}\label{rem:bug}
The cost model defined in \cite{Bauer2016} closely resembles the cost model defined for the zigzag diagrams of \cite{Bauer2020} above. Unfortunately, there is a small error which essentially allows for adding any size leaf with a cost approaching 0. If we want to relabel a maxima node $v_1$, for example, to have function value $a$, instead we can do the following: relabel $v_1$ to have function value $a/3$, then add an edge $e(v_1,v_2)$ where $v_2$ is degree 1 and then relabel its function value to be $2a/3$. Repeat once more with a new node $v_3$ and set its function value to $a$. We then delete vertices $v_1$ and $v_2$. Since the cost model only tracks the individual vertices, the cost induced by this is $a/3$ instead of $a$. As the number of intermittent vertices increases towards $\infty$, the cost would decrease towards $0$. This issue does not change any proofs of completeness of this set of elementary deformations. By constructing the cost model in this categorical way, this error is completely fixed. 
\end{remark}

\section{Distance Properties}
\label{sec:distProp}

We will next seek to understand the relationships between each Reeb graph metric to each other and to the bottleneck distance, as well as understanding crucial results of stability which make these distances robust to input perturbations. 
In the following sections, we summarize the relationships between each of the Reeb graph metrics -- including their relationship to bottleneck distances and the $L^{\infty}$ distance. A visual summary of the results are depicted in Fig.~\ref{fig:distanceLandscape}. We will also provide explicit statements concerning other properties that the Reeb graph metrics exhibit which distinguish them from the bottleneck distance. Since these properties were originally defined for only a subset of these distances, we will provide additional statements with proof for the remaining distances to bridge this gap.

Table~\ref{table:distProps} displays the relevant citations for each property of each distance. We refer the reader to Appendix \ref{sec:appx:truncatedProperties} for explicit statements about the properties of the truncated interleaving distance.

\begin{table}
\centering
\begin{tabular}{@{}ccccccc@{}}
                         & $d_B$              & $d_I$              & $d^m_I$             & $\dfd$          & $d_E$              & $\du$       \\ \midrule
Stable                   & \cite{Steiner2009} & \cite{deSilva2016} & -                   & \cite{Bauer2014}  & \cite{DiFabio2016} & \cite{Bauer2020} \\
Discriminative           & -                  & \cite{Bauer2015b}, \ref{cor:interleaving-discrim}  & \cite{Chambers2021}, \ref{cor:truncated-interleaving-discrim} & \cite{Bauer2014}, \ref{cor:FDD-discrim}  & \cite{DiFabio2016} & \cite{Bauer2020} \\
isomorphism indiscernible    & -                  & \cite{deSilva2016} & \cite{Chambers2021} & \cite{Bauer2015b}, \ref{cor:FDD-iso} & \cite{DiFabio2016} & \cite{Bauer2020} \\
Path Component Sensitive & -                  & \cite{deSilva2016} & \cite{Chambers2021} & \cite{Bauer2015b} & -                  & \cite{Bauer2020}, \ref{prop:universal-pcs} \\
Universal                & -                  & -                  & -                   & -                 & -                  & \cite{Bauer2020} \\ \bottomrule
\end{tabular}
\caption{Table of distance properties. Entry corresponds to a citation where the distance was proved. We supply additional references to statements we contributed in this work which solidify these properties. We denote disproven properties or properties that are not applicable to the given distance with ``-".\label{table:distProps}}
\end{table}

Before delving into the properties, we list theorems which prove strong equivalence of the interleaving, functional distortion, and universal distance on Reeb graphs.

\begin{theorem}[{\cite[Theorem~16]{Bauer2015b}}]\label{thm:strongEquiv}
Let $\RRf$ and $\RRg$ be two constructible Reeb graphs. Then, the functional distortion distance and interleaving distance are strongly equivalent. Specifically,
\[d_I(\RRf,\RRg) \leq \dfd(\RRf,\RRg) \leq 3d_I(\RRf,\RRg).\]
\end{theorem}

\begin{theorem}[{\cite[Theorem~11]{Bauer2021}}]\label{thm:strongEquivUniv}
Let $\RRf$ and $\RRg$ be two constructible Reeb graphs\footnote{The Reeb graphs need not be constructible, according to \cite{Bauer2021}, but constructibility is a sufficient condition for these inequalities to hold.} Then, the interleaving distance and functional distortion distance are strongly equivalent to the universal distance. Specifically,
\begin{align*}
    d_I(\RRf,\RRg) \leq & \du(\RRf,\RRg) \leq 5d_I(\RRf,\RRg)\\ \dfd(\RRf,\RRg) \leq & \du(\RRf,\RRg) \leq 3\dfd(\RRf,\RRg).
\end{align*}
\end{theorem}

\subsection{Stability}

The notion of stability in TDA was first introduced for the persistence diagrams in \cite{Steiner2005}, where it is stated as demonstrating that small changes in the input data would only lead to small changes in the persistence diagram. 
Although stability is often described as some property of the topological signature, it is more a property of the signature \textit{and} the metric chosen.

\begin{property}[\textbf{Stability}]\label{p:stability}
The $L^\infty$ \textbf{distance} between $f$ and $g$ is defined as \[||f-g||_\infty := \max_{x \in \X} |f(x) - g(x)|.\] If $d$ is a distance on Reeb graphs, we say that $d$ is \textbf{stable} if \[d(\RRf,\RRg) \leq ||\sf-\sg||_\infty,\]
for all scalar fields $(\sX,\sf),(\sX,\sg)$ and corresponding Reeb graphs $\RRf,\RRg$. 
\end{property}

The difficulty in the stability of the Reeb graph metrics is that small perturbations may lead to topological changes in the resulting graph. From a computational standpoint, this is one of the fundamental properties that makes implementations of Reeb graph metrics complex. We discuss more experimentally driven issues of stability in \cref{chapter:MTMD}.  \josh{fix me}

\begin{theorem}[{\cite[Stability Theorem for Tame Functions]{Steiner2009}}]
Let $(\X,f)$, $(\X,g)$ be two tame scalar fields where $\X$ is a triangulable topological space. Then we have \[\db{}(\emph{ExDgm}(f),\emph{ExDgm}(g)) \leq ||f-g||_{\infty}.\]
\end{theorem}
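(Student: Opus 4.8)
The plan is to deduce the bound from algebraic stability of persistence modules, reducing the geometric statement to an interleaving between the extended persistence modules of $f$ and $g$. Write $\delta = \|f-g\|_\infty$. Since $g-\delta \le f \le g+\delta$ pointwise, we obtain for every $a \in \R$ the sublevel-set inclusions $f^{-1}(-\infty,a] \subseteq g^{-1}(-\infty,a+\delta]$ and $g^{-1}(-\infty,a] \subseteq f^{-1}(-\infty,a+\delta]$, together with the superlevel-set inclusions $f^{-1}[a,\infty) \subseteq g^{-1}[a-\delta,\infty)$ and $g^{-1}[a,\infty) \subseteq f^{-1}[a-\delta,\infty)$. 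Applying $H_d(-)$ to the sublevel inclusions and $H_d(\X,-)$ to the superlevel inclusions turns each containment into a linear map that shifts the filtration parameter by $\delta$, and composing a pair of opposite shifts recovers the internal structure map of one module over a $2\delta$ shift.

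Next I would assemble these maps into a genuine interleaving. The full extended persistence diagram is read off from a single \emph{extended module} $M_f$ obtained by concatenating the ascending sequence $a \mapsto H_d(f^{-1}(-\infty,a])$ with the descending sequence $a \mapsto H_d(\X, f^{-1}[a,\infty))$, glued at the top of the range by the canonical map from absolute to relative homology. The parameter-shifting maps above, paired with the connecting maps inside $M_f$ and $M_g$, then furnish natural transformations $M_f \Rightarrow M_g(\cdot + \delta)$ and $M_g \Rightarrow M_f(\cdot + \delta)$ whose composites are the structure maps of the $2\delta$-shift; this is exactly a $\delta$-interleaving of $M_f$ and $M_g$. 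Tameness (Def.~\ref{def:tame}) guarantees that both modules are pointwise finite-dimensional and change at only finitely many parameter values, so $\ExDgm(f)$ and $\ExDgm(g)$ are finite multisets and the interleaving is well-defined.

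I would then invoke the isometry/algebraic stability theorem for persistence modules \cite{Chazal2009b}: a $\delta$-interleaving of pointwise finite-dimensional modules induces a $\delta$-matching of their barcodes, i.e.\ a bijection (allowing unmatched bars to be matched to the diagonal) that moves every interval endpoint by at most $\delta$ in the $\ell^\infty$ sense. Interpreting barcodes as the points of the extended diagrams, this matching is precisely a bijection $\zeta$ witnessing $\sup_x \|x - \zeta(x)\|_\infty \le \delta$, so $\db{}(\ExDgm(f),\ExDgm(g)) \le \delta$. In fact, since matched bars have endpoints of the same type (ascending versus descending) away from the turn, the same matching respects the four classes $\Ord_0,\Ext_0,\Ext_1,\Rel_1$ and yields the \emph{graded} bound $\dB{} \le \delta$; the stated ungraded inequality then also follows directly from Proposition~\ref{prop:bottleneckBoundedByBottleneck}.

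The main obstacle is verifying that the sublevel and superlevel inclusions genuinely cohere into one interleaving of the single extended module: one must check naturality across the \emph{turn}, where absolute homology hands off to relative homology, and confirm that the long-exact-sequence connecting maps commute with the $\delta$-shift maps. An alternative that sidesteps the module-level bookkeeping is the original route of \cite{Steiner2009}: interpolate $f_t = (1-t)f + tg$, establish a Box (Quadrant) Lemma bounding the persistence measure of every axis-parallel rectangle under the perturbation, and extract the matching by a Hall-type argument; there the technical heart is proving the Box Lemma for extended diagrams and controlling points near the diagonal.
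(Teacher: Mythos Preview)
The paper does not supply a proof of this theorem; it simply cites the result from \cite{Steiner2009} as background. So there is no ``paper's own proof'' to compare against --- the relevant comparison is to the argument in the cited source.

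Your outline is a correct proof via algebraic stability, and you have identified the right technical checkpoint (naturality of the shift maps across the absolute-to-relative handoff at the top of the extended filtration). This is the modern route and is genuinely different from the original argument of \cite{Steiner2009}, which you also summarize accurately: Cohen--Steiner, Edelsbrunner, and Harer interpolate linearly between $f$ and $g$, prove a Box/Quadrant Lemma controlling the persistence measure of rectangles under small perturbations, and then extract a matching via a Hall-type marriage argument. Your approach trades that measure-theoretic combinatorics for the module interleaving plus the isometry theorem; the payoff is a cleaner, more conceptual argument, at the cost of needing the extended-module formalism and checking coherence at the turn.

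One remark: your claim that the induced matching automatically respects the four classes $\Ord_0,\Ext_0,\Ext_1,\Rel_1$ is stronger than what algebraic stability gives directly. A $\delta$-matching of bars in the doubled-parameter module can in principle pair a short bar ending just before the turn with one starting just after it, mixing $\Ord$ and $\Ext$ types. Obtaining the \emph{graded} bound $\dB{} \le \delta$ requires either the original type-aware argument of \cite{Steiner2009} or a separate application of algebraic stability to each sub-module. Since the stated inequality is for the ungraded $\db{}$, your argument as written is sufficient; just be careful not to overclaim the graded version as a free byproduct.
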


As stated in \cite{Steiner2009}, this above result can be strengthened to apply to the individual subdiagrams. That is, 
\[\db{}(\text{ExDgm}^{\text{type}}(f),\text{ExDgm}^{\text{type}}(g) \leq ||f-g||_{\infty},\]
implying that $\dB{}(\RRf,\RRg) \leq ||f-g||_{\infty}$.

We next provide the references from the literature showing that every Reeb graph distance discussed in this paper is stable.

\begin{theorem}[{\cite[Theorem~4.1]{Bauer2014}}]
Let $(\X,f)$, $(\X,g)$ be two tame scalar fields and whose quotient maps $\rho_f:\X \to \X_f$ and $\rho_g:\X \to \X_g$ have continuous sections (i.e. $\rho_f \circ s_f = \mathbf{id}_{\RRf}$). Then, \[\dfd(\RRf,\RRg) \leq ||f-g||_{\infty}.\]
\end{theorem}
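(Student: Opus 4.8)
The plan is to use the hypothesis that the Reeb quotient maps admit continuous sections $s_f\colon\RR_f\to\X$ and $s_g\colon\RR_g\to\X$ (so $\rho_f\circ s_f=\mathrm{id}$ and $\rho_g\circ s_g=\mathrm{id}$) to exhibit one explicit competitor pair and bound each of the three terms in the definition of $d_{FD}$ by $\delta:=\|f-g\|_\infty$. Concretely, I would set
\[\Phi = \rho_g\circ s_f\colon \RR_f\to\RR_g, \qquad \Psi = \rho_f\circ s_g\colon \RR_g\to\RR_f,\]
which are continuous as composites of continuous maps. Since $d_{FD}$ is an infimum over all continuous $\Phi,\Psi$, it suffices to show $D(\Phi,\Psi)\le\delta$, $\|f-g\circ\Phi\|_\infty\le\delta$, and $\|f\circ\Psi-g\|_\infty\le\delta$ for this choice. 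The two supremum-norm terms are immediate: for $x\in\RR_f$, using $\rho_f\circ s_f=\mathrm{id}$ and $\tilde f\circ\rho_f=f$ gives $\tilde f(x)=f(s_f(x))$, while $\tilde g(\Phi(x))=\tilde g(\rho_g(s_f(x)))=g(s_f(x))$, so $|\tilde f(x)-\tilde g(\Phi(x))|=|f(s_f(x))-g(s_f(x))|\le\delta$; the bound on $\|f\circ\Psi-g\|_\infty$ follows symmetrically using $s_g$.

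The heart of the argument is the distortion bound, for which I would first record two lemmas. For $p,p'\in\X$ write $D_f(p,p')$ for the minimal height $\max_{x\in\gamma}f(x)-\min_{x\in\gamma}f(x)$ over paths $\gamma$ in $\X$ from $p$ to $p'$, and $D_g$ analogously. First, I claim the induced distance on the Reeb graph is exactly this intrinsic quantity: $d_f(\rho_f(p),\rho_f(p'))=D_f(p,p')$. The inequality $\le$ follows by projecting an optimal $\X$-path through $\rho_f$, which preserves height since $\tilde f\circ\rho_f=f$; the reverse follows by lifting an optimal $\RR_f$-path through the section $s_f$ and then closing up the two endpoints inside their fibers, which are path components of level sets and hence $f$-constant, so they contribute no extra height. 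Second, because $|f-g|\le\delta$ pointwise, the \emph{same} path $\gamma$ in $\X$ has $f$- and $g$-heights differing by at most $2\delta$, giving $|D_f(p,p')-D_g(p,p')|\le 2\delta$.

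These combine through a single observation: every point of the supergraph $G(\Phi,\Psi)$ has the form $(\rho_f(p),\rho_g(p))$ for some $p\in\X$, taking $p=s_f(x)$ for a point $(x,\Phi(x))$ and $p=s_g(y)$ for a point $(\Psi(y),y)$. Hence, given two supergraph points represented by $p,p'\in\X$, the first lemma yields $d_f(x,x')=D_f(p,p')$ and $d_g(y,y')=D_g(p,p')$, so the point distortion is $\tfrac12|D_f(p,p')-D_g(p,p')|\le\delta$ by the second lemma. Taking the supremum over all pairs gives $D(\Phi,\Psi)\le\delta$, and combining the three bounds yields $d_{FD}(\RR_f,\RR_g)\le\max\{D(\Phi,\Psi),\|f-g\circ\Phi\|_\infty,\|f\circ\Psi-g\|_\infty\}\le\delta$.

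The main obstacle I anticipate is the realization lemma $d_f(\rho_f(p),\rho_f(p'))=D_f(p,p')$, and specifically lifting a path in $\RR_f$ back to a path in $\X$ of equal height with controlled endpoints. This is exactly where the continuous-section hypothesis does its work, and where I would need path-connectedness of the fibers (as level-set components) so that the endpoint adjustments cost no additional height. Once this identification and the common-preimage observation $(\rho_f(p),\rho_g(p))$ are in place, the rest is a short computation.
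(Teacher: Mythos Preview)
Your proposal is correct and matches the argument of the cited source \cite{Bauer2014}; note that the survey paper itself does not supply a proof but only quotes the statement. The construction $\Phi=\rho_g\circ s_f$, $\Psi=\rho_f\circ s_g$, the observation that every supergraph point has the form $(\rho_f(p),\rho_g(p))$ for some $p\in\X$, and the reduction to comparing the intrinsic height-distances $D_f$ and $D_g$ on $\X$ are exactly the ingredients used in the original proof, so there is no meaningful difference in approach to report.

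One small remark on your self-identified ``main obstacle'': you are right that the section is essential for lifting, but the equality $d_f(\rho_f(p),\rho_f(p'))=D_f(p,p')$ really is needed in full and not just the easy inequality $d_f\le D_f$. From $d_f\le D_f$, $d_g\le D_g$, and $|D_f-D_g|\le 2\delta$ alone one cannot bound $|d_f-d_g|$, so your lifting-and-closing-up step (using that fibers of $\rho_f$ are path components of level sets, hence $f$-constant and path-connected) is doing genuine work. Your sketch of that step is fine: $s_f$ carries an optimal $\RR_f$-path to an $\X$-path of identical height because $f\circ s_f=\tilde f$, and the endpoint corrections within fibers add no height.
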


\begin{theorem}[{\cite[Theorem~4.4]{deSilva2016}}]
Let $\RRf$ and $\RRg$ be two constructible Reeb graphs defined on the same domain $\X$. Then
\[d_I(\RRf,\RRg) \leq ||f-g||_{\infty}.\]
\end{theorem}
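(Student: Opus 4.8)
The plan is to work with the cosheaf formulation of the interleaving distance, since the fact that $\RR_f$ and $\RR_g$ arise from the \emph{same} domain $\X$ makes inclusions of preimages completely transparent. Set $\e := \|f-g\|_\infty$. The goal is then to exhibit an $\e$-interleaving between the cosheaves $\F$ and $\G$ associated to $\RR_f$ and $\RR_g$; producing such a pair of natural transformations immediately gives $d_I(\RR_f,\RR_g)\le \e = \|f-g\|_\infty$.

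First I would exploit the pointwise bound. Since $|f(x)-g(x)|\le \e$ for every $x\in\X$, any point with $f(x)\in I=(a,b)$ satisfies $g(x)\in(a-\e,b+\e)=I^\e$; hence $f^{-1}(I)\subseteq g^{-1}(I^\e)$, and symmetrically $g^{-1}(I)\subseteq f^{-1}(I^\e)$. Applying $\pi_0$ to these inclusions yields maps $\F(I)\to\G(I^\e)=\SS_\e(\G)(I)$ and $\G(I)\to\F(I^\e)=\SS_\e(\F)(I)$. Because every map in sight is induced by an inclusion of subspaces of the single space $\X$, these assemble into natural transformations $\varphi:\F\Rightarrow\SS_\e(\G)$ and $\psi:\G\Rightarrow\SS_\e(\F)$; naturality in $I$ is inherited from functoriality of $\pi_0$ applied to the commuting squares of preimage inclusions.

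Next I would verify that the two interleaving triangles commute. Unwinding the definitions, the component of $\SS_\e[\psi]\circ\varphi$ at $I$ is $\psi_{I^\e}\circ\varphi_I:\F(I)\to\G(I^\e)\to\F(I^{2\e})$, which is the map on $\pi_0$ induced by the composite inclusion $f^{-1}(I)\subseteq g^{-1}(I^\e)\subseteq f^{-1}(I^{2\e})$. Since set inclusions compose strictly, this composite equals the inclusion $f^{-1}(I)\subseteq f^{-1}(I^{2\e})$ coming from $I\subseteq I^{2\e}$, which is precisely what induces $\sigma^{2\e}_{\F}$. Hence $\SS_\e[\psi]\circ\varphi=\sigma^{2\e}_{\F}$, and the symmetric computation gives $\SS_\e[\varphi]\circ\psi=\sigma^{2\e}_{\G}$. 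This shows $(\varphi,\psi)$ is an $\e$-interleaving and completes the bound.

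The part to handle with care — rather than a genuine obstacle — is the bookkeeping that all four arrows of the interleaving square descend from honest set-level inclusions inside $\X$ and that $\pi_0$ transports their strict commutativity to the cosheaf level; constructibility is what guarantees $\F$ and $\G$ are bona fide cosheaves so that $\SS_\e$ and the interleaving machinery genuinely apply. An equivalent geometric route is available through the thickening functor $\UU_\e$: one builds $\alpha:\RR_f\to\UU_\e(\RR_g)$ and $\beta:\RR_g\to\UU_\e(\RR_f)$ directly from $|f-g|\le\e$ and checks the commuting diagram of Reeb graphs. The two approaches are interchangeable, and I would present whichever renders the underlying inclusions most visible.
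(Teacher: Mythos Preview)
Your argument is correct and is essentially the proof from \cite{deSilva2016}, which this survey cites without reproducing. The containments $f^{-1}(I)\subseteq g^{-1}(I^\e)$ in the common domain $\X$, pushed through $\pi_0$, are exactly how the interleaving maps are built there, and the triangle identities follow because everything in sight is induced by honest inclusions inside a single ambient space.
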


\begin{proposition}[{\cite[Theorem~28]{DiFabio2016}}]
Let $\RRf, \RRg \in \MM$. Then \[d_E(\RRf,\RRg) \leq ||f-g||_{\infty}.\]
\end{proposition}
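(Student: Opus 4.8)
The plan is to construct an explicit edit sequence carrying $\Gamma_f$ to $\Gamma_g$ whose cost is at most $\|f-g\|_\infty$, by tracking how the Reeb graph evolves along the straight-line homotopy $f_t = (1-t)f + tg$, $t \in [0,1]$. Since both $f$ and $g$ live on the \emph{same} surface $\X$, no loops ever need to be created or destroyed — the genus of $\X$, and hence the essential cycle structure of the Reeb graph, is fixed along the path — so the deformations of Def.~\ref{def:basicDeformations} and Def.~\ref{def:kTypeDeformations} will suffice.

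First I would reduce to a generic situation. The linear path $\{f_t\}$ need not consist of simple Morse functions, so I would perturb its interior (keeping the endpoints $f_0 = f$ and $f_1 = g$ fixed) to a nearby path that crosses the non-simple-Morse locus transversally, by standard transversality and Cerf-theoretic arguments. A generic path meets the degenerate stratum at finitely many parameters $0 < s_1 < \cdots < s_m < 1$, which I may refine so that each subinterval contains at most one event, and each crossing realizes exactly one elementary operation: a Morse birth or death of a cancelling critical pair (a B- or D-type deformation), or two critical values passing through each other — an R-type relabel if the two critical points are non-adjacent, and a $K$-type deformation if they are adjacent saddles. Between consecutive events the combinatorial type of $\RR_{f_t}$ is constant while the vertex labels vary continuously. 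Because the perturbation can be taken arbitrarily $C^0$-small in both $x$ and $t$, it suffices to prove the bound up to an additive $\e$ and then let $\e \to 0$.

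The cost accounting is where the bound is actually earned, and I expect it to be the crux. The key observation is an envelope-type identity: if $v(t)$ is a critical point of $f_t$ with value $c(t) = f_t(v(t))$, then differentiating and using $\nabla f_t(v(t)) = 0$ gives $c'(t) = (\partial_t f_t)(v(t)) = g(v(t)) - f(v(t))$, so $|c'(t)| \le \|f-g\|_\infty$; that is, every critical value moves at speed at most $\|f-g\|_\infty$. I would then perform all combinatorial deformations (B, D, $K$) precisely at the event parameters $s_j$, where they cost zero: at a Morse birth or death the two new or removed critical points share a function value, so the leaf has zero height and the cost $\tfrac12|\ell(u_1)-\ell(u_0)|$ vanishes, and a $K$-move performed exactly at the crossing leaves the labels unchanged. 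All genuine label changes are absorbed into relabel operations, one per interval $[s_j,s_{j+1}]$; by the speed bound, the cost of such a relabel is $\max_v |c_v(s_{j+1}) - c_v(s_j)| \le (s_{j+1}-s_j)\,\|f-g\|_\infty$. Summing over the intervals, and crucially using that the relabel cost is the maximum (not the sum) over vertices, the total telescopes to $\sum_j (s_{j+1}-s_j)\,\|f-g\|_\infty = \|f-g\|_\infty$.

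Finally I would assemble these pieces into a single sequence $S \in \SS((\Gamma_f,\ell_f),(\Gamma_g,\ell_g))$ and conclude $d_E(\RR_f,\RR_g) \le c(S) \le \|f-g\|_\infty + \e$; letting $\e \to 0$ gives the claim. The main obstacles I anticipate are (i) verifying rigorously that a generic interpolating path produces only the four sanctioned elementary deformations and no others, which is where the simple-Morse, connected, compact, boundaryless hypotheses on $\X$ are essential; and (ii) the careful bookkeeping that doing every non-relabel move at a zero-height event parameter remains compatible with the ordering restriction on R-type relabels — in particular that each interval's continuous label change is realizable by legal relabels without incurring order-swaps beyond the single crossing already charged as that interval's event.
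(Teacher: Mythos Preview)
Your proposal is correct and is precisely the Cerf-theoretic argument of Di~Fabio and Landi that the paper cites (without reproducing) as \cite[Theorem~28]{DiFabio2016}: linear homotopy $f_t=(1-t)f+tg$, generic perturbation so the path crosses the non-simple-Morse stratum in finitely many elementary events of B/D/R/$K$ type, the envelope identity $|c'(t)|\le\|f-g\|_\infty$ for critical-value drift, zero-cost combinatorial moves at the event parameters, and a telescoping sum of relabel costs. The two obstacles you flag---that a generic path on a closed orientable surface yields only the sanctioned deformations, and that the relabel ordering constraint is respected on each subinterval---are exactly the technical content that \cite{DiFabio2016} works out in detail, so there is nothing to add.
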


\begin{theorem}[{\cite[Theorem~5.6]{Bauer2020}}]
Let $\RRf$ and $\RRg$ be PL Reeb graphs and let $\X$ be a connected, triangulable space $\X$. Let $\rho_f:\X \to \RRf$ and $\rho_g:\X \to \RRg$ be quotient maps such that $\tilde{f} \circ \rho_f = f, \tilde{g} \circ \rho_g = g$. Then \[\du(\RRf,\RRg) \leq ||f-g||_{\infty}.\]
\end{theorem}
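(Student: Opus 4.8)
The plan is to prove the bound by exhibiting a \emph{single} Reeb zigzag diagram whose cost is exactly $\|f-g\|_\infty$. Since $\delta_E(\RR_f,\RR_g) = \inf_Z c_Z$ is an infimum over all Reeb zigzag diagrams carrying $\RR_f$ to $\RR_g$, producing one diagram of cost at most $\|f-g\|_\infty$ immediately yields the inequality, with no need to build or optimize over a genuine sequence of edit operations. The key observation is that because $f$ and $g$ are defined on the \emph{same} domain $\X$, we already possess a canonical common space lying above both Reeb graphs, namely $\X$ together with the quotient maps $\rho_f\colon\X\to\RR_f$ and $\rho_g\colon\X\to\RR_g$.

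First I would take the simplest possible (length-one) zigzag $Z$, which has exactly the shape of Dgm.~\ref{dgm:relabel}: the chain of Reeb graphs is just $\RR_1=\RR_f$ and $\RR_2=\RR_g$, and the single intermediate space is $X_1=\X$, with $p_{1,1}=\rho_f$ and $p_{1,2}=\rho_g$. One must check admissibility: in the general framework of \cite{Bauer2020} (see the Remark opening Sec.~\ref{sec:EditDistanceTake2}), the intermediate spaces are permitted to be PL scalar fields rather than only $1$-dimensional graphs, and $\rho_f,\rho_g$ are genuine Reeb quotient maps, so $\X$ is a legitimate choice of $X_1$. The text itself notes that a general Reeb zigzag is assembled from alternating copies of Dgm.~\ref{dgm:relabel} and Dgm.~\ref{dgm:edit}, so a single copy of the relabel diagram is already a valid zigzag.

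Next I would identify the limit $\ell$ of this diagram. For a length-one span the iterated-pullback formula degenerates to $\ell=X_1=\X$ (there are no further factors against which to take a fiber product), and the corresponding terminal Reeb cone is $\X$ itself with $q_1=\id$. I would then compute the spread on $\ell$: using $f_1=\tilde f\circ p_{1,1}\circ q_1=\tilde f\circ\rho_f=f$ and $f_2=\tilde g\circ p_{1,2}\circ q_1=\tilde g\circ\rho_g=g$, where the final equalities are precisely the hypotheses $\tilde f\circ\rho_f=f$ and $\tilde g\circ\rho_g=g$, the spread becomes
\[ s^\ell(x) = \max\{f(x),g(x)\}-\min\{f(x),g(x)\} = |f(x)-g(x)|. \]
Taking the supremum over $x\in\X$ gives $c_Z=\|s^\ell\|_\infty=\|f-g\|_\infty$, and hence $\delta_E(\RR_f,\RR_g)\le c_Z=\|f-g\|_\infty$.

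The main obstacle is not any hard estimate but rather the definitional bookkeeping: confirming that the single common domain $\X$ really constitutes a valid Reeb zigzag diagram, and that the limit of a degenerate length-one span is $\X$ itself, so that the spread collapses to the pointwise gap $|f-g|$ evaluated through the two quotient maps. Once these points are pinned down, the connectedness and triangulability hypotheses on $\X$ serve only to guarantee that $\RR_f$, $\RR_g$, and the quotient maps are well behaved, and the stability bound follows essentially without computation.
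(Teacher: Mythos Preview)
Your proposal is correct and is precisely the argument used in the cited reference \cite{Bauer2020}; the present paper does not give its own proof but simply quotes the result. The single length-one zigzag with $X_1=\X$, $p_{1,1}=\rho_f$, $p_{1,2}=\rho_g$ has limit $\X$ and spread $|f(x)-g(x)|$, giving cost $\|f-g\|_\infty$, and you have correctly flagged the one genuine subtlety --- that the intermediate space $\X$ need not be a $1$-dimensional graph, which is permitted in the general PL framework of \cite{Bauer2020} even though the survey's exposition restricts to graph $X_i$ for intuition.
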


\truncatedOptionalImportant{
The strong equivalence of the interleaving distance and truncated interleaving does not imply that the truncated interleaving distance is stable in the traditional sense. 
\begin{proposition}
Let $\RRf$ and $\RRg$ be two constructible Reeb graphs defined on the same space $\X$. Then for a fixed $m \in [0,1)$, we have
\[d^m_I(\RRf,\RRg) \leq \frac{1}{1-m}||f-g||_{\infty}.\]
\end{proposition}
}

\subsection{isomorphism indiscernibility}

Each Reeb graph metric attains a value of $0$ if and only if the Reeb graphs are Reeb graph isomorphic to one another. This is a core property which distinguishes these from the bottleneck distance.
Stating that a Reeb graph metric is isomorphism indiscernible is equivalent to saying that the distance is an extended metric on the isomorphism classes of Reeb graphs, rather than an extended pseudometric on Reeb graphs.

\begin{property}[\textbf{Isomorphism Indiscernibility}]\label{p:isomorphismInvariance}
Let $\RRf$ and $\RRg$ be two Reeb graphs. We say that $d$ is \textbf{isomorphism indiscernible} if $d(\RRf,\RRg) = 0$ if and only if $\RRf$ and $\RRg$ are Reeb graph isomorphic. 
\end{property}

\begin{theorem}
The bottleneck distance is not isomorphism indiscernible.
\end{theorem}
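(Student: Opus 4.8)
The plan is to disprove the property by exhibiting an explicit counterexample: a pair of Reeb graphs $\RR_f,\RR_g$ that are \emph{not} Reeb graph isomorphic yet satisfy $d_b(\RR_f,\RR_g)=d_B(\RR_f,\RR_g)=0$. First I would observe that the forward implication of Property~\ref{p:isomorphismInvariance} does hold for both bottleneck distances — isomorphic Reeb graphs induce identical extended persistence diagrams, so their bottleneck distance vanishes — which means the property can only fail through its converse. The conceptual reason it does fail is that the extended persistence diagram is a strictly coarser invariant than the Reeb graph: it records the birth/death function values of homological features but forgets how the branches of the graph are attached to one another.

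Concretely, I would take two simple Morse height functions on the sphere $S^2$, each with three minima at heights $0,1,2$, two down-fork saddles at heights $3,4$, and a single maximum at height $5$ (a valid configuration since $3-2+1=2=\chi(S^2)$), so that both Reeb graphs are trees with four leaves and two trivalent internal vertices. The two graphs differ only in the attaching data: in $\RR_f$ the lower saddle (height $3$) merges the components born at the minima of heights $1$ and $2$, whereas in $\RR_g$ it merges the components born at heights $0$ and $2$.

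The next step is to compute the full extended persistence diagram of each graph and verify that they agree. Since each graph is a tree, $\Ext_1=\emptyset$; since each has a single maximum and no up-forks, $\Rel_1=\emptyset$; sweeping upward, each merge kills the younger component, giving $\Ord_0=\{(2,3),(1,4)\}$ in \emph{both} cases, and the surviving class yields $\Ext_0=\{(0,5)\}$. The diagrams coincide even as labeled (graded) diagrams, so both $d_b$ and $d_B$ evaluate to $0$.

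Finally I would argue $\RR_f\not\cong\RR_g$. A Reeb graph isomorphism must be function preserving (Def.~\ref{def:functionPreservingReebGraphIso}), so it sends critical points to critical points of equal function value; in particular it would carry the height-$3$ saddle of $\RR_f$ to the height-$3$ saddle of $\RR_g$ together with its two descending branches, matching the heights of their terminal minima. But those heights are $\{1,2\}$ in $\RR_f$ and $\{0,2\}$ in $\RR_g$, and no height-preserving bijection can reconcile them — a contradiction. The main obstacle here is not the non-isomorphism, which the function-preserving constraint makes transparent, but the bookkeeping in the persistence computation: one must confirm that all four subdiagram types, not merely $\Ord_0$, agree, so that even the finer graded distance $d_B$ vanishes and the counterexample defeats both notions of bottleneck distance at once.
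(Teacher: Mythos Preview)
Your proof is correct, but your counterexample is different from the paper's. The paper simply points to Example~\ref{example:graphIsoEqPers}, whose Reeb graphs each contain a loop together with a down-leaf attached on opposite sides of that loop; the attachment side is invisible to extended persistence, so the diagrams coincide while the graphs are not Reeb-graph isomorphic. Your construction is more elementary: a pair of contour trees (no $\Ext_1$ points at all) that differ only in which pair of minima the lower saddle merges, exploiting that the elder rule records only birth and death values, not which branches actually meet. Both examples exhibit the same phenomenon---extended persistence forgets attaching data---but yours avoids loops entirely and needs fewer critical points, making the non-isomorphism argument especially transparent; the paper's example, on the other hand, is reused throughout Section~\ref{sec:examples} to illustrate all five metrics simultaneously, which motivates its additional structure.
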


\begin{proof}
See Ex.~\ref{example:graphIsoEqPers} for a counterexample.
\end{proof}

\begin{proposition}[{\cite[Proposition~4.6]{deSilva2016}}]
Let $\RRf$ and $\RRg$ be constructible Reeb graphs. Then, the interleaving distance is isomorphism indiscernible.
\end{proposition}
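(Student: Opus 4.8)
The statement asserts the biconditional that $d_I(\RR_f,\RR_g) = 0$ holds if and only if $\RR_f$ and $\RR_g$ are Reeb graph isomorphic, so the plan is to prove each implication separately, working throughout with the associated cosheaves $\F,\G$ and using that a Reeb graph isomorphism is exactly a cosheaf isomorphism. The forward implication is immediate: if $\RR_f \cong \RR_g$ then $\F$ and $\G$ are isomorphic, and as noted just after the definition of an $\e$-interleaving, an isomorphism is precisely a $0$-interleaving. Hence the set of $\e$ admitting an interleaving contains $0$, so $d_I(\RR_f,\RR_g)=0$.

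The substance is the converse, and the subtlety is that $d_I(\RR_f,\RR_g)=0$ only guarantees an $\e$-interleaving $(\phi^\e,\psi^\e)$ for every $\e>0$, not a genuine $0$-interleaving; the infimum need not obviously be attained. The plan is to exploit constructibility to upgrade ``arbitrarily small interleavings'' into an actual isomorphism via a rigidity/threshold argument. Concretely, I would first record that for a constructible Reeb graph the cosheaf is governed by finite data: there are finitely many critical values, $\F(I)=\pi_0(f^{-1}(I))$ is a finite set for every interval $I$, and over any interval avoiding critical values the assignment is locally constant. I would then let $\e_0$ be (half) the smallest gap between consecutive distinct critical values occurring in either $\RR_f$ or $\RR_g$, taken together with the smallest feature height, so that smoothing by less than $\e_0$ neither merges distinct critical fibers nor annihilates a branch.

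The key lemma to prove is then: if an $\e$-interleaving exists for some $0<\e<\e_0$, then $\F \cong \G$. The idea is that an interleaving forces the canonical smoothing maps $\sigma^{2\e}_\F$ and $\sigma^{2\e}_\G$ to factor through $\G$ and $\F$ respectively; when $2\e$ is below the feature scale $\e_0$, these smoothing maps are, away from critical values, identifications that neither collapse nor create components, so the factorizations force $\phi^\e$ and $\psi^\e$ to restrict to mutually inverse bijections on the finite sets $\pi_0(f^{-1}(I))$ and to carry critical values of $f$ to critical values of $g$ lying within $\e$. Since each graph has only finitely many critical values and this matching persists for every $\e<\e_0$, letting $\e\to 0$ pins the matched values to coincide exactly; assembling the resulting compatible, function-preserving bijections over the diagram of intervals yields a natural isomorphism $\F \Rightarrow \G$, i.e.\ a $0$-interleaving. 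Applying the lemma with any $\e<\e_0$ (which exists because $d_I=0$) completes the converse.

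I expect the rigidity lemma to be the main obstacle: one must argue carefully that the $\e$ of slack in an interleaving cannot be ``spent'' to shift, merge, or delete features once $\e$ lies below the feature scale, so that the smoothed interleaving maps are forced to be exact bijections rather than merely approximate correspondences. This is exactly where constructibility, through finiteness of the critical values and of each $\pi_0(f^{-1}(I))$, is indispensable; for general tame but non-constructible Reeb graphs the infimum defining $d_I$ could fail to be attained and the threshold $\e_0$ need not be positive. As a remark I would note an alternative packaging of the same content: show directly that the set $\{\e\geq 0 : \F,\G \text{ are } \e\text{-interleaved}\}$ is closed, so that its infimum being $0$ already forces $0$ to lie in it; closedness again reduces to extracting an eventually-constant family of interleaving maps from the finite configuration space afforded by constructibility.
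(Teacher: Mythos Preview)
The paper provides no proof of this proposition; it is stated with a bare citation to \cite[Proposition~4.6]{deSilva2016} and nothing more. There is therefore no in-paper argument to compare your proposal against.

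Judged on its own, your outline is sound and matches the spirit of the cited proof: the forward direction is immediate, and for the converse you correctly identify that constructibility is what allows one to upgrade ``$\e$-interleaved for every $\e>0$'' to ``$0$-interleaved.'' The alternative packaging you mention at the end---that for constructible cosheaves the set of $\e$ admitting an interleaving is closed, so the infimum $0$ is attained---is in fact the cleanest formulation and is essentially how de~Silva--Munch--Patel organize the argument.

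One wrinkle worth straightening out: you state your rigidity lemma as ``a single $\e$-interleaving with $\e<\e_0$ forces $\F\cong\G$,'' but then also invoke ``letting $\e\to 0$ pins the matched critical values to coincide exactly.'' These are two different strategies, and as written they are entangled. If the lemma holds as stated, no limit is needed; conversely, a single $\e$-interleaving only constrains critical values to within $\e$, not to exact equality, so the lemma as stated is stronger than what that step of your argument establishes. Since your $\e_0$ is defined from the combined critical values of \emph{both} graphs, the lemma may well be true, but it is not obvious and proving it is essentially the entire difficulty---you have relocated rather than dissolved the work. A cleaner two-step route is: first use interleavings for arbitrarily small $\e$ to conclude the two critical-value sets coincide exactly; then, with identical critical-value sets in hand, a single sufficiently small $\e$-interleaving yields compatible bijections on the finite diagram that determines a constructible cosheaf, and hence an isomorphism.
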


\begin{corollary}\label{cor:FDD-iso}
Let $\RRf$ and $\RRg$ be constructible Reeb graphs. Then, the functional distortion distance is isomorphism indiscernible.
\end{corollary}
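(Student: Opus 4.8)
The statement to establish is the biconditional implicit in Property~\ref{p:isomorphismInvariance}: that $d_{FD}(\RR_f,\RR_g) = 0$ precisely when $\RR_f$ and $\RR_g$ are Reeb graph isomorphic. The plan is to deduce both directions at once from the strong equivalence between the functional distortion and interleaving distances, rather than to reason about the optimizing maps $\Phi,\Psi$ directly.

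First I would invoke the strong equivalence of $d_I$ and $d_{FD}$ recorded in Fig.~\ref{fig:distanceLandscape} and proved in \cite{Bauer2015b}: there are constants $c,C>0$, independent of the input, with $c\,d_I(\RR_f,\RR_g) \le d_{FD}(\RR_f,\RR_g) \le C\,d_I(\RR_f,\RR_g)$. Since $c$ and $C$ are strictly positive, this sandwich forces $d_{FD}(\RR_f,\RR_g) = 0$ if and only if $d_I(\RR_f,\RR_g) = 0$. Next I would apply the isomorphism invariance of the interleaving distance, already established for constructible Reeb graphs in \cite[Proposition~4.6]{deSilva2016}, which states that $d_I(\RR_f,\RR_g)=0$ iff $\RR_f$ and $\RR_g$ are Reeb graph isomorphic. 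Chaining the two equivalences yields $d_{FD}(\RR_f,\RR_g) = 0 \iff \RR_f \cong \RR_g$, which is exactly the claim.

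For completeness it is worth recording the easy direction directly. If $\alpha:\RR_f \to \RR_g$ is a function-preserving homeomorphism (Def.~\ref{def:functionPreservingReebGraphIso}), take $\Phi = \alpha$ and $\Psi = \alpha^{-1}$. Because $\alpha$ preserves function values, it carries each path to a path of the same range and height, so $d_g(\alpha u,\alpha v)=d_f(u,v)$; the same holds for $\alpha^{-1}$. A short case check over pairs of supergraph points arising from $\Phi$, from $\Psi$, and from both then shows every point distortion $\lambda$ vanishes, giving $D(\Phi,\Psi)=0$. Moreover $\tilde f = \tilde g\circ\alpha$ makes both $\|f-g\circ\Phi\|_\infty$ and $\|f\circ\Psi-g\|_\infty$ vanish, so $d_{FD}(\RR_f,\RR_g)=0$.

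The hard direction is the converse, $d_{FD}=0 \Rightarrow \RR_f\cong\RR_g$. Attacking it head-on would require taking a sequence of map pairs $(\Phi_n,\Psi_n)$ with $D(\Phi_n,\Psi_n)\to 0$ and both sup-norm terms tending to zero, and then extracting, by an Arzel\`a--Ascoli type compactness argument on the constructible graphs, a limiting pair that assembles into a function-preserving homeomorphism. Managing this limit -- ensuring the limiting maps are mutually inverse and genuinely homeomorphic rather than merely continuous and function-collapsing -- is the genuine obstacle, and it is precisely what the strong-equivalence route lets us bypass by outsourcing it to the already-settled interleaving case.
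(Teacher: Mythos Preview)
Your proposal is correct and takes essentially the same approach as the paper: the paper's proof is the single line ``This follows directly from strong equivalence of the interleaving distance and functional distortion distance; see Thm.~\ref{thm:strongEquiv},'' which is exactly your core argument. Your additional direct verification of the easy direction and your commentary on why the converse is nontrivial head-on are sound but go beyond what the paper records.
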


\begin{proof}
This follows directly from strong equivalence of the interleaving distance and functional distortion distance; see Thm.~\ref{thm:strongEquiv}.
\end{proof}

\begin{corollary}[{\cite[Corollary~39]{DiFabio2016}}]
Let $\RRf,\RRg \in \MM$. Then the Reeb graph edit distance is isomorphism indiscernible.
\end{corollary}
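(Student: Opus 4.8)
The plan is to establish both directions of Property~\ref{p:isomorphismInvariance} for $d_E$: that a Reeb graph isomorphism forces $d_E = 0$, and conversely that $d_E(\RR_f,\RR_g)=0$ forces $\RR_f \cong \RR_g$. The forward direction is immediate. If $\RR_f$ and $\RR_g$ are Reeb graph isomorphic, then by the equivalence between Def.~\ref{def:functionPreservingReebGraphIso} and labeled-multigraph isomorphism (Def.~\ref{def:multigraphIsomoprhism}) their combinatorial Reeb graphs satisfy $(\Gamma_f,\ell_f)\cong(\Gamma_g,\ell_g)$. Since $d_E$ is an infimum over sequences carrying a copy $(\Gamma,\ell)\cong(\Gamma_f,\ell_f)$ to a copy $(\Gamma',\ell')\cong(\Gamma_g,\ell_g)$, I may simply take $(\Gamma,\ell)=(\Gamma',\ell')$ and use the empty edit sequence, whose cost is the empty sum $0$; hence $d_E(\RR_f,\RR_g)=0$.

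For the converse I would argue the contrapositive and produce a strictly positive lower bound whenever $\RR_f\not\cong\RR_g$. The first step is to sort the elementary deformations by their effect on the \emph{combinatorial type} of the graph, meaning the underlying multigraph together with the up/down orientation of each edge, but forgetting the exact labels. By the restrictions built into Def.~\ref{def:basicDeformations} and Def.~\ref{def:kTypeDeformations}, a relabel (R) can neither convert an up-leaf into a down-leaf nor move any vertex past an adjacent one, so R-operations preserve the combinatorial type; only births (B), deaths (D), and the $K$-type operations can alter it. Consequently, if $\Gamma_f$ and $\Gamma_g$ are not isomorphic, every $S\in\mathcal{S}((\Gamma_f,\ell_f),(\Gamma_g,\ell_g))$ must either (a) contain a structural B/D/$K$-operation, when the two combinatorial types differ, or (b) move some vertex by at least the minimal positive label gap, when the types agree but the labelings do not.

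The second step is the quantitative lower bound. I would set $\delta_0>0$ to be the minimum, over both graphs, of all \emph{feature heights} (the heights of leaves and of the essential loops, equivalently the persistences recorded by the $\Ord_0$, $\Ext_0$, $\Ext_1$, and $\Rel_1$ points). This minimum is attained and strictly positive because $f$ and $g$ are simple Morse functions with finitely many distinct critical values. I then claim that transforming $\Gamma_f$ into a non-isomorphic $\Gamma_g$ costs at least $\delta_0/2$: a structural operation that helps realize a target feature of height $\ge\delta_0$ either pays $\ge\delta_0/2$ outright, or the feature is first grown or shrunk by relabels, in which case the telescoping relabel costs sum to at least $\delta_0$ minus the portion handled by the B/D step. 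Summing the cases over any admissible $S$ gives $c(S)\ge\delta_0/2$, hence $d_E(\RR_f,\RR_g)\ge\delta_0/2>0$.

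The hard part will be exactly this last piece of bookkeeping, because $d_E$ is an infimum that need not be attained: I cannot merely inspect a single cost-$0$ sequence, where every B/D leaf would have height $0$ and every R and $K$ operation would be trivial, immediately producing an isomorphism. Instead I must rule out a family $S_n$ with $c(S_n)\to 0$ between genuinely non-isomorphic graphs, and the delicate point is certifying that clever combinations of cheap tiny births, relabels, and slides cannot manufacture or destroy a feature of height $\ge\delta_0$ for less than roughly $\delta_0/2$. The cost-saving relabel phenomenon of Rem.~\ref{rem:spreadRelabel} is precisely the trap one must check does not defeat the bound, since it shows a single relabel charges only the maximal vertex displacement; the resolution is that building a \emph{new} feature forces the relevant vertex's own vertical displacement to accumulate to at least $\delta_0$. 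A clean way to package the whole argument is a limiting statement: along $c(S_n)\to 0$ the total function-value budget vanishes, so all intermediate labels remain within $c(S_n)$ of their starting positions and, once $c(S_n)<\delta_0/2$, no feature of height $\ge\delta_0$ can appear or disappear, forcing the combinatorial types to coincide and the surviving labels to agree in the limit, i.e.\ $\Gamma_f\cong\Gamma_g$.
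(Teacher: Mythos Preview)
The paper itself gives no argument here; it simply cites \cite[Corollary~39]{DiFabio2016}. In that source the converse direction is not obtained by a direct combinatorial bound on edit sequences at all: it is deduced from the inequality $d_{FD}\le d_E$ (equivalently $d_I\le d_E$, recorded here as \cite[Corollary~40,~41]{DiFabio2016}) together with the already-known isomorphism invariance of $d_{FD}$ (Cor.~\ref{cor:FDD-iso}). Thus if $d_E(\RR_f,\RR_g)=0$ then $d_{FD}(\RR_f,\RR_g)=0$, forcing $\RR_f\cong\RR_g$. Your forward direction is fine and matches this.

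Your converse, however, follows a genuinely different route and contains a real gap. The proposed universal lower bound $\delta_0/2$, with $\delta_0$ the minimal \emph{feature height}, is not valid. $K$-type deformations can alter the combinatorial type at a cost governed by the label gap between the two \emph{adjacent vertices} involved, which is completely unrelated to the smallest feature height. Concretely, take two Reeb graphs that differ by a single $K_2/K_3$ swap at saddles whose function values differ by some tiny $\epsilon$, while every leaf and loop has height at least $1$; then $d_E\le 2\epsilon$ yet your argument asserts $d_E\ge 1/2$. This is exactly the phenomenon visible in Ex.~\ref{example:graphIsoEqPers}: the persistence diagrams coincide (no feature is created or destroyed), the difference is purely in adjacency, and the $K$-type sequence there costs $|a_4-a_3|$, a vertex gap rather than any feature height. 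Your ``limiting statement'' inherits the same defect: you argue that for small $c(S_n)$ no feature of height $\ge\delta_0$ can appear or disappear, but that does not prevent cheap $K$-moves from silently reshuffling adjacencies, so it does not force the combinatorial types to coincide.

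If you want a direct argument, the quantity you must lower-bound by is not a feature height but something that also sees adjacency changes---effectively you would be reproving $d_I\le d_E$ or $d_{FD}\le d_E$ by hand. The clean route is simply to invoke those inequalities and the isomorphism invariance of $d_I$ or $d_{FD}$, as the cited proof does.
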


\begin{corollary}[{\cite[Corollary~3]{Bauer2020}}]
Let $\RRf$ and $\RRg$ be PL scalar fields. Then, the universal distance is isomorphism indiscernible.

\end{corollary}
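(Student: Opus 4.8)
The plan is to prove both directions of the biconditional required by Property~\ref{p:isomorphismInvariance}: that $\RR_f \cong \RR_g$ forces $\delta_E(\RR_f,\RR_g) = 0$, and conversely that $\delta_E(\RR_f,\RR_g) = 0$ forces $\RR_f \cong \RR_g$. The forward direction I would handle by an explicit construction internal to the edit-distance formalism, while for the converse I would route the argument through a metric already known to be isomorphism invariant, so as to avoid a delicate limiting argument.

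For the forward direction, let $\alpha\colon \RR_f \to \RR_g$ be a function-preserving homeomorphism, so that $\tilde f = \tilde g \circ \alpha$. I would exhibit a single Reeb zigzag diagram of cost zero built from $\alpha$. Take the length-two zigzag in the shape of Dgm.~\ref{dgm:relabel} whose only intermediate space is $X_1 = \RR_f$ itself, which is a legitimate $1$-dimensional topological graph, with $p_{1,1} = \id_{\RR_f}$ and $p_{1,2} = \alpha$. Both are Reeb quotient maps, and commutativity holds since $\tilde f \circ \id = \tilde f = \tilde g \circ \alpha$. The limit of this diagram is $\ell = X_1 = \RR_f$, and for every $x$ the two induced functions coincide: $f_1(x) = \tilde f(x) = \tilde g(\alpha(x)) = f_2(x)$. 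Hence the spread $s^\ell$ is identically $0$, so $c_Z = \|s^\ell\|_\infty = 0$, and therefore $\delta_E(\RR_f,\RR_g) = 0$.

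For the converse, I would avoid analyzing vanishing cost of zigzag diagrams directly, since the defining infimum need not be attained by any single diagram. Instead, note that because $\RR_f$ and $\RR_g$ are PL Reeb graphs they are in particular constructible, so the interleaving distance $d_I$ is defined on them. The interleaving distance is stable (\cite[Theorem~4.4]{deSilva2016}), while $\delta_E$ is universal, i.e.\ the largest stable distance on Reeb graphs; consequently $d_I(\RR_f,\RR_g) \le C\,\delta_E(\RR_f,\RR_g)$ for an absolute constant $C$, which is exactly the $d_I$–$\delta_E$ comparison recorded in Fig.~\ref{fig:distanceLandscape}. Thus $\delta_E(\RR_f,\RR_g) = 0$ gives $d_I(\RR_f,\RR_g) = 0$, and by isomorphism invariance of the interleaving distance (\cite[Proposition~4.6]{deSilva2016}) we conclude $\RR_f \cong \RR_g$.

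The main obstacle is making the comparison in the previous paragraph precise and checking it holds in a common setting: one must verify that $d_I$ qualifies as a ``stable distance'' in the exact sense demanded by the universality theorem of \cite{Bauer2020}, and that both metrics are being applied to PL scalar fields on a shared domain so that the relevant quotient maps exist. If one instead insists on a proof entirely internal to the edit-distance formalism, the real work shifts to showing directly that $\delta_E = 0$ produces a function-preserving homeomorphism: one would take a sequence of zigzag diagrams $Z_n$ with $c_{Z_n} \to 0$, extract through their cones function-preserving maps $\RR_f \to \RR_g$ and $\RR_g \to \RR_f$ whose distortion tends to $0$, and argue by a compactness/limiting argument—using constructibility to bound the combinatorial type—that in the limit these maps become mutually inverse function-preserving homeomorphisms. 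That limiting step is the crux of the direct approach and is precisely what invoking universality lets us bypass.
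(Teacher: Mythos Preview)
The paper does not supply its own proof of this corollary; it simply cites \cite[Corollary~3.6 \& 4.3]{Bauer2020}, so there is no internal argument to compare against directly. Your forward direction is correct and is essentially the canonical one: the identity-plus-isomorphism zigzag you build has zero spread, so $\delta_E = 0$.

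Your converse is also valid, but note two points. First, you appeal to universality of $\delta_E$ (Thm.~\cite[Cor.~5.7]{Bauer2020}) to obtain $d_I \le \delta_E$; in the paper's exposition this appears in Sec.~\ref{sec:universality}, \emph{after} the isomorphism-invariance section, so within this survey you are forward-referencing. More substantively, universality and isomorphism invariance both originate in \cite{Bauer2020}, so you are trading one cited result for another rather than giving a self-contained argument. You correctly flag this as the main obstacle: one must check that $d_I$ is ``stable'' in precisely the sense required by the universality statement of \cite{Bauer2020} and that the comparison $d_I \le \delta_E$ holds for arbitrary PL Reeb graphs, not merely for pairs over a common domain (the paper itself uses this inequality in Sec.~\ref{sec:universality} and the discriminativity corollary, so at least it is consistent with the survey's conventions). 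Second, your ``absolute constant $C$'' is unnecessary: universality gives $d_I \le \delta_E$ with $C = 1$.

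What your route buys is that it avoids the delicate limiting argument you sketch at the end; what the direct \cite{Bauer2020} argument (via their Cor.~3.6 for the universal distance plus Cor.~4.3 identifying it with $\delta_E$) buys is independence from the interleaving machinery and from universality itself.
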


\begin{corollary}
The bottleneck distance is an extended pseudometric on the space of isomorphism classes of Reeb graphs, while the interleaving, functional distortion, Reeb graph edit, and universal distance are extended metrics on the space of isomorphism classes of Reeb graphs.
\end{corollary}

\subsection{Discriminativity}

\textbf{Discriminativity} is a property which states that a distance is bounded below by a baseline distance up to a constant. Essentially, this states that the baseline distance is not able to discern between two objects as well as the non-baseline distance in certain cases. Our Reeb graph metrics are constructed in such a way that they will be both stable and more discriminative than the bottleneck distance. 

\begin{property}[\textbf{Discriminativity}]\label{p:discriminativity}
Let $d_0$ and $d$ be two distances on a class of topological descriptors $\mathcal{D}$. We say that $d$ is more \textbf{discriminative} than $d_0$ (the baseline) if for any two descriptors $D_1,D_2\in\mathcal{D}$, there exists a constant $c > 0$ such that \[d_0(D_1,D_2) \leq c\cdot d(D_1,D_2),\]
and if there does \emph{not} exist a constant $c'$ such that $d_0(D_1,D_2) = c'\cdot d(D_1,D_2)$. We call the constant $c$ the \textbf{discriminativity constant}.
\end{property}

As stated in Sec.~\ref{sec:bottleneckDist}, the difference between the graded and ungraded bottleneck distance has caused some discrepancies when stating the discriminative power of these Reeb graph metrics. Bauer et al. \cite{Bauer2014} discussed the relationship between the functional distortion distance and the ungraded bottleneck distance defined on the individual diagrams of $\text{Ord}_0$, $\text{Rel}_1$, and $\text{Ext}_1$. Subsequent bounds for the interleaving distance were constructed for the same diagrams in \cite{Bauer2015b} when the interleaving distance and functional distortion distance were shown to be strongly equivalent. 

Note that the Reeb graph metrics are not scaled versions of the bottleneck distance since they are isomorphism indiscernible while the bottleneck distance is not. Thus, to state that these distances are more discriminative than the bottleneck distance only requires that $d_B(\RRf,\RRg) \leq c\cdot d(\RRf,\RRg)$ for some $c$. 

As stated in \cite{Bauer2015b} and to the best of our knowledge, there has currently been no example in which the functional distortion distance is strictly greater than the interleaving distance. Thus, we cannot immediately conclude that the functional distortion distance is more discriminative than the interleaving distance.

For statements involving the functional distortion distance and interleaving distance, we will assume that $\RRf$ and $\RRg$ are constructible Reeb graphs.

A major issue in the literature is that overloading of terminology (``the bottleneck distance''), particularly when it comes to category theoretic translations of these ideas across inputs. This has led to some confusion as to what exactly is being computed. 
Indeed, the investigation of algebraic generalizations of persistence in particular has led to generalizations of the idea of bottleneck distance which can be applied to wildly different algebraic constructions. 

Understanding the relationship between the bottleneck distance between extended persistence diagrams and the interleaving distance betwween Reeb graphs requires us to study the interleaving distance for interlevelset persistence  (see Appx.~\ref{sec:appx:interlevelset} for further details and examples). 
In this case, data about $H_p(f\inv(a,b))$ is encoded for each interval $(a,b)$; this is closely related to the Reeb graph functorial representation $\pi_0(a,b)$, but results in a vector space rather than a set.
Again, hiding relevant details in Appx.~\ref{sec:appx:interlevelset}, these objects can be broken up into well structured ``blocks'' defined over collections of intervals. 
Representing an interval $[a,b]$ as a point in the plane $(a,b)$, these blocks exactly correspond to the four possible types of connected intervals in $\R$: open, half open (on either side) and closed. 
The result of the discussion in that paper is that one can forget the full block structure, thinking of the representative barcode as simply as an interval in $\R$, and then find a bottleneck matching between the bars. 

This culminates in \cite[Thm.~3.8]{Bjerkevik2021}, which states that the bottleneck distance between the intervals as viewed in $\R$ is upper-bounded by twice the Reeb graph interleaving distance. 
With some careful checking of specifics, it can be seen that the relevant objects as identified in fact correspond to the ungraded bottleneck distance. 
While that does not make the result untrue, it does mean that it is not as tight as it could be. 
Thus, we show the following strengthening of \cite[Thm.~3.8]{Bjerkevik2021} in Appx.~\ref{sec:appx:interlevelset}.

\begin{theorem}[{\cite[Theorem~3.8]{Bjerkevik2021}}, Strengthened]
\label{thm:strengthenedBottleneck_and_interleaving}
Let $\RRf, \RRg \in \MM$. 
Then the interleaving distance is more discriminative than the \textit{graded} bottleneck distance with a discriminativity constant of $2$. 
That is, 
\[d_B(\RRf,\RRg) \leq 2d_I(\RRf,\RRg).\]
\end{theorem}

Finally, we note that the combination of Thm.~\ref{thm:strengthenedBottleneck_and_interleaving} with strong equivalence (Thm.~\ref{thm:strongEquiv}) results in a bound on the functional distortion distance.

\begin{corollary}
Let $\RRf, \RRg \in \MM$. Then the functional distortion distance is more discriminative than the ungraded bottleneck distance with a discriminativity constant of $2$. That is, \[d_B(\RRf,\RRg) \leq 2\dfd(\RRf,\RRg).\]
\end{corollary}

Below, we discuss the relevant theorems from the literature which indicate that for some individual subdiagrams, the discriminativity constant for the bottleneck distance compared to the functional distortion and interleaving distance can be reduced. We add an additional final bound comparing $\dbc{0}{Ext}$ to the interleaving and functional distortion distance.

\begin{theorem}[{\cite[Theorem~4.2]{Bauer2014}}]\label{thm:FDD-discrim-0-orig}
The functional distortion distance is more discriminative than the ungraded bottleneck distance defined on the $0$-dimensional ordinary persistence diagrams of $f,g$ and their reflections $-f,-g$. That is
\[\dbc{0}{}(\Dgm(f),\Dgm(g)) \leq \dfd(\RRf,\RRg)\] and \[\dbc{1}{}(\Dgm(-f),\Dgm(-g)) \leq \dfd(\RRf,\RRg)\]
\end{theorem}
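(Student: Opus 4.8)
The plan is to realize this inequality as a genuine $\e$-interleaving of $0$-dimensional sublevel-set persistence modules and then invoke the standard isometry theorem (algebraic stability for persistence modules). First I would reduce the problem from $\X$ to the Reeb graphs: since the Reeb quotient does not change the connected components of sublevel sets, we have $\pi_0(f^{-1}(-\infty,a]) = \pi_0(\tilde{f}^{-1}(-\infty,a]) = \pi_0(\RR_f^{\leq a})$ for every $a$, so the $0$-dimensional ordinary diagrams satisfy $\Dgm(f) = \Dgm(\tilde{f})$, and likewise for $g$. Hence it suffices to compare the sublevel-set filtrations of $\tilde{f}$ on $\RR_f$ and $\tilde{g}$ on $\RR_g$. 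Then I would fix any $\e > d_{FD}(\RR_f,\RR_g)$ and choose continuous maps $\Phi:\RR_f\to\RR_g$ and $\Psi:\RR_g\to\RR_f$ witnessing it, so that $D(\Phi,\Psi) < \e$, $\|f-g\circ\Phi\|_\infty < \e$, and $\|f\circ\Psi-g\|_\infty < \e$.

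The shift bound $g(\Phi(x)) < f(x)+\e$ shows $\Phi(\RR_f^{\leq a})\subseteq \RR_g^{\leq a+\e}$, so $\Phi$ induces maps $\phi_a:H_0(\RR_f^{\leq a})\to H_0(\RR_g^{\leq a+\e})$ that commute with the filtration structure maps by functoriality; symmetrically $\Psi$ induces $\psi_a:H_0(\RR_g^{\leq a})\to H_0(\RR_f^{\leq a+\e})$.

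The crux is verifying the two interleaving triangles, i.e.\ that $\psi_{a+\e}\circ\phi_a$ equals the internal shift map $H_0(\RR_f^{\leq a})\to H_0(\RR_f^{\leq a+2\e})$, and symmetrically for $\Phi\Psi$. Here the distortion bound does the work: for any $x\in\RR_f$, both $(x,\Phi(x))$ and $(\Psi(\Phi(x)),\Phi(x))$ lie in $G(\Phi,\Psi)$, so $\lambda$ gives $\tfrac12 d_f(x,\Psi\Phi(x)) = \lambda\big((x,\Phi(x)),(\Psi\Phi(x),\Phi(x))\big) < \e$, that is $d_f(x,\Psi\Phi(x)) < 2\e$. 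Thus there is a path from $x$ to $\Psi\Phi(x)$ of height $<2\e$; since its minimum value is at most $f(x)\leq a$, the whole path stays in $\RR_f^{\leq a+2\e}$, while the two one-sided shift bounds combine to give $f(\Psi\Phi(x)) < g(\Phi(x))+\e < f(x)+2\e$. Hence $x$ and $\Psi\Phi(x)$ lie in the same component of $\RR_f^{\leq a+2\e}$ for every $x\in\RR_f^{\leq a}$, which is exactly the statement that $\psi_{a+\e}\circ\phi_a$ agrees with the shift on $\pi_0$. This produces an $\e$-interleaving, so the isometry theorem yields $d_b(\Dgm(f),\Dgm(g)) = d_b(\Dgm(\tilde{f}),\Dgm(\tilde{g})) \leq \e$, and taking the infimum over $\e > d_{FD}$ gives the first inequality.

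For the reflected statement I would apply the identical argument to $-f$ and $-g$: the height metrics $d_f$ and $d_g$ are invariant under negation of the function, and $\|(-f)-(-g)\circ\Phi\|_\infty = \|f-g\circ\Phi\|_\infty$, so the very same $\Phi,\Psi$ witness $d_{FD}(\RR_{-f},\RR_{-g}) \leq \e$; the sublevel-set persistence of $-f$ (equivalently the superlevel-set persistence of $f$) then yields $d_b(\Dgm(-f),\Dgm(-g)) \leq \e$. I expect the main obstacle to be this triangle-commutativity step: one must translate the purely metric distortion bound $D(\Phi,\Psi)<\e$ into the homotopical statement that $\Psi\Phi$ acts as the shift on sublevel-set components, taking care that the height-$2\e$ path genuinely remains inside $\RR_f^{\leq a+2\e}$, which is precisely where the two shift bounds must be combined with the path-height estimate.
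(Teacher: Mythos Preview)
The paper does not give its own proof of this theorem; it is stated with a citation to \cite{Bauer2014} and used as a black box. Your argument is correct and is essentially the proof in the original Bauer--Ge--Wang paper: use the $\|f-g\circ\Phi\|_\infty$ and $\|f\circ\Psi-g\|_\infty$ bounds to get filtration-shifting maps on sublevel sets, use the distortion bound $D(\Phi,\Psi)<\e$ applied to the pairs $(x,\Phi(x))$ and $(\Psi\Phi(x),\Phi(x))$ to obtain $d_f(x,\Psi\Phi(x))<2\e$, conclude that the interleaving triangles commute on $H_0$, and invoke algebraic stability. The reflection case is handled exactly as you say, since $d_f$ and the sup-norm terms are invariant under simultaneous negation of $f$ and $g$.

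One small point worth tightening in a final write-up: after deducing that the path $\pi$ from $x$ to $\Psi\Phi(x)$ has height $<2\e$, the sentence ``since its minimum value is at most $f(x)\leq a$, the whole path stays in $\RR_f^{\leq a+2\e}$'' is correct but the implication is $\max_\pi f < \min_\pi f + 2\e \leq f(x)+2\e \leq a+2\e$; the separate shift-bound verification that $f(\Psi\Phi(x))<a+2\e$ is then redundant (the path already lands there). This is cosmetic; the logic is sound.
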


\begin{corollary}\label{cor:FDD-discrim-0}
The functional distortion distance is more discriminative than the ungraded bottleneck distance defined on diagrams $\Ord_0$ and $\Rel_1$. That is
\[\db{0}{\Ord}(\RRf,\RRg) \leq \dfd(\RRf,\RRg)\] and \[\db{1}{\Rel}(\RRf,\RRg) \leq \dfd(\RRf,\RRg)\]
\end{corollary}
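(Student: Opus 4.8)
The plan is to derive both inequalities directly from Theorem~\ref{thm:FDD-discrim-0-orig} by identifying each extended-persistence subdiagram of the Reeb graph with an ordinary $0$-dimensional diagram, and then checking that restricting to a subdiagram can only shrink the bottleneck distance. Concretely, I would establish that $\db{0}{\Ord}(\RR_f,\RR_g) \le \db{}(\Dgm(f),\Dgm(g))$ for the first claim, and that $\db{1}{\Rel}(\RR_f,\RR_g) = \db{}(\Ord_0(-f),\Ord_0(-g)) \le \db{}(\Dgm(-f),\Dgm(-g))$ for the second, and then invoke the two inequalities supplied by Theorem~\ref{thm:FDD-discrim-0-orig}.

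For the first inequality, recall that the ordinary diagram $\Dgm(f)$ consists of exactly the off-diagonal points of $\ExDgm^{\Ord_0}(f)$ together with the essential $H_0$ class, which is recorded at $(\min f, +\infty)$. Since we assume throughout that $\X$ is path-connected, there is precisely one such essential point in each of $\Dgm(f)$ and $\Dgm(g)$. In any finite-cost matching these two points must be paired with each other, since matching a point with infinite death coordinate either to a finite point or to the diagonal incurs infinite cost; their contribution is $|\min f - \min g|$. Hence
\[
\db{}(\Dgm(f),\Dgm(g)) = \max\{\, |\min f - \min g|,\ \db{0}{\Ord}(\RR_f,\RR_g) \,\} \ \ge\ \db{0}{\Ord}(\RR_f,\RR_g),
\]
and combining with Theorem~\ref{thm:FDD-discrim-0-orig} yields $\db{0}{\Ord}(\RR_f,\RR_g) \le d_{FD}(\RR_f,\RR_g)$.

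For the second inequality I would first use the standard duality of extended persistence: the $\Rel_1$ subdiagram of $f$ agrees with the $\Ord_0$ subdiagram of $-f$, via the identification of the superlevel-set filtration of $f$ with the sublevel-set filtration of $-f$, which on diagrams is the coordinate negation $(x,y)\mapsto(-x,-y)$. Negation is an $\ell_\infty$-isometry of $\bar{\R}^2$ that preserves the diagonal, so it carries matchings to matchings of equal cost, and therefore $\db{1}{\Rel}(\RR_f,\RR_g) = \db{}(\Ord_0(-f),\Ord_0(-g))$. Applying the essential-class reduction from the previous paragraph to $-f$ and $-g$ gives $\db{}(\Ord_0(-f),\Ord_0(-g)) \le \db{}(\Dgm(-f),\Dgm(-g))$, and the second inequality of Theorem~\ref{thm:FDD-discrim-0-orig} closes the argument.

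The main obstacle is the bookkeeping at infinity together with pinning down the duality precisely: one must verify that the essential classes are genuinely forced to match one another (which is where path-connectedness is used, to guarantee a single such class in each diagram) and that the correspondence realizing $\Rel_1(f) \cong \Ord_0(-f)$ is exactly a diagonal-preserving $\ell_\infty$-isometry, so that it transports optimal matchings — including matchings to the diagonal — without altering their cost. Both points are routine once stated carefully, but they are exactly where the clean ``subdiagram distance $\le$ ordinary distance'' reduction would break down if, for instance, multiple essential classes were present.
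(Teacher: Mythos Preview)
Your proposal is correct and follows essentially the same approach as the paper's proof: both argue that $\Dgm(f)$ and $\Dgm(-f)$ differ from $\Ord_0(f)$ and $\Rel_1(f)$ only by the essential point at infinity, that this extra point is forced to match its counterpart (contributing only the difference in birth times), and hence the subdiagram distances are bounded above by the ordinary-diagram distances appearing in Theorem~\ref{thm:FDD-discrim-0-orig}. Your version is somewhat more explicit about the $\Rel_1(f)\cong\Ord_0(-f)$ duality being realized by an $\ell_\infty$-isometry, which the paper simply asserts.
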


\begin{proof}
The diagrams $\Dgm(f)$ and $\Dgm(-f)$ are essentially equivalent diagrams to $\Ord_0(f)$ and $\Rel_1(f)$, except for that $\Dgm(f)$ and $\Dgm(-f)$ have points at $+\infty$ which represents the global minimum \cite{Steiner2009}. The bottleneck distance defined on these diagrams then measures the distance between these points at $+\infty$ by just measuring the difference in their birth times (the differences in the global minimums). The addition of these points only increases the bottleneck distance -- making $\db{0}{\Ord}$ and $\db{1}{\Rel}$ a lower bound to these values.
\end{proof}

The original construction for discriminativity of the functional distortion distance in \cite{Bauer2014} looked at subdiagrams of the full extended persistence diagram in a way that only captured the difference in the global minima, but not the global maxima. The below proposition makes certain that the difference in the global minimum - global maximum pair under the bottleneck distances is bounded above by the functional distortion distance.

\begin{proposition}\label{prop:FDD-discrim-0-ext}
The functional distortion distance is more discriminative than the ungraded bottleneck distance defined on  $\Ext_0$. That is
\[\dbc{0}{Ext}(\RRf,\RRg) \leq \dfd(\RRf,\RRg).\]
\end{proposition}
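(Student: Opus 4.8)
The plan is to exploit the fact that, for a connected Reeb graph, the subdiagram $\Ext_0(f)$ consists of a single point, namely $(m_f, M_f)$ where $m_f := \min \tilde{f}$ and $M_f := \max \tilde{f}$ are the global minimum and maximum of the Reeb graph function; likewise $\Ext_0(g) = \{(m_g, M_g)\}$. The ungraded bottleneck distance between two one-point diagrams is the smaller of the cost of matching the two points directly, $\max\{|m_f - m_g|,\, |M_f - M_g|\}$, and the cost of sending each point to the diagonal. Since $\dbc{0}{Ext}$ is an infimum over matchings, it suffices to bound the \emph{direct-matching} cost, so the goal reduces to showing $\max\{|m_f - m_g|,\,|M_f - M_g|\} \le d_{FD}(\RR_f,\RR_g)$.

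To do this, I would fix $\varepsilon > 0$ and choose continuous maps $\Phi : \RR_f \to \RR_g$ and $\Psi : \RR_g \to \RR_f$ with $\max\{D(\Phi,\Psi),\, \|f - g\circ\Phi\|_\infty,\, \|f\circ\Psi - g\|_\infty\} < d_{FD}(\RR_f,\RR_g) + \varepsilon =: D$. The crucial observation is that only the two sup-norm terms, not the distortion term $D(\Phi,\Psi)$, enter the estimate, which is precisely why no multiplicative constant appears here, in contrast to the factor $3$ for $\Ext_1$ in Thm.~\ref{thm:FDD-discrim-1}. The mechanism is to track the images of the global extrema under $\Phi$ and $\Psi$, combining the inequality $\|f - g\circ\Phi\|_\infty < D$ (resp.\ $\|f\circ\Psi - g\|_\infty < D$) with the defining property of a global extremum, namely that $\tilde g$ lies between $m_g$ and $M_g$ everywhere and $\tilde f$ between $m_f$ and $M_f$ everywhere.

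Concretely, letting $x_{\min}, x_{\max} \in \RR_f$ realize $m_f, M_f$, the bound $|\tilde f(x_{\min}) - \tilde g(\Phi(x_{\min}))| < D$ together with $\tilde g(\Phi(x_{\min})) \ge m_g$ gives $m_g \le m_f + D$; symmetrically, evaluating $\|f\circ\Psi - g\|_\infty < D$ at the global minimum of $\RR_g$ and using $\tilde f \ge m_f$ gives $m_f \le m_g + D$, so $|m_f - m_g| \le D$. Running the same argument at the two global maxima, now using the upper bounds $\tilde g \le M_g$ and $\tilde f \le M_f$, yields $|M_f - M_g| \le D$. Hence $\max\{|m_f - m_g|,\,|M_f - M_g|\} \le D = d_{FD}(\RR_f,\RR_g)+\varepsilon$, and letting $\varepsilon \to 0$ closes the bound.

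This is essentially a direct estimate, so I do not anticipate a deep obstacle; the points needing care are (i) confirming that connectivity of the domain forces $\Ext_0$ to be a single point, so that the one-point bottleneck formula applies, and (ii) handling the fact that the functional distortion infimum need not be attained, which is why I work with near-optimal maps and an $\varepsilon$ slack rather than an optimal pair. It is also worth noting explicitly that the direct-matching cost I bound is a genuine upper bound for $\dbc{0}{Ext}$, since the diagonal option can only decrease the distance; this is immediate from Def.~\ref{def:UngradedBottleneckDistance-v2}.
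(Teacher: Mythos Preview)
Your proposal is correct and follows essentially the same approach as the paper: both arguments observe that for a connected Reeb graph $\Ext_0$ is a single point $(\min f,\max f)$, then bound the direct-matching cost $\max\{|m_f-m_g|,|M_f-M_g|\}$ using only the sup-norm terms $\|f-g\circ\Phi\|_\infty$ and $\|f\circ\Psi-g\|_\infty$ in the definition of $d_{FD}$. The paper packages the extremum estimate as a separate lemma (Lem.~\ref{lem:FDD-global}); your inline version, which uses $\Phi$ and $\Psi$ symmetrically and carries an $\varepsilon$ slack for the infimum, is arguably a bit cleaner. The one thing the paper does that you only flag is the case of $n>1$ connected components, where it matches components between the two graphs; you should make sure your write-up addresses that case as well.
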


 We now introduce the following Lemma to aid in the proof of Prop.~\ref{prop:FDD-discrim-0-ext}.

\begin{lemma}\label{lem:FDD-global}
Let $\RRf$ and $\RRg$ be two connected, constructible Reeb graphs. Let $(v_0,v_1)$ and $(u_0,u_1)$ be the global max, global min pair of $\RRf$ and $\RRg$, respectively. Then, 
\[\dfd(\RRf,\RRg) \geq \max\{|\rf(v_1)-\rg(u_1)|,|\rf(v_0)-\rg(u_0)\}.\]
\end{lemma}

\begin{proof}
Let $\rf(v_1) = a_1, \rf(v_0) = a_0, \rg(u_1) = b_1,$ and $\rg(u_0) = b_0$. If $\Phi(v_1) = u_1$ and $\Phi(v_0) = u_0$, then $\dfd(\RRf,\RRg) \geq ||\rf-\rg\circ\Phi||_{\infty} \geq \max\{|a_1-b_1|,|a_0-b_0|\}$. If $\Phi(v_1) = x_1$ and $\Phi(v_0) = x_0$, for some $x_0 \neq u_0, x_1 \neq u_1$, then $\dfd(\RRf,\RRg) \geq ||\rf-\rg\circ\Phi||_{\infty} \geq \max\{|a_1-\rg(x_1)|,|a_0-\rg(x_0)|\} \geq \max\{|a_1-b_1|,|a_0-b_0|\}$, since $x_0$,$x_1$ are not the the global min and global max of $\RRg$.

Thus, $\dfd(\RRf,\RRg) \geq \max\{|\rf(v_1)-\rg(u_1)|,|\rf(v_0)-\rg(u_0)\}. $
\end{proof}

\begin{proof}[Proof of Prop.~\ref{prop:FDD-discrim-0-ext}]
We will first assume that $\RRf$ and $\RRg$ have only one connected component. Then in each persistence diagram defined on the Reeb graphs there is only one pair in the class $\text{Ext}_0(\rf)$ -- the persistence pair representing the global minimum and global maximum.

Let $\RRf,\RRg$ be two constructible Reeb graphs. Let $(a_0,a_1) \in \text{Ext}_0(\rf)$ and $(b_0,b_1) \in \text{Ext}_0(\rg)$, and let the vertices of the Reeb graphs which achieve these values to be $v_0,v_1 \in \RRf$ and $u_0,u_1 \in \RRg$. Furthermore, without loss of generality, assume that there is only one vertex in each graph which achieve the global max and global min values. There are two choices for a bijection $\eta$ between these diagrams: (1) let $\eta((a_0,a_1)) = (b_0,b_1)$ or (2) let $\eta$ map both points to the diagonal. Thus, \[\db{0}{\Ext}(\RRf,\RRg) = \min\{\max\{|a_1-b_1|,|a_0-b_0|\},\max\{\frac12|a_1-a_0|,\frac12|b_1-b_0|\}\}.\]
By Lem.~\ref{lem:FDD-global}, we know that $\dfd(\RRf,\RRg) \geq \max\{|a_1-b_1|,|a_0-b_0|\}.$

Now, suppose that $\RRf$ and $\RRg$ both have $n > 1$ connected component. Then the bottleneck distance on $\Ext_0$ finds the best pairing between the global pairs of each component, possible mapping some pairs to the diagonal. When computing the functional distortion distance on multiple connected components, we find best way to map one component of $\RRf$ to a component of $\RRg$; see Sec.~\ref{sec:appx:multipleConnected}. For every possible mapping between global pairs, we are guaranteed that $\dfd \geq \db{0}{\Ext}$.

 Thus, $\dfd(\RRf,\RRg) \geq \dbc{0}{Ext}(\RRf,\RRg)$

\end{proof}

Again, the relationship with between the interleaving distance and $\Ext_0$ is missing in the literature, so we provide it here.

\begin{proposition}\label{prop:interleaving-discrim-0-ext}
The interleaving distance is more discriminative than $\dbc{0}{Ext}$. That is
\[\dbc{0}{Ext}(\RRf,\RRg) \leq d_{I}(\RRf,\RRg).\]
\end{proposition}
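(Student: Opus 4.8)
The plan is to mirror the structure of the proof of Prop.~\ref{prop:FDD-discrim-0-ext}, replacing the role of Lem.~\ref{lem:FDD-global} with an analogous statement for the interleaving distance, and exploiting that the desired constant here is $1$ rather than $3$. First I would reduce to the connected case: when $\RR_f$ and $\RR_g$ are each connected, the subdiagram $\Ext_0$ contains exactly one off-diagonal point, namely the pair $(a_0,a_1)$ recording the global minimum and maximum of $\tilde f$ (and likewise $(b_0,b_1)$ for $\tilde g$). Exactly as in Prop.~\ref{prop:FDD-discrim-0-ext}, the ungraded bottleneck distance on $\Ext_0$ is the minimum of the cost of the direct matching and the cost of sending both points to the diagonal; in particular it is bounded above by the direct cost,
\[\dbc{0}{Ext}(\RR_f,\RR_g) \leq \max\{|a_0-b_0|,\,|a_1-b_1|\}.\]
It therefore suffices to show $d_I(\RR_f,\RR_g) \geq \max\{|a_0-b_0|,\,|a_1-b_1|\}$.

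The heart of the argument is an interleaving analogue of Lem.~\ref{lem:FDD-global}: I claim that if the cosheaves $\F,\G$ are $\e$-interleaved, then $|a_0-b_0|\leq\e$ and $|a_1-b_1|\leq\e$. To see this I would use the cosheaf description together with the elementary observation that a function out of a nonempty set has nonempty codomain. The interleaving supplies a natural transformation $\phi:\F\Rightarrow\SS_\e(\G)$ whose component at an interval $I$ is a map $\F(I)=\pi_0(f^{-1}(I)) \to \SS_\e(\G)(I)=\G(I^\e)$. Fixing a large bounded interval to avoid semi-infinite endpoints, taking $I=(-M,t)$ with $t$ slightly larger than $a_0$ makes $\F(I)$ nonempty, which forces $\G((-M,t+\e))$ nonempty, hence $b_0 < t+\e$; letting $t\downarrow a_0$ gives $b_0-a_0\leq\e$, and the symmetric transformation $\psi:\G\Rightarrow\SS_\e(\F)$ gives the reverse inequality, so $|a_0-b_0|\leq\e$. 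Running the same argument on intervals of the form $(s,M)$ with $s$ slightly below $a_1$ controls the global maxima and yields $|a_1-b_1|\leq\e$.

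Taking the infimum over all $\e$ for which an $\e$-interleaving exists then gives $d_I(\RR_f,\RR_g)\geq\max\{|a_0-b_0|,\,|a_1-b_1|\}$, which combined with the bound above settles the connected case. Finally, for $\RR_f,\RR_g$ with several components I would argue exactly as in the closing paragraph of Prop.~\ref{prop:FDD-discrim-0-ext}: the bottleneck distance on $\Ext_0$ optimally matches the global pairs across components (possibly to the diagonal), and every such matching is realized, up to $d_I$, by the component-wise behavior of an interleaving.

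The step I expect to be the main obstacle is the interleaving analogue of Lem.~\ref{lem:FDD-global}, and specifically making the emptiness/nonemptiness bookkeeping at the endpoints rigorous --- choosing the test intervals correctly, tracking the strict-versus-nonstrict inequalities at $a_0$ and $a_1$, and passing to the limit $t\downarrow a_0$, $s\uparrow a_1$. Everything downstream is a direct transcription of the functional-distortion argument, but with the sharper constant $1$ coming for free because the emptiness propagation across an $\e$-interleaving loses nothing.
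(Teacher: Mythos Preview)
Your proposal is correct and follows essentially the same approach as the paper: the paper also isolates an interleaving analogue of Lem.~\ref{lem:FDD-global} (stated as Lem.~\ref{lem:interleaving-global}) via an emptiness/nonemptiness argument on the cosheaves, and then plugs it into the same bottleneck formula for $\Ext_0$ used in Prop.~\ref{prop:FDD-discrim-0-ext}. The only cosmetic difference is that you argue directly using large one-sided test intervals and the fact that a function out of a nonempty set forces a nonempty codomain, whereas the paper phrases the same observation as a contradiction on small intervals sitting strictly between the two global maxima.
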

\begin{lemma}\label{lem:interleaving-global}
Let $\RRf$ and $\RRg$ be two connected, constructible Reeb graphs. Let $v_0,v_1$ and $u_0,u_1$ be the global min and global max of $\RRf$ and $\RRg$, respectively. Then, 
\[d_{I}(\RRf,\RRg) \geq \max\{|\rf(v_1)-\rg(u_1)|,|\rf(v_0)-\rg(u_0)\}.\]
\end{lemma}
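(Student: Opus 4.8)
The plan is to mirror the structure of the functional-distortion argument in Lem.~\ref{lem:FDD-global}, but to exploit the height-preserving nature of interleaving maps in place of the distortion bound. The single fact I would isolate first is how Reeb smoothing acts on the extreme function values of a connected Reeb graph. Since $\UU_{\e}(\RR_g) = \RR(\TT_\e(\RR_g))$ and the thickened function is $g_\e(x,t) = \tilde{g}(x) + t$ on $\X_g \times [-\e,\e]$, the image of $g_\e$ is exactly $[g(u_0)-\e,\; g(u_1)+\e]$, and the Reeb quotient does not change the range of a function. Hence the induced function $\tilde{g}^{\e}$ on $\UU_{\e}(\RR_g)$ also has range $[g(u_0)-\e,\; g(u_1)+\e]$: smoothing by $\e$ lowers the global minimum by $\e$ and raises the global maximum by $\e$. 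I would record this as the key observation, since everything else follows immediately from it.

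Next I fix any $\e$ for which an $\e$-interleaving exists, so that there are function-preserving morphisms $\alpha\colon \RR_f \to \UU_{\e}(\RR_g)$ and $\beta\colon \RR_g \to \UU_{\e}(\RR_f)$. Because these are morphisms in $\Reeb$, they satisfy $\tilde{g}^{\e}\circ\alpha = \tilde{f}$ and $\tilde{f}^{\e}\circ\beta = \tilde{g}$; in particular each carries a point to a point of the same height. Applying $\alpha$ to the global maximum $v_1$ of $\RR_f$ produces a point of $\UU_{\e}(\RR_g)$ at height $f(v_1)$, so the key observation forces $f(v_1) \le g(u_1)+\e$; symmetrically, applying $\beta$ to $u_1$ gives $g(u_1) \le f(v_1)+\e$, whence $|f(v_1)-g(u_1)| \le \e$. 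The identical argument applied to the global minima $v_0,u_0$, using now that the range is bounded below by $g(u_0)-\e$ and $f(v_0)-\e$ respectively, yields $|f(v_0)-g(u_0)| \le \e$.

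Combining these gives $\max\{|f(v_1)-g(u_1)|,\,|f(v_0)-g(u_0)|\} \le \e$ for every $\e$ admitting an interleaving, and taking the infimum over all such $\e$ replaces the right-hand side by $d_I(\RR_f,\RR_g)$, which is the claim. I expect the only genuine obstacle to be the key observation about the smoothed range: I must justify carefully that passing to the Reeb graph of the thickened space leaves the image of the function unchanged (the quotient is taken on the domain of a function that is unaffected on the relevant fibers), and that connectedness guarantees the image is the full closed interval with the stated endpoints rather than a proper subset. Once that is pinned down, the height-preservation of $\alpha$ and $\beta$ does all of the remaining work with no estimation whatsoever.
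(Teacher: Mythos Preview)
Your argument is correct and is, if anything, cleaner than the paper's. The paper works in the cosheaf formulation and argues by contradiction: assuming (WLOG) $b_1 - a_1$ is the maximum and $\e < b_1 - a_1$, it produces a small interval $I$ sitting strictly above $a_1$ but below $b_1$ so that $\G(I)\neq\emptyset$ while $\F(I^{\e})=\emptyset$, which makes the interleaving triangle $\SS_{\e}[\phi]\circ\psi = \sigma^{2\e}_{\G}$ impossible. Your route instead uses the geometric $\UU_\e$ description and the fact that the interleaving morphisms $\alpha,\beta$ are function-preserving in $\Reeb$; once you note that the induced function on $\UU_\e(\RR_g)$ has image contained in $[g(u_0)-\e,\,g(u_1)+\e]$, the two-sided bound $|f(v_1)-g(u_1)|\le\e$ (and likewise for minima) drops out immediately with no contradiction or WLOG reduction. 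Both proofs exploit the same phenomenon—that smoothing by $\e$ widens the function range by $\e$ on each side—but yours packages it as a direct range estimate rather than a cosheaf emptiness obstruction, and it treats max and min, as well as both directions of the absolute value, symmetrically in one stroke. One small remark on your closing worry: for the inequality you actually use, you only need that the range of $\tilde{g}^{\e}$ is \emph{contained} in $[g(u_0)-\e,\,g(u_1)+\e]$, which is immediate from $g_\e(x,t)=\tilde g(x)+t$; you do not need the range to be the full interval, so the connectedness justification is not load-bearing here.
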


\begin{proof}
Let $\rf(v_1) = a_1, \rf(v_0) = a_0, \rg(u_1) = b_1,$ and $\rg(u_0) = b_0$. Without loss of generality, assume $b_1 > a_1$ and assume $\max\{|a_1-b_1|,|a_0-b_0|\} = b_1 - a_1$. Then, set let $A = b_1-a_1$. 

Suppose that cosheaves $\F$ and $\G$ are $\e$-interleaved with $\e < A$. Then, there exists a $\delta > 0 $ such that $\e +\delta < A$ and an interval $I = (c - \delta, c + \delta)$ such that $I^{\e} = (c - \delta - \e, c + \delta + \e)$. Since $a_1$ is the global max of $\RRf$ and $c - \delta - \e > a_1$, we have that $\G(I) \neq \emptyset$ while $\F(I^{\e}) = \emptyset$. This implies that $\text{Im}(\mathcal{S}_{e}[\phi] \circ \psi) = \emptyset$ and $\text{Im}(\sigma^{2\e}_{\G}) \neq \emptyset$, meaning that $\F$ and $\G$ cannot be $\e$-interleaved. 

\end{proof}

\begin{proof}[Proof of Prop.~\ref{prop:interleaving-discrim-0-ext}]
Just as in the proof for Prop.~\ref{prop:FDD-discrim-0-ext}, we know that $\db{0}{\Ext}(\RRf,\RRg) = \min\{\max\{|a_1-b_1|,|a_0-b_0|\},\max\{\frac12|a_1-a_0|,\frac12|b_1-b_0|\}\}.$, where $(a_0,a_1)$ and $(b_0,b_1)$ are the global persistence pairs of $\RRf$ and $\RRg$.

By Lem.~\ref{lem:interleaving-global}, we know that $d_I(\RRf,\RRg) \geq \max\{|a_1-b_1|,|a_0-b_0|\}$. Thus, $d_{I}(\RRf,\RRg) \geq \dbc{0}{Ext}(\RRf,\RRg)$.
\end{proof}

We would like to make note that the discriminativity bounds for the interleaving and functional distortion distance bounds which we have discussed above are all tight; see Fig.~\ref{fig:distanceLandscape}. Finally, we focus on the discriminativity of the edit distances.

\begin{corollary}[{\cite[Corollary~40, Corollary~41]{DiFabio2016}}]
Let $\RRf, \RRg \in \MM$. Then the Reeb graph edit distance is more discriminative than the bottleneck distance, interleaving distance, and functional distortion distance. That is,
\[d_B(\RRf,\RRg) \leq d_E(\RRf,\RRg)\]
and
\[d_I(\RRf,\RRg) \leq \dfd(\RRf,\RRg) \leq d_E(\RRf,\RRg).\]
\end{corollary}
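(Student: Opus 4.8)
The plan is to treat the three inequalities separately. The middle inequality $d_I(\RR_f,\RR_g)\le d_{FD}(\RR_f,\RR_g)$ requires no new work: it is precisely the left-hand half of the strong-equivalence bound $d_I\le d_{FD}\le 3d_I$ of Thm~\ref{thm:strongEquiv}, so I would simply invoke it. The substantive content is the pair of upper bounds $d_B\le d_E$ and $d_{FD}\le d_E$, and for both I would exploit that $d_E$ is an infimum of additive costs over edit sequences. Fix any sequence $S=(T_1,\dots,T_n)$ carrying $\Gamma_f$ to $\Gamma_g$, with intermediate Reeb graphs $\RR_{\Gamma_1},\dots,\RR_{\Gamma_{n+1}}$. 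Because $d_B$ and $d_{FD}$ obey the triangle inequality, $d(\RR_f,\RR_g)\le\sum_i d(\RR_{\Gamma_i},\RR_{\Gamma_{i+1}})$; hence it suffices to prove the single-operation bound $d(\RR_{\Gamma_i},\RR_{\Gamma_{i+1}})\le c(T_i)$ for every elementary deformation in Def~\ref{def:basicDeformations} and Def~\ref{def:kTypeDeformations}, since then $d(\RR_f,\RR_g)\le c(S)$ for all $S$ and passing to the infimum gives the claim.

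For the bottleneck distance the per-operation bounds are transparent. A relabel (R-type) leaves the underlying graph fixed and only moves the labels, so the two graphs arise from functions on a common space whose $L^\infty$ difference equals the relabel cost $\max_v|\ell_f(v)-\ell_g(v)|$; the stability theorem for the (graded) bottleneck distance then gives the bound at once. A birth inserts, and a death deletes, exactly one off-diagonal point of the extended persistence diagram at $\ell_\infty$-distance $\tfrac12|\ell(u_1)-\ell(u_0)|$ from the diagonal, so the matching fixing all other points and sending this one point to the diagonal has cost equal to the birth/death cost. A $K$-type deformation only reorders and re-attaches the two critical values $u_1,u_2$, displacing their two diagram points by at most $\max\{|\ell_f(u_i)-\ell_g(u_i)|\}$, which is again the stated cost. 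Summing these establishes $d_B\le d_E$.

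For the functional distortion distance the relabel step is identical — same space, so the $d_{FD}$ stability theorem bounds it by the relabel cost — but the topology-changing operations need explicit competitor maps. For a birth/death I would take $\Phi$ to be the inclusion of the smaller graph, making $\|f-g\circ\Phi\|_\infty=0$ off the new leaf, and take $\Psi$ to retract the inserted leaf onto a point of the edge it bisects; the distortion term $D(\Phi,\Psi)$ is then controlled because $d_f,d_g$ record only path heights, so collapsing a leaf changes intrinsic distances by at most its height. Analogous near-identity maps handle the $K$-types. The one term that resists an automatic bound is $\|f\circ\Psi-g\|_\infty$, which measures how far the collapsed leaf is forced to move in height inside the target graph.

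I expect this $\|f\circ\Psi-g\|_\infty$ term to be the main obstacle. As Lem~\ref{lem:FDD-global} makes precise, when a new leaf reaches near or beyond a global extremum there is little room in the target graph to absorb its height, and a careless retraction can make this term as large as the entire leaf height; the work is to choose the retraction target (a point of the bisected edge at a compensating height) so that $D(\Phi,\Psi)$ and both sup-norm terms are \emph{simultaneously} at most the cost of the operation. This is the step governing whether the bound holds with constant $1$ as stated, and is where I would concentrate the argument. Once it is in place, summing over an edit sequence and passing to the infimum yields $d_{FD}\le d_E$, which together with $d_B\le d_E$ and the strong equivalence $d_I\le d_{FD}$ completes the stated chain.
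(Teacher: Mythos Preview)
The paper does not give its own proof of this corollary; it simply attributes the result to \cite{DiFabio2016}. Your proposal is exactly the strategy used there: reduce via the triangle inequality and the additivity of edit cost to per-operation bounds $d(\RR_{\Gamma_i},\RR_{\Gamma_{i+1}})\le c(T_i)$, and verify these case by case over the elementary deformations. You have also correctly singled out the genuinely delicate step, namely controlling $\|f\circ\Psi-g\|_\infty$ for the topology-changing operations, which is indeed where the work concentrates in the original argument.

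One caveat worth flagging, since it bears directly on the obstacle you isolate. The survey's Def.~\ref{def:basicDeformations} is abbreviated (the paper itself says ``see \cite{DiFabio2016} for explicit definitions''); in the original, a birth deformation carries the further requirement that the two new critical values lie strictly between the critical values of the endpoints of the bisected edge. This constraint is what makes your intended fix go through: you can then map the inserted leaf into the bisected edge at matching heights, so that both sup-norm terms vanish and $D(\Phi,\Psi)$ is exactly $h/2$, the birth cost. Without that constraint the per-operation bound actually fails --- a birth whose leaf overshoots the current global extremum has cost $h/2$ but forces $d_{FD}$ to be at least the overshoot by Lem.~\ref{lem:FDD-global} --- so when you write up the argument you will need to invoke the full definition from \cite{DiFabio2016} rather than the summary here. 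The analogous locality constraints on the $K$-type moves play the same role for those cases.
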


\begin{corollary}
Let $\RRf$ and $\RRg$ be PL Reeb graphs. Then, the universal distance is more discriminative than the bottleneck, the interleaving, and the functional distortion distance. That is
\[d_B(\RRf,\RRg) \leq \du(\RRf,\RRg)\]
and
\[d_I(\RRf,\RRg) \leq \dfd(\RRf,\RRg) \leq \du(\RRf,\RRg).\]
\end{corollary}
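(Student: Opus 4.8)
The plan is to derive all three bounds from the \emph{universality} of $\delta_E$ together with the stability results already collected, rather than by constructing explicit zigzag diagrams. Recall that the defining feature of $\delta_E$ established in \cite{Bauer2020} (and discussed in Sec.~\ref{sec:universality}) is that it is the \emph{largest} stable distance on Reeb graphs: if $d$ is any distance satisfying the stability bound $d(\RR_f,\RR_g) \le \|f-g\|_\infty$ of Property~\ref{p:stability}, then $d(\RR_f,\RR_g) \le \delta_E(\RR_f,\RR_g)$ for all PL Reeb graphs $\RR_f,\RR_g$. Thus the entire corollary reduces to checking that each of $d_B$, $d_I$, and $d_{FD}$ is a stable distance in the sense required by this universality statement.

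First I would record that the stability of each of the three distances has already been cited in the Stability subsection: $d_B$ is stable by the (strengthened) Stability Theorem of \cite{Steiner2009}, $d_I$ is stable by \cite[Theorem~4.4]{deSilva2016}, and $d_{FD}$ is stable by \cite[Theorem~4.1]{Bauer2014}. Since we assume $\RR_f$ and $\RR_g$ are PL Reeb graphs, the hypotheses of all three theorems hold simultaneously (as noted after the stability theorems, it suffices that $\X$ be a compact polyhedron and $f$ a PL function). Feeding each of these stable distances into the universality bound gives immediately $d_B(\RR_f,\RR_g) \le \delta_E(\RR_f,\RR_g)$, $d_I(\RR_f,\RR_g) \le \delta_E(\RR_f,\RR_g)$, and $d_{FD}(\RR_f,\RR_g) \le \delta_E(\RR_f,\RR_g)$.

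The only remaining inequality is the leftmost one in the second displayed chain, $d_I(\RR_f,\RR_g) \le d_{FD}(\RR_f,\RR_g)$, which is exactly the lower bound of the strong-equivalence result Thm.~\ref{thm:strongEquiv}; combining this with $d_{FD} \le \delta_E$ yields the full chain $d_I \le d_{FD} \le \delta_E$. This mirrors the structure of the preceding corollary for $d_E$, where the same two ingredients (domination of the stable metrics and strong equivalence of $d_I$ and $d_{FD}$) were used.

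The main obstacle I anticipate is definitional rather than computational: one must verify that the notion of ``stable distance'' appearing in the universality theorem of \cite{Bauer2020} genuinely matches Property~\ref{p:stability} as satisfied by $d_B$, $d_I$, and $d_{FD}$, and that the category of objects on which universality is proven (PL Reeb graphs arising from PL scalar fields on a common domain) coincides with the setting in which these stability bounds were established. In particular, the bottleneck distance is only a pseudometric and is not isomorphism invariant, so one should confirm that universality is asserted for \emph{all} stable pseudometrics and does not tacitly require isomorphism invariance or other hypotheses that $d_B$ fails to meet. Once this compatibility is checked, the bounds follow with no further work.
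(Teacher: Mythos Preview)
Your proposal is correct and takes essentially the same approach as the paper: the paper's one-line proof simply invokes the universality of $\delta_E$ together with the non-universality of $d_B$, $d_I$, and $d_{FD}$ (referring the reader to Sec.~\ref{sec:universality}), which is exactly what you unpack in more detail by combining universality with the stability theorems for each of the three distances. Your explicit invocation of Thm.~\ref{thm:strongEquiv} for the $d_I \le d_{FD}$ inequality and your care about the hypotheses matching are welcome additions that the paper's terse proof leaves implicit.
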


\begin{proof}
This follows from the universality of the universal distance coupled with stability of the bottleneck, functional distortion, and interleaving distance; see Sec.~\ref{sec:universality}.
\end{proof}

\subsubsection{Discriminativity on Contour Trees}
The \textbf{contour tree} is a Reeb graph where the underlying domain $\X$ is simply connected, i.e. $\X$ is path-connected and its fundamental group is trivial. Reeb graphs have the inherent property that the $1^{st}$ Betti number is bounded above by that of the original domain \cite{DeyWang2021}. This implies that the Reeb graph of a domain with a first Betti number of $0$ has no cycles and is thus a tree. In this case, we have a stronger result for the disciminativity of the functional distortion distance compared to the bottleneck distance.

\begin{proposition}\label{prop:FDD-discrim-contour}
The functional distortion distance is more discriminative than the bottleneck distance with a discriminativity constant of 1 when the Reeb graphs are defined on simply connected domains. That is
\[d_b(\CC_f,\CC_g) \leq d_B(\CC_f,\CC_g) \leq \dfd(\CC_f,\CC_g).\]
\end{proposition}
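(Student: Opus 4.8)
The plan is to split the displayed chain into its two links and treat them separately. The left inequality $d_b(\CC_f,\CC_g) \leq d_B(\CC_f,\CC_g)$ carries no content specific to contour trees: it is precisely Prop.~\ref{prop:bottleneckBoundedByBottleneck} applied to the extended persistence diagrams of $\CC_f$ and $\CC_g$, since a type-preserving matching is always admissible once the type labels are forgotten. All of the work therefore lies in the right inequality $d_B(\CC_f,\CC_g) \leq d_{FD}(\CC_f,\CC_g)$, and in particular in improving the discriminativity constant from the general value $3$ (Cor.~\ref{cor:FDD-discrim}) down to $1$ in the simply connected setting.

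First I would unwind the definition of the graded bottleneck distance (Def.~\ref{def:GradedBottleneck}) as the maximum of the four ungraded subdiagram distances $\db{\Ord_0}$, $\db{\Ext_0}$, $\db{\Rel_1}$, and $\db{\Ext_1}$. The key observation is that the last of these vanishes for contour trees. Because the domain $\X$ is simply connected its first Betti number is $0$, and since the first Betti number of a Reeb graph is bounded above by that of its domain \cite{DeyWang2021}, both $\CC_f$ and $\CC_g$ are acyclic, i.e.\ genuine trees. As the points of $\Ext_1$ are exactly the essential $1$-cycles of the Reeb graph, both $\Ext_1(f)$ and $\Ext_1(g)$ are empty, so $\db{\Ext_1}(\CC_f,\CC_g) = d_b(\emptyset,\emptyset) = 0$. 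The graded bottleneck distance then collapses to
\[
d_B(\CC_f,\CC_g) = \max\left\{\db{\Ord_0}(\CC_f,\CC_g),\ \db{\Rel_1}(\CC_f,\CC_g),\ \db{\Ext_0}(\CC_f,\CC_g)\right\}.
\]

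Next I would bound each of the three surviving terms by $d_{FD}$ with constant $1$ using results already established: $\db{\Ord_0}$ and $\db{\Rel_1}$ are each at most $d_{FD}(\CC_f,\CC_g)$ by Cor.~\ref{cor:FDD-discrim-0}, and $\db{\Ext_0}$ is at most $d_{FD}(\CC_f,\CC_g)$ by Prop.~\ref{prop:FDD-discrim-0-ext}. Since the maximum of quantities each bounded by $d_{FD}(\CC_f,\CC_g)$ is itself bounded by $d_{FD}(\CC_f,\CC_g)$, the right inequality follows, completing the chain. The only subdiagram responsible for the larger constant $3$ in the general case is $\Ext_1$ (through Thm.~\ref{thm:FDD-discrim-1}), so its disappearance is precisely what sharpens the constant to $1$; strictness of the discriminativity (the absence of a proportionality constant) is inherited from the general case, as $d_{FD}$ is isomorphism invariant while $d_B$ is not.

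The one step deserving care — and the only place the argument could be contested — is the claim that $\Ext_1 = \emptyset$, which rests on identifying $\Ext_1$-points with loops of the Reeb graph together with the vanishing of the Reeb graph's first Betti number over a simply connected domain. This is exactly the structural fact flagged in the discussion preceding the statement, so I would simply invoke it and cite \cite{DeyWang2021} for the Betti-number inequality; everything downstream is then routine bookkeeping of which subdiagram inherits which constant.
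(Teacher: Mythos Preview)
Your proposal is correct and follows essentially the same approach as the paper: observe that $\Ext_1$ is empty for contour trees (no cycles), so the graded bottleneck reduces to the maximum over $\Ord_0$, $\Rel_1$, and $\Ext_0$, each of which is bounded by $d_{FD}$ with constant $1$ via Cor.~\ref{cor:FDD-discrim-0} and Prop.~\ref{prop:FDD-discrim-0-ext}. Your write-up is more explicit (separating out the left inequality via Prop.~\ref{prop:bottleneckBoundedByBottleneck} and spelling out why the constant drops from $3$ to $1$), but the argument is the same.
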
 

\begin{proof}
Let $\CC_f, \CC_g$ be two contour trees. The 1-dimensional extended persistence diagrams $\Ext_1(\rf)$, $\Ext_1(\rg)$ are then empty since there are no cycles in $\CC_f$ and $\CC_g$. Combining the bounds of Cor.~\ref{cor:FDD-discrim-0} and Prop.~\ref{prop:FDD-discrim-0-ext} we achieve \[d_b(\CC_f,\CC_g) \leq d_B(\CC_f,\CC_g) \leq \dfd(\CC_f,\CC_g).\]
\end{proof}

\subsubsection{Intrinsic metrics and discriminativity}

Given that the functional distortion distance between two Reeb graphs $\RRf$,$\RRg$ is sufficiently small, the functional distortion distance and graded bottleneck distance are strongly equivalent \cite{Carriere2017}. That is,
\begin{theorem}[{\cite[Theorem~9]{Carriere2017}}]
Let $\RR_f,\RR_g$ be two Reeb graphs whose (ordered) critical values are $\{a_1,\ldots,a_n\}$ and $\{b_1,\ldots,b_m\}$, respectively. Let $a_f = \min_{1 \leq i < n} a_{i+1} - a_i$, $a_g = \min_{1 \leq j < m} b_{j+1} - b_j$, and let $K \in (0, 1/22]$. If $\dfd(\RRf , \RRg) \leq \max\{a_f , a_g\}/(8(1 + 22K))$, then
\[K\dfd(\RRf,\RRg) \leq d_B(\RRf,\RRg) \leq 3\dfd(\RRf,\RRg).\]
\end{theorem}

The original Theorem \cite[Theorem~9]{Carriere2017} uses a constant of $2$ rather than $3$ for the upper bound on the functional distortion distance. This is due to an incorrect citation to \cite{Bjerkevik2021} which is using the ungraded bottleneck distance rather than the graded bottleneck distance.

This theorem leads to the construction of \textbf{induced intrinsic metrics}: the intrinsic bottleneck distance $\hat{d}_B$, the intrinsic interleaving distance $\hat{d}_I$, and the intrinsic functional distortion distance $\hat{d}_{FD}$. Intuitively, these metrics define the cost between two Reeb graphs $\RRf$ and $\RRg$ to be the length of the longest path from $\RRf$ to $\RRg$ in the space of Reeb graphs, where the length of the path is defined using the original distances $d_B$, $d_I$, and $\dfd$. These metrics have the property that they are all strongly equivalent to one another.

\subsection{Universality}\label{sec:universality}

If we believe that discriminativeness is a desirable property for comparison of Reeb graphs, it would be wise to find distances which are as discriminative as possible -- a \textbf{universal} distance. 

\begin{property}[\textbf{Universality}]\label{p:universality}
We say that a stable distance $d_U$ between two Reeb graphs $\RRf$ and $\RRg$ is \textbf{universal} if for all other stable distances $d_S$, we have 
\[d_S(\RRf,\RRg) \leq d_U(\RRf,\RRg).\]
\end{property}

\begin{theorem}[{\cite[Corollary~5.7]{Bauer2020}}]
The universal distance is universal.
\end{theorem}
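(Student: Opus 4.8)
The plan is to establish the two halves of the definition of universality (Property~\ref{p:universality}): that $\delta_E$ is itself stable, and that it dominates every other stable distance. The first half is already in hand, since we have shown $\delta_E(\RR_f,\RR_g)\le\|f-g\|_\infty$. So the real work is to prove that for an arbitrary stable distance $d_S$ on Reeb graphs one has $d_S(\RR_f,\RR_g)\le\delta_E(\RR_f,\RR_g)$.

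First I would unfold the infimum defining $\delta_E$: it suffices to show $d_S(\RR_f,\RR_g)\le c_Z$ for every Reeb zigzag diagram $Z$ (Dgm.~\ref{dgm:zigzag}) carrying $\RR_f$ to $\RR_g$, and then take the infimum over all such $Z$. Fix such a $Z$ with intermediate Reeb graphs $\RR_f=\RR_1,\ldots,\RR_n=\RR_g$, and let $\ell$ be its limit (the iterated pullback used to define $c_Z$). The crucial structural fact is that $\ell$ is a Reeb cone of $Z$: it carries Reeb quotient maps $\ell\to\RR_i$ for every $i$, and these are function preserving for the pulled-back functions $f_i=\tilde f_i\circ p_{i,i}\circ q_i$. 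Consequently $(\ell,f_i)$ is a scalar field whose Reeb graph is exactly $\RR_i$; in particular $\RR_{f_1}\cong\RR_f$ and $\RR_{f_n}\cong\RR_g$.

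Now the single common domain $\ell$ does all the work. Applying the stability of $d_S$ (Property~\ref{p:stability}) to the two functions $f_1,f_n$ on $\ell$, and using that $d_S$ is defined on isomorphism classes of Reeb graphs, gives $d_S(\RR_f,\RR_g)=d_S(\RR_{f_1},\RR_{f_n})\le\|f_1-f_n\|_\infty$. Finally, for each $x\in\ell$ both $f_1(x)$ and $f_n(x)$ lie in the set $\{f_i(x)\}_{i=1}^n$, so $|f_1(x)-f_n(x)|\le\max_i f_i(x)-\min_i f_i(x)=s^\ell(x)$; taking suprema yields $\|f_1-f_n\|_\infty\le\|s^\ell\|_\infty=c_Z$. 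Chaining these inequalities gives $d_S(\RR_f,\RR_g)\le c_Z$, and the infimum over all $Z$ completes the argument.

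The main obstacle is the structural claim that the limit $\ell$, equipped with $f_i$, recovers $\RR_i$ as its Reeb graph -- i.e. that the composite $\ell\to X_i\to\RR_i$ is genuinely a Reeb quotient map, so that passing to Reeb graphs returns $\RR_i$ up to isomorphism. This is where the cone/pullback bookkeeping of Dgm.~\ref{dgm:zigzagFull} is essential, and one must also check that the stability of $d_S$ is legitimately applicable to scalar fields defined on the (possibly intricate) limit space $\ell$ rather than only on manifolds; once the Reeb graphs produced are the nice PL graphs $\RR_i$, this is unproblematic. Everything after that reduces to the elementary pointwise bound $|f_1-f_n|\le s^\ell$, which is precisely why the infimum of the spread over the limit, rather than a sum of consecutive edit costs, is the correct notion of cost for a universal distance.
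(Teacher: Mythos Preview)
The survey does not give its own proof of this theorem --- it simply cites \cite[Corollary~5.7]{Bauer2020} --- so there is nothing in the paper to compare against directly. That said, your argument is precisely the one in \cite{Bauer2020}: pass to the limit $\ell$ of the zigzag as a common domain, apply the stability hypothesis to the pair $f_1,f_n$ on $\ell$, and bound $\|f_1-f_n\|_\infty$ by the spread $\|s^\ell\|_\infty=c_Z$; then infimize over all zigzags. You have also correctly isolated the one genuine technical point, namely that the composites $\ell\to X_i\to\RR_i$ are Reeb quotient maps (so that the Reeb graph of $(\ell,f_i)$ is indeed $\RR_i$) and that $\ell$ stays in the class of spaces to which the stability hypothesis applies; in \cite{Bauer2020} this is handled by working inside the PL/Reeb category from the start, where pullbacks of Reeb quotient maps are again Reeb quotient maps.
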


\begin{proposition}
Neither the functional distortion distance, interleaving distance, nor bottleneck distance are universal.
\end{proposition}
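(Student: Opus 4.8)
The plan is to use the universal edit distance $\delta_E$ itself as the witness that defeats each of the other three distances. Recall from the Stability subsection that $\delta_E$, $d_B$, $d_b$, $d_I$, and $d_{FD}$ are all stable. Consequently, by Property~\ref{p:universality}, to show that a stable distance $d$ is \emph{not} universal it suffices to exhibit a single pair of Reeb graphs with $\delta_E(\RR_f,\RR_g) > d(\RR_f,\RR_g)$. (Equivalently, if a stable $d$ were universal, stability of $\delta_E$ would give $\delta_E \le d$ and universality of $\delta_E$ would give $d \le \delta_E$, forcing $d = \delta_E$; so separating $d$ from $\delta_E$ on one example is enough.) The whole argument therefore reduces to producing two separating examples.

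For the bottleneck distances I would invoke the failure of isomorphism invariance. As shown above, the bottleneck distance is not isomorphism invariant, and Ex.~\ref{example:graphIsoEqPers} supplies non-isomorphic Reeb graphs $\RR_f,\RR_g$ with $d_B(\RR_f,\RR_g)=0$; by Prop.~\ref{prop:bottleneckBoundedByBottleneck} we then also have $d_b(\RR_f,\RR_g)=0$. Since $\delta_E$ \emph{is} isomorphism invariant and $\RR_f \not\cong \RR_g$, we get $\delta_E(\RR_f,\RR_g) > 0$, which strictly exceeds both bottleneck values. Hence neither the graded nor the ungraded bottleneck distance is universal.

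For the functional distortion and interleaving distances I would compare a torus Reeb graph against a sphere Reeb graph. Let $\RR_f$ consist of a minimum at height $0$, an edge rising to a saddle at height $1$, a loop occupying heights $1$ to $3$, a saddle at height $3$, and a maximum at height $4$; let $\RR_g$ be a single arc from $0$ to $4$. Their diagrams agree on $\Ord_0$ and $\Ext_0$, but $\RR_f$ carries one $\Ext_1$ point $(3,1)$ while $\RR_g$ has none, so matching it to the diagonal gives $d_B(\RR_f,\RR_g)=\tfrac12|3-1|=1$. On the other hand, collapsing the two arcs of the loop onto $\RR_g$ by a height-preserving map $\Phi$ and embedding $\RR_g$ into $\RR_f$ by a height-preserving map $\Psi$ produces a pair whose worst point distortion occurs at the waist of the loop: two antipodal points at height $2$ satisfy $d_f=1$ yet $d_g=0$, giving distortion $\tfrac12$, with every other pair no worse. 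Since $\Phi,\Psi$ are function preserving the two sup-norm terms vanish, so $d_{FD}(\RR_f,\RR_g)\le \tfrac12$. Using the discriminativity of $\delta_E$ over $d_B$ established above, $\delta_E(\RR_f,\RR_g)\ge d_B(\RR_f,\RR_g)=1>\tfrac12\ge d_{FD}(\RR_f,\RR_g)$, so $d_{FD}$ is not universal; and since $d_I\le d_{FD}$ by strong equivalence (Thm.~\ref{thm:strongEquiv}), also $d_I(\RR_f,\RR_g)\le \tfrac12<\delta_E(\RR_f,\RR_g)$, so the interleaving distance is not universal either.

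The main obstacle is verifying the torus-versus-sphere example rather than the structural reduction. Specifically, I expect the delicate points to be the upper bound $d_{FD}\le\tfrac12$, where one must confirm that no pair in the supergraph $G(\Phi,\Psi)$ beats the waist pair, and the lower bound on $\delta_E$, which I deliberately obtain \emph{indirectly} through $d_B\le\delta_E$ rather than by analyzing zigzag diagrams and Reeb cones directly, since computing $\delta_E$ from its definition is precisely the step that is hard to control.
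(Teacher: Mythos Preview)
Your proof is correct, and the high-level strategy---use the stable universal edit distance $\delta_E$ as the witness and exhibit pairs where it strictly dominates each of $d_B$, $d_I$, $d_{FD}$---is exactly what the paper does. The difference is only in economy: the paper handles all three distances with the single Ex.~\ref{example:graphIsoEqPers}, where it is already computed that $\delta_E=\tfrac12|a_4-a_2|>\tfrac12(a_4-a_3)=d_I=d_{FD}>0=d_B=d_b$. You reuse that example for the bottleneck distances but then build a separate torus-versus-arc example for $d_I$ and $d_{FD}$; that second example is correct (and your indirect lower bound $\delta_E\ge d_B=1$ is a nice way to avoid analysing zigzags), but it is unnecessary work. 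Simply pointing back to Ex.~\ref{example:graphIsoEqPers}, as the paper does, closes all three cases at once.
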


\begin{proof}
While Ex.~\ref{example:graphIsoEqPers} is not using a PL Reeb graph, the fact that these are compact 2-manifolds guarantees that it is triangulable. Furthermore, we can easily define a PL functions $\hat{f}$ and $\hat{g}$ which approximate $f$ and $g$ well enough to ensure that $(\X,\hat{f})$ and $(\Y,\hat{g})$ have Reeb graphs equivalent to $\RRf$ and $\RRg$, respectively. Thus, Ex.~\ref{example:graphIsoEqPers} is enough to show that these distances are not universal.
\end{proof}

\truncatedOptionalImportant{
\begin{remark}
Since the truncated interleaving distance is not a stable distance, there is no guarantee that the universal distance bounds the truncated interleaving distance for any fixed $m$.
\end{remark}
}

\subsection{Path Component Sensitivity}\label{subsec:PCSensitivity}

Path component sensitivity is an inherent property of each of the Reeb graph metrics -- it essentially states that the Reeb graph metrics are not readily equipped to compare Reeb graphs with different numbers of connected components.

\begin{property}[\textbf{Path Component Sensitivity}]\label{p:pathComponentSensitive}
Let $(\X,f)$ and $(\Y,g)$ be two constructible scalar fields. 
 We say that $d$ is \textbf{path component sensitive} if  $d(\RRf,\RRg) = \infty$
 if and only if 
 $\RRf$ and $\RRg$ have a different number of path connected components.   
\end{property}

\begin{proposition}
The bottleneck distance is not path component sensitive.
\end{proposition}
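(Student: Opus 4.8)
The plan is to disprove the biconditional in Property~\ref{p:pathComponentSensitive} by exhibiting a pair of Reeb graphs with \emph{different} numbers of path components whose bottleneck distance is nonetheless finite; this breaks the ``different number of components $\Rightarrow$ distance is $\infty$'' direction of the definition. The underlying reason is structural: for a constructible Reeb graph every point of the full extended persistence diagram $\ExDgm$ has finite coordinates, since extended persistence was introduced precisely to remove the points paired with $+\infty$. Both diagrams then have finitely many off-diagonal points at finite positions, so the matching that sends every point of each diagram to the diagonal already has finite cost. Hence $\db{}(\RR_f,\RR_g)$, and a fortiori $\dB{}(\RR_f,\RR_g)$ by Proposition~\ref{prop:bottleneckBoundedByBottleneck}, are finite for \emph{any} two constructible Reeb graphs, regardless of their component counts.

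Concretely, I would take $\RR_f$ to be a single arc with $\tilde{f}$ ranging over $[0,1]$ (one path component) and $\RR_g$ to be two disjoint copies of that arc (two path components). Each connected component contributes exactly one point to $\Ext_0$ -- the pair recording its global minimum and global maximum -- so $\ExDgm^{\Ext_0}(f) = \{(0,1)\}$ while $\ExDgm^{\Ext_0}(g) = \{(0,1),(0,1)\}$, and all of $\Ord_0$, $\Rel_1$, $\Ext_1$ are empty for both. I would then compute the bottleneck matching on the $\Ext_0$ subdiagrams: match one $(0,1)$ in $\ExDgm^{\Ext_0}(g)$ to the $(0,1)$ in $\ExDgm^{\Ext_0}(f)$ at cost $0$, and send the remaining $(0,1)$ to the diagonal at cost $\tfrac12|1-0| = \tfrac12$. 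This yields $\db{}(\RR_f,\RR_g) = \dB{}(\RR_f,\RR_g) = \tfrac12 < \infty$, even though $\RR_f$ and $\RR_g$ have different numbers of path components, so the biconditional fails and the distance is not path component sensitive.

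There is essentially no hard step here; the only care needed is bookkeeping. First, I must confirm that the number of path components equals the cardinality of $\Ext_0$, which follows because extended persistence pairs each component's global minimum with its global maximum, so no essential class is left unpaired. Second, I should note that the example behaves identically for the graded and ungraded distances, so the single construction settles both. Indeed, the cleanest framing is to first record the general observation that $\ExDgm$ of a constructible Reeb graph is a finite multiset of points with finite coordinates, from which $\db{} \le \dB{} < \infty$ always holds, and then present the explicit arcs purely as a witness that this finite value persists precisely in a case where the component counts disagree.
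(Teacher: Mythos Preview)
Your proof is correct and takes essentially the same approach as the paper: both argue that extended persistence diagrams of constructible Reeb graphs contain only finitely many points with finite coordinates, so the bottleneck distance is always finite regardless of component counts. The paper phrases this as a general argument (any pair with different component counts has finite bottleneck distance since all diagram points are finite and matchings ignore which component a feature comes from), whereas you additionally supply an explicit witness and compute the distance $\tfrac12$; the extra concreteness is a nice touch but the underlying reasoning is identical.
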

\begin{proof}
Suppose $\RRf$ has more connected components than $\RRg$. By definition, the extended persistence diagram has no points at $+\infty$ and the bijection $\eta$ between two extended persistence diagrams does not discern from which connected components the features arise. Thus, we are able to pair features from multiple connected components in $\RRf$ to the same connected component in $\RRg$. Since all points are finite, the bottleneck distance is also finite. Thus, the bottleneck distance is not path component sensitive.
\end{proof}

In the next statements we see that all the Reeb graph metrics are path component sensitive.

\begin{proposition}[{\cite[Proposition~4.5]{deSilva2016}}]
The interleaving distance is path component sensitive.
\end{proposition}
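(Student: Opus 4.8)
The plan is to exploit the fact that the number of path components of a Reeb graph is recorded by the value of its cosheaf on a sufficiently large interval, and that this value is untouched by smoothing. Concretely, fix a bounded open interval $I = (a,b)$ large enough that $f^{-1}(I) = \X$ and $g^{-1}(I) = \Y$, i.e. $I$ contains the ranges of both $f$ and $g$. Then $\F(I) = \pi_0(f^{-1}(I)) = \pi_0(\X)$ and $\G(I) = \pi_0(\Y)$, so $|\F(I)|$ and $|\G(I)|$ are exactly the numbers of path components of $\RR_f$ and $\RR_g$. Moreover, since $I^\e = (a-\e, b+\e)$ still satisfies $f^{-1}(I^\e) = \X$, the smoothing functor acts trivially here: $\SS_\e(\F)(I) = \F(I^\e) = \F(I)$, and the structure map $(\sigma^{2\e}_\F)_I$ induced by $I \subseteq I^{2\e}$ is the identity on $\F(I)$; likewise for $\G$.

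First I would prove the contrapositive of one direction: if $\RR_f$ and $\RR_g$ are $\e$-interleaved for some finite $\e$, then they have the same number of path components. Evaluating the interleaving natural transformations at $I$ gives maps $\varphi_I : \F(I) \to \SS_\e(\G)(I) = \G(I)$ and $\psi_I : \G(I) \to \F(I)$. Since $\SS_\e[\psi]_I = \psi_{I^\e} = \psi_I$ and $(\sigma^{2\e}_\F)_I = \mathrm{id}$, the commuting triangle of the interleaving collapses to $\psi_I \circ \varphi_I = \mathrm{id}_{\F(I)}$, and symmetrically $\varphi_I \circ \psi_I = \mathrm{id}_{\G(I)}$. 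Hence $\varphi_I$ and $\psi_I$ are mutually inverse bijections $\pi_0(\X) \leftrightarrow \pi_0(\Y)$, forcing equal component counts. Consequently, whenever the counts differ, no finite $\e$-interleaving can exist, so the infimum defining $d_I(\RR_f,\RR_g)$ is over the empty set and $d_I(\RR_f,\RR_g) = +\infty$.

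For the converse, suppose $\RR_f$ and $\RR_g$ have the same number $n$ of components and fix a bijective pairing of them. It suffices to show each paired connected-to-connected distance is finite and then assemble the pieces component-wise; here I would invoke the disjoint-union behaviour discussed in Appx.~\ref{sec:appx:multipleConnected}, noting that thickening and hence smoothing acts separately on each component, and that an interleaving of disjoint unions is built from a bijective family of component interleavings by taking the worst $\e$. For a single pair of connected Reeb graphs I would argue that taking $\e$ larger than half the maximal persistence of any feature collapses each smoothed graph $\UU_\e(\RR_f)$ and $\UU_\e(\RR_g)$ to a single monotone interval spanning the thickened range, and that once $\e$ is also large enough for the ranges of $f$ and $g$ to lie inside these intervals, there is a function-preserving map sending each point to the unique point of the target interval at the same height. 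Because a function-preserving map into a single interval is determined by the height function, the interleaving square commutes automatically, so $\RR_f$ and $\RR_g$ are $\e$-interleaved and the distance is finite.

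I expect the reverse direction to be the main obstacle. The $\pi_0$-bijection argument is essentially formal once one observes that smoothing is invisible on a large interval. The delicate points are (i) justifying rigorously that a connected constructible Reeb graph smooths to a single interval for large $\e$, which relies on the persistence interpretation of smoothing removing all finite-lifespan features, and (ii) handling the bookkeeping of multiple components so that the per-pair finite interleavings genuinely assemble into a global one realizing the finite maximum over the pairing. Both are conceptually routine but require care with the constructible structure and with the definition of the interleaving for disjoint unions.
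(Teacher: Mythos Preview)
The paper does not supply its own proof of this proposition; it simply cites \cite[Proposition~4.5]{deSilva2016}. Your argument for the forward direction (a finite interleaving forces equal component counts) is correct and is essentially the standard one: evaluating the interleaving at an interval containing both ranges collapses the triangles to a genuine bijection on $\pi_0$.

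Your converse, however, has a real gap. The claim that a connected constructible Reeb graph smooths to a single monotone interval for large $\e$ is false. Take $\RR_f$ to be a ``V'': two minima $a,b$ at height $0$ joined to a single maximum $c$ at height $1$. For every $\e>0$ the fiber of $\UU_\e(\RR_f)$ at height $-\e$ is $\pi_0\big(f^{-1}[-2\e,0]\big) = \pi_0(\{a,b\})$, which has two elements; the smoothed graph is still V-shaped, never an interval. Smoothing shifts down-forks downward (and up-forks upward) but does not eliminate them, so your delicate point (i) is not merely a matter of rigor---the statement you intend to justify does not hold.

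The fix is that you do not need $\UU_\e(\RR_g)$ to be an interval globally, only over the heights actually hit by $\RR_f$. If $\e$ exceeds the diameter of the combined range, say $\e > \max(M_f,M_g)-\min(m_f,m_g)$, then for every $x\in[m_f,M_f]$ the interval $[x-\e,x+\e]$ contains all of $[m_g,M_g]$, so $\UU_\e(\RR_g)$ has a unique point at height $x$; likewise $\UU_{2\e}(\RR_f)$ has a unique point at every height in $[m_f,M_f]$. Now your argument goes through verbatim: $\alpha:\RR_f\to\UU_\e(\RR_g)$ and the composite $\beta_\e\circ\alpha:\RR_f\to\UU_{2\e}(\RR_f)$ are both function-preserving maps whose images lie in a single-fiber region, hence are forced and agree with $\iota_{2\e}$. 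Equivalently, at the cosheaf level, for such $\e$ every nonempty $\F(I)$ maps to the singleton $\G(I^\e)$, and $\F(I^{2\e})$ is itself a singleton, so naturality and the triangle identities hold vacuously. Your point (ii) about assembling component-wise interleavings is routine once this is in place.
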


\begin{corollary}
The functional distortion distance is path component sensitive.
\end{corollary}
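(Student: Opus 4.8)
The plan is to derive path component sensitivity of $d_{FD}$ directly from the corresponding property of the interleaving distance, using the strong equivalence between the two metrics (Thm.~\ref{thm:strongEquiv}). Recall that strong equivalence gives the two-sided bound
\[d_I(\RR_f,\RR_g) \leq d_{FD}(\RR_f,\RR_g) \leq 3\,d_I(\RR_f,\RR_g),\]
and that the interleaving distance is already known to be path component sensitive (the preceding proposition). The key observation is that a two-sided linear bound of this form preserves finiteness in both directions: $d_{FD}(\RR_f,\RR_g)$ is infinite exactly when $d_I(\RR_f,\RR_g)$ is infinite.

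First I would establish the ``if'' direction. Suppose $\RR_f$ and $\RR_g$ have different numbers of path connected components. Then path component sensitivity of the interleaving distance gives $d_I(\RR_f,\RR_g) = \infty$, and the lower bound $d_I \leq d_{FD}$ immediately forces $d_{FD}(\RR_f,\RR_g) = \infty$.

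Next I would establish the ``only if'' direction by contrapositive. Suppose $\RR_f$ and $\RR_g$ have the same number of path connected components. Then path component sensitivity of the interleaving distance gives $d_I(\RR_f,\RR_g) < \infty$, and the upper bound $d_{FD} \leq 3\,d_I$ yields $d_{FD}(\RR_f,\RR_g) \leq 3\,d_I(\RR_f,\RR_g) < \infty$. Hence $d_{FD}(\RR_f,\RR_g) = \infty$ can occur only when the component counts differ, which together with the previous paragraph completes the biconditional.

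There is no substantial obstacle here: the whole argument is a transfer of the property across the strong equivalence, and the only point to watch is that \emph{both} inequalities are genuinely needed -- the lower bound for the forward implication and the upper bound for the reverse. I would also note that Thm.~\ref{thm:strongEquiv} and the interleaving path component sensitivity result are both stated for constructible Reeb graphs, which matches the hypothesis of Property~\ref{p:pathComponentSensitive}, so the settings are fully compatible and no additional constructibility or tameness assumptions need to be introduced.
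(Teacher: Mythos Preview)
Your proposal is correct and matches the paper's approach exactly: the paper's proof reads in full ``This follows directly from strong equivalence of the interleaving distance and functional distortion distance.'' You have simply unpacked that one-line justification, correctly observing that both directions of the strong equivalence bound are needed to transfer the biconditional.
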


\begin{proof}
This follows directly from strong equivalence of the interleaving distance and functional distortion distance.
\end{proof}

\begin{proposition}\label{prop:universal-pcs}
The universal distance is path component sensitive.
\end{proposition}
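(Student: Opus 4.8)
The plan is to prove that the number of path components is an \emph{invariant} of every Reeb zigzag diagram, so that a zigzag connecting $\RR_f$ and $\RR_g$ exists precisely when the two graphs have the same number of components, and that whenever such a zigzag exists its cost is automatically finite. The crux is an observation about Reeb quotient maps: if $p \from W \to \RR_h$ is the quotient map defining the Reeb graph of a scalar field $(W,h)$, then $p$ induces a \emph{bijection} $\pi_0(W) \cong \pi_0(\RR_h)$. Surjectivity is immediate since $p$ is a quotient map. Injectivity on components holds because the relation $\sim_h$ only identifies points lying in a common level-set component, and such a component is contained in a single path component of $W$; hence $p$ carries each component of $W$ onto the (connected) Reeb graph of that component, and distinct components of $W$ have disjoint images.

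First I would apply this lemma to Dgm.~\ref{dgm:zigzag}. Every map $p_{i,i}\from X_i \to \RR_i$ and $p_{i,i+1}\from X_i \to \RR_{i+1}$ is a Reeb quotient map, so each induces a bijection on path components. Chaining these bijections gives
\[
\pi_0(\RR_f)=\pi_0(\RR_1)\cong\pi_0(X_1)\cong\pi_0(\RR_2)\cong\cdots\cong\pi_0(\RR_n)=\pi_0(\RR_g),
\]
so every Reeb graph appearing in the diagram has the same number of path components. Consequently, if $\RR_f$ and $\RR_g$ have a \emph{different} number of path components, no Reeb zigzag diagram can carry one to the other; the family of diagrams indexing $\delta_E(\RR_f,\RR_g)=\inf_Z c_Z$ is empty, and under the usual convention $\inf\emptyset=+\infty$ we get $\delta_E(\RR_f,\RR_g)=+\infty$.

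Conversely, suppose $\RR_f$ and $\RR_g$ have the same number $k$ of path components. I would pair up their components and, for each matched pair, write down an explicit sequence of edit and relabel operations (deleting all leaves and loops to reduce each component to a single arc, relabelling, then building up the target component); such a finite sequence exists because the graphs are PL, hence finite. Assembling these sequences produces a genuine finite Reeb zigzag diagram $Z$ from $\RR_f$ to $\RR_g$. It then remains to bound $c_Z=\|s^\ell\|_\infty$. Since $Z$ has finitely many stages and each $\tilde f_i$ is continuous on a compact (finite) Reeb graph, each $f_i$ is bounded, say $f_i\in[m_i,M_i]$; thus for every $x\in\ell$ the spread satisfies $\max_i f_i(x)-\min_i f_i(x)\le \max_i M_i-\min_i m_i<\infty$, so $c_Z<\infty$ and hence $\delta_E(\RR_f,\RR_g)<\infty$.

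The main obstacle is establishing the $\pi_0$-bijection lemma cleanly and confirming that it applies to \emph{every} quotient map occurring in Dgm.~\ref{dgm:zigzag} (including those arising from the more general universal deformations, where one should check that the restrictions imposed on vertex and edge insertion never merge or split components); this single fact drives the infinite case. The finite case is then essentially bookkeeping, relying only on the fact that finitely many bounded label functions have bounded total range, rather than any delicate accounting of individual operation costs.
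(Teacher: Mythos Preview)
Your argument is correct and more self-contained than the paper's. The paper's proof is essentially by citation: it invokes Remark~3.4 of \cite{Bauer2020} for the ``different number of components $\Rightarrow$ distance $+\infty$'' direction, then appeals to \emph{stability} of the universal distance for the converse (if the component counts agree, realize both Reeb graphs as quotients of a common space and bound $\delta_E$ by a finite $\|f-g\|_\infty$), and finally cites Corollary~4.3 of \cite{Bauer2020} to transfer the result from the abstract universal distance to $\delta_E$. Your route replaces the first citation by actually proving the $\pi_0$-bijection lemma for Reeb quotient maps and chaining it through the zigzag---this is exactly the content behind that remark, so you are unpacking rather than bypassing it. For the finite direction you build an explicit zigzag and bound $c_Z$ by compactness of the finitely many $\RR_i$, whereas the paper leans on the already-established stability theorem; your version avoids the (slightly hidden) step of producing a common domain for $f$ and $g$, at the cost of having to argue that a zigzag exists at all. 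Both approaches are sound; yours makes the mechanism transparent and stands on its own, while the paper's is terser and reuses machinery already in place. Your flagged obstacle---checking that every map in Dgm.~\ref{dgm:zigzag} is genuinely a Reeb quotient map and hence preserves $\pi_0$---is indeed the one nontrivial point, and it holds because the definition of a Reeb zigzag diagram requires each $p_{i,j}$ to be a Reeb quotient map.
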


\begin{proof}
From Remark 3.4 in \cite{Bauer2020}, we can see that the universal distance is $\infty$ if the number of path connected components of $\RRf$ and $\RRg$ differ. If they do not differ, then they are finite due to stability of the universal distance. Corollary 4.3 in \cite{Bauer2020} states that the universal distance is equivalent to our universal distance, completing the proof.
\end{proof}

The Reeb graph edit distance is not defined with multiple connected components in mind \cite{DiFabio2016}. Thus, the Reeb graph edit distance is the only distance we have defined that is just a pseudometric on Reeb graphs, not an extended pseudometric. This is due to the fact that there are no elementary deformations which are able to insert isolated vertices or delete edges between vertices without deleting the vertices as well.

\truncatedOptional{
\begin{proposition}[{\cite[Proposition~2.19]{Chambers2021}}]
For a fixed $m\in[0,1)$, the truncated interleaving distance is path component sensitive.
\end{proposition}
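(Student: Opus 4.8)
The plan is to reduce the claim to the $m=0$ case, where path component sensitivity of the ordinary interleaving distance $d_I = d_I^0$ is already established \cite{deSilva2016}, and then transport that property across the strong equivalence of Cor.~\ref{cor:strongEquivTruncated}. Recall that Property~\ref{p:pathComponentSensitive} asks us to show $d_I^m(\RR_f,\RR_g) = \infty$ exactly when $\RR_f$ and $\RR_g$ have differing numbers of path components, so it is enough to prove that $d_I^m$ attains the value $+\infty$ on precisely the same pairs of Reeb graphs as $d_I$.

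First I would specialize Cor.~\ref{cor:strongEquivTruncated} to $M = 0$ and $M' = m$, which yields
\[
d_I(\RR_f,\RR_g) \le d_I^m(\RR_f,\RR_g) \le \tfrac{1}{1-m}\, d_I(\RR_f,\RR_g).
\]
Since $m \in [0,1)$, the constant $\tfrac{1}{1-m}$ is a finite positive real. Reading off the two inequalities in the extended reals: if $d_I(\RR_f,\RR_g) = \infty$, the left inequality forces $d_I^m(\RR_f,\RR_g) = \infty$; and if $d_I(\RR_f,\RR_g) < \infty$, the right inequality gives $d_I^m(\RR_f,\RR_g) \le \tfrac{1}{1-m} d_I(\RR_f,\RR_g) < \infty$. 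Hence $d_I^m(\RR_f,\RR_g) = \infty$ if and only if $d_I(\RR_f,\RR_g) = \infty$.

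Then I would invoke the base case directly: the interleaving distance $d_I$ is path component sensitive \cite{deSilva2016}, so $d_I(\RR_f,\RR_g) = \infty$ exactly when $\RR_f$ and $\RR_g$ have a different number of path components. Combining this with the equality of the two infinite-value loci established above gives the identical characterization for $d_I^m$, which is precisely Property~\ref{p:pathComponentSensitive} for $d_I^m$.

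The only real subtlety — and the one place I would slow down — is the arithmetic with $+\infty$ in the extended reals when applying the strong equivalence bounds: one must confirm that the multiplicative constant $\tfrac{1}{1-m}$ neither creates nor destroys an infinite value, i.e.\ that $\tfrac{1}{1-m}\cdot\infty = \infty$ and $\tfrac{1}{1-m}\cdot(\text{finite}) < \infty$ for every $m \in [0,1)$. Everything else is a direct transfer of a known property along a two-sided bound, so I do not expect a substantial obstacle here; the genuine work was already carried out in proving Cor.~\ref{cor:strongEquivTruncated} and in the $m=0$ base case.
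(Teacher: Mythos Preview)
Your proposal is correct and matches the paper's approach exactly: the appendix explicitly states that these properties of $d_I^m$ ``automatically'' follow from the strong equivalence with $d_I$ (Cor.~\ref{cor:strongEquivTruncated}) and gives no further argument, so your detailed write-up is precisely the intended proof spelled out.
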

}
\section{Examples}
\label{sec:examples}

Here we discuss several examples which present the differences between the bottleneck, interleaving, functional distortion, and Reeb graph edit distance. In each, we present the original scalar field, corresponding Reeb graphs and full extended persistence diagram, as well as figures showing how each of the Reeb graph metrics can be computed. 
Each of the scalar field examples are compact 2-manifolds, all without boundary except for Ex.~\ref{example:globalMaxChange}. As stated before, the universal distance is only defined for PL Reeb graphs. It is known, however, that each compact 2-manifold is triangulable. Furthermore, we can also construct PL functions which approximate the original function well enough such that the Reeb graphs are identical. Thus, we will show the universal distance for each of the examples defined below.
For the interleaving distance diagrams, we will alternate between several slightly different representations that we believe are the most enlightening for the particular example we are discussing. However, each description of the interleaving distance will use notation that is inline with the cosheaf definition.
Since the truncated interleaving distance acts similarly to the original interleaving distance due to strong equivalence (Cor.~\ref{cor:strongEquivTruncated}, we leave out this distance for all examples except for Ex.~\ref{example:globalMaxChange}.

Fig.~\ref{fig:exampleSummary} provides a visual summary of the various examples we have show. We label the value corresponding to $2d_I$ to show that none of the Reeb graph metrics pass this bound. We also show the $L^{\infty}$ distance where applicable. 

\begin{figure}
    \centering
    \includegraphics[width=0.9\textwidth]{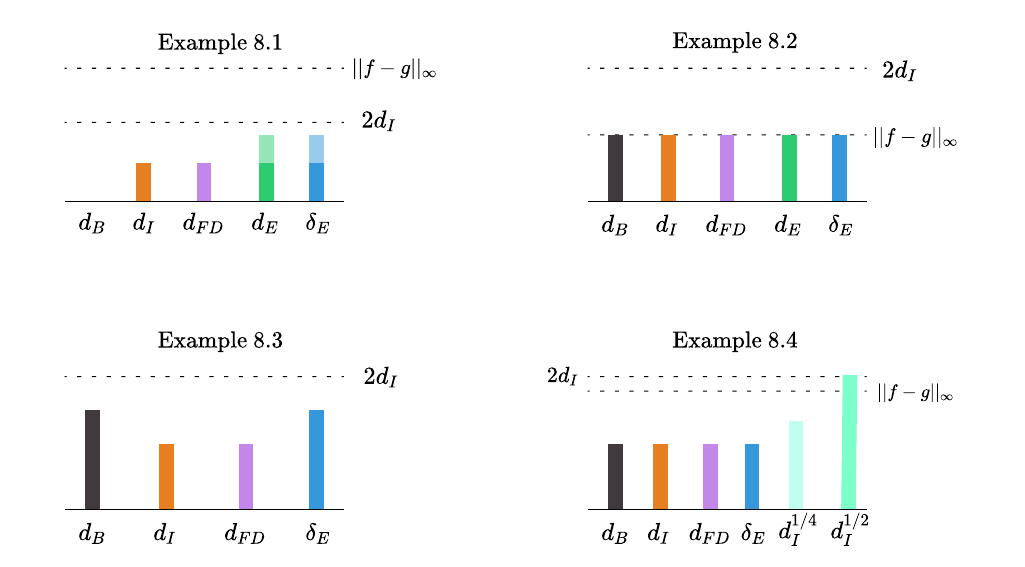}
    \caption{Visual summary of the distance values attained for each example. We use transparency of the bars in Example 1 to illustrate the range of possible distance values of $d_E$ and $\delta_E$. For Example 4, we show two different choices of $m$ for the truncated interleaving distance to illustrate the affect that $m$ plays on the distance values.}
    \label{fig:exampleSummary}
\end{figure}

\subsection{Example: Scalar fields with graph isomorphic Reeb graphs and identical persistence diagrams}
\label{example:graphIsoEqPers}

Let $(\X,f)$, $(\Y,g)$ be constructible scalar fields as shown in  Fig.~\ref{fig:exampleOne}(a). Both $\X$ and $\Y$ are compact 2-manifolds without boundary. The labels $\{a_1,\ldots,a_8\}$ represent identical function values for both scalar fields. We denote the vertices of $\RRf$ and $\RRg$ corresponding to $\{a_1,\ldots,a_8\}$ as $\{v_1,\ldots v_8\}$ and $\{u_1,\ldots,u_8\}$, respectively.

\begin{figure}
    \centering
    \includegraphics[width=\textwidth]{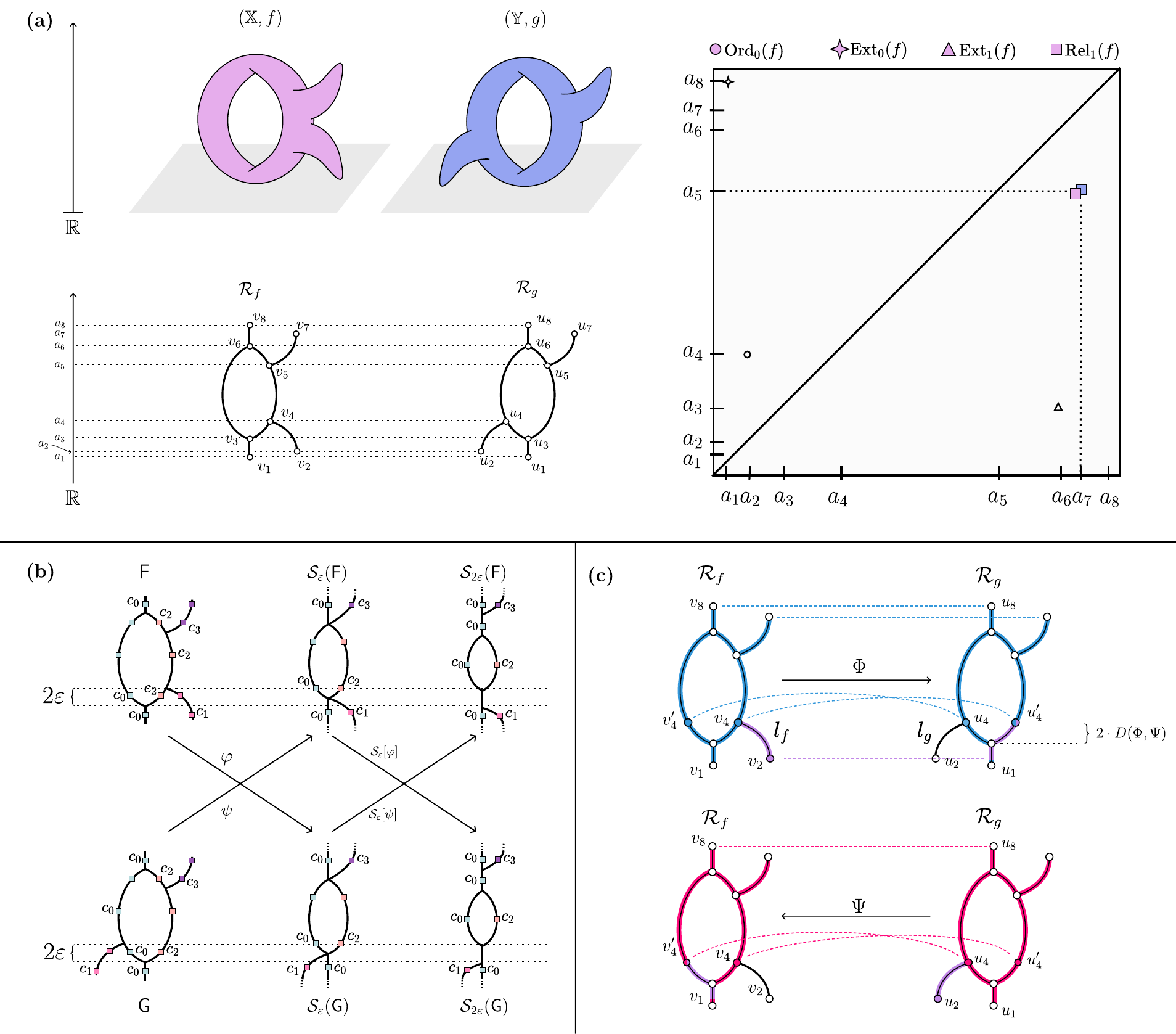}
    \caption{Summary figure for Ex.~\ref{example:graphIsoEqPers}. \textbf{(a)} Two scalar fields $(\X,f)$ and $(\X,g)$ with their corresponding Reeb graphs $\RRf$,$\RRg$ below them. To the right is the persistence diagrams of the scalar fields overlayed on top of each other. Smaller points without color indicate that these features are plotted in both diagrams. \textbf{(b)} Diagram displaying the interleaving distance between the two Reeb graphs. \textbf{(c)} Diagram displaying the optimal mappings $\Phi$,$\Psi$ for the functional distortion distance.}
    \label{fig:exampleOne}
\end{figure}

\paragraph{Bottleneck Distance} The persistence diagrams are identical, making the bottleneck distances equal to 0. That is
\[d_b(\RRf,\RRg) = d_B(\RRf,\RRg) = 0.\]

\paragraph{Interleaving Distance} Before determining the interleaving distance between these two scalar fields, recall that from Fig.~\ref{fig:graphIsoNonRGIso} we know that their Reeb graphs are isomorphic as \emph{graphs}, but not as \emph{Reeb graphs}.  Fig.~\ref{fig:nonZeroInterleaving} shows why these are not isomorphic as cosheafs as well. By  Property~\ref{p:isomorphismInvariance}, this implies that the two Reeb graphs do not have a $0$-interleaving between them. 

The squares overlayed on the Reeb graphs of Fig.~\ref{fig:exampleOne}(a) represent the elements of the cosheafs at various fibers. For there to be a $\e$-interleaving, we need to construct a map from $\F$ to $\SS_{\e}(\G)$ such that these components respect inclusion. In this case, our issue is that the component labeled $c_1$ eventually merges into $c_2$ in $\F$, but merges into $c_0$ in $\G$, while the rest of the components are equivalent in both. Switching the labelings of $c_0$ and $c_2$ in $\G$ results in a similar situation for the upper leaf. 

We must choose $\e$ so that the maps $\phi:\F \to \SS_{\e}(\G)$,$\psi:\G \to \SS_{\e}(\F)$ respect inclusions (and are therefore well-defined morphisms between the cosheafs). Our only choice is choose $\e$ large enough so that the root of the component $c_1$ merges with the rest of the Reeb graph below the 1-cycle. Choosing $\e = \frac{1}{2}|a_4 - a_3|$ achieves this. We can see from the diagram that this allows us to choose maps $\phi$ and $\psi$ which respect inclusion and allow for $\SS_{\e}[\psi] \circ \phi = \sigma_{\F}^{2\e}$ and $\SS_{\e}[\phi] \circ \psi = \sigma_{\G}^{2\e}$. Thus, since this is the optimal choice of $\e$, we have that \[d_I(\RRf,\RRg) = \frac12(a_4-a_3).\]

\paragraph{Functional Distortion Distance} Let $\ell_f$ denote the downward leaf of $\RRf$ whose minimum is $v_2$ and which merges with the rest of $\RRf$ at $v_4$, and let $\ell_g$ be defined analogously.  To define continuous maps between $\RRf$ and $\RRg$, first note that if we remove $\ell_f$ and $\ell_g$ from $\RRf$ and $\RRg$, the Reeb graphs become isomorphic. Thus, we can define $\Phi$ as being the Reeb graph isomorphism from $\RRf \setminus \ell_f$ to $\RRg \setminus \ell_g$, and $\Psi$ as its inverse. By continuity, there is no way for us to map the bottom leaves directly to each other without disturbing the identity map that we have just defined. Thus, our best plan is to assign $\ell_f$ to be horizontally ``flattened'' as in  Fig.~\ref{fig:exampleOne}(c) and assign $\ell_g$ to analogously.

The largest point distortion then comes from the pairs $(v'_4,u'_4),(v_4,u'_4)$ (or similarly $(v'_4,u'_4),(v'_4,u_4)$). Thus, we have
\[D(\Phi,\Psi) = \lambda((v_2,u'_4),(v_4,u'_4)) = \frac12|d_{\rf}(v'_4,v_4) - d_{\rg}(u'_4,u'_4)| = \frac12(a_4-a_3).\]

Since the functional distortion distance is bounded below by the interleaving distance by \cref{thm:strongEquiv}, we have that \[d_{FD}(\RRf,\RRg) = \frac12(a_4-a_3).\]

Note that we could have defined $\Phi$ and $\Psi$ such that they contracted the leaves into a single point, which would result in the same exact map distortion $D(\Phi,\Psi)$. However, this would cause $||\rf-\rg\circ\Phi||_{\infty} = ||\rf\circ\Psi - \rg||_{\infty} = (a_4-a_2)$, which would imply that $\Phi$ and $\Psi$ are not optimally chosen.

\begin{figure}
    \centering
    \includegraphics[width=\textwidth]{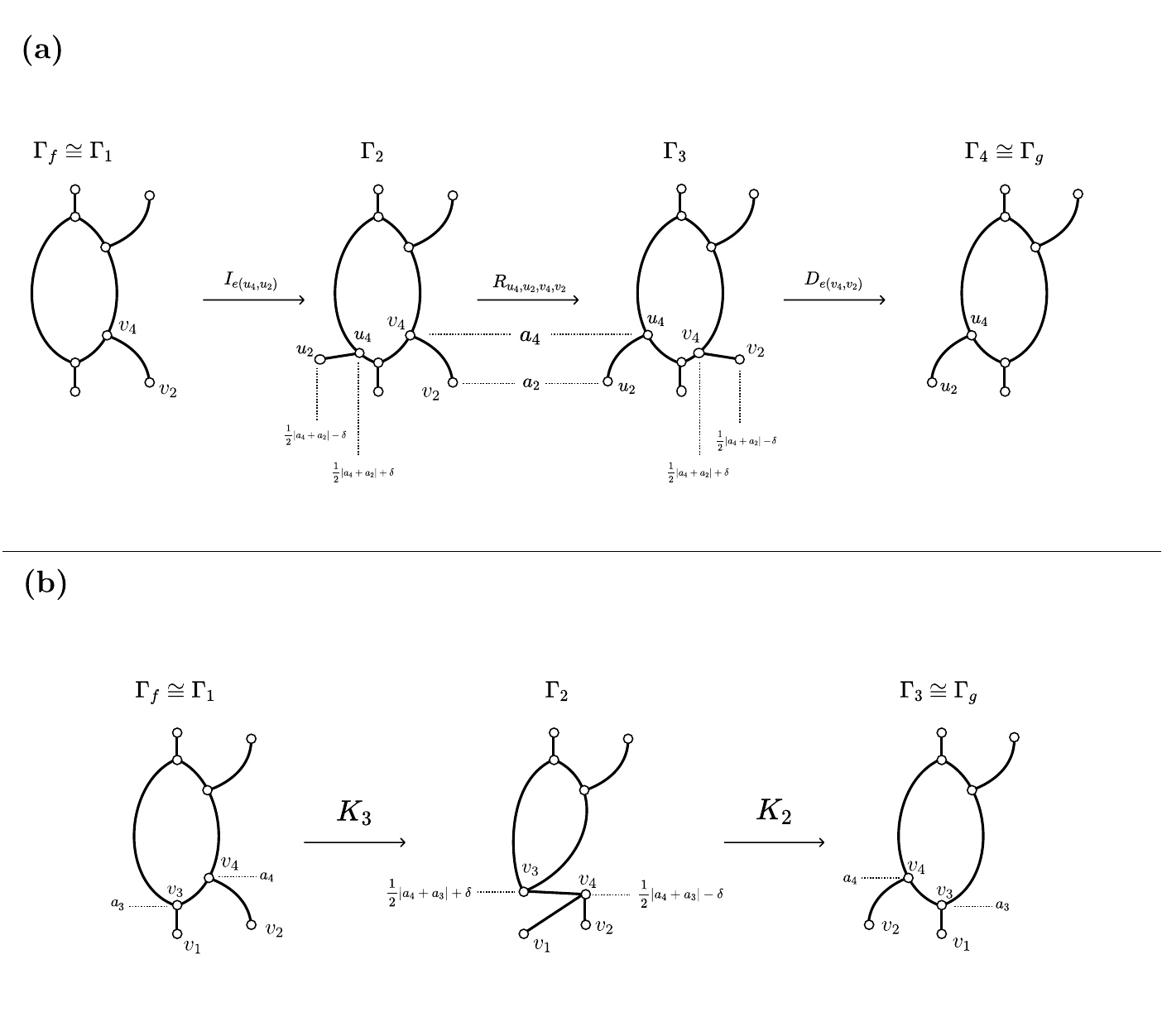}
    \caption{\textbf{(a)} Reeb graph edit distance for Ex.~\ref{example:graphIsoEqPers}/Fig.~\ref{fig:exampleOne} through an insertion of a small leaf, then relabeling vertices to simultaneously increase this leaf and then flatten the leaf we intend to remove, finally followed by a deletion of the shrunken leaf. \textbf{(b)} An alternate, suboptimal sequence carrying $\Gamma_f$ to $\Gamma_g$ through a K-3 type operation followed by a K-2 operation to move the target leaf from the right side of the stem to the left side of the stem.}
    \label{fig:exampleOneRGED}
\end{figure}

\paragraph{Reeb graph edit distance.} To compute the Reeb graph edit distance $d_E$, we investigate two different edit sequences carrying $(\Gamma_f,\ell_f)$ to $(\Gamma_g,\ell_g)$. While the first sequence we show yields a lower edit cost in this case, we find it appropriate to investigate both sequences since their cost is dependent on the values of different vertices. Specifically, if the length of $e(v_2,v_4) \in E(\Gamma_f)$ and $e(u_2,u_4) \in E(\Gamma_g)$ was to increase such that $\frac12|a_4 - a_2| > |a_3 - a_2|$, the actual edit distance would be governed by the latter sequence rather than the former. 

The most intuitively direct way of carrying $(\Gamma_f,\ell_f)$ to $(\Gamma_g,\ell_f)$ would be to delete the leaf $e(v_4,v_2) \in E(\Gamma_f)$ and then birth a new leaf to correspond to $e(u_4,u_2) \in E(\Gamma_g)$. As stated in Rem.~\ref{rem:spreadRelabel}, it is cheaper to first birth a flat edge, perform a relabel deformation to change both edges simultaneously, and then delete the newly flattened edge.  Fig.~\ref{fig:exampleOneRGED}(a) shows this sequence.

Let $(\Gamma_1,\ell_1) \cong (\Gamma_f, \ell_f)$. We insert the edge $e(u_4,u_2)$ such that $\ell_2(u_4) = \frac12|a_4 + a_2| + \delta$,  $\ell_2(u_2) = \frac12|a_4 + a_2| - \delta$, with $0 < \delta << 1$, making the length of this leaf $2\delta$. This $\delta$ is added strictly for purposes of aligning with the definition of elementary birth deformation which states that the introduced vertices cannot have the same function value. A relabel operation is then applied to $(\Gamma_2,\ell_2)$ so that $l_3(u_4) = a_4$, $l_3(u_2) = a_2$, $l_3(v_4) = \frac12|a_4-a_2| + \delta$, and $l_3(v_2) = \frac12|a_4-a_2| + \delta$. A death deformation is then applied to $e(v_4,v_2)$. We compute the cost of this edit sequence $S_1$ to be

\[c(S_1) = \bigg[\frac12(2\delta)\bigg] + \bigg[\frac12|a_4-a_2| - \delta\bigg] +  \bigg[\frac12(2\delta)\bigg] =  \frac12|a_4-a_2| + \delta.\] As $\delta \to 0$, we have $c(S_1) \to \frac12|a_4-a_2|$.

Fig.~\ref{fig:exampleOneRGED}(b) shows a sequence $S$ carrying $(\Gamma_f,\ell_f)$ to $(\Gamma_g,\ell_g)$ through a $K_3$-type deformation followed by a $K_2$-type deformation. Simply put, we move the leaf $e(v_2,v_4)$ to the other side of the Reeb graph. We use the $K_3$-type deformation to set $\ell_f(v_3) = \frac{1}{2}|a_4 + a_3| + \delta$ and  $\ell_f(v_4) = \frac{1}{2}|a_4 + a_3| - \delta$, where $0 < \delta << \frac{1}{2}|a_4 - a_3|$. Composing the two $K$-type deformations means that the cost of this edit sequence is 
\begin{align*}
    c(S_2) = & \big[\max\{|\ell_f(v_4)-\ell_2(v_4)|,|\ell_f(v_3)-\ell_2(v_3)|\}\big] \\       & +  \big[\max\{|\ell_2(v_4)-\ell_g(v_4)|,|\ell_2(v_3)-\ell_g(v_3)|\}\big] \\
           = & |a_4-a_3|+2\delta.
\end{align*}

\noindent As we have $\delta \to 0$, we have $c(S_2) \to |a_4-a_3|$. Thus, since the Reeb graph edit distance is defined as the infimum cost among all sequences and is bounded below by the functional distortion distance, we have \[\frac{1}{2}|a_4-a_3| \leq d_E(\RRf,\RRg) \leq \frac12|a_4-a_2|\].

\begin{figure}
    \centering
    \includegraphics[width=\textwidth]{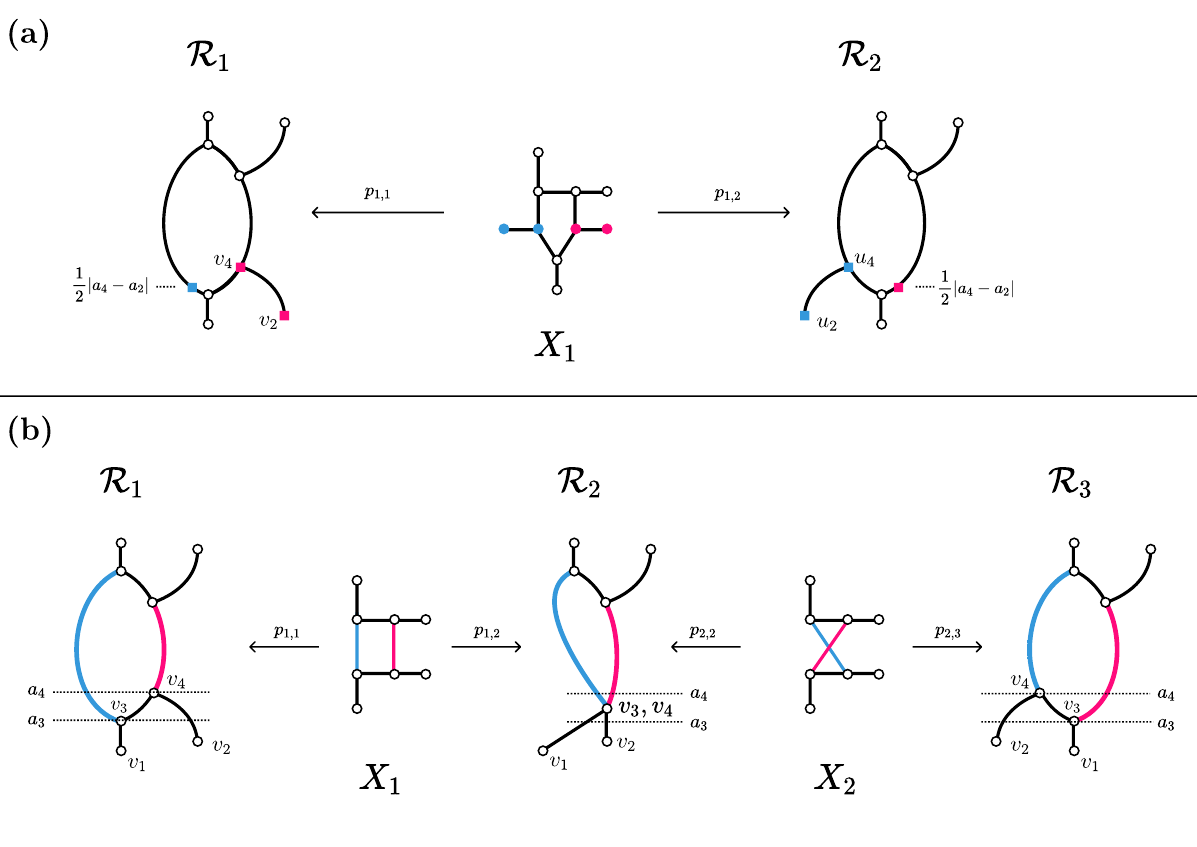}
    \caption{\textbf{(a)} Depiction of the universal distance by inserting a new leaf on the leaft side and deleting the old leaf on the right side. \textbf{(b)} A suboptimal sequence for the universal distance is given by mapping $X_1$ and $\X_2$ both into the common space $\RR_2$ which allows us to switch the leaf from the right side to the left side of the Reeb graph.}
    \label{fig:exampleOneuniversal}
\end{figure}

\paragraph{Universal Distance} To compute the universal distance, we investigate two different zigzag diagrams which are analogous to the first and second edit sequences from Fig.~\ref{fig:exampleOneRGED}, respectively. 

The zigzag diagram $z_1$ in  Fig.~\ref{fig:exampleOneuniversal}(a) consists of just a single space $X_1$ with different Reeb quotient maps going to $\RR_1 \cong \RRf$ and $\RR_2\cong \RRg$. The left quotient map contracts the edge $e(x_1,x_2)$ to a single point in $\RR_1$, and maps $e(x_3,x_4)$ to $e(v_4,v_2) \in \RR_1$. The right quotient map is defined similarly. Contracted points in $\RR_1$ and $\RR_2$ are such that their function value is half the distance between the root and tip of that leaf. Specifically, the function value is $\frac12|a_4-a_2|$. 

If we were to construct universal deformations moving $\RR_1$ to $\RR_2$ in this way, we would have a relabel operation to flatten the leaf, an insert operation to add the new leaf, followed by another relabel deformation to alter the newly introduced leaf, and finally a death deformation to remove the previously flattened leaf. 

To compute the cost of this sequence, first note that $X_1$ is the limit of this diagram. We have that $(\rf_1 \circ p_{1,1})(x_1) - (\rf_2 \circ p_{1,2})(x_1) = \frac12|a_4-a_2|$ and similarly for $x_2,x_3,x_4 \in X_1$. Thus, the cost of this sequence $c_{z_1}$ is $\frac12|a_4-a_2|$.

Fig.~\ref{fig:exampleOneuniversal}(b) shows an zigzag diagram $z_2$ carrying $\RRf$ to $\RRg$. We have a relabel operation, followed by two slide operations, and then a final relabel operation. As we perform the two slide operations, since no relabel operation is needed between them, the Reeb graphs $\RR_2$ and $\RR_3$ are identical. This implies that we could in fact shrink this zigzag diagram slightly by removing both $X_2$ and $\RR_3$. However, removing these two pieces in the diagram will not change the final cost of the sequence since only relabel operations can affect the cost. 

Given that this diagram  is the most optimal diagram carrying $\RRf$ to $\RRg$, we compute the cost by constructing the iterated pullback of $X_1,X_2$, and $X_3$. Now, let $x_1 \in X_1$ be the point along edge $e(v_4,v_5)$ such that $(\rf_1\circ p_{1,1})(x_1) = a_4 + \delta_0$, with $0 < \delta_0 << 1$. Let $x_2\in X_2$ be such that $(f_2\circ p_{1,2})(x_1) = (\rf_2\circ p_{2,2})(x_2)$, and let $x_3 \in X_3$ be such that $(\rf_3\circ p_{2,3})(x_2) = (\rf_3\circ p_{3,3})(x_3)$. This implies that $(x_1,x_2,x_3) \in X_1 \times_{\RR_2} X_2 \times_{\RR_3} X_3$. We can see then that $(\rf_4 \circ p_{3,4})(x_3) = a_3 + \delta_1$, where $\delta_1$ approaches $0$ as $\delta_0$ approaches $0$. Thus, as $\delta_0 \to 0$, we have $c_{z_2} \to |a_4-a_3|$. Finally, as in the Reeb graph edit distance case, this implies that \[\frac12|a_4-a_3|\leq\delta_E(\RRf,\RRg) \leq \frac12|a_4-a_2|.\]

\subsection{Example: Stretched Tori results in equality of metrics.}
\label{example:stretchedTori}

\begin{figure}
    \centering
    \includegraphics[width=\textwidth]{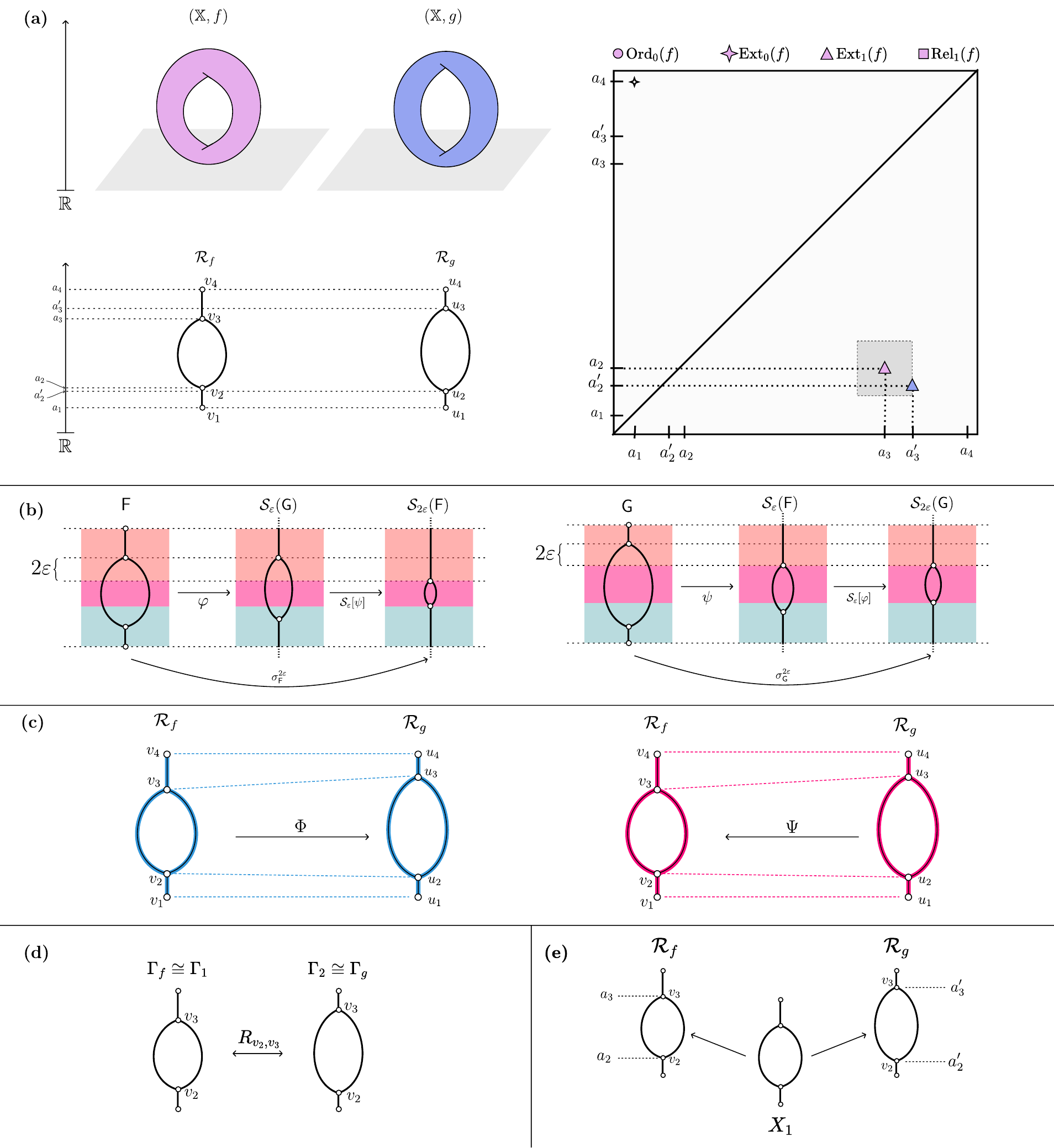}
    \caption{Summary figure for Ex.~\ref{example:stretchedTori} \textbf{(a)} Two scalar fields $(\X,f)$ and $(\X,g)$ with their corresponding Reeb graphs $\RRf$,$\RRg$ below them. To the right is the persistence diagrams of the scalar fields overlayed on top of each other. Smaller points without color indicate that these features are plotted in both diagrams. The colored points indicate features which are unique to that diagram. \textbf{(b)} The interleaving distance between the two Reeb graphs. We choose to lay these mappings horizontally instead of interleaved as in Ex.~\ref{example:graphIsoEqPers}(b) to more adequately express the mappings between the pre-cosheafs. \textbf{(c)} Optimal mappings $\Phi$,$\Psi$ for the functional distortion distance. \textbf{(d)} Optimal sequence of edit operations for the Reeb graph edit distance. \textbf{(e)} Shortest and optimal zigzag diagram for the universal distance.}
    \label{fig:exampleTwo}
\end{figure}

Let $(\X,f)$ and $(\X,g)$ be two scalar fields defined as in  Fig.~\ref{fig:exampleTwo}. Let $\{v_1,\ldots,v_4\}$ and $\{u_1,\ldots,u_4\}$ denote the vertices of $\RRf$ and $\RRg$, respectively, defined in increasing function value order. Note that while these scalar fields are defined on the same $\X$ (a torus), we assign $f$ and $g$ such that $a_2 = \rf(v_2) > \rg(u_2) = a'_2$ and $a_3 = \rf(v_3) < \rg(v_3) = a'_3$. 

\paragraph{Bottleneck Distance} We can see that the persistence diagrams for the tori are quite similar. The smallest square centered at $(a_3,a_2)$ which encompasses $(a'_3,a'_2)$ has side length equal to $a'_3 - a_3$, implying that the bottleneck distance between these diagrams is $d_{B^1} = a'_3 - a_3$. Since the 0 dimensional extended diagrams are identical, we have that \[\dB{}(\RRf,\RRg) = a'_3-a_3.\]

\paragraph{Interleaving Distance} We claim that $\e = a'_3 - a_3$. To show this, first note that there will be no issue with mapping the correct components so that they respect inclusion since the spaces are identical besides the stretching of the loop. Our concern is that, in the interleaving, we may have a point $a\in \R$ such that $|\G(a^{\delta})| = |\mathcal{S}_{2\e}(\G)(a^{\delta})| = 2$ and $|\mathcal{S}_{\e}(\F)(a^{\delta})| = 1$, where $a^{\delta} = (a-\delta,a+\delta)$ with $\delta > 0$. This would mean the map $\psi \circ \mathcal{S}_{\e}[\phi] \neq \sigma^{2\e}_{\G}$ since the image of $\psi$ would be one component, causing the image of $\psi \circ \mathcal{S}_{\e}[\phi]$ to be one component as well, while the image of $\sigma^{2\e}_{\G}$ would be two components.

Smoothing this loop by $\e$ will cause the maxima node to shift down by $\e$ and the bottom node to be shifted up by $\e$. Thus, the values of the nodes of $\mathcal{S}_{\e}(\F)$ would have values $a_3 - \e$ and $a_2+\e$. Similarly, the nodes for $\mathcal{S}_{2\e}(\G)$ would have values $a'_3-2\e$ and $a'_2+2\e$. 

If $\e = a'_3 - a_3$, we have $2a_3 - a'_3$, $a_2+a'_3-a_3$, $2a_3-a'_3$, and $a'_2 + 2a'_3 - 2a_3$ for the respective nodes. Thus, if $x\in \R$ satisfies $2a_3 - a'_3 > x > a_2+a'_3-a_3$, then $|\mathcal{S}_{\e}(\F)(x^{\delta})| = 2$. Similarly, if $x \in \R$  satisfies $2a_3-a'_3  > x > a'_2+2a'_3-2a_3$, we have $|\mathcal{S}_{2\e}(\G)(x^{\delta})| = |\G(x^{\delta})| = 2$. Since $a'_3 - a_3 > a_2 - a'_2$, we have $a'_2+2a'_3-2a_3 > a_2 + a'_3 - a_3$.

From this, we guarantee $\e = a'_3 - a_3$ will provide us an $\e$-interleaving by observing that $(2a_3-a'_3,a'_2+2a'_3-2a_3) \subset (2a_3 - a'_3, a_2+a'_3-a_3)$. We leave the other direction to the interested reader. Thus, \[d_I(\RRf,\RRg) = a'_3 - a_3.\]

\paragraph{Functional Distortion Distance} We can define optimal continuous maps $\Phi = \Psi^{-1}$ between $\RRf$ and $\RRg$ by stretching $\RRf$ so that $v_2 \mapsto u_2$ and $v_3 \mapsto u_3$, as depicted in  Fig.~\ref{fig:exampleTwo}(c). This implies that $D(\Phi,\Psi) = \frac{1}{2}(a'_3 - a_3)$, since the max point distortion is $\max\{|\rf(v_2)-\rg(u_2)|,|\rf(v_3)-\rg(u_3)|\}$. However, note that $||\rf - \rg \circ \Psi||_{\infty} = ||\rf \circ \Psi - \rg||_{\infty} = |\rf(v_3)-\rg(u_3)| = a'_3 - a_3$. Thus, the functional distortion distance between $\RRf$ and $\RRg$ is \[d_{FD}(\RRf,\RRg) = a'_3 - a_3.\]

\paragraph{Reeb Graph Edit Distance} Let $(\Gamma_f,\ell_f) = (\Gamma_1,\ell_1)$ and $(\Gamma_g,\ell_g) = (\Gamma_2,\ell_2)$ be the combinatorial Reeb graphs of $\RRf$ and $\RRg$, respectively.  Fig.~\ref{fig:exampleOne}(d) shows the simple, optimal sequence $S$ of edit operations to carry $\Gamma_f$ to $\Gamma_g$ consisting of one relabel operation. The Reeb graph edit distance takes the max function shift in these two relabelings, which is the shift from $a_3$ to $a'_3$ for node $v_3$. Thus, \[d_E = a'_3-a_3.\]

\paragraph{Universal Distance} Similar to the Reeb graph edit distance, the universal distance consists of one relabel operation. The maximum function value change is then from $a_3$ to $a'_3$ for node $v_3$. Thus, \[\delta_E = a'_3 - a_3\].

\subsection{Example: Genus-2 surface and simply connected domain with leaves}
\label{example:compound}

Let $(\X,f)$ and $(\Y,g)$ be the scalar fields shown in  Fig.~\ref{fig:exampleThree}(a). Let $\{v_1,\ldots,v_6\}$ and $\{u_1,\ldots,u_6\}$ denote the vertices of $\RRf$ and $\RRg$, respectively. As compared to previous examples, the domains of these two scalar fields are not homeomorphic to one another.

\begin{figure}
    \centering
    \includegraphics[width=\textwidth]{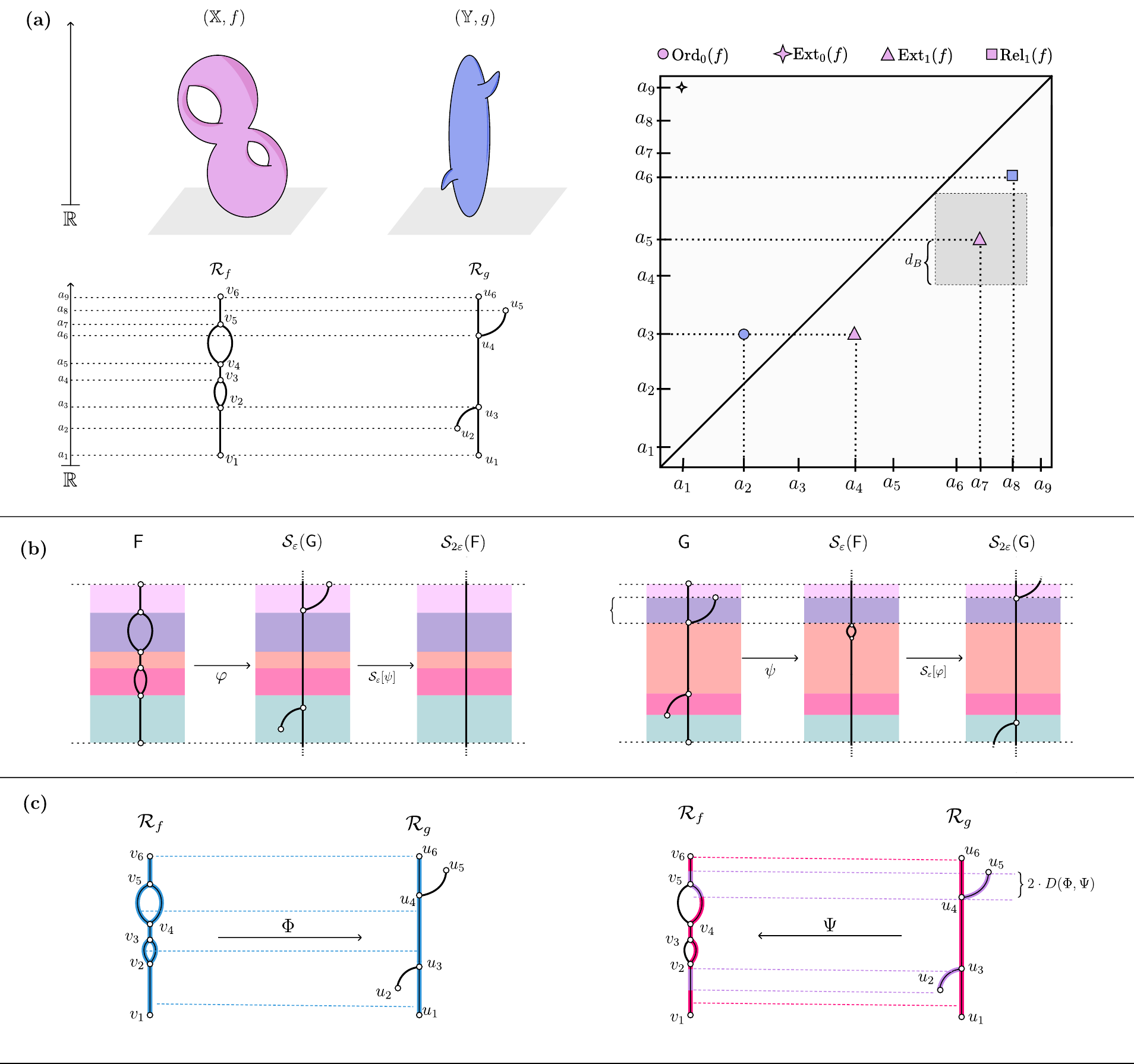}
    \caption{Summary figure for Ex.~\ref{example:compound}. \textbf{(a)} Two scalar fields $(\X,f)$ and $(\X,g)$ with their corresponding Reeb graphs $\RRf$,$\RRg$ below them. To the right is the persistence diagrams of the scalar fields overlayed on top of each other. Smaller points without color indicate that these features are plotted in both diagrams. The colored points indicate features with are unique to that diagram. \textbf{(b)} Diagram displaying the interleaving distance between the two Reeb graphs. We choose to lay these mappings horizontally instead of interleaved as in Ex.~\ref{example:graphIsoEqPers}\textbf{(b)} to more adequately express the mappings between the pre-cosheafs. \textbf{(c)} Diagram displaying the optimal mappings $\Phi$,$\Psi$ for the functional distortion distance.}
    \label{fig:exampleThree}
\end{figure}

\begin{figure}
    \centering
    \includegraphics[width=\textwidth]{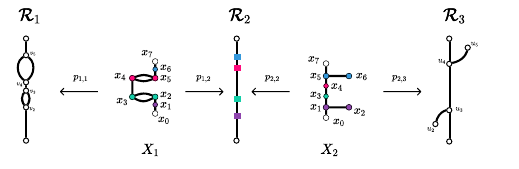}
    \caption{Figure depicting the universal distance for Ex.~\ref{example:compound}. \textbf{(a)} Zigzag diagram carrying $\RRf$ to $\RRg$. \textbf{(b)} Universal deformations which would be involved to carry $X_1$ to $X_2$. Since these deformations have no relabels, we are able to consolidate this entire sequence into just $X_1 \to \RR_2 \to X_2$.}
    \label{fig:exampleThreePt2}
\end{figure}

\paragraph{Bottleneck Distance} We can see in the persistence diagrams that the best matchings for each point is either to its duplicate in the other diagram or to the diagonal. This makes the largest distance in this matching to be $||(a_5,a_7)-(\frac{a_5+a_7}{2},\frac{a_5+a_7}{2})||_{\infty}$, implying that \[\dB{}(\RRf,\RRg) = \frac12(a_7-a_5).\]

\paragraph{Interleaving Distance} Fig.~\ref{fig:exampleThree}(b) depicts an $\e$-interleaving between the two Reeb graphs. The color coding illustrates how we can partition $\F$ and $\G$ into pieces where the number of connected components changes. We choose $\e$ to be $\frac{1}{2}(a_8-a_6)$ so that interval $I$ corresponds to only one component in $\mathcal{S}_{2\e}(\G)$ since it will always correspond to only one component in $\mathcal{S}_{\e}(\F)$. This choice of $\e$ will also cause each hole to be completely closed in $\mathcal{S}_{2\e}(\F)$, meaning the equality $\mathcal{S}_{\e}[\psi] \circ \phi = \sigma^{2\e}_{\F}$ is guaranteed. Thus, the interleaving distance is \[d_I(\RRf,\RRg) = \frac{1}{2}(a_8-a_6)\].

\paragraph{Functional Distortion Distance} In Fig.~\ref{fig:exampleThree}(c), we choose $\Phi$ to map every point straight across to $\RRg$ and choose $\Psi$ to do the same. Focusing on $\Phi$, we can see that the two points in the center of the loop between $v_4$ and $v_5$ are mapped to the same point on $\RRg$. Let $x_1$ and $x_2$ denote the points such that $\rf(x_1) = \rf(x_2) = \frac12(\rf(v_5)+\rf(v_4)) = \frac12(a_7+a_5)$. Thus, the height of any path from $x_1$ to $x_2$ is $|\rf(v_5) - \frac12(\rf(v_5)+\rf(v_4))| = |\frac12(\rf(v_5)+\rf(v_4)) - \rf(v_4)| = \frac12(a_7-a_5)$, meaning the point distortion $\lambda((x_1,\Phi(x_1)),(x_2,\Phi(x_2))) = \frac12(a_7-a_5)$.

From $\Psi$, we can see that $d_{\rg}(u_5,u'_5) = a_8 - a_6$ and $\Psi(u_5) = \Psi(u'_5)$. Thus, $\lambda((\Psi(u_5),u_5),(\Psi(u'_5),u_5)) = a_8 - a_6 > \frac{1}{2}(a_7-a_5)$. Thus, $D(\Phi,\Psi) = \frac{1}{2}(a_8-a_6)$. Since $||\rf-\rg\circ \Phi||_{\infty} = ||\rf\circ\Psi - \rg||_{\infty} = 0$ due to our functions not distorting the function values at all, we have that \[d_{FD} = \frac{1}{2}(a_8-a_6)\].

\paragraph{Reeb Graph Edit Distance} Since these Reeb graphs are constructed from two non-homeomorphic spaces, the Reeb graph edit distance is not defined.

\paragraph{Universal Distance}
Fig.~\ref{fig:exampleThreePt2}(a) shows the optimal zigzag diagram carrying $\RRf \cong \RR_1$ to $\RRg \cong \RR_2$. There is no single space $X$ which is able to map the cycles of $\RRf$ to the leaves of $\RRg$, meaning we have to ``delete" the 1-cycles and then ``insert" the two leaves. To do this, we have two spaces $X_1$ and $X_2$ which map to a common $\RR_2$ where the cycles are removed and the leaves are yet to be inserted.

To delete the cycles and insert the leaves in the most optimal way, we must note that the edges which connect to the endpoints of each cycle which are not part of the cycle themseleves -- edges $e(v_5,v_6),e(v_3,v_4)$ and $e(v_1,v_2)$ -- will essentially ``cover'' the cycles once they are removed. That is, the edge $e(v_5,v_6)$ merges with $e(v_4,v_3)$ at $v_4$, and edge $e(v_1,v_2)$ merges with $e(v_4,v_3)$ at $v_3$. Consider first the cycle between vertices $v_5$ and $v_4$. We must choose a placement of both $v_4$ and $v_5$ such that the largest distance that either of them traverses will be minimized. The optimal way to choose this is relabeling both $v_5$ and $v_4$ to have a function value of the midppoint between the vertices -- $\frac12(f(v_5)+f(v_4))$. This is the same case for the other 1-cycle and both leaves that we insert.

The universal distance is then \[\delta_E = \frac12|f(v_5) - f(v_4)| = \frac12|a_7 - a_5|,\] which is half the size of the largest 1-cycle in $\RRf$.

Fig.~\ref{fig:exampleFourEdit}(b) depicts the universal deformations that are involved in carrying $X_1$ to $X_2$. Since there are no relabels in this sequence, we are able to consolidate all of this information into a much smaller zigzag diagram.

\paragraph{Observations} This is an example of two Reeb graphs whose FDD and interleaving distance is strictly less than the bottleneck distance. This also shows how each metric is dependent on the ``largest" features in each Reeb graph rather than being dependent on multiple features. In fact, if $\RRf$ were replaced with a single edge from $v_1$ to $v_6$, the interleaving distance and FDD would both be unchanged, the bottleneck distance would be $\frac{1}{2}(a_8-a_6)$, and the universal distance would be $(a_8-a_6)$. Similarly, if $\RRg$ were replaced with a single edge, the bottleneck distance and universal distance would remain unchanged, and both the interleaving distance and FDD would be $\frac{1}{4}(a_7-a_5)$.

\subsection{Example: Stretching local maximum past the global maximum}
\label{example:globalMaxChange}

Let $(\X,f)$ and $(\X,g)$ be the scalar fields shown in  Fig.~\ref{fig:exampleFour}(a). Let $\{v_1,\ldots,v_8\}$ and $\{u_1,\ldots,u_8\}$ denote the vertices of $\RRf$ and $\RRg$, respectively. Both scalar fields are defined on the same domain. The only change from $(\X,f)$ to $(\X,g)$ is that the local maximum located at $v_7$ is assigned to $a_9$ instead of $a_7$ as its function value.  The peak surrounding $v_7$ is scaled to match.  This change ultimately makes $u_8$ the new global maximum of $(\X,g)$.

\begin{figure}
    \centering
    \includegraphics[width=\textwidth]{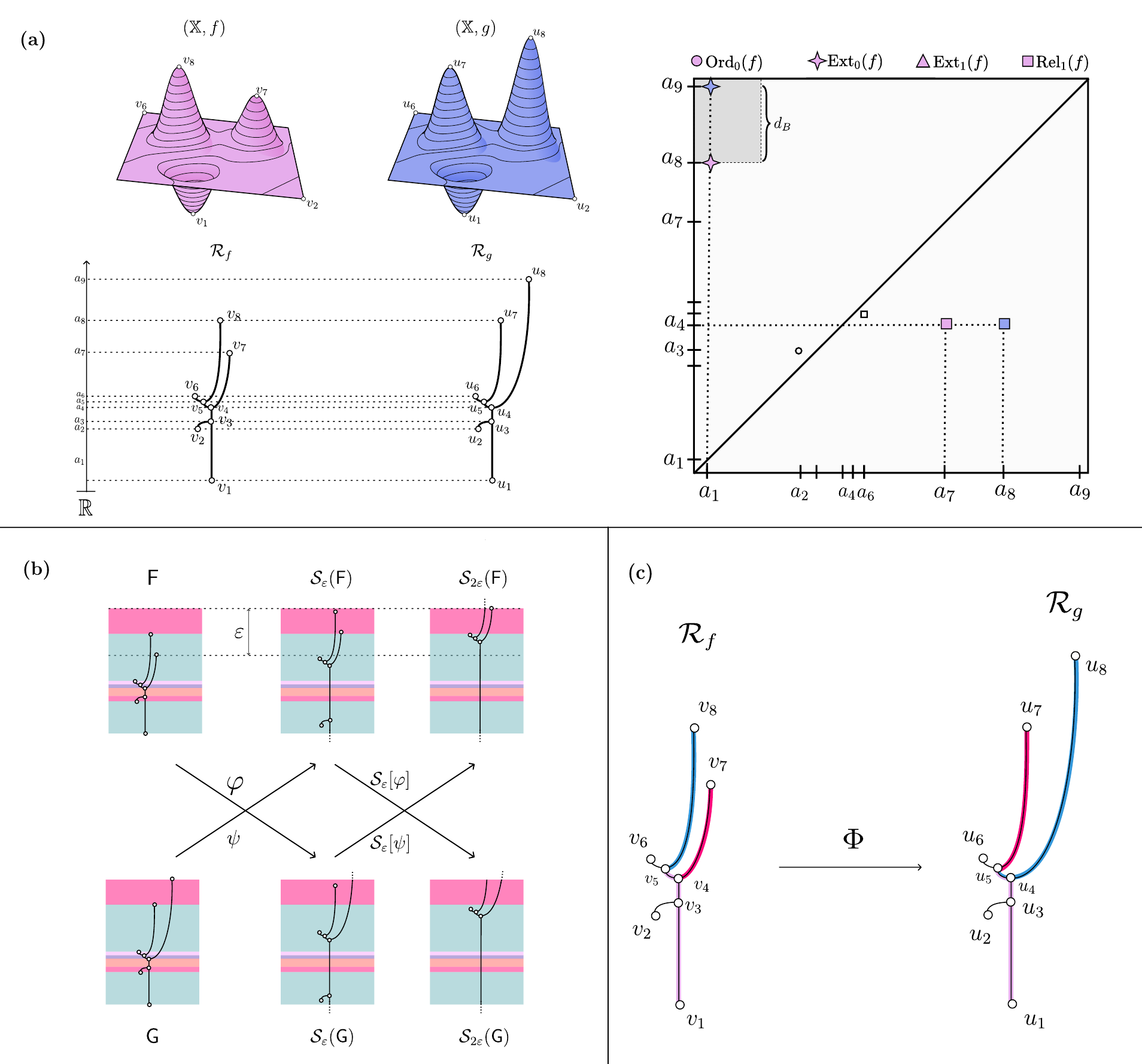}
    \caption{Summary figure for Ex.~\ref{example:globalMaxChange}.\textbf{(a)} Two scalar fields $(\X,f)$ and $(\X,g)$ with their corresponding Reeb graphs $\RRf$,$\RRg$ below them. To the right is the persistence diagrams of the scalar fields overlayed on top of each other. Smaller points without color indicate that these features are plotted in both diagrams. The colored points indicate features with are unique to that diagram. \textbf{(b)} Diagram displaying the interleaving distance between the two Reeb graphs. \textbf{(c)} Diagram displaying the optimal mapping $\Phi$ for the functional distortion distance. The map $\Psi$ is implied. \label{fig:exampleFour}}
\end{figure}

\begin{figure}
    \centering
    \includegraphics[width=\textwidth]{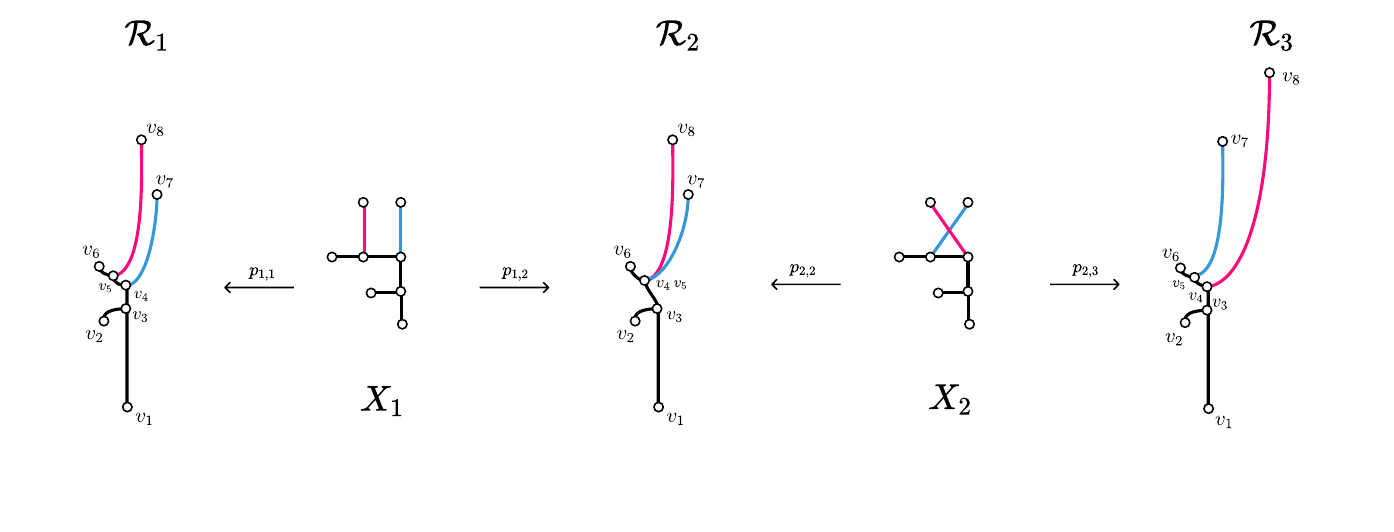}
    \caption{Zigzag diagram depicting the optimal edit sequence carrying $\RR_1 \cong \RRf$ to $\RR_3 \cong \RRg$.}
    \label{fig:exampleFourEdit}
\end{figure}

\paragraph{Bottleneck Distance} The coordinate in the persistence diagram which represents the global maximum and global minimum pairing is $(a_1,a_8)$. Since the global maximum has changed in $(\X,g)$, the pair is now represented as $(a_1,a_9)$. The value $a_9-a_8$ is the largest difference in the best matching for these persistence diagrams, making the bottleneck distance \[\dB{}(\RRf,\RRg) = a_9-a_8.\] 

This is equivalent to the situation where we take the global maximum of $(\X,f)$ and raise it to have function value $a_9$. In essence, this bottleneck distance has measured that the global maximum has increased in value, but \emph{not} necessarily that the global maximum has changed in position.

\paragraph{Interleaving Distance} We have, at this point, gained some intuition that the interleaving distance should be at most determined by the growth in this single maxima. Let $x_f$ represent the leaf of $\RRf$ beginning at $v_5$ and ending at $v_8$, let $y_f$ be the leaf beginning at $v_4$ and ending at $v_7$, $x_g$ be the leaf of $\RRg$ beginning at $u_4$ and ending at $u_8$, and let $y_g$ be the leaf beginning at $u_5$ and ending at $u_7$. Since $x_g$ ends at a higher function value than $x_f$, we know that there cannot exist a $0$-interleaving because $\G(a^{\delta})$, where $a_7 < a < a_9$, will have two components and $\mathcal{S}_0(\F)(a^{\delta}) = \F(a^{\delta})$ will only have one. Because of this, $\e \geq |a_9-a_8|$, since any smaller causes the same situation. Verifying, however, that this is the true interleaving distance requires us to check the continuity around the roots of these two leaves. Since the distance between the root nodes of $x_f$ and $y_f$ is less than $\e$, we are allowed to map the leaf $x_f$ to $x_g$ and $y_f$ to $y_g$ without breaking continuity. Thus, \[d_I(\RRf,\RRg) = a_9-a_8\].

\paragraph{Truncated Interleaving Distance} Just as in the standard interleaving case, to find a $e$-interleaving for truncated interleaving, we need to apply the truncated smoothing operator $\mathcal{S}^{m\e}_{\e}$ such that the resulting graph $\mathcal{S}^{m\e}_{e}(\mathsf{F})$ is stretched enough so that $v_8$ has a new function value of $a_9$. Note that smoothing by $\e = a_9-a_8$ is not enough to construct an interleaving since we will then truncate by $m\e$ -- making the highest function value of $\mathcal{S}^{m\e}_{e}(\mathsf{F})$ to be $a_9 - m(a_9-a_8)$. Thus, we must choose a value of $\e$ for a fixed $m$ such that $\e - m\e \geq a_9-a_8$, implying that $\e \geq \frac{a_9-a_8}{1-m}$.

From Cor.~\ref{cor:strongEquivTruncated}, we know that $d_I(\RRf,\RRg) = a_9-a_8 \leq d^m_I(\RRg,\RRg) \leq \frac{1}{1-m}(a_9-a_8)$. Thus, \[d_I^m(\RRg,\RRg) = \frac{1}{1-m}(a_9-a_8)\].

\paragraph{Functional Distortion Distance} The continuous maps between these Reeb graphs involve mapping the leaf corresponding to the global maximum of $\RRf$ to the leaf corresponding to the global maximum of $\RRg$.  Fig.~\ref{fig:exampleFour}(c) depicts the optimal mapping. Note that we only show the mapping $\Phi$ since the other map $\Psi$ can be easily inferred from $\Phi$. 

In this mapping, the single-edge path from $v_5$ to $v_8$ is mapped to the path from $u_5$ to $u_8$. Additionally, the path from $v_5$ to $u_1$ is mapped to $u_5$ to $u_1$. This means that the path from $u_5$ to $u_4$ is covered twice. Without overlapping these paths (or overlapping two other paths in a similar configuration), we would not have a continuous mapping.

From this mapping, we can see that the functional distortion distance is the change in function value due to mapping $v_8$ to $u_8$, i.e. \[d_{FD}(\RRf,\RRg) = |a_9-a_8|.\] 

One thing to keep in mind is that this scalar field does in fact have a boundary, as compared to every other example that we generated. this can complicate the continuity of our maps $\Phi,\Psi$ if we do not take into account which pieces of the Reeb graph are also boundaries. Since vertex $v_2$ is the minima of $(\X,f)$ which lies on its boundary, we can conclude that the entire leaf containing $v_2$ is a non-open set. Thus, if our map $\Psi$ was constructed in such a way that this was the pre-image of some open set in $\R_g$, it would not in fact be continuous. We avoid this situation by making sure that boundary nodes of $\R_f$ are mapped to boundary nodes of $\R_g$, and vice versa.

\paragraph{Reeb Graph Edit Distance} 
As previously noted in Sec.~\ref{sec:rged}, the older Reeb graph edit distance is defined only on 2-manifolds without boundary.
Since this example does in fact have a boundary, we do not consider the Reeb graph edit distance further, although it is interesting to note that in some cases a sequence of edits is possible in this setting as well. 

\paragraph{Universal Distance} We can trivially deform $\RRf$ to $\RRg$ using a single relabel operation which relabels $v_7$ to have function value $a_9$. However, this leads to a suboptimal value. The best course of action is to switch the roots of the two maxima $v_8$ and $v_7$, then apply a relabel to stretch $v_7$ and $v_8$ to have function values $a_8$ and $a_9$, respectively. Thus, \[\delta_E(\RRf,\RRg) = |a_9-a_8|,\] since the relabeling needed to switch the adjacencies of the leaves is much less. We illustrate this edit sequence in Fig.~\ref{fig:exampleFourEdit}.

\paragraph{Observations} This is our first example where the Reeb graphs are \textbf{contour trees} -- the loop free variant of the Reeb graph, which prevents the existence of points in $\text{Ext}_1$. Furthermore, while each of the distances ended up being equal in value, there are still several intricacies that are involved in computing the Reeb graph metrics. We can view each of these computations as trying to define a matching between the leaves of the contour trees and computing the cost of doing so. For example, in the universal distance case, we assign the leaf $e(v_4,v_7)$ to $e(u_5,u_7)$. However, since the $e(v_4,v_7)$ is not adjacent to $e(v_4,v_6)$ but $e(u_5,u_7)$ \emph{is} adjacent to the corresponding leaf $e(u_5,u_6)$, the adjacencies of the leaves must be changed. In order to do this, our connecting spaces switch the connectivity of the two maxima $v_8$ and $v_7$. This is exactly the same process as in the interleaving and functional distortion distance; the cost of the corresponding ``switch'' operation is less than the cost of the other operations and is therefore is not accounted for. In the bottleneck distance case, no matter how complicated these adjacencies become, they will never be taken into account. 

We note here that the $L^{\infty}$ distance between the scalar fields is $|a_9-a_7| > |a_9-a_8|$. However, it is possible that we have the exact same contour trees where the $L^{\infty}$ distance is also $|a_9-a_8|$. Just in the same way as how the universal distance is computed, changing the total ordering in the saddles that connect to $v_7$ and $v_8$ in $(\sX,\sf)$ would mean that $v_7$ and $v_8$ would switch which saddles they are connected to. Suppose $\e = |a_5-a_4|+\delta$, for some $\delta > 0$. If $v_5$ decreases by $\e$ and $v_4$ increase by $\e$, then the scalar field itself will not have positional changes of the maxima, but the connectivity of the contour tree would change so that $v_7$ would now be connected to $v_5$. Then, increasing $v_7$ to be at $a_8$ and increasing $v_8$ to be at $a_9$ would construct a scalar field whose $L^{\infty}$ distance is only $|a_9-a_8|$. This illustrates the complexity of how various perturbations can cause complex changes in the Reeb graph, yet the Reeb graph metrics still appropriately keep the distance under the $L^{\infty}$ distance.

\section{Discussion}
\label{sec:discussion}

The landscape of Reeb graph metrics is complex and still growing. Research continues to be done in constructing new theoretical Reeb graph metrics as well as trying to compute these distances (or similar distances) on Reeb graphs and other graph-based topological descriptors.

In particular,  Bauer et al. recently released an extended abstract in parallel with our paper discussing several other bounds relating the Reeb graph metrics to one another. They also provide a new distance known as the \textbf{functional contortion distance}, $d_{FC}$. This metric considers all possible maps $\phi: R_f \rightarrow R_g$ and $\psi: R_g \rightarrow R_f$ between two Reeb graphs $R_f$ and $R_g$.  Then, we say $\phi$ and $\psi$ are an $\e$-contortion if for every $x \in R_f$, $\psi(\phi(x))$ is connected to $x$ within an $\e$-neighborhood in $R_f$, and similarly for $y \in R_g$ and its image $\phi(\psi(y))$ in $R_g$.  The \emph{functional contortion distance} is then the infimum over all possible $\e$-contortions.
This distance is an interesting variant on both the functional distortion and the interleaving distances, and the $\e$-neighborhood connectivity is a key property in their proofs of stronger bounds.  

In addition to constructing this novel distance, the authors also spend ample time proving more bounds between the interleaving, functional distortion, universal, and their functional contortion distance. Specifically, the prove the following:

\begin{theorem}[{\cite[Theorem~11]{Bauer2021}}]
\label{thm:functionalcontortion}
The interleaving distance $\di$, the functional distortion distance $\dfd$, and the functional contortion $d_{FC}$ are all strongly equivalent to one the universal distance $\delta_{E}$. Specifically,
\begin{align*}
    \di \leq \du \leq 5\di \hspace{25PX} \dfd \leq \du \leq 3\dfd \hspace{25PX} d_{FC} \leq \du 3 d_{FC} 
\end{align*}
\end{theorem}

The authors of this work in fact conjecture that all inequalities in Thm.~\ref{thm:functionalcontortion} are tight, but a proof of this is not included in the extended abstract. In their presentation of this work, the authors mentioned that they had developed further examples to demonstrate the lower bounds that are not in the extended abstract, but noted that full details will will appear in an upcoming journal version of the paper. 

To the best of our knowledge, there are no examples illustrating that the interleaving distance and functional distortion distance are not equivalent on the space of Reeb graphs. We arrive at the following conjecture.

\begin{conjecture}\label{conjecture:fdd-interleaving-equiv}
The functional distortion and interleaving distance are equal on the space of constructible Reeb graphs.
\end{conjecture}

Despite the possibility of the bounds in Thm.~\ref{thm:functionalcontortion} being tight, we arrive at the conjecture that the functional distortion and interleaving distance are strongly equivalent to the universal distance with smaller bounds.

\begin{conjecture}\label{conjecture:fdd-interleaving-edit-strong-equiv}
The functional distortion distance and interleaving distance are both strongly equivalent to the universal distance defined on the space of PL Reeb graphs. More specifically, we conjecture
\[\dfd \leq \du \leq 2\dfd \hspace{50PX} \di \leq \du \leq 2\di\].
\end{conjecture}

Of course, if Conjecture \ref{conjecture:fdd-interleaving-equiv} is true, then Conjecture \ref{conjecture:fdd-interleaving-edit-strong-equiv} can be reduced to just the single statement \[\dfd = \di \leq \du \leq 2\di = 2\dfd.\]

We now finish our discussion as a series of open questions and challenges that we hope will summarize the current research landscape.

\textbf{Each metric is computationally complex.} To our discontent, the interleaving distance has been shown to be graph isomorphism complete \cite{deSilva2016,Bjerkevik2017}, and there currently exists no polynomial time algorithm for computing the functional distortion distance (nor Gromov Hausdorff distance).

There also exists no polynomial time algorithm for the Reeb graph edit distance \cite{Sridharamurthy2018,Yan2021}. In general, the labeled graph edit distance is known to be NP-hard \cite{Zhiping2009} and also APX-hard \cite{Chich-Long1994}. While the Reeb graph edit distance is slightly different than the general labeled Reeb graph edit distance, this leaves little hope of efficient, exact algorithms for the most general settings.

For the interleaving distance, a glimmer of hope arises with work investigating fixed parameter tractable algorithms \cite{FarahbakhshTouli2019,Stefanou2020}, and comparisons are possible if the Reeb graph has simple enough structure, such as the labeled merge tree \cite{Gasparovic2019,Stefanou2020}.

\textbf{Are there distances defined on the simpler case contour tree or merge tree?}
The \textbf{merge tree} is constructed similarly to the Reeb graph except that the equivalence relation is defined on \emph{sublevel set} rather than levelsets. While the distances we have constructed still apply merge trees, many have devoted research specifically for distances on merge trees: \cite{Morozov2013} constructed the interleaving distance between merge trees and showed that the interleaving distance is bounded below by the 0-dimensional ordinary bottleneck distance; \cite{Gasparovic2019} introduced the intrinsic interleaving distance on labeled merge trees and was subsequently used in practice for uncertainty visualization in \cite{Yan2019a}; \cite{Sridharamurthy2018} introduced an edit distance on merge trees which has experimentally shown promise and is also computationally tractable; \cite{Beketayev2014} constructed a distance on merge trees based on a previously defined notion of a \textbf{branch decomposition tree} with an accompanying computation. The simplified structure of the merge tree has allowed researchers to construct distances which are actually computationally feasible.

The \textbf{contour tree} is a Reeb graph defined on a simply connected domain (see example \ref{example:globalMaxChange}). Each Reeb graph metric is still a well-defined distance for contour trees, but this distances have still proven difficult to compute. Other distances have been constructed specifically for the contour tree, such as \cite{Buchin2017} which utilized the well-studied Fr\'{e}chet distance to do so. 

As with \cref{thm:functionalcontortion}, Bauer et. al provide a proof stating that the functional contortion distance and universal distance are equal on the space of contour trees. However, to the best of our knowledge, the only examples in the literature which show the difference between the universal, functional distortion, and interleaving distances have been for domains which are not simply connected. This leads us to the following conjecture: 

\begin{conjecture}
The functional distortion distance, interleaving distance, and universal distance are equivalent on the space of PL Reeb graphs where the domain $\X$ is simply connected. That is
\[d_B \leq \di = \dfd = \du.\]
\end{conjecture}

Furthermore, Bauer et al. have also provided a proof for equality between the interleaving, functional distortion, functional contortion, and universal distance on the space of merge trees \cite{Bauer2021}.

\textbf{What are the possible applications of these metrics?} 
Reeb graphs and other topological signatures have already been utilized in various areas of data analysis \cite{Yan2021}. The merge tree edit distance was utilized in studying time-dependent scalar fields, 3D cryo electron micoscopy data, shape data, and other synthetic data sets. In \cite{Saikia2017,SaikiaAlgorithm2017}, researchers used a heuristic approach for defining distance between merge trees for feature tracking of time-dependent scalar fields. 

Depending on the application domain, these Reeb graph metrics may be suitable. These distances do seem to be well-suited for the analysis of time-dependent scalar fields since the domains remain consistent and the changes are incremental. We can infer that these distances will perform well in areas that the bottleneck distance has been utilized since, from a data analysis perspective, they share many of the same properties. A case where they may not be suitable is when the scalar fields have vary in scale, since these Reeb graph metrics are sensitive to scales/shifts of the data.

\textbf{Are these Reeb graph metrics well-equipped to measure similarity on different domains?} The construction of each of these distances have allowed us to compare scalar fields with completely different domains. This in general is an extremely desirable property since comparing functions defined on different domains is intrinsically difficult. The property of stability, for example, is only defined for the spaces with identical domains because defining the $L^{\infty}$ distance is only well-defined on identical domains. We can then think of each of these distances as having similar utility to the Gromov-Hausdorff distance, which is a distance metric constructed for spaces defined on arbitrary domains.

However, as we have seen in examples such as Ex.~\ref{example:compound}, the distances do not necessarily have a way to differ between features like holes and leaves. Furthermore, since each distance is constructed to only look at the largest feature, the features that are smaller are not encoded in the distance at all. This leads to situations where a single scalar field with leaves can be equidistant from a single edge Reeb graph and a Reeb graph with multiple loops.

\textbf{Sensitivity to multiple features may be desirable.} From the analysis and examples we have constructed, we can see that there are several areas where each of these distances would lack in an application setting. One of the most notable properties is that each distance is insensitive to the presence of multiple features. This is completely due to each metric involving some sort of ``worst-case" function. The interleaving distance finds the smallest $\e$ to remove $\emph{all}$ features needed, functional distortion distance finds the largest distortion among all distortions for a given set of maps, and the universal distance find the largest displacement of a single vertex. Consider the \textbf{degree-q Wasserstein distance}, which is a variation of the bottleneck distance that is indeed sensitive to multiple features:

\begin{definition}
Let $D_1$ and $D_2$ be two persistence diagrams.The \textbf{degree-$q$ Wasserstein distance} between $D_1$ and $D_2$, for any positive real number $q$, is defined as \[W_q(D_1,D_2) = \Bigg[\inf_{\eta: D_1 \to D_2} \sum_{x \in X}||x - \eta(x)||^q_{\infty} \Bigg]^{1/q}\]
\end{definition}

As we can see, the distance is sensitive to each pairing in the bijection $\eta$ as opposed to just the pairing that creates the largest $L^{\infty}$ distance. However, we understand that adjusting the distances similar to the Wasserstein distance is not a catch-all from a data analysis standpoint. It is easy to construct examples such that the Wasserstein distance to a single object from two distinct objects is equal, despite one object possibly having multiple small features (i.e. multiple small holes in the surface) and the other object having just one large feature (i.e. one large hole in the surface). The Wasserstein distance does, however, require additional constraints on the function $f$ of the scalar field to guarantee a stability result \cite{DeyWang2021}.

Through the examples we have constructed, we can see where the bottleneck distance differs from other Reeb graph metrics in many ways -- specifically in its inability to discern between different Reeb graphs in some cases as well as its treatment of global maxima/minima. Unfortunately, the computational difficulty of these well-constructed Reeb graph distances will prove a challenge for those intending to utilize these in a data analysis context. Between each other, the Reeb graph metrics have some very unintuitive similarities which are difficult to discover without these concrete examples.

\backmatter

\bmhead{Conflict of Interest}

On behalf of all authors, the corresponding author states that there is not conflict of interest.

\backmatter

\bmhead{Acknowledgments}

We thank our anonymous reviewers for their effort in reviewing this manuscript and offering detailed feedback.  
We thank Tim Ophelders for his permission to include Fig.~\ref{fig:graphandSmoothing}. In addition, we thank Ulrich Bauer for the inspiration for Figure~\ref{fig:distanceLandscape} from his talk at SoCG 2020 as well as for clarifications regarding the edit distance, and Håvard Bjerkevik for a helpful discussion on interlevel set persistence.

This work is supported in part by the U.S. Department of Energy, Office of Science, Office of Advanced Scientific Computing Research, under Award Number(s) DE-SC-0019039,
and by the National Science Foundation under grants  
CCF-1907591, CCF-2106578, CCF-2142713
CCF-1907612, CCF-2106672, and DBI-1759807.

\bibliography{sn-bibliography}%

\begin{appendices}

\section{Interlevel set Persistence}
\label{sec:appx:interlevelset}

In this section, we describe another formulation of persistent homology which is also closely related to the Reeb graph.
In particular, this construction has implications for the relationship between the Reeb graph distances and the ungraded bottleneck distance.
These ideas come from taking a categorical viewpoint of persistence; we direct the interested reader to \cite{Riehl2017} for an excellent introduction to the category theory basics. 
In this section, we largely follow notation from \cite{Bjerkevik2021}.

Let $\Int$ be the poset category of connected, open intervals $I = (a,b)$ with a morphism $I \to J$ iff $I \subseteq J$. 
Let $\IntCl$ be the same setup with closed intervals $[a,b]$ instead. 
Note that either of these constructions can also be thought of as a subcategory of $(\R^{op} \times \R)$ where  morphisms are given by $\leq$ in $\R$, and $\geq$ in $\R^{op}$. 
In the case of $\Int$, we have the pairs $(a,b) \in \R^{op} \times \R$ where $a < b$, and for $\IntCl$ we allow $a \leq b$. 
Certain special subsets of $\IntCl$ are called blocks. 
\begin{definition}
\label{def:blocks}
A block is a subset of $\IntCl$ of one of the following forms, where $a,b \in \R \cup \{\pm \infty\}$
\begin{itemize}
    \item $[a,b]_{BL} = 
    \{ (c,d) \in \IntCl \mid c \leq b, d \geq a \}$ 
    \item $[a,b)_{BL} = 
    \{ (c,d) \in \IntCl \mid a \leq d < b \}$ 
    \item $(a,b]_{BL} = 
    \{ (c,d) \in \IntCl \mid a < c \leq b \}$ 
    \item $(a,b)_{BL} = 
    \{ (c,d) \in \IntCl \mid a < c, d < b \}$ 
\end{itemize}
\end{definition}
Since each interval $[a,b] \in \IntCl$ can be viewed as a point in the plane, we can picture the set of possible blocks as in Fig.~\ref{fig:BjerkevikBlocks}. 
Note that the notation is meant to align with the interval seen on the projection to the diagonal; e.g.~block type $[a,b)_{BL}$ intersects the diagonal on interval $[a,b)$, etc.

\begin{figure}%
    \centering
    \includegraphics[width=\textwidth]{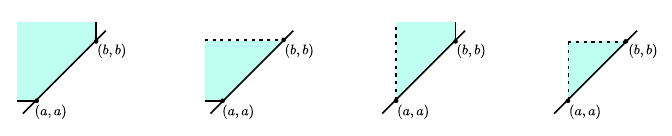}
    \caption{Blocks of an interlevel set persistence module, as given in \ref{def:blocks}.
    From left to right, these are $[a,b]_{BL}$, $[a,b)_{BL}$, $(a,b]_{BL}$, and $(a,b)_{BL}$. }
    \label{fig:BjerkevikBlocks}
\end{figure}

Let $\Vec$ be the category of finite dimensional vector spaces over a field $k$.
Given a Reeb graph $(X,f)$, the $n$-dimensional interlevel set persistence module   is a functor $C^f_n:\IntCl \to \Vec$ given by $I \mapsto H_n(f\inv(I))$.
In the case of constructible data as we are assuming for our Reeb graphs, then $C^f_n$ is always block decomposable, meaning that it can be written as the direct sum of modules supported on blocks. 
In the case of Reeb graphs, the interlevel set information is contained in the $0$-dimensional portion, $C_0^f$.
Further, the ``types'' of points as labeled in extended persistence terminology align with the different kinds of endpoints of the four types of interval representations: $(-,-)_{BL}$ for $\Ext_1$, $[-,-]_{BL}$ for $\Ext_0$, $[-,-)_{BL}$ for $\Ord_0$, and $(-,-]_{BL}$ for $\Rel_1$. 
Consider the example of Fig.~\ref{fig:interlevelset_example}. 
In this case, the two Reeb graphs each have the same, single point in the $\Ext_0$ diagram; and one point in the $\Rel_1$ and $\Ext_1$ diagrams respectively which is nearly identical.
The result is that the interlevel set blocks for the first graph are $\{[a_1,a_5]_{BL}, (a_2,a_4)_{BL}\}$, and for the second they are $\{[a_1,a_5]_{BL}, (a_2,a_3]_{BL}\}$.

\begin{figure}
    \centering
    \includegraphics[width = \textwidth]{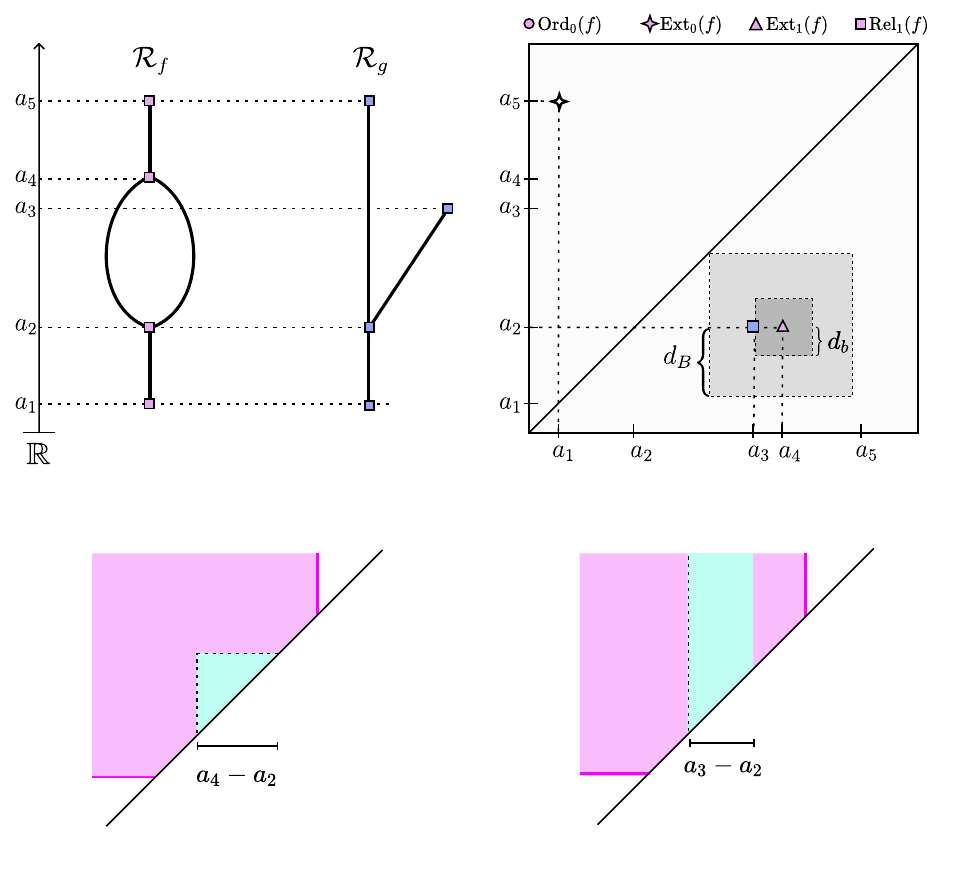}
    \caption{\textbf{(upper-left)} Two Reeb graphs accompanied with their persistence diagrams \textbf{(upper-right)}. Corresponding blocks \textbf{(bottom)} are shown. Blue represents points of the 2-dimensional vector space while purple represents points of the 1-dimensional vector space.}
    \label{fig:interlevelset_example}
\end{figure}

Prior work~\cite{Bjerkevik2016a, Botnan2018} studies these interlevelset persistence from a theoretical point of view, including understanding analogues of interleavings and bottleneck distances.  
They pass to persistence diagrams by looking at the projection of the blocks to the diagonal, and then computing bottleneck distance on the resulting sets of intervals.
In particular, \cite[Thm.~3.8]{Bjerkevik2021}, states the following, with definitions of constructions included and with some mild adjustments to fit this paper's notation.
\begin{theorem}[{\cite[Thm.~3.8]{Bjerkevik2021}}]
\label{thm:BjerkevikFullAppendix}
Let $\mathcal{R}_f$ and $\mathcal{R}_g$ be Reeb graphs. 
We define the following notation for $\mathcal{R}_f$, and have corresponding notation  defined for $\mathcal{R}_g$ replacing $g$ for $f$ wherever it appears.
\begin{itemize}
    \item 
Let $C^f:\Int \to \Set$ be given by $(a,b) \mapsto \pi_0f^{-1}(a,b)$ with morphisms induced by inclusion.
    \item 
Let $C^f_{\Vec}: \Int \to \Vec$ be given by $(a,b) \mapsto H_0(f^{-1}(a,b))$ with morphisms induced by inclusion.
\item Let $\mathcal{B}(C_{\Vec}^f)$ be the block decomposition of $C_{\Vec}^f$.
\item 
Let $\mathcal{B}_{\mathrm{diag}}(C_{\Vec}^f)$ be the collection of intervals in $\R$ representing the blocks of $\mathcal{B}(C_{\Vec}^f)$. 
\end{itemize}
Then 
\begin{equation*}
    \widetilde{d_B}(
    \mathcal{B}_{\mathrm{diag}}(C_{\Vec}^f),
    \mathcal{B}_{\mathrm{diag}}(C_{\Vec}^g)
    \leq 
    2 d_I(C^f, C^g).
\end{equation*}

\end{theorem}

This notation and theorem must come with a caveat, as we have added the notation $\widetilde{d_B}$ to disambiguate this definition with that of Defn.~\ref{def:GradedBottleneck}.
Indeed, in projecting each block to the diagonal, we are left encoding the block type by the endpoints of the interval. 
However, interleavings comparing interval modules of the form with different endpoints, e.g.
\begin{equation*}
    F,G: (\R,\leq) \to \Vec
\end{equation*}
given by 
\begin{equation*}
    F(c) = 
    \begin{cases}
    k & c \in [a,b]\\
    0 & \text{else}
    \end{cases}
\end{equation*}
and 
\begin{equation*}
    G(c) = 
    \begin{cases}
    k & c \in [a,b)\\
    0 & \text{else}
    \end{cases}
\end{equation*}
have interleaving (and thus bottleneck) distance 0.
Meanwhile, when compared as modules defined as the corresponding levelset blocks shown in Fig.~\ref{fig:BjerkevikBlocks}, the distance is not 0. 
The result is that 
$    
\widetilde{d_B}(\mathcal{B}_{diag}(C_{\Vec}^R), \mathcal{B}_{diag}(C_{\Vec}^{R'})
$ 
from Thm.~\ref{thm:BjerkevikFullAppendix} is actually the ungraded bottleneck distance as written, since each block type corresponds to a different type of point in the extended diagram, but the projection to $\R$-valued intervals forgets this information. 

To see this in the example of Fig.~\ref{fig:interlevelset_example},
we have interleaving distance determined by the height of the tail, so
\begin{equation*}
    \tfrac12(a_3-a_2) = d_I(\RR_f,\RR_g).
\end{equation*}
Graded bottleneck distance is determined by the point in the $\Ext_1^f$ diagram, so 
\begin{equation*}
    d_B(\RR_f,\RR_g) = \tfrac{1}{2}(a_4-a_2).
\end{equation*}
However, in projecting the blocks to the diagonal, we have $\mathcal{B}_{\mathrm{diag}}(C_\Vec^f) = \{ [a_1,a_5], (a_2,a_4) \}$ and 
$\mathcal{B}_{\mathrm{diag}}(C_\Vec^f) 
= \{ [a_1,a_5], (a_2,a_3] \}$
and thus  
\begin{equation*}
    \widetilde{d_B}(
    \mathcal{B}_{\mathrm{diag}}(C_{\Vec}^f),
    \mathcal{B}_{\mathrm{diag}}(C_{\Vec}^g)
    =  d_b(\RR_f,\RR_g) = (a_4-a_3) \approx 0. 
\end{equation*}
Of course, we can see in this example that the theorem is satisfied as 
\begin{equation*}
    a_4-a_3 = \widetilde{d_B}(
    \mathcal{B}_{\mathrm{diag}}(C_{\Vec}^f),
    \mathcal{B}_{\mathrm{diag}}(C_{\Vec}^g)
    \leq 2 d_I(\mathcal{R}_f, \mathcal{R}_g) = a_4-a_2.
\end{equation*}
However, as we also notice that in this example, we happen to have 
\begin{equation*}
    \ExDgm(\tilde{f}),\ExDgm(\tilde{g})) = 
    \tfrac{1}{2} (a_4-a_2) \leq  2 d_I(\mathcal{R}_f, \mathcal{R}_g) = a_3-a_2,
\end{equation*}
so it is our goal to replace the ungraded bottleneck distance in the left side of the inequality of Thm.~\ref{thm:BjerkevikFullAppendix} with its graded counterpart.

With gracious thanks to the anonymous reviewer for pointing this out, a similar argument to that presented in \cite{Bjerkevik2021} as proof of \ref{thm:BjerkevikFullAppendix} with a minor modification results in the strengthened result given as Thm.~\ref{thm:strengthenedBottleneck_and_interleaving}. 
Namely, Prop 7.3 of \cite{Botnan2018} shows that the interleavings between block decomposeable modules $\Int \to \Vec$ can be split into interleavings between only blocks of the same type. 
Thus, an interleaving between $C_{\Vec}^R$ and $C_{\Vec}^{R'}$ induces an matching between points in the extended diagram which match type.
However, the distance between the $\R$ intervals can differ from the interleaving distance between the associated blocks by up to a factor of two.
This is caused directly by the fact that open intervals $(a,b)_{BL}$ are $(b-a)/2$ interleaved with the trivial module; while open intervals $(a,b)$ in $\R$ require a $(b-a)$ interleaving with the trivial module. 
The result is that 
\begin{equation*}
d_{B}(\ExDgm(\tilde{f}),\ExDgm(\tilde{g})) \leq 2d_I(C_{\Vec}^{R}, C_{\Vec}^{R'}). 
\end{equation*}
Since an interleaving on $C^R$ induces an interleaving on $C_{\Vec}^R$ by sending the elements of $\pi_0$ to the generators of $H_0$, we have 
\begin{equation*}
    d_I(C_{\Vec}^{R}, C_{\Vec}^{R'}) \leq d_I(C^{R}, C^{R'}).
\end{equation*}
Putting this together gives 
\begin{equation*}
    d_{B}(\ExDgm(\tilde{f}),\ExDgm(\tilde{g})) \leq 2d_I(R_f,R_g)
\end{equation*}
which is the strengthened version of \cite[Thm.~3.8]{Bjerkevik2021} which we have given as Thm.~\ref{thm:strengthenedBottleneck_and_interleaving}.

\section{Multiple Connected Components}\label{sec:appx:multipleConnected}
In the preceeding work, we have assumed for that our Reeb graphs are connected. Furthermore, path componenet sensitivity states that Reeb graph metrics attain a value of $+\infty$ if the Reeb graphs have different numbers of connected components. However, computing the distance between two Reeb graphs with $n>1$ path-connected componenets is, in general, still possible. Yet, this point is often not expanded upon in the literature. Here we adopt the convention that the distance between these Reeb graphs is minimum distance over all possible matchings between the path-connected components of the Reeb graphs.

Suppose we have two Reeb graphs $\RR_f,\RR_g$, both with $n>1$ path-connected components and suppose we want to compute the value of a Reeb graph metric $d$ between $\RR_f$ and $\RR_g$. Let $\pi_0(\RR_f)$ denote the set of path-connected components of $\RR_f$ and let $\RR_f^i$ denote the Reeb graph on the $i^{th}$ path-connected component. Then we define the distance $d$ between $\RR_f,\RR_g$ to be
\[d(\RR_f,\RR_g) = \min_{\sigma:\pi_0(\RR_f) \to \pi_0(\RR_g)}\max_i\big\{d(\RR_f^i,\RR_g^{\sigma(i)}\big\},\]
where we minimize over all possible bijections $\sigma: \pi_0(\RR_f) \to \pi_0(\RR_g)$ between the set of path-connected components.

When computing the bottleneck distance between Reeb graphs with multiple connected components, it is not necessary to adopt this convention since the persistence diagram of a Reeb graph is identical to the union of the persistence diagrams defined on the individual path-connected components. Note that, however, this will be bounded above by the bottleneck distance that \emph{does} use this convention. That is,

\[d_B\Big(\bigcup_i\ExDgm^i(f),\bigcup_i\ExDgm^i(g)\Big) \leq \min_{\sigma:\pi_0(\RR_f) \to \pi_0(\RR_g)}\max_i\big\{d_B(\RR_f^i,\RR_g^{\sigma(i)}\big\},\]
where $\ExDgm^i(f)$ refers to the full extended persistence diagram defined on the $i^{th}$ path-connected componenet of $\RR_f$.

It is important to note that our goal here is to make the bottleneck distance measure features as similarly as the Reeb graph metrics due. In the end, our discrimintavity bounds still hold -- regardless of how close we construct the bottleneck distance to act similarly to the Reeb graph metric.

\section{Truncated Interleaving Distance Properties}\label{sec:appx:truncatedProperties}
The strong equivalence between the interleaving distance and truncated interleaving distance automatically leads us to several properties that the truncated interleaving distance exhibits. We state these properties without proof.

\begin{proposition}
Let $\RR_f$ and $\RR_g$ be two constructible Reeb graphs defined on the same space $\X$. Then For any fixed $m \in [0,1)$, we have
\[d^m_I(\RR_f,\RR_g) \leq \frac{1}{1-m}||f-g||_{\infty}.\]
\end{proposition}

Since this is not quite stability, we are not guaranteed that the universal edit distance bounds the truncated interleaving distance for any fixed $m$. Ex.~\ref{example:globalMaxChange} shows an example of this.

\begin{corollary}
For any fixed $m\in[0,1)$, the truncated interleaving distance is isomorphism indiscernible.
\end{corollary}

\begin{corollary}\label{cor:truncated-interleaving-discrim}
For any fixed $m\in[0,1)$, the truncated interleaving distance is more discriminative than the graded bottleneck distance. That is
\[d_{B}(\RR_f,\RR_g) \leq \frac{9}{1-m}\cdot d^{m}_{I}(\RR_f,\RR_g).\]
\end{corollary}

\begin{corollary}\label{cor:truncated-interleaving-discrim-ungraded}
For any fixed $m\in[0,1)$, the truncated interleaving distance is more discriminative than the ungraded bottleneck distance. That is
\[\db{}(\RR_f,\RR_g) \leq \frac{2}{1-m}\cdot d^{m}_{I}(\RR_f,\RR_g).\]
\end{corollary}

\begin{corollary}
For any fixed $m\in[0,1)$, the truncated interleaving distance is more discriminative than the graded bottleneck distance when the Reeb graphs are defined on simply connected domains. That is
\[d_{B}(\RR_f,\RR_g) \leq \frac{3}{1-m}\cdot d^{m}_{I}(\RR_f,\RR_g).\]
\end{corollary}

\begin{proposition}[{\cite[Proposition~2.19]{Chambers2021}}]
For any fixed $m\in[0,1)$, the truncated interleaving distance is path component sensitive.
\end{proposition}

\end{appendices}

\end{document}